\documentclass[11pt]{article}

\usepackage[utf8]{inputenc}
\usepackage[T1]{fontenc}

\usepackage{amsmath, amssymb, amsthm}

\usepackage{arxiv}
\bibpunct{(}{)}{;}{a}{,}{,}
\usepackage{amsfonts}
\usepackage{graphicx}
\usepackage{booktabs}
\usepackage{microtype}
\usepackage{url}
\usepackage{xcolor}
\usepackage{physics}
\usepackage[autostyle]{csquotes}
\usepackage{capt-of}
\usepackage{xspace}
\usepackage{enumitem}
\usepackage{tikz}
\usetikzlibrary{arrows,patterns,calc,shapes,positioning,arrows.meta}
\usepackage{tikz-cd}
\usepackage{float}

\newenvironment{romanenumerate}{\begin{enumerate}[label=(\roman*), ref=\roman*]}{\end{enumerate}}

\graphicspath{{./downloads/}}
\numberwithin{equation}{section}
\theoremstyle{plain}
\newtheorem{theorem}{Theorem}[section]
\newtheorem{lemma}[theorem]{Lemma}
\newtheorem{proposition}[theorem]{Proposition}
\newtheorem{corollary}[theorem]{Corollary}

\theoremstyle{definition}
\newtheorem{definition}[theorem]{Definition}

\theoremstyle{remark}
\newtheorem{remark}[theorem]{Remark}

\newcounter{inlineitem}
\newcommand{\inlineitemlabel}{(\roman{inlineitem})}

\newcommand{\soumik}[1]{{\color{black}  #1}}\newcommand{\sou}[1]{{\color{black}  #1}} 
\newcommand{\rch}{\operatorname{rch}}

\newcommand{\man}{\mathcal{M}}
\newcommand{\tp}[2]{\mathcal{T}_{#1}^{#2}\mathcal{M}}
\newcommand{\vol}{\operatorname{vol}}
\newcommand{\diam}{\operatorname{diam}}
\newcommand{\R}{\mathbb{R}}

\DeclareMathOperator*{\E}{\mathbb{E}}
\newcommand{\Ex}{\mathbb{E}}
\renewcommand{\Pr}{\mathbb{P}}
\newcommand{\Prob}[1]{\Pr\!\left[#1\right]}
\newcommand{\Var}{\operatorname{Var}}
\newcommand{\calN}{\mathcal{N}}
\newcommand{\pth}[1]{\left(#1\right)}

\newcommand{\e}{\varepsilon}

\usepackage{hyperref}
\hypersetup{
  colorlinks=true,
  linkcolor=blue,
  citecolor=blue,
  urlcolor=blue,
}

\let\oldcite\cite
\let\oldcitet\citet
\let\oldcitep\citep
\renewcommand{\cite}{\oldcite*}
\renewcommand{\citet}{\oldcitet*}
\renewcommand{\citep}{\oldcitep*}

\title{Low-Dimensional Embeddings for Gaussian Kernels on Manifolds}

\author{Soumik Dutta\thanks{Institute of Informatics, University of Warsaw, Poland. Supported by the Polish NCN SONATA Grant no.\ 2019/35/D/ST6/04525. \texttt{s.dutta2@uw.edu.pl}, \texttt{k.dutta@uw.edu.pl}} \and Kunal Dutta\footnotemark[1]}

\begin{document}
\maketitle

\begin{abstract}
The Gaussian kernel is a widely used similarity measure that captures nonlinear relationships in data. The Gaussian kernel 
function also gives rise to the \emph{Gaussian kernel distance}, which has several applications in areas such as kernel PCA, 
spectral clustering, etc.
However, many applications involve the computation of Gaussian kernel distances between a large number of pairs of data points, 
which can be computationally prohibitive.  
Using the \emph{Random Fourier Features} sampling method of Rahimi and Recht [NeurIPS 2007], Chen and Phillips [ALT. 2017] proved that 
for points in a $d$-dimensional Euclidean ball in $\R^N$, having radius-\(R\), sampling \(t=\Omega\!\bigl(\tfrac{d}{\varepsilon^{2}}\log\tfrac{dR}{\varepsilon}\bigr)\) 
random features suffice to preserve all pairwise Gaussian–kernel distances up to a \((1\!\pm\!\varepsilon)\) multiplicative factor with high probability. 
For the more general case when the data points are on an arbitrary submanifold $\mathcal{M} \subset \R^N$ having positive reach, and with intrinsic dimensionality $d$, we establish the first 
\emph{uniform, relative-error} embedding theorem for Gaussian kernel distances. We show that drawing  
\[
O\!\Bigl(
      \tfrac{d}{\varepsilon^{2}}\,
      \log\!\bigl(
         \tfrac{\vol(\mathcal{M})^{2}\,N^{2d}}
              {\vol(B_{1}^{d}(0))^{2}\rch(\mathcal{M})^{2d}\,\varepsilon^{2d+1}\,\delta}
      \bigr)
   \Bigr) \;\;\approx\;\; O\Bigl(\tfrac{d^2}{\e^2}\bigl(\log N + \log \frac{1}{\e\delta}\bigr)\Bigr), 
\]
Random Fourier Features are enough to guarantee, with probability $1-\delta$, that the Gaussian kernel distance between each pair of points in the manifold, is 
preserved up to a relative error of $\e$.
Thus, our bound depends only logarithmically on the ambient dimension and the manifold parameters such as volume and reach, while 
maintaining the optimal Euclidean rate of $1/\e^2$.
Moreover, our analysis shows that the entire \emph{continuous} geometry of any smooth manifold can be sketched faithfully, regardless of ambient 
curvature or embedding dimension.  

We believe this has consequences for kernel PCA, spectral clustering, Laplacian eigenmaps, and topological data-analysis,
and can enable pipelines to operate in $O(nt)$ time and memory—independent of the ambient dimension $N$ and of the sample size $n$—without 
sacrificing the manifold structure that these methods are designed to exploit.

In the case of topological data analysis, we prove that under the same RFF embedding, the persistent homology of the manifold is preserved: 
the weighted Čech and Rips filtrations built with Gaussian kernel power distance are $(1\pm\varepsilon_\star)$-interleaved, where 
$\varepsilon_\star$ incorporates both distance distortion and kernel weight approximation.

\end{abstract}
 
\noindent\textbf{Keywords:} random Fourier features, manifold learning, kernel methods, dimensionality reduction, reach

\section{Introduction}

Kernels are fundamental similarity measures between pairs of points in an ambient space, and are extensively utilized in data analysis
and machine learning for capturing nonlinear relationships among data points~\cite{mercer1909xvi, smolalearning}. Some of the most popular 
types of kernels used in data analysis are Gaussian and Laplace kernels, polynomial kernels, Dirichlet and other kernels..
A wide class of kernels induce inner products between pairs of data points in high-dimensional or infinite-dimensional feature spaces.
For \emph{shift-invariant} kernels, these inner products can be used to induce the notion of \emph{kernel distances} in the usual way - an inner product gives 
rise to a norm, and the distance between a pair of points is the norm of the difference vector of the two points. For many applications such as clustering, 
dimensionality reduction, and manifold learning~\cite{hofmann2008kernel, ghojogh2021reproducing}, kernel distances are more natural to use than kernel inner products.

However, directly computing kernel inner products and distances in large datasets can be computationally
prohibitive due to the complexity involved in high-dimensional computations. A popular
solution is to approximate the kernel using low-dimensional Euclidean embeddings, significantly
improving computational efficiency~\cite{liu2021random}.

For a broad class of kernels, the \emph{Random Fourier Features} (RFF) method of Rahimi and Recht~\cite{rahimi2007random} provides such
embeddings by mapping data points into a lower-dimensional Euclidean space, approximating
kernel inner products via standard dot products~\cite{rahimi2007random}. Prior work~\cite{rahimi2007random, sutherland2015error}
provides refined bounds for approximating kernel function values via RFFs, enabling scalable
kernel ridge regression and SVMs. However, several applications in areas such as manifold
learning~\cite{jayasumana2015kernel}, kernel-based clustering~\cite{ghojogh2021reproducing} and topological data
analysis~\cite{boissonnat2024euclidean} critically require the preservation of not only the kernel function values,
but the kernel distances between pairs of points --- a strictly stronger geometric requirement. This requirement was
subsequently addressed by Chen and Phillips~\cite{chen2017relative}, who showed that the RFF embedding
also preserves pairwise Gaussian kernel distances up to a $(1 \pm \varepsilon)$-factor, provided the target
dimension is at least $O\!\left(\frac{\log n}{\varepsilon^2}\right)$, where $n$ is the number of data points and $\varepsilon \in (0,1)$ is the target relative error.

For Euclidean distances, the celebrated result of Johnson and
Lindenstrauss~\cite{johnson1984extensions, dasgupta2003elementary} has led to a well-developed theory of low-dimensional embeddings 
using random projections. These ideas have found
several applications in computer science such as hashing~\cite{indyk1998approximate} or clustering~\cite{boutsidis2010random} high-dimensional data,
high-dimensional topological data analysis~\cite{lotz2019persistent}, etc. In this context, an interesting line of
research has been the investigation of the case when all data points share some property such
as being on a lower-dimensional flat or manifold. This assumption, often referred to as the
manifold hypothesis~\cite{fefferman2016testing, ma2012manifold}, has been highly influential in Machine Learning, Artificial Intelligence
and several related areas. Thus, Baraniuk and
Wakin~\cite{baraniuk2009random} showed that random projections preserve all pairwise distances between arbibtrary set of points  
on a manifold, i.e.\ the target dimension becomes independent of the number of
points and the ambient dimension. These bounds were later improved and generalized by
Clarkson~\cite{clarkson2008tighter} and Verma~\cite{verma2011note}.

The emerging theory of low-dimensional embeddings for Gaussian and other kernels, initiated by Rahimi and Recht, parallels in some sense the 
development of the above random projection based ideas for Euclidean distances. While the underlying distance functions are quite 
different -- Euclidean distances are inner products on finite-dimensional spaces and scale linearly, whereas kernel distances are often highly non-linear and involve 
inner products in infinite-dimensional spaces -- some similarities exist. For instance given a relative error parameter $\e$, the required embedding dimension 
for Euclidean distances and Gaussian kernels is both $O(\e^{-2}\log n)$. Moreover, the low-dimensional embeddings for Euclidean distances and Gaussian kernels 
both involve sampling with random Gaussian vectors (though the Gaussian kernel embedding uses trigonometric functions of the random projections).

These unexpected similarities naturally raise some intriguing questions. For instance, just as the Johnson-Lindenstrauss bound on the 
embedding dimension can be improved 
when the points lie on a smooth submanifold, having a lower intrinsic dimension, could it be possible to improve the Chen-Phillips bound 
under similar conditions? More formally, we ask the following question.

\medskip
\noindent\textbf{Problem.} Given points lying on a submanifold $\mathcal{M} \subset \mathbb{R}^N$
having intrinsic dimension $d$ and a relative error parameter $\varepsilon \in (0,1]$, is it possible to obtain a Euclidean
embedding of the points in a space whose target dimension depends only on $d$, $\varepsilon$
and $\mathcal{M}$?
\medskip

A special case of this problem was already investigated by Chen and Phillips~\cite{chen2017relative}, when the 
points lie inside the unit $d$-dimensional ball in $\mathbb{R}^N$. In this case their result
implies that the target dimension only needed to be $\Omega\left(\frac{d}{\varepsilon^2}\right)$, a bound 
independent of the number of points and the ambient dimension. For a general point set, it is possible to 
think of the smallest bounding ball for the data points and then apply the Chen-Phillips bound.
However, these bounds could be very loose, as a manifold of low intrinsic dimensionality (e.g. a curve) could be embedded over 
all of $\mathbb{R}^N$, so that the bounding ball would be full-dimensional in $\mathbb{R}^N$. Thus, it would be advantageous 
to obtain a bound on the target dimension, which does not depend on the dimensionality of the bounding ball.

In our main result, we address the general version of this question, and obtain nearly optimal bounds --- we show
that using RFFs, essentially
\[
   O\!\left(\frac{d^2 \log N}{\varepsilon^2}\right)
\]
dimensions suffice to obtain a Euclidean embedding which preserves all pairwise Gaussian
kernel distances, up to a relative error of $(1 \pm \varepsilon)$. (Here the constant in 
the $O$-notation depends on the invariants of the manifold $\mathcal{M}$.

Thus, in this paper, we provide uniform relative error bounds for kernel distance approximations
over general manifolds using Random Fourier Features. We show that when the manifold's
reach is positive, the entire geometric structure induced by the Gaussian kernel can be
approximated uniformly with high fidelity using RFF embeddings. Notably, our results
indicate that the required embedding dimension $t$ scales logarithmically with the ambient
dimension $N$, facilitating efficient and high-fidelity applications of standard algorithms
without necessitating computations of full kernel matrices.

Further, we also state and prove several algorithmic consequences for downstream applications of 
our low-dimensional embeddings, in areas such as $(i)$ topological data analysis, 
$(ii)$ kernel $k$-means clustering, $(iii)$ kernel distance matching, 
$(iv)$ kernel nearest neighbour search, and $(v)$ kernel learning. In general, our algorithmic improvements 
result from replacing kernel computations by computations involving low-dimensional 
Euclidean distances.

\subsection{Preliminaries and Related Work}
Random Fourier Features were introduced in the seminal work of Rahimi and Recht~\cite{rahimi2007random} as a 
way to scale kernel methods by reducing the cost of kernel computations~\cite{rahimi2007random}. For the Gaussian 
kernel, they approximate the kernel value between two points by a dot product in a lower-dimensional space. This 
has proved extremely useful in many settings, leading to extensive usage in Machine Learning (see, e.g.,~\cite{li2019towards}).

The Gaussian kernel is defined as
\begin{equation}\label{eq:gaussian-kernel}
  K_\sigma(x,y) = \exp\!\Bigl(-\tfrac{\|x-y\|^{2}}{2\sigma^{2}}\Bigr), \quad x,y\in\mathbb{R}^{N}.
\end{equation}
By Mercer's theorem~\cite{mercer1909xvi}, $K_\sigma$ can be expressed as an inner product
$K_\sigma(x,y)=\langle\psi(x),\psi(y)\rangle_{\mathcal{H}}$ in an infinite-dimensional
reproducing kernel Hilbert space (RKHS) $\mathcal{H}$, with an explicit but intractable
feature map $\psi\colon\mathbb{R}^{N}\to\mathcal{H}$~\cite{ghojogh2021reproducing}.
A kernel $K(x,y)$ is said to be \emph{positive-definite}, if for every finite set of points $\{p_1,\ldots,p_n\}\in \R^N$, 
(where $N$ is finite), the matrix whose $i,j$-th entries are $K(p_i,p_j)$ is positive definite. 
For positive-definite kernels, Bochner's theorem implies 
the existence of a distribution over the feature space whose expected value at a pair of points is the kernel function
evaluated at those points. The key observation of Rahimi and Recht~\cite{rahimi2007random} was that for \emph{shift-invariant} 
kernels such as the Gaussian kernel, this expectation can be made explicit:
\begin{equation}\label{eq:bochner}
  K_\sigma(x,y)
  = \mathbb{E}_{\omega\sim\mathcal{N}(0,\,\sigma^{-2}I_{N})}\!\bigl[\cos(\langle\omega,x-y\rangle)\bigr].
\end{equation}
Approximating this expectation by drawing $t$ i.i.d.\ frequencies
$\omega^{1},\ldots,\omega^{t}\sim\mathcal{N}(0,\sigma^{-2}I_{N})$, and averaging, yields the
finite-dimensional feature map
\begin{equation}\label{eq:rff-map}
  \phi(x) = \tfrac{1}{\sqrt{t}}\bigl[\cos(\langle\omega^{1},x\rangle),\,
  \sin(\langle\omega^{1},x\rangle),\,\ldots,\,
  \cos(\langle\omega^{t},x\rangle),\,\sin(\langle\omega^{t},x\rangle)\bigr]^{\!\top}\in\mathbb{R}^{2t},
\end{equation}
so that $\hat{K}(x,y)=\langle\phi(x),\phi(y)\rangle\approx K_\sigma(x,y)$.

For positive-definite kernels, Mercer's theorem also implies the existence of a \emph{kernel distance} from the inner product given by the kernel function,  
i.e. the kernel distance between two points is the square of the inner product (the kernel function) of their difference vector with itself. Thus we can define 
\[ D_K^2(x,y) = D_K^2(x-y) = 2(1-\exp\pth{-\|x-y\|_2^2}), \; x,y \in \R^N.\]  
While the kernel distance is not used as widely as the kernel-based similarity function, in several applications such as clustering, kernel distances are more 
important than kernel function values. \sou{There are related works that highlight connections between RFFs and such applications 
\cite{gedon2023invertible}.} Chen and Phillips~\cite{chen2017relative} showed that RFF can preserve Gaussian kernel distances with relative error over a ball, 
specifically requiring dimensions $t = \Omega\left(\frac{N}{\varepsilon^2} \log\left(\frac{N}{\varepsilon} \frac{r}{\delta}\right)\right)$ to achieve uniform 
relative error bounds, where $N$ is the ambient dimension of the bounding ball, $r$ is the radius of that ball, $\varepsilon \in (0,1)$ is the target relative 
error, and $\delta \in (0,1)$ is the failure probability (the bound holds with probability at least $1 - \delta$).

Rahimi and Recht also demonstrated that the Gaussian kernel function is preserved up to additive error for points inside a bounded-radius ball around 
the origin~\cite{rahimi2007random}, where the target dimension depends only the radius of the ball. However, their analyses do not extend efficiently to arbitrary manifolds.

In a different line of research, earlier work on embeddings based on the Johnson–Lindenstrauss lemma provides bounds for relative preservation of Euclidean distances, 
requiring projection dimensions of $O(\log n/\varepsilon^2)$ for $n$ points~\cite{johnson1984extensions, dasgupta2003elementary}. Yet these methods are inherently 
linear, making them unsuitable for directly handling nonlinear kernel distances.

In this context, Baraniuk and Wakin~\cite{baraniuk2009random} established that random linear projections preserve pairwise distances on smooth manifolds with high 
probability, requiring a number of projections linear in the manifold's intrinsic dimension, specifically 
$M = O\!\left(\frac{d\log(NVR_{\mathrm{geo}}\tau^{-1}\varepsilon^{-1})\log(1/\delta)}{\varepsilon^2}\right)$ 
\sou{where $M$ is number of projections, $d$ is the intrinsic manifold dimension, $N$ is ambient dimension, $V$ is manifold volume, $R_{\mathrm{geo}}$ is geodesic covering regularity, $\tau$ is the reach $\rch(\mathcal{M})$, $\delta$ is probability of failure.}
Similarly, Verma~\cite{verma2011note} proved that random projections preserve geodesic path lengths on manifolds with distortion bounded by $(1 \pm \varepsilon)$, 
independent of the ambient dimension. Clarkson~\cite{clarkson2008tighter} provided tighter bounds for random projections preserving manifold geometry, emphasizing 
the importance of extrinsic properties depending on embedding curvature such as total curvature and reach.

Other works have taken complementary directions. Tai~\cite{tai2020optimal} constructed small-sized coresets for Gaussian kernel density estimation, significantly reducing dataset size but without directly addressing uniform preservation of pairwise distances. Lotz~\cite{lotz2019persistent} demonstrated approximations of persistent homology for data structures of low complexity, showing that Gaussian width measures effectively guide projection dimension requirements.
Cheng, Jiang, Wei, and Wei~\cite{cheng2023rff_kernel_distance} study relative-error preservation of kernel distance by RFF for \emph{general} shift-invariant kernels: they prove that for wide kernel families---including standard Laplacian kernels---low feature dimension cannot yield small relative error, whereas for \emph{analytic} shift-invariant kernels (in particular the Gaussian) RFF with $\mathrm{poly}(\varepsilon^{-1}\log n)$ features achieves $\varepsilon$-relative error for all pairwise kernel distances among $n$ points.

\sou{While Random Fourier Features (RFFs) primarily offer relative error guarantees, alternative methods like Phillips and Tai's Gaussian Sketch~\cite{jeff2020gaussiansketch} provide provably superior almost relative error for kernel distance, alongside significantly improved, near-linear runtimes for applications. Avron et al.~\cite{pmlr-v70-avron17a} also studied modified RFFs for kernel ridge regression, giving improved spectral approximation bounds and statistical guarantees.
}

In the topological domain, Lotz demonstrated that persistence modules can be preserved under random projections using Gaussian width complexity~\cite{lotz2019persistent}, while Arya et al. proved interleaving guarantees for weighted filtrations by showing that simplex radii admit convex decompositions in preserved distances~\cite{arya2021dimensionality}.
Boissonnat and Dutta showed that RFF embeddings preserve the persistent homology of GKPD-based filtrations~\cite{boissonnat2024euclidean}. Other complementary approaches include Tai's coresets for kernel density estimation~\cite{tai2020optimal} and Kusano et al.'s use of RFF for vectorizing persistence 
diagrams~\cite{kusano2016pwgk}, demonstrating the broad applicability of random features in topological data analysis.

Thus, prior literature spans various aspects of distance preservation under linear or kernelized embeddings. Yet, a unified, dimensionally efficient framework 
that achieves uniform relative error guarantees for Gaussian kernel distances on general manifolds has been lacking, motivating the current study.
While these works establish important foundations, a unified framework that achieves both efficient dimensionality reduction and certified topological preservation on general manifolds has been lacking. Our work bridges this gap by providing intrinsic dimension bounds for kernel distance preservation while ensuring interleaving guarantees for the resulting persistent homology modules.

\subsection{Our Contributions}
\label{subsec:intro_main_result}
 Rahimi--Recht~\cite{rahimi2007random} show that $t = O\!\bigl(\frac{d}{\varepsilon^{2}}\log\!\frac{\sigma\,\diam(\mathcal{M})}{\varepsilon}\bigr)$ random features make the inner–product estimate $\hat K$ \emph{additively} $\varepsilon$–close to the true, shift-invariant kernel $K$ \emph{uniformly} over a compact set~$\mathcal{M}$.
Chen--Phillips~\cite{chen2017relative} upgrade this to a \emph{relative} $(1\!\pm\!\varepsilon)$ bound for \emph{Gaussian‐kernel distances}, but only for points lying in a ball of radius~$r$, with
$t=\Omega\!\bigl(\frac{N}{\varepsilon^{2}}\log\!\frac{Nr}{\varepsilon\delta}\bigr)$.
Both results depend on the \emph{ambient} diameter (or radius) and on the ambient dimension, and they do not exploit intrinsic geometry.

Our Theorem~\ref{thm:intro_uniform} removes the Chen--Phillips limitation, guaranteeing $(1\!\pm\!2\varepsilon)$ preservation of \emph{all} Gaussian kernel distances on any $C^{2}$ manifold $\man$ with positive reach. The resulting bound has effective intrinsic-dimension dependence roughly $d^{2}/\varepsilon^{2}$, up to the remaining logarithmic and geometric factors, while retaining only logarithmic dependence on the ambient dimension $N$.
Our Theorem~\ref{thm:additive_uniform} removes the Rahimi--Recht limitation: the same feature dimension suffices for \emph{uniform additive} $\varepsilon$-accuracy of kernel \emph{values} on $\man$, replacing the ambient-diameter dependence with intrinsic geometry.

We consider a compact, $d$–dimensional, $\mathcal C^{2}$ submanifold
$\mathcal{M}\subset\R^{N}$ with positive reach
$\rch(\mathcal{M})>0$. The reach of a manifold---a classical regularity measure introduced by Federer~\cite{federer1959curvature}---quantifies the largest distance up to which each point in the ambient space has a unique nearest point on the manifold. We work with the Gaussian kernel $K_\sigma$ of~\eqref{eq:gaussian-kernel} and the RFF map $\phi\colon\R^{N}\to\R^{2t}$ of~\eqref{eq:rff-map}. As in the kernel-distance convention $D_K$ above, write
\[
   D_{K_\sigma}(x,y)^2
   \;:=\;
   K_\sigma(x,x)+K_\sigma(y,y)-2K_\sigma(x,y)
   \;=\;
   2\bigl(1-K_\sigma(x,y)\bigr)
\]
for the associated Gaussian kernel distance at bandwidth $\sigma$. Below, $B^{d}_{1}(0)\subset\R^{d}$ denotes the closed unit ball (radius $1$, centered at the origin); see Section~\ref{sec: prelim} for the general ball notation $B^{d}_{r}(p)$.

\begin{theorem}[Uniform kernel–distance preservation]
\label{thm:intro_uniform}
Fix accuracy $\varepsilon\in(0,\rch(\man)/2)$ and confidence
$\delta\in(0,1)$.
If
\begin{equation}
   t
   \;=\;
   \Omega\!\Bigl(
   \frac{d}{\varepsilon^{2}}\,
   \log\!\Bigl(
      \frac{\vol(\mathcal{M})^{2}\,N^{\,2d}}
           {\vol(B^{d}_{1}(0))^{2}\,
           \rch(\mathcal{M})^{\,2d}\,
            \varepsilon^{\,2d+1}\,
            \delta}
   \Bigr)\Bigr),
   \label{eq:intro_sample_complexity}
\end{equation}

then with probability at least $1-\delta$ the RFF embedding satisfies
\[
   (1-2\varepsilon)\;
   D_{K_{\sigma}}(p,q)^{2}
   \;\le\;
\norm{\phi(p)-\phi(q)}^2
   \;\le\;
   (1+2\varepsilon)\;
   D_{K_{\sigma}}(p,q)^{2},
   \qquad
   \forall p,q\in\mathcal{M}.
\]
\end{theorem}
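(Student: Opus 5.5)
We follow the Chen--Phillips strategy, with the Euclidean ball replaced by an intrinsic covering of $\man$. Write $\Delta=p-q$ and $f(\Delta)=1-\cos\langle\omega,\Delta\rangle$. Then
\[
\norm{\phi(p)-\phi(q)}^2=\frac{2}{t}\sum_{i=1}^t\bigl(1-\cos\langle\omega^i,\Delta\rangle\bigr),
\]
and its expectation is $2(1-K_\sigma(p,q))=D_{K_\sigma}(p,q)^2$. The task is therefore a uniform relative-error concentration statement for the empirical mean of $f$ over the difference set $\{p-q: p,q\in\man\}$, which has intrinsic dimension at most $2d$. That is the source of the exponent $2d$ and of $\vol(\man)^2$ in \eqref{eq:intro_sample_complexity}.

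\textbf{Step 1 (fixed pair, far regime).} For a fixed pair with $D_{K_\sigma}(p,q)^2$ bounded below by a constant, the summands are bounded in $[0,2]$. Hoeffding then gives relative error $\varepsilon$ with failure probability $\exp(-\Omega(t\varepsilon^2))$.

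\textbf{Step 2 (fixed pair, near regime).} For small $\|\Delta\|$ we use a Bernstein bound, exactly as Chen--Phillips do. We have $1-\cos x\le x^2/2$, so the variance of $f$ is $O(\mathbb{E}[f]^2)$ up to constants, because the fourth moment of the Gaussian projection $\langle\omega,\Delta\rangle\sim\calN(0,\|\Delta\|^2/\sigma^2)$ is $3\|\Delta\|^4/\sigma^4$. This again yields relative error $\varepsilon$ with probability $1-\exp(-\Omega(t\varepsilon^2))$, uniformly in the scale of $\|\Delta\|$.

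\textbf{Step 3 (net on $\man\times\man$).} Using positive reach, geodesic and Euclidean balls of radius $r<\rch(\man)/2$ are comparable, and each intrinsic ball has volume at least $c\,\vol(B^d_1(0))r^d$. A standard packing argument then gives an $r$-net of $\man$ of size at most $\vol(\man)/(\vol(B^d_1(0))(r/2)^d)$, and hence a net of $\man\times\man$ of size equal to the square of that. We take $r\approx \varepsilon\,\rch(\man)/N$ (times a small power of $\varepsilon$). A union bound over the net then produces the logarithm in \eqref{eq:intro_sample_complexity}, with the extra factor $d$ coming from $\log$ of the $2d$-th power.

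\textbf{Step 4 (extension off the net).} This is the step we expect to be the main obstacle. We must show that for $p,q$ within $r$ of net points $p',q'$, both sides change by at most relative $\varepsilon$. That is delicate when $\|p-q\|$ is itself tiny, since additive Lipschitz control then fails.

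For pairs with $\|p-q\|\ge$ some threshold, we use a Lipschitz bound on the empirical map. We have $\|\nabla_\Delta \hat f\|\le \frac1t\sum\|\omega^i\|\,|\sin\langle\omega^i,\Delta\rangle|$, and with high probability $\max_i\|\omega^i\|=O(\sqrt{N}/\sigma)$; this is where $N$ enters polynomially inside the logarithm.

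For nearby pairs, we would instead exploit the local structure. By reach, $\Delta/\|\Delta\|$ lies within angle $O(\|\Delta\|/\rch(\man))$ of the tangent space $\mathcal{T}_p\man$. Both $\hat D^2$ and $D^2$ are then $\approx \|\Delta\|^2$ times a quadratic form, namely $\frac1t\sum\langle\omega^i,u\rangle^2$ versus $\|u\|^2/\sigma^2$. So it suffices to have the RFF frequencies act as a $(1\pm\varepsilon)$ subspace embedding on every tangent space, which is a JL-on-manifold statement in the style of Baraniuk--Wakin. We would handle this by adding the unit tangent bundle, of dimension $2d-1$, to the net, and by using the Taylor remainder $|1-\cos x-x^2/2|\le x^4/24$ to pass from the quadratic form to the kernel distance.

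Combining the two regimes with a $2\varepsilon$ error budget gives the stated $(1\pm2\varepsilon)$.
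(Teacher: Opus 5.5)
Your proposal is workable, but it is a genuinely different argument from the paper's.

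\textbf{How the paper proceeds.} The paper never splits by chord length. It takes an $r$-net of $\man$ with $r=\rch(\man)\varepsilon/(12N)$. For each pair of net points $(p,q)$ it uses Lemma~\ref{lem:diff-quadratic} to place the patch $B_{\man}(p,r)-B_{\man}(q,r)$ within $r^2/\rch(\man)$ of the flat $T_p\man-T_q\man$. Inside the unit ball it applies the Chen--Phillips uniform bound on a ball of the $(2d+1)$-dimensional subspace spanned by that flat and the origin, so the $0/0$ singularity is absorbed by a known low-dimensional Euclidean result. It then moves off the flat with the relative-Lipschitz estimate $|R(z)-R(y)|\le 12N\|y-z\|/\max\{\|y\|,\|z\|\}$ (Lemma~\ref{lem:LipboundR2}). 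Outside the unit ball it uses gradient concentration plus a Lipschitz bound on the gradient over tangent flats.

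\textbf{How your route differs.} You separate scales instead. Short chords are reduced, via the reach angle bound and $1-\cos x=x^2/2+O(x^4)$, to a linear subspace-embedding statement on every tangent space, handled by a Baraniuk--Wakin net over the unit tangent bundle. Long chords need only pointwise concentration at net pairs and an additive Lipschitz bound, since there $D_{K_\sigma}^2\gtrsim\tau_0^2/\sigma^2$.

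\textbf{What each buys.} Your route avoids the Chen--Phillips black box and the relative-Lipschitz machinery. Each union-bound term costs only $e^{-ct\varepsilon^2}$, so you get $t=O(\varepsilon^{-2}\log(\cdot))$, which is the theorem with the leading factor $d$ to spare. That factor in the paper comes from the $\exp(-t\varepsilon^2/(2d+1))$ of a $(2d+1)$-dimensional Chen--Phillips bound per patch, not from the log of the $2d$-th power as you suggest. The price is an extra net over the unit tangent bundle and the standard fact $\angle(T_p\man,T_{p'}\man)=O(\|p-p'\|/\rch(\man))$, which is not among the paper's stated tools. The paper, in exchange, uses one mechanism across all scales inside the unit ball and a coarser net.

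\textbf{Step 2 must be repaired.} As justified, it does not give a scale-uniform rate. Bernstein with variance $O(\mu^2)$, where $\mu=\Ex f$, and range $[0,2]$ gives an exponent of order $t\varepsilon^2\mu^2/(\mu^2+\varepsilon\mu)$. This degenerates to $t\varepsilon\mu$ when $\mu\ll\varepsilon$. That matters in your scheme: long-chord net pairs have $\mu$ as small as about $\tau_0^2/\sigma^2$, which is polynomially small in $1/N$, so a $t\varepsilon\mu$ exponent would reintroduce a linear dependence on $N$. The fix is to use $0\le f\le\langle\omega,\Delta\rangle^2/2$ for the tail, not only for the variance. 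This gives $\|f\|_{\psi_1}\lesssim\min\{1,\|\Delta\|^2/\sigma^2\}\asymp\mu$, and sub-exponential Bernstein then yields $2\exp(-ct\varepsilon^2)$ at every scale.

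\textbf{The threshold must be made explicit.} Controlling the normal component of $\Delta/\|\Delta\|$ and the quartic Taylor remainder through $\max_i\|\omega^i\|\lesssim\sqrt{N}/\sigma$ forces $\tau_0\approx c\min\{\varepsilon\,\rch(\man),\sqrt{\varepsilon}\,\sigma\}/\sqrt{N}$. The long-chord transfer then needs $r\lesssim\varepsilon\tau_0^2/(\sigma\sqrt{N})$, not $r\approx\varepsilon\rch(\man)/N$. This only changes the polynomial inside the logarithm, which the $\Omega$ absorbs (with $\sigma=1$ and $\varepsilon<\rch(\man)/2$).
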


\begin{remark}\label{ex:slow-helix}
\sou{Our contribution lies in reducing the target embedding dimension from a linear dependence on the ambient dimension $N$ 
to the intrinsic manifold dimension $d$.} One natural way to apply the result of Chen and Phillips~\cite{chen2017relative} 
to a manifold $\mathcal{M} \subset \mathbb{R}^N$ is to first embed $\mathcal{M}$ within a Euclidean ball of radius 
$r = \diam(\mathcal{M})$, and then invoke their result over this enclosing ball. Their bound requires 
$t = \Omega\bigl(\frac{N}{\varepsilon^2} \log(\frac{N}{\varepsilon} \cdot \frac{r}{\delta})\bigr)$ dimensions to obtain a 
$(1 \pm \varepsilon)$ relative error approximation to Gaussian kernel distances.

In contrast, our Theorem~\ref{thm:intro_uniform} provides a uniform relative error bound directly over the manifold 
$\mathcal{M}$, without relying on any ambient Euclidean ball. The sample complexity in our result depends on the 
intrinsic dimension $d$, the reach $\rch(\mathcal{M})$, and the intrinsic volume $\vol(\mathcal{M})$, with only logarithmic 
dependence on the ambient dimension $N$. Notably, our bound avoids any dependence on $\diam(\mathcal{M})$, and the logarithmic 
ambient dependence yields a significant advantage in high dimensions. \\

Below, we give an example showing a case where our bound can be much better than that of Chen and Phillips. 

\paragraph*{Example [Slow Helix with Bounded Curvature in High Dimensions]:} 
Let \( N \in \mathbb{N} \) be large. Define the 1D manifold
\(
\mathcal{M}=\{x(s):s\in[0,2\pi]\}
\)
with
\(
x(s)=\frac{1}{\sqrt{N}}\bigl(\cos\frac{s}{N},\sin\frac{s}{N},\ldots,\cos\frac{s}{2},\sin\frac{s}{2}\bigr)\in\mathbb{R}^N
\)
(the $j$th pair has frequency $j/N$, ending at $\cos(s/2),\sin(s/2)$).
This defines a \emph{slow} helix where the frequency of each harmonic decreases inversely with \( N \).
Before we apply our result to this slow helix, let us first consider a few geometric properties of this slow helix.

\smallskip\noindent\textbf{Norm.}~We have \(\|x(s)\|^2=\frac{1}{N}\sum_{j=1}^{N/2}\bigl(\cos^2(\tfrac{js}{N})+\sin^2(\tfrac{js}{N})\bigr)=\tfrac12\), hence \(\mathcal{M}\subset B^N_2(0)\) and \(d=1\).

\smallskip\noindent\textbf{Reach.}~For each~\(j\), the scaled pair \(\bigl(N^{-1/2}\cos(js/N),\,N^{-1/2}\sin(js/N)\bigr)\) has coordinatewise second derivatives of magnitude \(j^2/N^{5/2}\), so \(\|\ddot{x}(s)\|^2=\sum_{j=1}^{N/2} j^4/N^5\).
Since \(\sum_{j=1}^{N/2} j^4\sim \tfrac{1}{5}(N/2)^5=\Theta(N^5)\), this gives \(\|\ddot{x}(s)\|=\Theta(1)\) and therefore \(\rch(\mathcal{M})=\Omega(1)\) in~\(N\).

\smallskip\noindent\textbf{Volume.}~Each $(\cos,\sin)$ block contributes \(j^2/N^3\) to \(\|\dot{x}(s)\|^2\), so \(\|\dot{x}(s)\|^2=\sum_{j=1}^{N/2} j^2/N^3\) is independent of~\(s\), and \(\|\dot{x}(s)\|=N^{-3/2}\sqrt{\sum_{j=1}^{N/2} j^2}=\frac{1}{\sqrt{N}}\sqrt{\sum_{j=1}^{N/2}(j/N)^2}\).
Since \(\sum_{j=1}^{N/2} j^2\sim \tfrac{1}{3}(N/2)^3=\Theta(N^3)\), we obtain \(\|\dot{x}(s)\|=\Theta(1)\) and \(\mathrm{Vol}(\mathcal{M})=\int_0^{2\pi}\|\dot{x}(s)\|\,ds=2\pi\,\|\dot{x}(0)\|=\Theta(1)\).

\smallskip
\noindent
Note that the slow helix is genuinely $N$-dimensional: there is no nonzero $a\in\mathbb{R}^N$ with $a\cdot x(s)=0$ for every $s\in[0,2\pi]$.
Equivalently, the curve is not contained in any hyperplane through the origin, so its linear span in $\mathbb{R}^N$ is all of~$\mathbb{R}^N$.

\smallskip
\noindent
For $j=1,\ldots,N/2$, write $u_j(s):=\frac{1}{\sqrt{N}}\cos(js/N)$ and $v_j(s):=\frac{1}{\sqrt{N}}\sin(js/N)$; these have pairwise distinct frequencies~$j/N$.
Regarded as elements of $L^2([0,2\pi])$ with the usual inner product
\(
\langle f,g\rangle:=\int_0^{2\pi} f(s)\,g(s)\,ds
\),
the family $\{u_j,v_j\}_{j=1}^{N/2}$ is linearly independent; hence no nontrivial linear combination $\sum_{\ell=1}^N a_\ell x_\ell(s)$ vanishes identically on~$[0,2\pi]$.

\smallskip
\noindent
Consequently, $\mathcal{M}$ cannot lie in a ball in any strictly lower-dimensional linear subspace of~$\mathbb{R}^N$.

\smallskip
\noindent
We are now in a position to compare the two guarantees on this helix.
Theorem~\ref{thm:intro_uniform} applies directly to~$\mathcal{M}$ with its intrinsic reach and volume.
The Chen--Phillips bound, by contrast, is stated for points in a Euclidean ball in~$\mathbb{R}^N$; since $\mathcal{M}\subset B_2^N(0)$, invoking it requires enlarging to that full unit ball, where their analysis yields
\(t_{\text{CP}} = \Omega\bigl(\frac{N}{\varepsilon^2} \log(\frac{N}{\varepsilon \delta})\bigr)\)
feature dimensions, whereas on~$\mathcal{M}$ alone, Theorem~\ref{thm:intro_uniform} needs only
\(t = \Omega\bigl(\frac{1}{\varepsilon^2} \log(\frac{N}{\varepsilon^3 \delta})\bigr)\).

This example shows a setting where both methods operate within a unit ball, yet
the Chen–Phillips bound incurs a full linear cost in the ambient dimension $N$, while our intrinsic
RFF approach maintains efficient scaling.
\end{remark}

\begin{remark}\sou{It should be noted that the logarithmic term in our bound hides factors exponential in $d$. Thus, 
the target dimension should be interpreted as $O(d^2/\varepsilon^2)$. 
We emphasize that the non-linearity in $d$ is offset by the reduction in dependence on the ambient dimension $N$, from linear to logarithmic.}
\end{remark}

\begin{remark}
Cheng et al.~\cite{cheng2023rff_kernel_distance} suggest that there may be a gap in the 
Chen--Phillips~\cite{chen2017relative} proof of the relative-error bound for Gaussian kernel distances on a ball, specifically in Lemma~5 of that paper.

In our proof of Theorem~\ref{thm:intro_uniform}, we use a version of the concentration inequality of Chen and Phillips, however it is nevertheless a special case of a stronger inequality due to Boissonnat and Dutta~\cite{boissonnat2024euclidean} (Lemma~8\textup{(2)}) (see also 
the full version~\cite{boissonnatduttaesa24fullversion}).
In particular, taking~$S$ in their lemma to consist of a single column vector (the difference vector between two points), together with the 
trivial bound $r_{\mathrm{st}}\ge 1$, recovers the Chen--Phillips concentration bound. We provide a proof of the lemma of ~\cite{boissonnat2024euclidean}, 
restricted to our case of interest, in the Appendix.
\end{remark}

\medskip

\medskip
\textbf{Uniform additive approximation of kernel values.}
Besides relative control of kernel \emph{distances}, many applications require uniform additive accuracy for the Gaussian kernel \emph{values} $K_\sigma(p,q)$ under the RFF estimator $\widehat K$.
A short sketch of the idea appears at the end of the proof overview (Section~\ref{subsec:proof_overview}); Section~\ref{sec:app-kernel-values} develops the lemmas and ends with the complete proof.

\begin{theorem}[Uniform additive error bound for kernel values]
\label{thm:additive_uniform}
Fix $\varepsilon < \rch(\man)/2$ and $\delta\in(0,1)$. If
\[
t
\;=\;
\Omega\!\Bigl(
\frac{d}{\varepsilon^{2}}\,
\log\!\Bigl(
\frac{\vol(\man)^{2}N^{2d}}
     {\vol(B^d_1(0))^2\rch(\man)^{2d}\,\varepsilon^{2d+1}\,\delta}
\Bigr)\Bigr),
\]
then with probability at least $1-\delta$,
\[
\sup_{p,q\in\man}
\;\bigl|K_{\sigma}(p,q)-\widehat K(p,q)\bigr|
\;\le\;
\varepsilon.
\]
\end{theorem}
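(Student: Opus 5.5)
The proof follows the Rahimi--Recht template of a net argument plus a Lipschitz extension, with the ambient $\varepsilon$-net of the ball replaced by an intrinsic net on $\man\times\man$. Write $f(p,q):=\widehat K(p,q)-K_\sigma(p,q)$. For a fixed pair, $\widehat K(p,q)=\frac1t\sum_{i=1}^t\cos\langle\omega^i,p-q\rangle$ is an average of i.i.d.\ terms in $[-1,1]$ with mean $K_\sigma(p,q)$ by \eqref{eq:bochner}. Hoeffding's inequality therefore gives $\Pr[|f(p,q)|>\varepsilon/2]\le 2\exp(-t\varepsilon^2/8)$.

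\textbf{Step 1: the net.} Using the reach, build an $r$-net of $\man$. For scales $r<\rch(\man)/2$, each ball $B^N_r(p)\cap\man$ contains a geodesic-type patch whose volume is at least $c^d\vol(B^d_1(0))r^d$; this comes from the standard reach-based comparison, e.g.\ Niyogi--Smale--Weinberger. A maximal packing argument then gives a net $\mathcal N_r$ with
\[
|\mathcal N_r|\le \frac{\vol(\man)}{\vol(B^d_1(0))\,(c r)^d}.
\]
The product $\mathcal N_r\times\mathcal N_r$ is a $2r$-net of $\man\times\man$ of size at most the square of this quantity. This squaring is where the $\vol(\man)^2/\vol(B^d_1(0))^2$ and the power $2d$ in the bound come from. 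A union bound over the net gives failure probability at most $2|\mathcal N_r|^2\exp(-t\varepsilon^2/8)$.

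\textbf{Step 2: Lipschitz control.} This is the main obstacle. We need $f$ to be Lipschitz with a constant that depends only on $N$ (polynomially) and on $\varepsilon$, and not on $\diam(\man)$. We have $\nabla_p f = -\frac1t\sum_i \omega^i\sin\langle\omega^i,p-q\rangle-\nabla_pK_\sigma$, so $\|\nabla f\|\le \frac1t\sum_i\|\omega^i\|+\sigma^{-1}$. Rather than bounding the second moment as Rahimi--Recht do, I would use a tail bound. Since $\|\omega\|^2\sigma^2$ is $\chi^2_N$, Gaussian norm concentration gives $\max_i\|\omega^i\|\le \sigma^{-1}(\sqrt N+\sqrt{2\log(2t/\delta)})=:L$ with probability at least $1-\delta/2$. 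We may normalize $\sigma$, or absorb it into the scale, as the paper's statement implicitly does. Then $L=O(\sqrt{N}\log(t/\delta))$, and taking $r=\varepsilon/(4L)$ ensures that $|f|\le\varepsilon/2$ on the net implies $|f|\le\varepsilon$ everywhere. The only delicate point is the fixed-point dependence of $L$ on $t$, which costs just a $\log\log$ term. The restriction $r<\rch(\man)/2$ follows from $\varepsilon<\rch(\man)/2$ and $L\ge1$.

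\textbf{Step 3: parameter bookkeeping.} Plugging $r\asymp \varepsilon/N$ into the net size gives $|\mathcal N_r|^2\lesssim \vol(\man)^2N^{2d}/(\vol(B^d_1(0))^2\rch(\man)^{2d}\varepsilon^{2d})$, absorbing constants (and using $\rch\lesssim 1$ normalization to place $\rch$ in the denominator). Setting $2|\mathcal N_r|^2e^{-t\varepsilon^2/8}\le\delta/2$ and allowing one extra $1/\varepsilon$ slack for the Lipschitz loss yields exactly $t=\Omega\bigl(\frac{d}{\varepsilon^2}\log(\cdots)\bigr)$, since $\log$ of the $2d$-th powers contributes the factor $d$. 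A final union bound over the Lipschitz event and the net event gives the overall probability $1-\delta$.
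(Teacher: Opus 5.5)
Your argument is correct in substance, modulo the point in the second paragraph, but it takes a different route from the paper's.

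\textbf{Comparison with the paper.} The paper transplants its ratio template. It uses Hoeffding for $F$ at each net-pair difference (Lemma~\ref{lem: diff bound}), and gradient concentration in the directions of $T_p\man-T_q\man$ (Lemma~\ref{lem:derivative_of_diff_bound}) together with the Hessian bound of Lemma~\ref{lem:lip_diff} to Taylor-expand over affine balls in that flat. It then uses Lemma~\ref{lem:diff-quadratic} to pass from the flat to the true patch $B_{\man}(p,r)-B_{\man}(q,r)$ through an $O(r^2/\rch(\man))$ tube, with $r=\rch(\man)\varepsilon/(12N)$ and a union bound over $|\Gamma_r|^2$ pairs.

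You instead observe that $F$, unlike the ratio $R$, is smooth at $0$. So a single ambient bound $\|\nabla F\|\le\frac1t\sum_i\|\omega^i\|+\sigma^{-1}=O(\sqrt N/\sigma)$ on all of $\R^N$ suffices, and the reach enters only through the covering number of $\man$. This makes the tangent-space machinery unnecessary and gives a coarser net (up to logarithms, $r\asymp\varepsilon\sigma/\sqrt N$). Your bookkeeping actually yields $t\gtrsim\varepsilon^{-2}\log(|\mathcal N_r|^2/\delta)$ with no factor $d$ in front of the logarithm. The stated bound, with its extra $d$, is therefore implied by yours rather than ``exactly'' produced by the $2d$-th powers.

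Two small fixes:
\begin{itemize}
\item Put the $\sigma^{-1}$ from $\nabla K_\sigma$ into $L$, or halve $r$.
\item You can avoid the $\log t$ fixed point by bounding $\frac1t\sum_i\|\omega^i\|\le\sqrt{2N}/\sigma$ via Lemma~\ref{Concentration for Gaussian Norm Sums} and Cauchy--Schwarz.
\end{itemize}

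\textbf{The ``$\rch(\man)\lesssim 1$ normalization'' is a hypothesis.} The bandwidth is fixed, so after rescaling by $1/\sigma$ this becomes the extra hypothesis $\rch(\man)\lesssim\sigma$. Without it, your logarithm contains $\vol(\man)/\sigma^{d}$, which $\vol(\man)/\rch(\man)^{d}$ does not control.

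No argument can avoid this, because the statement as written fails without such a bound. Take the circle of radius $R$ in $\R^2$:
\begin{itemize}
\item Here $\vol(\man)/\rch(\man)=2\pi$, so the prescribed $t$ does not depend on $R$, yet $\man-\man$ is the disk of radius $2R$.
\item For almost every draw, $1$ and the numbers $\omega^i_1/(2\pi)$ are linearly independent over $\mathbb{Q}$.
\item Kronecker's theorem then gives arbitrarily large integers $n$ with $\widehat K(ne_1)=\frac1t\sum_i\cos(n\omega^i_1)\ge 0.9$, while $K_\sigma(ne_1)=e^{-n^2/(2\sigma^2)}\to0$.
\item Hence, for, say, $\varepsilon\le 1/2$, the success probability tends to $0$ as $R\to\infty$ at fixed $t$.
\end{itemize}

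The paper's proof tacitly needs the same restriction. Its net size with $r=\rch(\man)\varepsilon/(12N)$ is scale-invariant, and its Taylor/tube step only closes when $\rch(\man)/\sigma$ is bounded. So state $\rch(\man)\lesssim\sigma$ explicitly as a hypothesis, or keep $\vol(\man)/\sigma^d$ in the bound. With that, your argument is a complete and simpler proof.
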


\medskip
Equation~\eqref{eq:intro_sample_complexity} shows that the required
feature dimension is
\(
   t=\Theta\!\bigl(\varepsilon^{-2}\bigr)
\)
up to a \emph{logarithmic} factor that depends on intrinsic geometry
($\vol(\mathcal{M})$, $\rch(\mathcal{M})$) and the ambient dimension~$N$. This is similar to the case of random linear projection ~\cite{baraniuk2009random}.
The leading $1/\varepsilon^{2}$ term matches the
optimal Euclidean RFF rate; reach (curvature) appears only in the logarithmic factor.

\medskip
\textbf{Preserving persistent homology.} Beyond kernel distances and values, many downstream pipelines depend on the \emph{persistent homology} of a kernel-weighted filtration built from the data; our embedding preserves this structure as well, up to a controlled multiplicative distortion.

For a finite point set $P\subset\man$, define the \emph{kernel weight} of $p\in P$ by
\[
w(p)
\;:=\;
-\Biggl(\frac1{|P|}\sum_{y\in P} D_{K_\sigma}^2(p,y)
\;-\;
\frac1{2|P|^2}\sum_{x,y\in P} D_{K_\sigma}^2(x,y)\Biggr),
\]
and write $\widehat P := \{(p,w(p)):p\in P\}$ for $P$ equipped with these weights. The induced \emph{Gaussian kernel power distance} between weighted points is $D_{K_\sigma}^2(\widehat p,\widehat q):=D_{K_\sigma}^2(p,q)-w(p)-w(q)$~\cite{phillips1geometric,boissonnat2024euclidean}. The \emph{weighted Čech filtration} $\check C_\alpha(\widehat P)$ includes a simplex once its minimal enclosing radius under this power distance is at most $\alpha$, and the \emph{weighted Rips filtration} $VR_\alpha(\widehat P)$ includes a simplex once every pairwise power distance within it is at most $\alpha$; see Section~\ref{sec: prelim} for the formal definitions. Two filtrations $\{F_\alpha\}$ and $\{G_\alpha\}$ are $(1\pm\varepsilon)$\emph{-interleaved} if $F_\alpha\subseteq G_{(1+\varepsilon)\alpha}\subseteq F_{(1+\varepsilon)^2\alpha}$ for every $\alpha\ge0$ (and symmetrically); by the persistence Stability Theorem (Section~\ref{subsec:interleaving}) this bounds the bottleneck distance between the corresponding persistence diagrams.

\begin{theorem}[Weighted Čech/Rips interleaving on manifolds]
\label{thm:intro_interleaving}
Under the same setting as Theorem~\ref{thm:intro_uniform}, let $P\subset\mathcal{M}$ be a finite sample of $n$ points.
Define the kernel centroid $\mu_P := \tfrac{1}{|P|}\sum_{y\in P}\phi(y)$,
the constant $c_P := \tfrac{2}{(1-\|\mu_P\|)^2}$,
and $\varepsilon_\star := \max\!\bigl\{2\varepsilon,\;\varepsilon\,c_P\bigr\}$.
Then, with probability at least $1-\delta$, the weighted \v{C}ech filtrations
$\check{C}_\alpha(\widehat{P})$ and $\check{C}_\alpha\!\bigl(\phi(P)\bigr)$
built with the Gaussian kernel power distance $D_{K_\sigma}$
are $(1\pm\varepsilon_\star)$–interleaved.
Consequently, the corresponding weighted Rips filtrations
$VR_\alpha(\widehat{P})$ and $VR_\alpha\!\bigl(\phi(P)\bigr)$
are also $(1\pm\varepsilon_\star)$–interleaved.
(Full proof: Theorem~\ref{thm:manifold_interleaving_final}.)
\end{theorem}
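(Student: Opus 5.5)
The plan is to take Theorem~\ref{thm:intro_uniform} as the main input and reduce both interleavings to it. Throughout we work on the probability-$(1-\delta)$ event on which
\[
(1-2\varepsilon)D_{K_\sigma}(p,q)^2\le\|\phi(p)-\phi(q)\|^2\le(1+2\varepsilon)D_{K_\sigma}(p,q)^2
\]
holds for all $p,q\in\man$, and hence in particular for all pairs in $P$. On the embedded side, the weights are defined by the same formula with $D_{K_\sigma}^2$ replaced by $\|\phi(\cdot)-\phi(\cdot)\|^2$.

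\textbf{Step 1: rewrite the weights via the centroid.} Writing $\|\phi(p)-\phi(y)\|^2=\|\phi(p)\|^2+\|\phi(y)\|^2-2\langle\phi(p),\phi(y)\rangle$ and using $\|\phi(x)\|^2=1$, the embedded weight becomes
\[
\widehat w(p)=-\bigl(\|\phi(p)-\mu_P\|^2-\|\mu_P\|^2\cdot\text{const}\bigr),
\]
that is, a centroid distance. The same identity holds in the RKHS $\mathcal H$ with the true centroid.

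\textbf{Step 2: bound the weight error.} The weighted difference $D^2(p,q)-w(p)-w(q)$ is a fixed linear combination, with bounded coefficients, of squared pairwise distances. Each of these distances is distorted by a factor of at most $(1\pm2\varepsilon)$. An additive error in the weights, however, is not automatically relative to the power distance, because the power distance can be small while its individual terms are large. To handle this, I would use the centroid form: the power distance $D^2(\hat p,\hat q)$ involves $\|\phi(p)-\mu_P\|^2$, and these quantities are bounded below by $(1-\|\mu_P\|)^2$, since $\|\phi(p)\|=1$. Dividing the additive error $O(\varepsilon)$ by this lower bound gives a relative error of at most $\varepsilon c_P$. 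Combining with the pairwise term gives a multiplicative distortion of $(1\pm\varepsilon_\star)$ for every weighted pairwise power distance. \textbf{This relative-control step is the main obstacle.} I would first try to prove the inequality $|\widehat w(p)-w(p)|\le\varepsilon\cdot\tfrac{2}{(1-\|\mu_P\|)^2}\cdot(\text{power distance scale})$, following Boissonnat--Dutta.

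\textbf{Step 3: Čech interleaving.} Following Arya et al., the squared minimal enclosing radius of a simplex under a power distance can be written as a convex combination of the pairwise weighted power distances, by using the barycentric weights of the minimizing point. Concretely:
\begin{enumerate}
\item Since every pairwise power distance is preserved up to $(1\pm\varepsilon_\star)$, the radius function is also preserved up to $(1\pm\varepsilon_\star)$.
\item Therefore $\check C_\alpha(\widehat P)\subseteq \check C_{(1+\varepsilon_\star)\alpha}(\phi(P))$, and the reverse inclusion follows in the same way after adjusting constants.
\end{enumerate}

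\textbf{Step 4: Rips interleaving.} This is immediate, since the Rips filtration depends only on the pairwise power distances, each of which is preserved up to $(1\pm\varepsilon_\star)$.
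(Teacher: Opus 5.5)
Your plan is correct and follows the paper's proof. It takes the pairwise $(1\pm2\varepsilon)$ bound from Theorem~\ref{thm:intro_uniform}, and gets relative stability of the weights by dividing an $O(\varepsilon)$ additive error by the centroid lower bound $(1-\|\mu_P\|)^2$ (the paper's Lemma~\ref{lem:stability-kernel-weights} with Lemma~\ref{lem:kernel-weight-lower}). It then applies the Simplex Distortion Lemma — which you would re-derive from the Arya et al.\ formula $\mathrm{rad}^2=\max_{\lambda}\tfrac12\sum_{i,j}\lambda_i\lambda_j D^2(\widehat p_i,\widehat p_j)$ (a maximum over $\lambda$, with diagonal terms included), whereas the paper just cites it — and the Rips case follows for free.

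Two small corrections. First, the identity in Step~1 is exactly $\widehat w(p)=-\|\phi(p)-\mu_P\|^2$, with no extra $\|\mu_P\|^2$ term. Second, the power distance is then $D^2(p,q)+|w(p)|+|w(q)|$, a sum of nonnegative terms, so the ``main obstacle'' is just relative control of each weight against itself, which your centroid bound already gives.
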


This topological guarantee extends the point cloud results of \cite{boissonnat2024euclidean} to the manifold setting. The key technical innovation lies in establishing relative approximation bounds for kernel weights defined through RKHS centroids, ensuring that the weighted simplex structure—and consequently the persistent homology—remains stable under projection.

\begin{remark}[On the centroid constant $c_P$]
\label{rem:centroid_constant}
The constant $c_P = \tfrac{2}{(1-\|\mu_P\|)^2}$ is strictly positive since $\|\mu_P\| < 1$ for any non-degenerate finite set~$P$.
Moreover, using Lemma~\ref{lem:kernel-weight-lower}, one can lower bound $(1-\|\mu_P\|)^2$ by
\[
(1-\|\mu_P\|)^2
\;\ge\;
\Biggl(
\Bigl(1-e^{-r^2/2}\Bigr)
\frac{\mathbb{E}\bigl[\|x-y\|^2\bigr]}{r^2}
\Biggr)^{\!2}\Big/4,
\]
where $r = \diam(\operatorname{supp}P)$.
Hence, $c_P$ is not only finite but bounded above by the reciprocal of this term.
This bound, however, is not sharp—empirically, $\|\mu_P\|$ tends to be small for well-spread data,
so the actual value of $c_P$ is often considerably lower than this theoretical upper bound.
\end{remark}

\subsection{Proof Ideas}
\label{subsec:proof_overview}
In the following, we shall assume unit bandwidth ($\sigma = 1$) for simplicity. For a general 
bandwidth $\sigma > 0$, observe that both the Gaussian kernel and the Random Fourier Feature approximation are invariant under the rescaling $x \mapsto x/\sigma$:
\[
K_\sigma(x, y) = \exp\left(-\frac{\|x - y\|^2}{2\sigma^2}\right)
= K\left(\tfrac{x}{\sigma}, \tfrac{y}{\sigma}\right).
\]
Similarly, the RFF approximation using $\omega \sim \mathcal{N}(0, \sigma^{-2} I_N)$ is equivalent to 
using $\omega' \sim \mathcal{N}(0, I_N)$ on the manifold rescaled by a factor of~$1/\sigma$. Thus in order to obtain our full theorems, it is 
sufficient to consider the case of Gaussian kernels having unit bandwidth. 

Our goal is Theorem~\ref{thm:intro_uniform}: a uniform
error bound on the ratio of the RFF distance between pairs of embedded points and their kernel distances 
in the original space, i.e. 
$D_{\hat{K}}^{2}(x-y)/D_{K_\sigma}^{2}(x-y)$ for all pairs of points $x,y\in \mathcal{M}$, where $\mathcal{M} \subset \R^N$ is a 
compact $d$-dimensional $\mathcal{C}^2$ submanifold with reach $\rch(\mathcal{M})$.

Let us denote by $\widehat{\mathcal{M}}:=\mathcal{M}-\mathcal{M}=\{y-z:y,z\in\mathcal{M}\}$, the set of all pairwise separations.
Since Gaussian kernel distances are \emph{shift-invariant} -- every comparison between $y,z\in\mathcal{M}$ 
depends only on the chord $\Delta=y-z$, so bounding the ratio 
$R = R(\Delta) := D_{\hat{K}}^{2}(\Delta)/D_{K_\sigma}^{2}(\Delta)$, pointwise and uniformly over $\Delta\in\widehat{\mathcal{M}}$, 
is precisely the statement that all kernel distances on~$\mathcal{M}$ are preserved simultaneously. \\ 

As is typical in proofs of this type (see e.g.~\cite{baraniuk2009random,rahimi2007random}), the initial framework of our proof is via a \emph{net} argument -- $(i)$ constructing an appropriately 
fine net over the manifold, $(ii)$ showing that kernel distances between all pairs of points from the net are preserved by the RFF embedding (using a union bound 
over pairs of points from the net), and $(iii)$ extending this distance preservation 
to small neighbourhoods of the net points via a Lipschitz continuity argument, via an upper bound on the magnitude of the gradient of the ratio $R(\Delta)$. 
Observe that a crucial aspect of our analysis is always to work using local tangent-plane approximations (around each point in the net), rather than using 
Euclidean balls or higher-order differential geometry. Subsequently we lift the error bounds obtained in the tangent planes to a small neighbourhood around 
the net-points. This allows us to use the intrinsic dimension of the manifold, as the tangent planes are affine spaces 
having the intrinsic dimension rather than the ambient one.  \\
 
However, an attempt to implement this basic approach encounters some major obstacles, which are conceptual as well as technical.
Handling these requires several technical and some conceptual innovations, which we regard as our main technical contribution. \\

Firstly, $\widehat{\mathcal{M}}$ always contains the origin (i.e. corresponding to points $y\in \mathcal{M}$, giving $\Delta = y-y=0\in \widehat{\mathcal{M}}$), 
where the 
ratio $R$ approaches $0/0$, since both the kernel distance as well as the RFF distance approach zero as $\Delta$ goes to zero.
Moreover, the two distances approach zero at different speeds, which, in the limit, causes the gradient to blow up around zero.
A possible approach to address this issue could be to use the Chen-Phillips result, bounding the relative error of all pairwise 
distances in a small ball around the origin. However this would force the target dimension to depend \emph{linearly} on the ambient 
dimension.

We therefore take a different tack -- using the notion of \emph{relative Lipschitz} continuity.
We show that the RFF distances when normalized by the kernel distance, as well as by the \emph{magnitude} 
of the original position vectors, is bound with high probability. This is sufficient to compare a true separation $\Delta\in \widehat{M}$ 
to its projection on to the tangent plane, and requires only a logarithmic dependence on the ambient dimension. \\ 

Next, although the manifold has bounded reach, its tangent spaces may rotate arbitrarily 
as one moves along the manifold; in particular, the manifold may twist in arbitrary directions. 
Consequently, a direct Lipschitz -- or even relative Lipschitz -- argument would require controlling the approximation error in every ambient direction. 
Such an approach inevitably incurs a dependence on the ambient dimension $N$, since the analysis must account for all 
possible directions in which the tangent plane may rotate, thereby negating our previous attempts to eliminate ambient-dimensional dependence.

Instead, we exploit the defining geometric consequence of bounded reach: locally, the manifold departs from its tangent plane 
only quadratically. More precisely, if a tangent disk has diameter $\varepsilon$, then the corresponding manifold patch remains 
within $O(\varepsilon^2)$ of that disk. Combining this quadratic deviation with the relative Lipschitz property and an appropriately 
constructed net on the tangent disk allows us to transfer estimates from the tangent plane to the manifold without paying a linear 
penalty in the ambient dimension. This is the key geometric insight that enables us to remove the dependence on $N$. \\

Finally, in general, $\widehat{\mathcal{M}}$ is \emph{not} a smooth manifold
(it can be a stratified set), so there is no global tangent bundle or
chart in which to run the linear analysis.
To handle this, we avoid covering $\widehat{\mathcal{M}}$ directly.
Instead, Section~\ref{subsec:local-charts} shows, using only the
reach of~$\mathcal{M}$, that if $U,V\subset\mathcal{M}$ are two small intrinsic neighborhoods about net points $y,z\in\mathcal{M}$, then the ambient set of chord vectors
\(
\{u-v:u\in U,\;v\in V\}
\)
lies near the affine space of \emph{tangent differences}
$T_y\mathcal{M}-T_z\mathcal{M}$ (Lemma~\ref{lem:diff-quadratic}).
That affine model has dimension at most~$2d$ and is described by intrinsic data we control.
The reach $\rch(\mathcal{M})$ enters crucially here: it bounds how far this set difference 
of intrinsic neighborhoods may deviate from the flat tangent-difference model.
Once such reach-controlled proximity is in hand, one combines the relative Lipschitz–type 
property for~$R$ (Lemma~\ref{lem:LipboundR2}) to propagate uniform ratio bounds from 
$T_y\mathcal{M}-T_z\mathcal{M}$ back to the intrinsic neighborhoods, and a union bound 
over the finitely many neighborhood pairs completes the global theorem. \\

\subsection{Applications}
\label{subsec:applications}

Random Fourier Features (RFF), introduced in~\cite{rahimi2007random}, approximate
shift-invariant kernels by an explicit finite-dimensional map so that inner products
in the feature space estimate kernel values; see~\cite{liu2021random,hofmann2008kernel}
for surveys of kernel methods and random-feature approximations.  For Gaussian kernels,
\cite{chen2017relative} showed that RFFs can preserve Gaussian kernel distances with
relative error, which is the relevant geometric quantity in many kernel-based pipelines.

Our contribution is that, when the input is supported on a compact low-dimensional
manifold, the required number of random features is controlled by the intrinsic geometry
of the manifold rather than by the size of the data set or by an ambient Euclidean ball.
Thus any RFF-based pipeline whose analysis reduces to preserving Gaussian kernel
distances or kernel values may use the manifold embedding of
Theorems~\ref{thm:intro_uniform} and~\ref{thm:additive_uniform}.  The formal statements
and the corresponding approximation and complexity calculations are deferred to
Appendix~\ref{app:applications}.

\paragraph{Kernel \(k\)-means clustering.}
Kernel \(k\)-means maps the data to an RKHS and performs Euclidean \(k\)-means there
using feature-space centroids~\cite{girolami2002mercer,dhillon2004kernel}.  RFFs replace
the implicit RKHS representation by explicit Euclidean vectors, making standard Euclidean
clustering tools applicable.  Under the manifold hypothesis, our theorem gives the same
type of kernel-distance preservation using an intrinsic feature count.  The formal
cost-preservation statement and the approximation-transfer calculation are given in
Appendix~\ref{app:manifold-chen-phillips-kmeans}.

\begin{corollary}
Let
\[
  P=\{p_1,\ldots,p_n\}\subset \mathcal M\subset\mathbb R^N,
\]
where \(\mathcal M\) is a compact \(d\)-dimensional \(C^2\) submanifold with positive reach.  Let \(K_\sigma\) be the Gaussian kernel, let
\(\Psi:\mathcal M\to\mathcal H_{K_\sigma}\) be its RKHS feature map, and let
\(\phi:\mathbb R^N\to\mathbb R^{2t}\) be the RFF map from Theorem~\ref{thm:intro_uniform}.  Assume that
\[
  t
  \;\ge\;
  C\,\frac{d}{\varepsilon^2}
  \log\!\left(
      \frac{\operatorname{vol}(\mathcal M)^2 N^{2d}}
           {\operatorname{vol}(B_1^d(0))^2
            \operatorname{rch}(\mathcal M)^{2d}
            \varepsilon^{2d+1}\delta}
  \right)
\]
for the constant \(C\) in Theorem~\ref{thm:intro_uniform}.  Then, with probability at least \(1-\delta\), the following holds simultaneously for every partition
\(\Pi=\{C_1,\ldots,C_k\}\) of \(P\):
\[
  (1-2\varepsilon)\operatorname{cost}_{K_\sigma}(\Pi)
  \;\le\;
  \operatorname{cost}_{\phi}(\Pi)
  \;\le\;
  (1+2\varepsilon)\operatorname{cost}_{K_\sigma}(\Pi),
\]
where
\[
  \operatorname{cost}_{K_\sigma}(\Pi)
  :=
  \sum_{j=1}^k\sum_{p\in C_j}
  \|\Psi(p)-\mu_j\|_{\mathcal H_{K_\sigma}}^2,
  \qquad
  \mu_j:=\frac1{|C_j|}\sum_{q\in C_j}\Psi(q),
\]
and
\[
  \operatorname{cost}_{\phi}(\Pi)
  :=
  \sum_{j=1}^k\sum_{p\in C_j}
  \|\phi(p)-\widehat\mu_j\|_2^2,
  \qquad
  \widehat\mu_j:=\frac1{|C_j|}\sum_{q\in C_j}\phi(q).
\]
Hence, if a Euclidean clustering algorithm applied to \(\phi(P)\subset\mathbb R^{2t}\) returns a partition \(\widehat\Pi\) satisfying
\[
  \operatorname{cost}_{\phi}(\widehat\Pi)
  \le
  \rho\min_{\Pi}\operatorname{cost}_{\phi}(\Pi),
\]
then the same partition satisfies
\[
  \operatorname{cost}_{K_\sigma}(\widehat\Pi)
  \le
  \rho\,\frac{1+2\varepsilon}{1-2\varepsilon}
  \min_{\Pi}\operatorname{cost}_{K_\sigma}(\Pi).
\]
Here \(\rho\ge 1\) denotes the approximation factor of the Euclidean
clustering algorithm applied to the embedded point set \(\phi(P)\); that is,
the algorithm returns a partition whose RFF-space \(k\)-means cost is at most
\(\rho\) times the optimal RFF-space \(k\)-means cost.
\end{corollary}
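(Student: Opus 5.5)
The plan is to express each cluster cost as a weighted sum of pairwise squared distances, in both the RKHS and the RFF space. Then I apply Theorem~\ref{thm:intro_uniform} term by term on the single high-probability event it provides. The key identity holds in any inner-product space: for a finite multiset $C$ with mean $\mu$,
\[
\sum_{p\in C}\|x_p-\mu\|^2=\frac{1}{2|C|}\sum_{p\in C}\sum_{q\in C}\|x_p-x_q\|^2 .
\]
I will verify it by expanding $\|x_p-x_q\|^2=\|x_p-\mu\|^2+\|x_q-\mu\|^2-2\langle x_p-\mu,x_q-\mu\rangle$ and using $\sum_{p}(x_p-\mu)=0$.

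Applied in $\mathcal H_{K_\sigma}$, the identity together with $\|\Psi(p)-\Psi(q)\|^2_{\mathcal H}=K_\sigma(p,p)+K_\sigma(q,q)-2K_\sigma(p,q)=D_{K_\sigma}(p,q)^2$ gives
\[
\operatorname{cost}_{K_\sigma}(\Pi)=\sum_j\frac{1}{2|C_j|}\sum_{p,q\in C_j}D_{K_\sigma}(p,q)^2 .
\]
Applied in $\mathbb R^{2t}$, it gives
\[
\operatorname{cost}_\phi(\Pi)=\sum_j\frac{1}{2|C_j|}\sum_{p,q\in C_j}\|\phi(p)-\phi(q)\|^2 .
\]
Now take the event $\mathcal E$ of Theorem~\ref{thm:intro_uniform}, which has probability at least $1-\delta$ under the stated bound on $t$. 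On $\mathcal E$, every pair $p,q\in\mathcal M$ satisfies the two-sided $(1\pm2\varepsilon)$ bound, and in particular every pair in $P$ does. The coefficients $1/(2|C_j|)$ are nonnegative, so summing the pairwise inequalities yields $(1-2\varepsilon)\operatorname{cost}_{K_\sigma}(\Pi)\le\operatorname{cost}_\phi(\Pi)\le(1+2\varepsilon)\operatorname{cost}_{K_\sigma}(\Pi)$.

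The event $\mathcal E$ does not depend on $\Pi$, so the bound holds simultaneously for all partitions. This uniformity is what the theorem's "for all $p,q\in\mathcal M$" provides for free, and it means no union bound over the exponentially many partitions is needed. I would point this out explicitly, since it is the only place where care is required.

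For the approximation transfer, let $\Pi^\star$ minimize $\operatorname{cost}_{K_\sigma}$. On $\mathcal E$ the following chain holds:
\[
\operatorname{cost}_{K_\sigma}(\widehat\Pi)\le\frac{\operatorname{cost}_\phi(\widehat\Pi)}{1-2\varepsilon}\le\frac{\rho\min_\Pi\operatorname{cost}_\phi(\Pi)}{1-2\varepsilon}\le\frac{\rho\operatorname{cost}_\phi(\Pi^\star)}{1-2\varepsilon}\le\rho\frac{1+2\varepsilon}{1-2\varepsilon}\operatorname{cost}_{K_\sigma}(\Pi^\star).
\]
This requires $\varepsilon<1/2$, which I will state as an assumption; otherwise the constraint $\varepsilon<\rch(\mathcal M)/2$ from the theorem would need $\rch\le1$ to guarantee it. No step is a real obstacle: the whole argument is the pairwise-distance identity for cluster costs plus the uniformity of Theorem~\ref{thm:intro_uniform}.
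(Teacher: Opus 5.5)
Your proposal is correct and follows the paper's proof essentially line for line: the variance identity rewrites each cluster cost as \(\frac{1}{2|C_j|}\sum_{p,q\in C_j}\) of pairwise squared distances in both \(\mathcal H_{K_\sigma}\) and \(\mathbb R^{2t}\); you then apply the single uniform event of Theorem~\ref{thm:intro_uniform} term by term, and use the same four-inequality chain to transfer the \(\rho\)-approximation. Two points the paper uses only implicitly are made explicit in your write-up: the event does not depend on \(\Pi\), so no union bound over partitions is needed, and \(\varepsilon<1/2\) is required in order to divide by \(1-2\varepsilon\).
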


\paragraph*{Topological inference.}
The Gaussian kernel power distance introduced in~\cite{phillips1geometric} expresses geometric inference on kernel density estimates through weighted power distances; its sublevel sets are stable under $W_2$ perturbations of the measure and Lipschitz in the bandwidth~$\sigma$.
~\cite{boissonnat2024euclidean} showed that an RFF embedding into $\mathbb{R}^{2t}$ with $t=O(\varepsilon^{-2}\log n)$ yields a weighted \v{C}ech filtration that is interleaved with the GKPD filtration in the original space under a stable-rank condition. We prove the manifold version of this guarantee, Theorem~\ref{thm:intro_interleaving} (weighted Čech/Rips interleaving), as one of our main contributions in Section~\ref{subsec:intro_main_result}.
For kernel-distance coresets, near-linear-size summaries exist~\cite{tai2020optimal, DBLP:conf/soda/Phillips13}, and RFF makes querying them practical in high ambient dimension.
\cite{kusano2016pwgk} combine RFF with persistence-weighted Gaussian kernels to vectorize persistence diagrams for standard kernel classifiers.

In some applications one may simply restrict attention to those data points whose kernel-power weights (see e.g. ~\cite{boissonnat2024euclidean})
stay uniformly bounded away from zero. For such cases, filtrations built from Gaussian kernel weights enjoy the same $(1\!\pm\!\varepsilon)$
stability after projection.  Persistent homology of $n$ points can therefore be computed faster.

\paragraph{Kernel distance matching.}
Kernel distances are used to compare distributions, point clouds, shapes, and medical
images through RKHS mean embeddings~\cite{smola2007hilbert,gretton2012kernel,
glaunes2006large,durrleman2007measuring,joshi2011shape}.  When the task is to recover a
pointwise alignment, one may instead solve a matching problem with pairwise kernel-distance
costs.  RFFs reduce this to Euclidean geometric matching, and our manifold theorem gives
the corresponding intrinsic-dimensional version.  The reduction and approximation
guarantee are stated in Appendix~\ref{app:kernel-distance-matching}.

\begin{corollary}
Assume that the RFF map \(\phi\) satisfies
\[
    (1-2\varepsilon)D_{K_\sigma}(p,q)^2
    \le
    \|\phi(p)-\phi(q)\|_2^2
    \le
    (1+2\varepsilon)D_{K_\sigma}(p,q)^2
\]
for all \(p,q\in\mathcal M\).  Then, for every matching \(\pi\in S_n\),
\[
    (1-2\varepsilon)C_{K_\sigma}(\pi)
    \le
    C_\phi(\pi)
    \le
    (1+2\varepsilon)C_{K_\sigma}(\pi).
\]
Consequently, if a Euclidean matching algorithm applied to
\(\phi(X),\phi(Y)\subset\mathbb R^m\) has approximation factor \(\rho\ge 1\), meaning
that it returns a permutation \(\widehat\pi\) satisfying
\[
    C_\phi(\widehat\pi)
    \le
    \rho\min_{\pi\in S_n} C_\phi(\pi),
\]
then the same matching satisfies
\[
    C_{K_\sigma}(\widehat\pi)
    \le
    \rho\frac{1+2\varepsilon}{1-2\varepsilon}
    \min_{\pi\in S_n} C_{K_\sigma}(\pi).
\]
Thus, any \(\rho\)-approximate Euclidean matching algorithm applied after the manifold
RFF embedding yields a
\[
    \rho\frac{1+2\varepsilon}{1-2\varepsilon}
    =
    \rho(1+O(\varepsilon))
\]
approximation to the original Gaussian-kernel matching problem.
\end{corollary}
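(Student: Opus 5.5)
The statement to prove is the kernel distance matching corollary. It is deterministic given the assumed pairwise distortion, so the plan is to reduce everything to summing pairwise inequalities. Let $X=\{x_1,\dots,x_n\}$ and $Y=\{y_1,\dots,y_n\}$ be subsets of $\mathcal M$. For $\pi\in S_n$, the costs in Appendix~\ref{app:kernel-distance-matching} are
\[
C_{K_\sigma}(\pi)=\sum_{i=1}^n D_{K_\sigma}(x_i,y_{\pi(i)})^2,
\qquad
C_\phi(\pi)=\sum_{i=1}^n \|\phi(x_i)-\phi(y_{\pi(i)})\|_2^2 .
\]
If the appendix uses non-squared distances, the same argument applies after taking square roots, with factors $\sqrt{1\pm2\varepsilon}$. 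Since $x_i,y_{\pi(i)}\in\mathcal M$, the hypothesis applies to each summand. Summing the $n$ inequalities with nonnegative coefficients gives $(1-2\varepsilon)C_{K_\sigma}(\pi)\le C_\phi(\pi)\le(1+2\varepsilon)C_{K_\sigma}(\pi)$ simultaneously for every $\pi$. No union bound over the $n!$ matchings is needed, because the hypothesis is already uniform over all pairs in $\mathcal M$.

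For the approximation transfer, let $\pi^\star$ minimize $C_{K_\sigma}$. The chain is
\[
(1-2\varepsilon)C_{K_\sigma}(\widehat\pi)\le C_\phi(\widehat\pi)\le\rho\min_\pi C_\phi(\pi)\le\rho\, C_\phi(\pi^\star)\le\rho(1+2\varepsilon)C_{K_\sigma}(\pi^\star).
\]
Dividing by $1-2\varepsilon>0$ gives the claim; positivity holds because $\varepsilon<1/2$ is implicit in the hypothesis of Theorem~\ref{thm:intro_uniform}. Finally, $\frac{1+2\varepsilon}{1-2\varepsilon}=1+\frac{4\varepsilon}{1-2\varepsilon}=1+O(\varepsilon)$ for $\varepsilon$ bounded away from $1/2$, say $\varepsilon\le1/4$.

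There is no real obstacle here. The only care points are:
\begin{itemize}
\item all points of $X$ and $Y$ must lie on $\mathcal M$, so that the uniform guarantee of Theorem~\ref{thm:intro_uniform} covers every cross pair;
\item the cost must be linear in the per-pair distortion quantity;
\item the minimizers must be compared through $\pi^\star$ rather than directly.
\end{itemize}
When combined with Theorem~\ref{thm:intro_uniform}, the conclusion holds with probability at least $1-\delta$ for the stated $t$, and the target dimension is $m=2t$.
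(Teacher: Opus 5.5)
Your proof is correct and takes the same approach as the paper: sum the per-pair distortion bounds over $i$, then transfer the approximation factor via $C_{K_\sigma}(\widehat\pi)\le\frac{1}{1-2\varepsilon}C_\phi(\widehat\pi)\le\cdots\le\rho\frac{1+2\varepsilon}{1-2\varepsilon}C_{K_\sigma}(\pi^\star)$. One small correction: $\varepsilon<1/2$ does not follow from the hypothesis $\varepsilon<\rch(\mathcal M)/2$ of Theorem~\ref{thm:intro_uniform} unless $\rch(\mathcal M)\le 1$, so it is really an implicit assumption of the corollary itself, which the paper's proof also uses tacitly when dividing by $1-2\varepsilon$.
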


\paragraph{Kernel nearest-neighbor search.}
For the Gaussian kernel, nearest-neighbor search under the kernel distance is equivalent
to maximum Gaussian similarity search.  Since our embedding preserves Gaussian kernel
distances on the manifold, standard Euclidean nearest-neighbor data structures may be
applied after the RFF map.  The precise statement is given in
Appendix~\ref{app:kernel-nearest-neighbor}.

\begin{corollary}
Let \(P\subset\mathcal M\subset\mathbb R^N\) and \(q\in\mathcal M\), where \(\mathcal M\) is a compact \(d\)-dimensional \(C^2\) submanifold with positive reach.  Let \(\phi:\mathbb R^N\to\mathbb R^{2t}\) be the RFF map from Theorem~\ref{thm:intro_uniform} with \(t\) satisfying the bound in that theorem.  Then, with probability at least \(1-\delta\),
\[
(1-2\varepsilon)D_{K_\sigma}(p,q)^2
\le \|\phi(p)-\phi(q)\|_2^2
\le (1+2\varepsilon)D_{K_\sigma}(p,q)^2
\qquad\text{for all }p,q\in\mathcal M.
\]
Consequently, if
\[
\widehat p = \arg\min_{p\in P}\|\phi(p)-\phi(q)\|_2
\]
is the Euclidean nearest neighbor of \(\phi(q)\) in \(\phi(P)\), then
\[
D_{K_\sigma}(\widehat p,q)
\;\le\;
\sqrt{\frac{1+2\varepsilon}{1-2\varepsilon}}\;
\min_{p\in P} D_{K_\sigma}(p,q)
\;=\;
(1+O(\varepsilon))\min_{p\in P} D_{K_\sigma}(p,q).
\]
Thus Euclidean nearest-neighbor search on \(\phi(P)\subset\mathbb R^{2t}\) yields a \((1+O(\varepsilon))\)-approximate nearest neighbor with respect to the original Gaussian kernel distance, with the feature dimension controlled by the intrinsic geometry of \(\mathcal M\) rather than by the ambient dimension \(N\).
\end{corollary}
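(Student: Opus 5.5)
The corollary is a deterministic consequence of the uniform kernel-distance guarantee of Theorem~\ref{thm:intro_uniform}, plus the monotonicity of the square root. The plan is to condition on the event of probability at least $1-\delta$ on which the two-sided bound holds simultaneously for all $p,q\in\mathcal M$. Since $P\subset\mathcal M$ and $q\in\mathcal M$, every pair $(p,q)$ with $p\in P$ is covered by this event. No new union bound is needed, and so there is no dependence on $|P|$. This step already gives the first displayed inequality verbatim.

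Next, on this event, let $p^\ast\in\arg\min_{p\in P}D_{K_\sigma}(p,q)$, and let $\widehat p$ be the Euclidean nearest neighbour of $\phi(q)$ in $\phi(P)$. The chain of inequalities is
\[
(1-2\varepsilon)D_{K_\sigma}(\widehat p,q)^2
\le \|\phi(\widehat p)-\phi(q)\|_2^2
\le \|\phi(p^\ast)-\phi(q)\|_2^2
\le (1+2\varepsilon)D_{K_\sigma}(p^\ast,q)^2 .
\]
The first and last steps use the lower and upper bounds of Theorem~\ref{thm:intro_uniform}. The middle step is the minimality of $\widehat p$ in feature space.

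To finish, divide by $1-2\varepsilon>0$ and take square roots, which gives the stated factor $\sqrt{(1+2\varepsilon)/(1-2\varepsilon)}$. Positivity of $1-2\varepsilon$ requires $\varepsilon<1/2$. I will assume this, as in the applications, since the ratio is otherwise vacuous. Finally, expand $\sqrt{(1+2\varepsilon)/(1-2\varepsilon)}=1+2\varepsilon+O(\varepsilon^2)$ for small $\varepsilon$ to get $(1+O(\varepsilon))$.

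There is no real obstacle. The only point needing care is that the query $q$ must lie on $\mathcal M$, so that the uniform guarantee applies to the pairs $(p,q)$; the corollary assumes this. The remark about the feature dimension simply restates the bound on $t$ inherited from Theorem~\ref{thm:intro_uniform}.
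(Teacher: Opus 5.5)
Your proposal is correct and matches the paper's proof essentially step for step. Both condition on the uniform event of Theorem~\ref{thm:intro_uniform} and chain the lower bound at $\widehat p$, Euclidean minimality of $\widehat p$, and the upper bound at the true kernel nearest neighbour $p^\ast$; both then divide by $1-2\varepsilon$ and take square roots under the assumption $\varepsilon<1/2$.
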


\paragraph{Kernel SVM.}
For kernel SVMs, the margin
is a distance-to-boundary quantity in the RKHS.  Since our embedding
preserves RKHS distances on \(\mathcal M\), it preserves the large-margin
geometry up to \(1\pm O(\varepsilon)\).  Thus the kernel SVM can be replaced
by a linear SVM in \(\mathbb R^t\), reducing Gram-matrix storage
\(O(n^2)\) to feature storage \(O(nt)\), and replacing kernel-SVM
optimization by linear-SVM optimization on the random features.

\subsection{Roadmap}
\label{subsec:roadmap}

The remainder of the paper is organized as follows.
Section~\ref{sec:experiment} gives a brief empirical illustration of the random-feature construction.
Section~\ref{sec: prelim} collects background on kernel distances and random Fourier features, the Gaussian kernel power distance, persistent homology and interleaving, and the manifold regularity assumptions used throughout.
Section~\ref{subsec:overview} presents the local strategy for the squared-distance ratio $R(\Delta)$: pointwise concentration, gradient control, and uniform bounds on Euclidean patches.
Section~\ref{sec:properties_R} develops the relative Lipschitz-type estimates for~$R$ needed near degeneracies.
Section~\ref{subsec:local-charts} lifts these local bounds to chord differences on~$\mathcal{M}$ via intrinsic nets and reach, completing the proof of Theorem~\ref{thm:intro_uniform}.
Section~\ref{sec:approx-persistent-modules} gives the complete proof of Theorem~\ref{thm:intro_interleaving} (stated above as one of our main contributions in Section~\ref{subsec:intro_main_result}), transferring the kernel-distance guarantees to weighted Čech and Rips filtrations in the GKPD setting.
Section~\ref{sec:app-kernel-values} turns to uniform additive approximation of Gaussian kernel values on~$\mathcal{M}$ and proves Theorem~\ref{thm:additive_uniform}.
Section~\ref{app:applications} spells out the further application corollaries previewed in Section~\ref{subsec:applications}: kernel $k$-means clustering, kernel-distance matching, and kernel nearest-neighbor search, together with the Maximum Mean Discrepancy preservation result of Section~\ref{sec:mmd}.
Appendix~\ref{sec:conc-ineq-gcc} proves the concentration inequality of \cite{boissonnat2024euclidean} invoked in the proof of Theorem~\ref{thm:intro_uniform} (Section~\ref{subsec:proof_overview}).

\section{Numerical experiment: ambient dimension versus RFF dimension}
\label{sec:experiment}

The purpose of this experiment is not to compare the numerical constants in the estimate of~\cite{chen2017relative} with the constants in Theorem~\ref{thm:intro_uniform}.  The constants in both bounds are conservative and are not expected to predict practical feature counts.  Instead, the experiment is designed to test the qualitative dependence on the ambient dimension.  The bound in~\cite{chen2017relative} gives a relative Gaussian-kernel-distance guarantee whose required number of random features contains an ambient-dimensional factor, schematically of the form
\begin{equation}
    T
    \gtrsim
    \frac{N}{\varepsilon^{2}}
    \log\!\left(\frac{N r}{\varepsilon\delta}\right),
    \label{eq:experiment-cp-schematic}
\end{equation}
where \(N\) is the ambient dimension.  In contrast, Theorem~\ref{thm:intro_uniform} predicts that for a fixed-dimensional manifold with controlled geometry, the leading dependence is governed by the intrinsic dimension and geometric quantities of the manifold, rather than by a linear dependence on the ambient dimension.

We test this distinction on the slow helix in \(\mathbb{R}^{N}\).  For even \(N\), define
\begin{equation}
\begin{aligned}
    \gamma_N(s)
    &=
    \frac{1}{\sqrt{N}}
    \bigl(
       \cos(s/N),\sin(s/N),
       \cos(2s/N),\sin(2s/N),
       \ldots, \\
    &\qquad\qquad
       \cos((N/2)s/N),\sin((N/2)s/N)
    \bigr),
    \qquad s\in[0,2\pi].
\end{aligned}
    \label{eq:experiment-slow-helix}
\end{equation}
This is a one-dimensional curve, but it uses all \(N\) ambient coordinates.  Thus the example separates intrinsic dimension from ambient dimension: the intrinsic parameter dimension remains one, while the ambient space dimension is allowed to grow.

For each ambient dimension \(N\), we sample points \(x_1,\ldots,x_n\) on the curve and use the Gaussian kernel with bandwidth \(\sigma=1\).  The exact Gaussian kernel distance is
\begin{equation}
    D_K(x,y)
    =
    \sqrt{
        2\left(
            1-\exp\left(-\frac{\norm{x-y}^{2}}{2\sigma^{2}}\right)
        \right)
    }.
    \label{eq:experiment-kernel-distance}
\end{equation}
Given an RFF map \(\phi_T:\mathbb{R}^{N}\to\mathbb{R}^{T}\), we measure the maximum sampled relative kernel-distance distortion
\begin{equation}
    \operatorname{err}_{T,N}
    :=
    \max_{i<j}
    \left|
       \frac{\norm{\phi_T(x_i)-\phi_T(x_j)}}{D_K(x_i,x_j)}
       -1
    \right|.
    \label{eq:experiment-error}
\end{equation}
For each pair \((N,T)\), the experiment is repeated over independent random Fourier feature seeds.  We record the empirical \(95\)th percentile of \(\operatorname{err}_{T,N}\), and define the empirical target dimension threshold
\begin{equation}
    T_{\min}(N)
    :=
    \min
    \left\{
        T:
        q_{0.95}\bigl(\operatorname{err}_{T,N}\bigr)
        \leq 10^{-2}
    \right\}.
    \label{eq:experiment-threshold}
\end{equation}
The value \(T_{\min}(N)\) is found by an exponential scan followed by a binary search.  The shaded region in the figures below shows the final binary-search bracket.

\begin{figure}[H]
    \centering
    \includegraphics[width=0.72\linewidth]{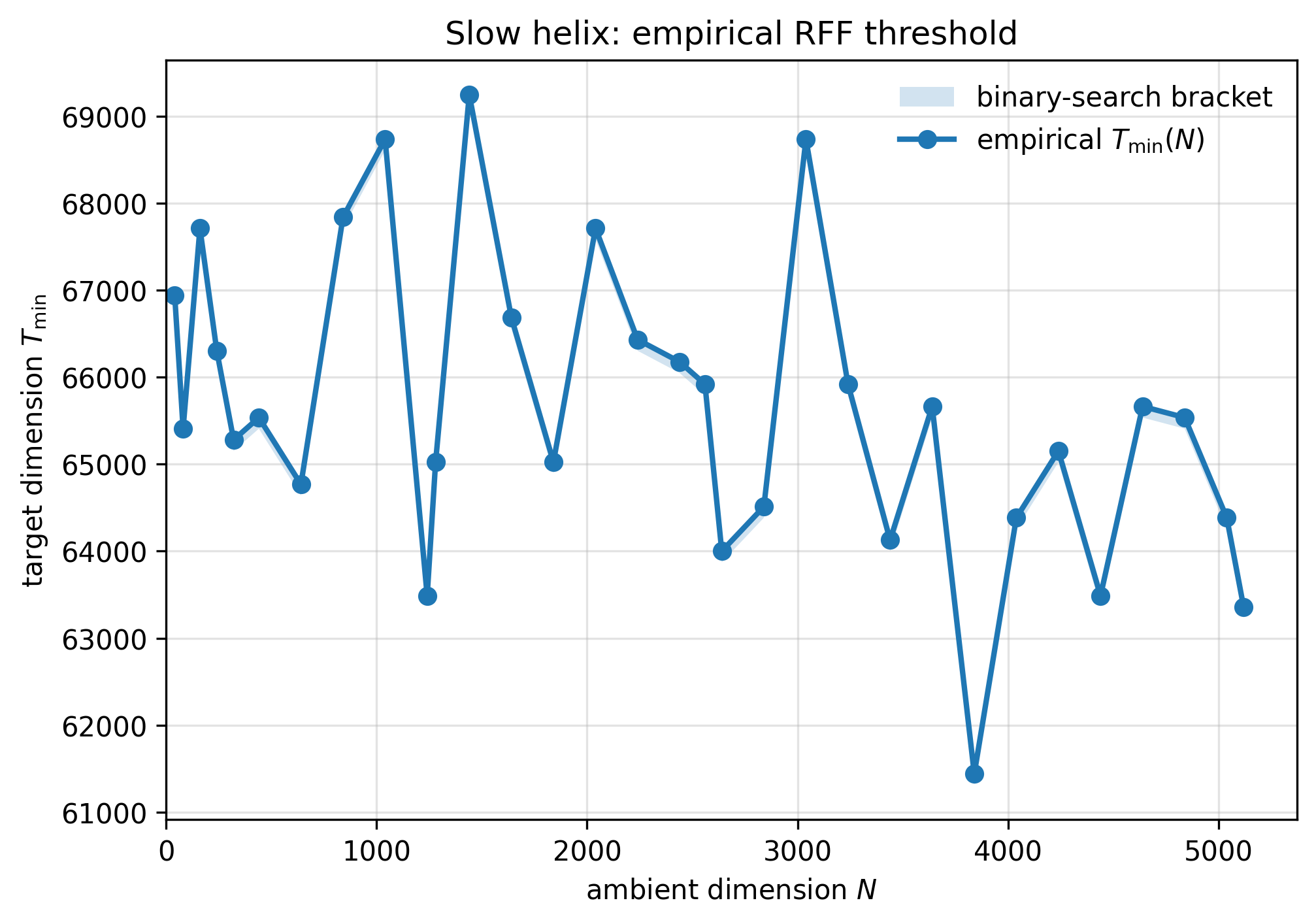}
    \caption{
    Empirical RFF threshold on the slow helix, shown on ordinary axes.  The horizontal axis is the ambient dimension \(N\), and the vertical axis is the empirical target dimension \(T_{\min}(N)\) required to make the empirical \(95\)th percentile of the maximum sampled relative Gaussian-kernel-distance error at most \(10^{-2}\).  The shaded band shows the final binary-search bracket.
    }
    \label{fig:slow-helix-threshold-ordinary}
\end{figure}

\begin{figure}[H]
    \centering
    \includegraphics[width=0.72\linewidth]{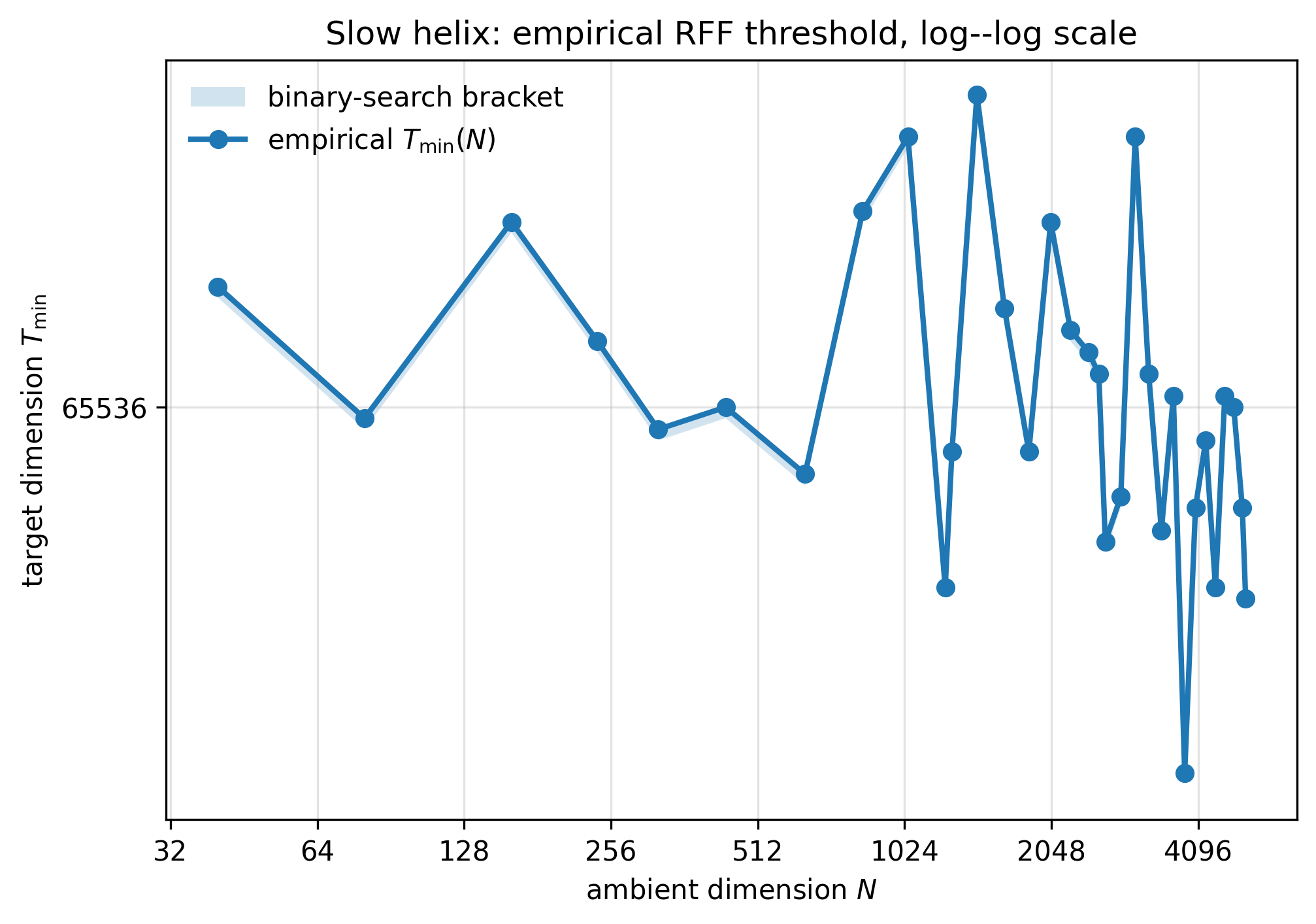}
    \caption{
    The same empirical thresholds plotted on log--log axes.  Both the ambient dimension \(N\) and the target dimension \(T_{\min}(N)\) are displayed on logarithmic scales.  This view is useful for assessing whether the observed growth is compatible with ambient-linear scaling.  The experiment does not show the linear-in-\(N\) growth suggested by applying an ambient-dimensional estimate of the type in~\cite{chen2017relative} to this one-dimensional manifold family.
    }
    \label{fig:slow-helix-threshold-loglog}
\end{figure}

The experiment supports the intrinsic-geometric interpretation of Theorem~\ref{thm:intro_uniform}.  Increasing the ambient dimension by adding more slow-helix coordinates does not force the empirical target dimension to grow linearly with \(N\).  The relevant point is therefore not that the empirical constants match the theoretical constants, but rather that the observed scaling is qualitatively inconsistent with an ambient-linear dependence for this family of low-dimensional manifolds.

This numerical evidence should be interpreted with the usual finite-sample caveats.  The maximum in \eqref{eq:experiment-error} is taken over a sampled point set rather than over all pairs of points on the curve, and the empirical \(95\)th percentile over random seeds is not a substitute for a symbolic high-probability theorem.  Nevertheless, the experiment illustrates the phenomenon captured by the manifold theorem: for structured low-dimensional data embedded in high ambient dimension, the number of random Fourier features needed to preserve Gaussian kernel distances can be governed by intrinsic geometry rather than by the ambient dimension itself. 
\section{Background and Preliminaries}
\label{sec: prelim}

The main theorem and its proof rely on standard notions from kernel methods, probability, and manifold geometry. This section collects the definitions and facts needed to state the approximation guarantees and to carry out the local-to-global argument. The material on kernel distances and random features is used throughout the kernel approximation analysis; the part on manifold geometry supports the net construction and the control of differences on the manifold.

\subsection{Probabilistic and concentration lemmas}
\label{sec:app-prob}

The following lemmas are standard probabilistic tools used in the main text to control concentration of the kernel distance ratio and of gradient terms. They support the pointwise and gradient-concentration steps of the local approximation argument and the net-based union bound.
Later sections rely on them: the local kernel-ratio analysis is in Section~\ref{subsec:overview}, analytic properties of the ratio in Section~\ref{sec:properties_R}, chart arguments in Section~\ref{subsec:local-charts}, and additive uniform kernel-value bounds (including Theorem~\ref{thm:additive_uniform}) in Section~\ref{sec:app-kernel-values}.

\begin{lemma}[Generalized Hoeffding's Inequality]
\label{lemma:generalized-hoeffding}
Let \( \{X_i\}_{i=1}^k \) be independent, identically distributed (i.i.d.) zero-mean random variables with \( \|X_i\|_{\psi_2} \le K \) for all \( i \). Then for any \( \varepsilon > 0 \),
\[ 
\mathbb{P}\left(\left|\sum_{i=1}^k X_i\right| \ge \varepsilon\right)
\;\le\;
2\exp\left(-\frac{c_H \varepsilon^2}{k K^2}\right),
\]
where \( c_H > 0 \) is a universal constant (which may be taken as \( c_H = 1/8 \)).
\end{lemma}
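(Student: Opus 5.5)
The plan is the standard moment-generating-function argument for sub-Gaussian variables, followed by a Chernoff bound and a union over the two tails. We use one characterization of the $\psi_2$ norm: a zero-mean variable $X$ with $\|X\|_{\psi_2}\le K$ satisfies $\mathbb{E}\exp(\lambda X)\le \exp(C\lambda^2K^2)$ for all $\lambda\in\mathbb{R}$, with a universal constant $C$. We either take this as the definition, or derive it from the Orlicz definition $\mathbb{E}\exp(X^2/K^2)\le 2$. The derivation proceeds in two regimes. For small $|\lambda|K$, expand $e^{\lambda X}$ in a Taylor series, kill the linear term with $\mathbb{E}X=0$, and bound the even moments $\mathbb{E}X^{2p}\le 2\,p!\,K^{2p}$, which follow from the Orlicz condition. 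For large $|\lambda|K$, use $\lambda x\le \lambda^2K^2/2 + x^2/(2K^2)$ together with Jensen. This step is where the universal constant is born. It is also the only place where care is needed if one insists on the explicit value $c_H=1/8$: that value corresponds to the bound $\mathbb{E}e^{\lambda X}\le e^{2\lambda^2K^2}$, which holds under the normalization of $\|\cdot\|_{\psi_2}$ via the moment generating function.

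Next comes tensorization. By independence, $\mathbb{E}\exp(\lambda\sum_i X_i)=\prod_i\mathbb{E}\exp(\lambda X_i)\le\exp(Ck\lambda^2K^2)$. Markov's inequality applied to $e^{\lambda S}$, with $S=\sum_i X_i$, then gives
\[
\mathbb{P}(S\ge\varepsilon)\le\exp\bigl(-\lambda\varepsilon+Ck\lambda^2K^2\bigr).
\]
We optimize at $\lambda=\varepsilon/(2CkK^2)$, which yields $\exp(-\varepsilon^2/(4CkK^2))$. Applying the same argument to $-X_i$, which have the same $\psi_2$ norm and mean zero, bounds the lower tail. Summing the two tails gives the factor $2$ and $c_H=1/(4C)$; with $C=2$ this is $c_H=1/8$.

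The identical-distribution hypothesis is not needed; independence and the uniform $\psi_2$ bound suffice. The main obstacle is only the constant bookkeeping in the first step. If the paper's normalization of $\|\cdot\|_{\psi_2}$ is the Orlicz one, I would simply cite the standard equivalence of sub-Gaussian characterizations (e.g. Vershynin's Proposition 2.5.2) and state $c_H$ as an unspecified universal constant, remarking that $1/8$ is achieved under the moment generating function normalization. This includes the bounded case, where $|X_i|\le K$ gives Hoeffding's lemma $\mathbb{E}e^{\lambda X}\le e^{\lambda^2K^2/2}$, which is what is used for the bounded trigonometric features.
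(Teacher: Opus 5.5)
The paper does not prove this lemma. It is stated in Section~\ref{sec:app-prob} as a standard tool and later invoked with a pointer to \cite{vershynin2009high}. Your moment-generating-function and Chernoff argument is that textbook proof, and it is correct: the tensorization, the choice $\lambda=\varepsilon/(2CkK^2)$, the resulting exponent $-\varepsilon^2/(4CkK^2)$, and the two-tail union bound all check out. You are also right that identical distribution is not needed.

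The one thing to fix is your hedge on the constant. The paper's $\|\cdot\|_{\psi_2}$ is the Orlicz norm (Definition~\ref{def:subgaussian}). Your fallback of leaving $c_H$ unspecified would therefore leave the claim $c_H=1/8$ unproved, but your own two-regime argument already delivers it.

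For $|\lambda|K\le 1$, use $e^{x}\le x+e^{x^{2}}$ together with Jensen for the concave map $u\mapsto u^{\lambda^2K^2}$. This also avoids the odd Taylor terms of order $\ge 3$, which your moment-expansion sketch does not address. It gives
\[
\mathbb{E}e^{\lambda X}\le \mathbb{E}e^{\lambda^{2}X^{2}}\le \bigl(\mathbb{E}e^{X^{2}/K^{2}}\bigr)^{\lambda^{2}K^{2}}\le e^{(\ln 2)\lambda^{2}K^{2}}.
\]
For $|\lambda|K\ge 1$, your AM--GM step and Jensen give
\[
\mathbb{E}e^{\lambda X}\le e^{\lambda^{2}K^{2}/2}\bigl(\mathbb{E}e^{X^{2}/K^{2}}\bigr)^{1/2}\le \sqrt{2}\,e^{\lambda^{2}K^{2}/2}\le e^{(1+\ln 2)\lambda^{2}K^{2}/2}.
\]
So $C=(1+\ln 2)/2<1$ works under the paper's own normalization. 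This gives $c_H=1/(4C)>1/4$, which in particular justifies $c_H=1/8$.
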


\begin{lemma}[Stein's Lemma~\cite{vershynin2009high}]\label{lem:stein}
Let $g \sim \mathcal{N}(0, I_q)$ be a standard Gaussian vector in $\mathbb{R}^q$, and let $f : \mathbb{R}^q \to \mathbb{R}$ be a differentiable function such that both $f(g)$ and $\nabla f(g)$ are integrable. Then $\mathbb{E}[ f(g) \cdot g ] = \mathbb{E} [ \nabla f(g) ]$, where the expectation on the left is vector-valued and taken coordinatewise.
\end{lemma}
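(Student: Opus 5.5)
This is Stein's lemma, i.e. Gaussian integration by parts, and I will reduce it coordinatewise to the one-dimensional identity. Fix a coordinate $i\in\{1,\dots,q\}$. The claim becomes $\mathbb{E}[f(g)\,g_i]=\mathbb{E}[\partial_i f(g)]$. Write $g=(g_i,g_{-i})$, where $g_{-i}\sim\mathcal{N}(0,I_{q-1})$ is independent of $g_i\sim\mathcal{N}(0,1)$. Conditioning on $g_{-i}$ (by Fubini, legitimate because $f(g)g_i$ and $\partial_i f(g)$ are integrable by hypothesis), it suffices to show that for almost every fixed $y\in\mathbb{R}^{q-1}$, the function $h(x):=f(x,y)$ satisfies
\[
\int_{\R} h(x)\,x\,\varphi(x)\,dx=\int_{\R} h'(x)\,\varphi(x)\,dx,
\qquad \varphi(x)=\tfrac{1}{\sqrt{2\pi}}e^{-x^2/2}.
\]
The hypothesis as stated only says $f(g)$ is integrable. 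The left side also needs $\mathbb{E}|f(g)g_i|<\infty$, which I will take as part of the implicit assumption that the expectation makes sense; in the paper's application $f$ will be bounded (trigonometric), so this is automatic.

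The key observation is $x\varphi(x)=-\varphi'(x)$. I will avoid boundary terms by not integrating by parts directly. Instead I use the representation $\varphi(x)=\int_x^\infty u\varphi(u)\,du$ for $x\ge 0$, and $\varphi(x)=-\int_{-\infty}^x u\varphi(u)\,du$ for $x\le 0$. Substituting this into $\int h'(x)\varphi(x)\,dx$ and swapping the order of integration gives, on the positive half-line,
\[
\int_0^\infty h'(x)\int_x^\infty u\varphi(u)\,du\,dx=\int_0^\infty u\varphi(u)\bigl(h(u)-h(0)\bigr)\,du,
\]
and symmetrically on the negative half-line. Adding the two halves, the $h(0)$ terms combine to $h(0)\int_{\R}u\varphi(u)\,du=0$, which leaves exactly $\int h(u)u\varphi(u)\,du$.

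The main obstacle is justifying the swap of integrals (Tonelli/Fubini). This needs $\int |h'(x)|\varphi(x)\,dx<\infty$, which holds for a.e.\ $y$ because $\nabla f(g)$ is integrable. It also needs the fundamental theorem of calculus $h(u)-h(0)=\int_0^u h'$. Differentiability of $f$ gives this when $h'$ is locally integrable, which holds for a.e.\ $y$ after integrating out. If one prefers to sidestep the fine points, an alternative is to first prove the identity for $f\in C_c^\infty$, where integration by parts with $\varphi'=-x\varphi$ has no boundary terms. One then approximates a general $f$ by mollification and cutoff, and passes to the limit by dominated convergence using the two integrability hypotheses.

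Integrating the one-dimensional identity over $y$ against the law of $g_{-i}$ yields $\mathbb{E}[f(g)g_i]=\mathbb{E}[\partial_i f(g)]$ for each $i$. Collecting the coordinates gives the vector identity $\mathbb{E}[f(g)\,g]=\mathbb{E}[\nabla f(g)]$.
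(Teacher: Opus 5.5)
Your proof is correct. The paper gives no proof of its own here; it quotes the lemma from Vershynin. The standard argument there integrates by parts against the density for $f$ of bounded support, using $\varphi'(x)=-x\varphi(x)$ so that there are no boundary terms. It then extends to general $f$ by approximation, which is essentially the alternative you mention at the end.

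Your main route is Stein's original one. You use Fubini to isolate a coordinate, the representation $\varphi(x)=\int_x^\infty u\varphi(u)\,du$ (and its mirror image for $x\le 0$), and a Tonelli swap. This needs only $\int|h'|\varphi<\infty$ and the fundamental theorem of calculus for an everywhere-differentiable $h$ with locally integrable derivative. It avoids boundary terms and any limiting procedure. The approximation route, by contrast, must check that the cutoff error $f\nabla\chi_R$ (for a cutoff $\chi_R$) vanishes in $L^1$ of the Gaussian measure, and that the mollified gradients converge. The price of your route is the FTC step, which you justify correctly.

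One refinement: you import $\mathbb{E}|f(g)g_i|<\infty$ as an implicit hypothesis and never use the stated hypothesis that $f(g)$ is integrable. In fact the former follows from the latter, so the lemma holds exactly as stated.

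\begin{itemize}
\item Your own Tonelli computation gives $\int|u|\varphi(u)|h(u)|\,du\le 2\varphi(0)|h(0)|+\int|h'|\varphi$.
\item Averaging $|h(0)|\le|h(a)|+\int_0^1|h'|$ over $a\in[0,1]$ gives $|h(0)|\le \varphi(1)^{-1}\bigl(\int|h|\varphi+\int|h'|\varphi\bigr)$.
\item Integrating over $y$ then yields $\mathbb{E}|f(g)g_i|\le C\bigl(\mathbb{E}|f(g)|+\mathbb{E}|\partial_i f(g)|\bigr)<\infty$.
\end{itemize}

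In the paper's applications $f$ is a bounded trigonometric function, so this point does not matter there.
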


\begin{lemma}[Probabilistic Cauchy-Schwarz in Subspaces]
\label{Probabilistic Cauchy-Schwarz in Subspaces}
Let \( w_1, \dots, w_{k} \overset{\text{iid}}{\sim} \mathcal{N}(0, I_N) \) be independent standard Gaussian vectors in \( \mathbb{R}^N \), and let \( x \in \mathbb{R}^N \) be a fixed vector lying in a \( d \)-dimensional subspace of \( \mathbb{R}^N \). Define \(M := \frac{1}{k} \sum_{i=1}^{k} \abs{\langle w_i, x \rangle}^2\). Then there exists a universal constant \( C > 0 \) such that
\[
\mathbb{P} \left( M > 2d \|x\|^2 \right) \leq 2 \exp\left( -\frac{k d}{C} \right).
\]
\end{lemma}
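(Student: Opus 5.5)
The plan is to reduce the statement to a one-dimensional chi-square tail bound. Because $x$ is fixed, the subspace hypothesis is only needed to ensure $d\ge 1$. If $x=0$, then $M=0$ and the event is empty, so assume $x\neq 0$. By rotation invariance of the standard Gaussian, each $\langle w_i,x\rangle$ is distributed as $\|x\|\,g_i$, where $g_1,\dots,g_k\overset{\text{iid}}{\sim}\mathcal N(0,1)$. Hence
\[
\frac{M}{\|x\|^2}\;\overset{d}{=}\;\frac1k\sum_{i=1}^k g_i^2,
\]
and the event $\{M>2d\|x\|^2\}$ becomes $\bigl\{\sum_{i=1}^k (g_i^2-1) > k(2d-1)\bigr\}$.

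Next I will use that the centered variables $Y_i:=g_i^2-1$ are i.i.d., mean zero and sub-exponential, with $\|Y_i\|_{\psi_1}\le K_0$ for an absolute constant $K_0$. Bernstein's inequality for sums of sub-exponential variables then gives, for $s>0$,
\[
\Pr\Bigl(\Bigl|\sum_{i=1}^k Y_i\Bigr|\ge ks\Bigr)\;\le\;2\exp\!\Bigl(-c\,k\min\bigl(\tfrac{s^2}{K_0^2},\tfrac{s}{K_0}\bigr)\Bigr).
\]
The generalized Hoeffding inequality (Lemma~\ref{lemma:generalized-hoeffding}) does not apply directly, since $g_i^2$ is not sub-Gaussian. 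Instead I will use the sub-exponential (Bernstein) version. Alternatively, a direct Chernoff bound with the chi-square moment generating function $\mathbb E e^{\lambda g^2}=(1-2\lambda)^{-1/2}$ at $\lambda=1/4$ also suffices: it gives
\[
\Pr\Bigl(\sum g_i^2>2dk\Bigr)\le 2^{k/2}e^{-dk/2}=e^{-k(d-\ln 2)/2}\le e^{-kd/C}
\]
for $d\ge1$ (with, e.g., $C=7$), which is cleaner and avoids unspecified constants.

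Finally, take $s=2d-1$. Since $d\ge 1$, we have $s\ge 1$ and $s\ge d$, so $\min(s^2/K_0^2,\,s/K_0)\ge s/\max(K_0,K_0^2)\ge d/\max(K_0,K_0^2)$. This yields the bound $2\exp(-kd/C)$ with $C:=\max(K_0,K_0^2)/c$.

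I do not expect a real obstacle here. The only point requiring care is recognizing that the large threshold $2d\ge 2$ places us in the linear (sub-exponential) regime of Bernstein's inequality, rather than the quadratic one. That is exactly why the exponent scales like $kd$ instead of $kd^2$.
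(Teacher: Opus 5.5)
Your proof is correct, but it gets the factor $d$ from a different place than the paper does.

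\textbf{Your route.} You use that $x$ is fixed, so each $\langle w_i,x\rangle$ is distributed as $\|x\|g_i$ and $M/\|x\|^2$ is a $\chi^2_k/k$ variable. The subspace then plays no role beyond forcing $d\ge 1$. The exponent $kd$ appears because the threshold $2d$ lies $d$ times above the mean, which is the linear regime of Bernstein. Your Chernoff computation at $\lambda=1/4$ is also right, since $(d-\ln 2)/2\ge d/7$ for $d\ge 1$.

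\textbf{The paper's route.} It rotates the subspace onto the first $d$ coordinates and applies Cauchy--Schwarz, $\langle w_i^{(d)},x\rangle^2\le\|w_i^{(d)}\|^2\|x\|^2$. This reduces the claim to Lemma~\ref{Concentration for Gaussian Norm Sums}: a $\chi^2_{kd}$ variable exceeding twice its mean. There $kd$ counts degrees of freedom rather than measuring the size of the deviation.

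\textbf{What each buys.} Your argument is more elementary. It also shows how loose the fixed-$x$ statement is: for a single $x$, the threshold $2\|x\|^2$ already gives an exponent of order $k$. The paper's argument buys uniformity. The single event $\frac1k\sum_i\|w_i^{(d)}\|^2<2d$ controls $M$ simultaneously for every $x$ in the subspace. That is the form stated at the end of the paper's proof, and it is what the later statements quantified over all $\Delta,\Delta'$ in a patch need. Your fixed-$x$ reduction does not give this by itself; you would need extra work, such as a net over the unit sphere of the subspace.
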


\begin{proof}
Let \( x \in \mathbb{R}^N \) be supported on its first \( d \) coordinates (i.e., \( x = (x_1, \dots, x_d, 0, \dots, 0) \)). This is without loss of generality because Gaussians are rotationally invariant, so any \( d \)-dimensional subspace can be rotated to align with the first \( d \) coordinates. Now decompose the inner product. For \( w_i \sim \mathcal{N}(0, I_N) \), write \( w_i = (w_i^{(d)}, w_i^{(N-d)}) \), where \( w_i^{(d)} \) is the first \( d \) coordinates. Since \( x \) has no support beyond the first \( d \) coordinates:
\[
M = \frac{1}{k} \sum_{i=1}^k |\langle w_i^{(d)}, x \rangle|^2 \leq \frac{1}{k} \sum_{i=1}^k ||w_i^{(d)}||^2 \, ||x||^2.
\]
Since \( w_i^{(d)} \sim \mathcal{N}(0, I_d) \), the Lemma~\ref{Concentration for Gaussian Norm Sums} gives:
\[
\mathbb{P}\left( \frac{1}{k} \sum_{i=1}^k \|w_i^{(d)}\|^2 \geq 2d \right) \leq 2 \exp\left(-\frac{k d}{C}\right).
\]
Thus for \( w_1, \dots, w_k \sim \mathcal{N}(0, I_N) \) and any fixed \( d \)-dimensional subspace \( V \), we have:
\[
\mathbb{P}\left( \sup_{x \in V} \frac{1}{k} \sum_{i=1}^k |\langle w_i, x \rangle|^2 \leq 2d \|x\|^2 \right) \geq 1 - 2 \exp\left(-\frac{k d}{C}\right),
\]
where \( C > 0 \) is a universal constant. This holds simultaneously for all \( x \in V \). 

\end{proof}

\begin{lemma}[Concentration for Gaussian Norm Sums]
\label{Concentration for Gaussian Norm Sums}
Let \( w_i \sim \mathcal{N}(0, 1) \) be i.i.d.\ standard Gaussian vectors in \( \mathbb{R}^N \). Then for any integer \( L \geq 1 \), we have
\[
\mathbb{P}\left[ \frac{1}{L} \sum_{i=1}^{L} \|w_i\|^2 \geq 2N \right] \leq 2 \exp\left(- \frac{NL}{C_4^4} \right),
\]
for some absolute constant \( C_4 > 0 \).
\end{lemma}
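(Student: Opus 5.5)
The statement to prove is the last lemma: for i.i.d.\ $w_i\sim\mathcal N(0,I_N)$,
\[
\frac1L\sum_{i=1}^L\|w_i\|^2\ge 2N
\]
holds with probability at most $2\exp(-NL/C_4^4)$. The plan is to observe that $S:=\sum_{i=1}^L\|w_i\|^2$ is a sum of $NL$ independent squares of standard Gaussians. Hence $S\sim\chi^2_{NL}$, and the event in question is $S-NL\ge NL$, a deviation equal to the mean. I would prove this directly with a Chernoff/moment generating function bound, rather than through sub-exponential machinery, since the constant then comes out explicit.

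\textbf{Step 1 (MGF).} For a single $g\sim\mathcal N(0,1)$ and $0<\lambda<1/2$ we have $\Ex e^{\lambda g^2}=(1-2\lambda)^{-1/2}$. By independence,
\[
\Ex e^{\lambda S}=(1-2\lambda)^{-NL/2}.
\]

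\textbf{Step 2 (Markov).} Markov's inequality gives, for every such $\lambda$,
\[
\Pr[S\ge 2NL]\le \exp\!\bigl(-2\lambda NL-\tfrac{NL}{2}\log(1-2\lambda)\bigr).
\]
I would take $\lambda=1/4$. The exponent then equals $-NL\bigl(\tfrac12-\tfrac12\log 2\bigr)$. Since $\tfrac12(1-\log 2)\approx 0.153>0$, this yields
\[
\Pr[S\ge 2NL]\le \exp(-0.15\,NL).
\]
The optimal choice is $\lambda=1/8$; either choice suffices.

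\textbf{Step 3 (constant).} Choose $C_4$ with $C_4^4\ge 1/0.15$, for instance $C_4=2$, since $C_4^4=16$. This gives the claimed bound even without the factor $2$, which only makes the statement weaker. As an alternative route, one could apply Bernstein's inequality for sub-exponential variables: the centered terms $g^2-1$ have $\|g^2-1\|_{\psi_1}$ bounded by an absolute constant, and at deviation $t=NL$ the minimum in Bernstein's bound is attained by the linear term, again giving $e^{-cNL}$.

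There is no real obstacle here. The only care needed is that the deviation must be proportional to the mean $NL$, so that the tail is in the linear (sub-exponential) regime. This is what produces an exponent linear in $NL$ rather than $N^2L$, and it is consistent with how Lemma~\ref{Probabilistic Cauchy-Schwarz in Subspaces} applies the bound with $N$ replaced by $d$ and $L=k$.
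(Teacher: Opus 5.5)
Your proof is correct. It differs from the paper's only in the concentration tool used at the last step.

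\textbf{Common reduction.} Both arguments rewrite $\sum_{i=1}^L\|w_i\|^2$ as a sum of $NL$ i.i.d.\ squared standard Gaussians. Both then note that the event is a deviation equal to the mean $NL$.

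\textbf{The paper's route.} It centers the rescaled terms and treats them as sub-exponential with $\psi_1$-norm of order $1/L$. It then invokes Bernstein's inequality. At deviation $N$, both branches of the Bernstein minimum are of order $NL$, which gives $2\exp(-NL/c_4^4)$ with an unspecified absolute constant.

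\textbf{Your route.} You use the exact $\chi^2_{NL}$ moment generating function and a Chernoff bound. This is more elementary and self-contained, and it produces an explicit constant: $e^{-0.153\,NL}$. So $C_4=2$ works, and the prefactor $2$ is not needed. Bernstein's inequality is more robust, since it needs only sub-exponential tails rather than an exact MGF, which is why it is the standard tool. For this particular lemma your approach loses nothing.

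\textbf{A small slip.} The minimizer of $-2\lambda NL-\tfrac{NL}{2}\log(1-2\lambda)$ is $\lambda=1/4$, obtained by setting $1-2\lambda=1/2$; it is not $\lambda=1/8$. The choice $\lambda=1/8$ gives an exponent of only about $-0.106\,NL$. This still beats $-NL/16$, so your conclusion is unaffected.
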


\begin{proof}
\textbf{Expectation \& Gaussian–norm calculation.} $\frac{w_{i}}{\sqrt{L}}\sim\mathcal N\!\bigl(0,\tfrac1L I_{N}\bigr)$, $\Bigl\lVert\tfrac{w_{i}}{\sqrt{L}}\Bigr\rVert_{\psi_{2}} = \frac{c}{\sqrt{L}}$, $\Bigl\lVert\!\bigl(\tfrac{w_{i}}{\sqrt{L}}\bigr)^{2}\Bigr\rVert_{\psi_{1}} = \frac{c^{2}}{L}$, $\E\Bigl[\bigl(\tfrac{w_{i}}{\sqrt{L}}\bigr)^{2}\Bigr] = \frac1L$.

\textbf{Concentration calculation.}
First centre the variable:
\[
   \sum_{i=1}^{k=LN}\!\Bigl(\tfrac{w_{i}}{\sqrt{L}}\Bigr)^{2}
   \;=\;
   \sum_{i=1}^{k}
      \Bigl[\!\bigl(\tfrac{w_{i}}{\sqrt{L}}\bigr)^{2}-\tfrac1L\Bigr]
   \;+\;\frac{k}{L},
\]
a mean-zero, sub-exponential sum with norm $1/L$.  
By Bernstein's inequality,
\[
   \Pr\!\Bigl[
       \sum_{i=1}^{k}
       \Bigl(\!\bigl(\tfrac{w_{i}}{\sqrt{L}}\bigr)^{2}-\tfrac1L\Bigr)
       \;\ge\; N
     \Bigr]
   \;\le\;
   2\exp\!\bigl(-N L / c_{4}^{4}\bigr).
\]
Hence
\[
   \Pr\!\Bigl[
       \frac1L \sum_{i=1}^{LN} w_{i}^{2}\;\ge\;2N
     \Bigr]
   \;\le\;
   2\exp\!\bigl(-N L / c_{4}^{4}\bigr).
\]
\end{proof}

\subsection{Kernel Distance and Random Fourier Features (RFF)}
\label{subsec:kernel_rff}

\paragraph*{Kernel Distance.}
Let \( K: \mathbb{R}^N \times \mathbb{R}^N \to \mathbb{R} \) be a positive-definite kernel. For any two points \( x, y \in \mathbb{R}^N \), the \emph{kernel distance} \( D_K \) induced by \( K \) is
\(
D_K(x,y) := \sqrt{K(x,x) + K(y,y) - 2K(x,y)}.
\)
When \( K \) is the Gaussian kernel with bandwidth \( \sigma > 0 \), i.e.\ \(K_\sigma(x,y) = \exp\!\bigl(-\|x-y\|^2/(2\sigma^2)\bigr)\), the squared kernel distance simplifies to
\(
D_{K_\sigma}(x,y)^2 = 2\bigl(1 - \exp(-\|x-y\|^2/(2\sigma^2))\bigr).
\)
We will frequently use the shorthand \( D_K(\Delta) \) where \( \Delta = x - y \).

\paragraph*{Random Fourier Features (RFF).}
To efficiently approximate shift-invariant kernels like the Gaussian, we employ \emph{Random Fourier Features} (RFF)~\cite{rahimi2007random}. The method proceeds as follows:

\begin{enumerate}
    \item Sample \( t \) i.i.d. frequencies \( \omega^1, \ldots, \omega^t \sim \mathcal{N}(0, \sigma^{-2}I_N) \)
    \item Construct a randomized feature map \( \phi: \mathbb{R}^N \to \mathbb{R}^t \), such as:
    \[
    \phi(x) = \frac{1}{\sqrt{t}}\left[
        \cos(\langle \omega^1, x\rangle), \sin(\langle \omega^1, x\rangle), \ldots, 
        \cos(\langle \omega^{t/2}, x\rangle), \sin(\langle \omega^{t/2}, x\rangle)
    \right]
    \]
    \item Approximate the kernel via \( \hat{K}(x,y) = \langle \phi(x), \phi(y) \rangle \)
\end{enumerate}

This construction yields an unbiased estimator of the true kernel: \( \mathbb{E}[\hat{K}(x,y)] = K(x,y) \).

\paragraph*{Approximate Kernel Distance.}
The RFF approximation induces
\[
   D_{\hat{K}}(x,y) := \sqrt{\hat{K}(x,x) + \hat{K}(y,y) - 2\hat{K}(x,y)}.
\]
Our key objective is to control the relative approximation error \(D_{\hat{K}}(x,y)^2/D_K(x,y)^2 \in [1-\varepsilon, 1+\varepsilon]\) with high probability.

\begin{lemma}[Relative Error Guarantee~\cite{chen2017relative}]
\label{lem:ChenPhilorg}
Let \( B \geq 0 \) bound the maximum pairwise distance \( \|x-y\| \leq \sigma B \). Then for the Gaussian kernel approximation:
\[
\mathbb{P}\left(
    \frac{D_{\hat{K}}(x,y)^2}{D_K(x,y)^2} \in [1-\varepsilon, 1+\varepsilon]
\right)
\geq 1 - O\left(
    \frac{dB}{\varepsilon} \exp\!\left(-\frac{t\varepsilon^2}{d}\right)
\right).
\]
\end{lemma}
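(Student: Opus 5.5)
The plan is to reduce to unit bandwidth, prove a relative concentration bound for a single fixed difference vector $\Delta$, and then make it uniform over the ball by a net argument. The degenerate region near $\Delta=0$ is handled separately through its quadratic (covariance) approximation.

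By the rescaling $x\mapsto x/\sigma$ of Section~\ref{subsec:proof_overview} we may take $\sigma=1$. Then all differences $\Delta=x-y$ lie in a ball of radius $B$ inside a $d$-dimensional linear subspace $V$, and it suffices to show that $R(\Delta):=D_{\hat K}^2(\Delta)/D_K^2(\Delta)$ lies in $[1-\varepsilon,1+\varepsilon]$ for all such $\Delta$ with the stated probability.

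\textbf{Step 1: fixed $\Delta$.} Write $D_{\hat K}^2(\Delta)=\frac{1}{t/2}\sum_i 2\bigl(1-\cos\langle\omega^i,\Delta\rangle\bigr)$. Put $X_i=\langle\omega^i,\Delta\rangle\sim\mathcal N(0,s^2)$ with $s=\|\Delta\|$, and $a=e^{-s^2/2}$. Each summand has mean $1-a=D_K^2/2$. A direct computation gives
\[
\operatorname{Var}(1-\cos X_i)=\tfrac12(1-a^2)^2\le 2(1-a)^2,
\]
so the relative variance is $O(1)$ uniformly in $s$.

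The range of the normalized summand $(1-\cos X_i)/(1-a)$ is not uniformly bounded as $s\to 0$. We therefore use instead the domination $0\le 1-\cos X_i\le X_i^2/2$. Together with $1-a\ge c\min(s^2,1)$ for $s\le B$, this bounds the normalized summand by a multiple of $X_i^2/s^2$ for small $s$, and by a constant for large $s$. Hence it is sub-exponential with $\psi_1$-norm $O(1)$, uniformly over $\|\Delta\|\le B$. Bernstein's inequality then gives
\[
\Pr\bigl[|R(\Delta)-1|>\varepsilon\bigr]\le 2e^{-ct\varepsilon^2}.
\]

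\textbf{Step 2: away from the origin.} Fix a radius $r_0$, a small constant. On the shell $r_0\le\|\Delta\|\le B$ we have $D_K^2\ge c\,r_0^2$, and the gradient of $D_{\hat K}^2$ within $V$ equals $\frac{2}{t/2}\sum_i\sin\langle\omega^i,\Delta\rangle\,P_V\omega^i$. By Lemma~\ref{Probabilistic Cauchy-Schwarz in Subspaces} and Lemma~\ref{Concentration for Gaussian Norm Sums}, applied to the projections $P_V\omega^i$, we have $\frac1t\sum_i\|P_V\omega^i\|^2\le 2d$ with probability $1-2e^{-td/C}$. Consequently $\nabla R$ on the shell has norm $O(\sqrt d/r_0^2)$. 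We take an $\eta$-net of the $d$-dimensional ball with $\eta\asymp\varepsilon r_0^2/\sqrt d$. It has $(3B/\eta)^d=(O(dB/\varepsilon))^{d}$ points. A union bound of Step~1 over the net, followed by Lipschitz extension, controls the whole shell at a $2\varepsilon$ level.

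\textbf{Step 3: near the origin.} For $\|\Delta\|\le r_0$, write $\Delta=s u$ with $u$ a unit vector in $V$. Taylor expansion gives
\[
D_{\hat K}^2(\Delta)=s^2\,u^\top\hat\Sigma u\,\bigl(1+O(s^2)\bigr)
\]
up to a fourth-moment remainder, where $\hat\Sigma=\frac1{t/2}\sum_i P_V\omega^i(P_V\omega^i)^\top$. Similarly $D_K^2=s^2(1+O(s^2))$. So $R(\Delta)$ is within $O(r_0^2)$ of $u^\top\hat\Sigma u$, once the empirical fourth moments $\frac1t\sum_i\langle\omega^i,u\rangle^4$ are bounded. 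Standard covariance concentration on the $d$-dimensional sphere (an $\varepsilon$-net of size $(3/\varepsilon)^d$ plus Bernstein) yields $\|\hat\Sigma-I_V\|\le\varepsilon$ once $t\gtrsim \frac{d}{\varepsilon^2}\log\frac1\varepsilon$. Choosing $r_0\asymp\sqrt\varepsilon$ makes the Taylor error $O(\varepsilon)$.

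\textbf{Conclusion and main obstacle.} Summing the failure probabilities gives $O\bigl((dB/\varepsilon)^{O(d)}e^{-ct\varepsilon^2}\bigr)$. This matches the stated form after adjusting constants, i.e.\ $t=\Omega\bigl(\frac d{\varepsilon^2}\log\frac{dB}{\varepsilon\delta}\bigr)$. The main obstacle is the $0/0$ degeneracy at $\Delta=0$, where the gradient of $R$ blows up. The delicate points are the matching between Steps~2 and~3 at the radius $r_0$, and controlling the Taylor remainder uniformly in direction; I would handle the latter with a crude high-probability bound on $\max_i\|P_V\omega^i\|$.
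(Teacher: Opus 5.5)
Your argument is correct in outline, and it takes a different and more complete route than the paper. The paper gives no proof of this lemma: it imports it from Chen--Phillips. Its only self-contained argument is the fixed-pair bound of Lemma~\ref{lem:chen-phillips-small} in the appendix (for $\|x-y\|\le\sigma$), built from the exact variance $\operatorname{Var}\cos(\theta g)=\tfrac12(1-e^{-\theta^2})^2$ and a Bennett-type moment-generating-function estimate. There is no net over a $d$-dimensional ball, no $dB/\varepsilon$ prefactor, and no treatment of the $0/0$ degeneracy at $\Delta=0$.

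Your three steps supply exactly these missing pieces.
\begin{itemize}
\item In Step~1, normalizing through $0\le 1-\cos X\le X^2/2$ gives a $\psi_1$ bound that is uniform in $\|\Delta\|$. A Bernstein/Bennett bound using only $0\le L\le\tfrac12$ is not uniform: its range term $\varepsilon\,\mathbb{E}L$ swamps $\operatorname{Var}L\asymp\|\Delta\|^4$ as $\|\Delta\|\to0$. The appendix's Lemma~\ref{l:lem1} hides this in its hypothesis $A>tB$. With $A=\operatorname{Var}L$ and $B=\varepsilon\,\mathbb{E}L$, that hypothesis fails as soon as $t\varepsilon\ge 4/27$.
\item In Step~3, operator-norm concentration of the projected empirical covariance $\hat\Sigma$ on $V$ replaces Lipschitz extension exactly where the latter breaks down. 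This is also the regime where Cheng et al.\ locate the possible gap in Chen--Phillips.
\end{itemize}

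What you obtain is the uniform statement with $t=O\bigl(\varepsilon^{-2}(d\log\tfrac{dB}{\varepsilon}+\log\tfrac1\delta)\bigr)$. This matches the lemma at the level of sample complexity, which is all the paper needs downstream (cf.\ Corollary~\ref{cor:rff_sample_complexity}), and it is slightly better in $\delta$. It is not literally of the form $\tfrac{dB}{\varepsilon}e^{-t\varepsilon^2/d}$; that shape comes from a Rahimi--Recht-style Markov trade-off. In any case the literal form cannot hold as $B\to0$, since a single pair already fails with probability bounded below independently of $B$.

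Two details need adjusting; neither is a real gap.
\begin{itemize}
\item \textbf{Step~3, choice of $r_0$.} The crude bound you propose gives $\sup_{u}\tfrac1t\sum_i\langle\omega^i,u\rangle^4\le\max_i\|P_V\omega^i\|^2\,\|\hat\Sigma\|=O(d+\log\tfrac t\delta)$, not $O(1)$. So $r_0\asymp\sqrt\varepsilon$ does not make the Taylor error $O(\varepsilon)$; take $r_0^2\asymp\varepsilon/(d+\log\tfrac t\delta)$ instead. Only the lower bound needs this. The upper bound $R(\Delta)\le u^\top\hat\Sigma u/(1-r_0^2/4)$ follows at once from $1-\cos X\le X^2/2$ and $D_K^2\ge s^2(1-s^2/4)$, with no fourth moments.
\item \textbf{Step~2, gradient bound.} Bound the quotient-rule term $D_{\hat K}^2\,\nabla D_K^2/D_K^4$ of $\nabla R$ using $D_{\hat K}^2\le s^2\|\hat\Sigma\|$, which gives $O(1/r_0)$. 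The cruder $D_{\hat K}^2\le 4$ gives $O(r_0^{-3})$. With the better bound your claimed $O(\sqrt d/r_0^2)$ holds.
\end{itemize}
The smaller $r_0$ enters $\eta$ only polynomially, so both changes affect the union bound only through logarithms.

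The ``matching at $r_0$'' you flag is not an obstacle. You fix $r_0$ first from Step~3, then refine the shell net of Step~2 accordingly; the two regions need only cover the ball.
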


\begin{corollary}[Sample Complexity]
\label{cor:rff_sample_complexity}
For any \( \varepsilon, \delta \in (0,1) \), if the number of random features satisfies
\[
t = \Omega\left(
    \frac{d}{\varepsilon^2} \log\left(\frac{d}{\varepsilon\delta}\right)
\right),
\]
then uniformly for all \( \|\Delta\| \leq 1 \):
\[
\frac{D_{\hat{K}}(\Delta)}{D_K(\Delta)} \in [1-\varepsilon, 1+\varepsilon]
\quad \text{with probability at least } 1-\delta.
\]
\end{corollary}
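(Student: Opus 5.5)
The statement is Corollary~\ref{cor:rff_sample_complexity}. I plan to derive it from Lemma~\ref{lem:ChenPhilorg} by inverting the tail bound. Throughout I assume unit bandwidth, $\sigma=1$, as in Section~\ref{subsec:proof_overview}.

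\textbf{Setting up the inversion.} Since we work with $\|\Delta\|\le 1$, we may take $B=1$ in Lemma~\ref{lem:ChenPhilorg}. The lemma is stated uniformly over the ball containing all pairwise separations (that is how Chen and Phillips state it). So the failure probability for the event
\[
D_{\hat K}(\Delta)^2/D_K(\Delta)^2\in[1-\varepsilon,1+\varepsilon]\quad\text{for all }\|\Delta\|\le 1
\]
is at most $C\frac{d}{\varepsilon}\exp(-t\varepsilon^2/d)$. Here $C$ is the absolute constant hidden in the $O(\cdot)$.

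\textbf{Choosing $t$.} I require $C\frac{d}{\varepsilon}e^{-t\varepsilon^2/d}\le\delta$. This holds exactly when
\[
t\ \ge\ \frac{d}{\varepsilon^2}\log\Bigl(\frac{Cd}{\varepsilon\delta}\Bigr).
\]
The right-hand side is $O\bigl(\frac{d}{\varepsilon^2}\log\frac{d}{\varepsilon\delta}\bigr)$, because $\log C$ is absorbed once $d/(\varepsilon\delta)\ge 2$, say. This is precisely the hypothesis of the corollary with a suitably large implied constant.

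\textbf{From squared distances to distances.} On the success event the squared ratio lies in $[1-\varepsilon,1+\varepsilon]$. Taking square roots gives
\[
\frac{D_{\hat K}(\Delta)}{D_K(\Delta)}\in\bigl[\sqrt{1-\varepsilon},\sqrt{1+\varepsilon}\bigr]\subseteq[1-\varepsilon,1+\varepsilon],
\]
using $\sqrt{1+\varepsilon}\le 1+\varepsilon$ and $\sqrt{1-\varepsilon}\ge 1-\varepsilon$ for $\varepsilon\in(0,1)$. At $\Delta=0$ the ratio is $0/0$. Those points are excluded, or the ratio is interpreted by its limit, exactly as in the lemma.

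\textbf{Main subtlety.} The main point to check is that Lemma~\ref{lem:ChenPhilorg} really gives a \emph{uniform} guarantee over the ball, rather than a pointwise one. The factor $dB/\varepsilon$ is exactly the net-size and union-bound factor, so I read the lemma as uniform.

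If it were only pointwise, I would redo the standard step. First take an $\varepsilon/(d\,\mathrm{poly})$-net of the radius-$1$ ball in the relevant $d$-dimensional space; it has size $(d/\varepsilon)^{O(d)}$. Then union bound over the net and extend to all $\Delta$ by a gradient bound on the ratio. That only changes $\log(d/\varepsilon\delta)$ by constant factors, which the $\Omega$ absorbs.
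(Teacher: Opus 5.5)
Your argument is correct and matches how the paper gets this corollary: the paper gives no separate proof and treats it as the direct inversion of the uniform tail bound in Lemma~\ref{lem:ChenPhilorg} with $B=O(1)$. Your square-root step is the same one the paper uses in Lemma~\ref{lem:chen-phillips-small}. One caution about your fallback only: upgrading a pointwise bound to a uniform one via a net plus a gradient bound is not routine here, because $R$ is not Lipschitz at $\Delta=0$ (this is the subject of Section~\ref{sec:properties_R}). That upgrade is therefore the genuinely hard part rather than constant-factor bookkeeping, but your main argument does not depend on it.
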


\begin{definition}[Kernel weight]
\label{def:kernel-weights-avg}
For a point set $P$, the kernel weight is $w(p)=\frac{1}{|P|}\sum_{y\in P}D_K^2(p,y)-\frac{1}{2|P|^2}\sum_{x,y\in P}D_K^2(x,y)$.
\end{definition}

\subsection{Gaussian Kernel Power Distance (GKPD)}
\label{subsec:gkpd}

\paragraph*{Definition and Formulation.}
Let $P \subset \mathbb{R}^N$ be a finite point set with empirical measure $\mu$. The \textbf{Gaussian Kernel Power Distance (GKPD)}~\cite{phillips1geometric} is defined as
\[
f^K_{\mu}(x)^2 := \min_{p \in P} \left(D_K^2(x,p) - w(p)\right).
\]
Here $D_K(x,y)$ is the \textbf{Gaussian kernel distance}, with squared form $D_K(x,y)^2 = 2(1 - K(x,y)) = 2\bigl(1 - e^{-\|x-y\|^2 / 2}\bigr)$. The \textbf{kernel weight function} at $p$ is
\[
w(p) := -D_K^2(\mu,p) = -\Biggl(\frac{1}{|P|}\sum_{y\in P} D_K^2(p,y)-\frac{1}{2|P|^2}\sum_{x,y\in P} D_K^2(x,y)\Biggr).
\]
The GKPD is the minimal squared power distance to the weighted set $\widehat{P} = \{(p, w(p)) : p \in P\}$.

\paragraph*{Pairwise GKPD.}
\begin{definition}[Pairwise Gaussian Kernel Power Distance]
\label{def:weighted-pairwise-gkpd}
Let $\widehat{p}_i = (p_i, w(p_i))$ and $\widehat{p}_j = (p_j, w(p_j))$ be weighted points in $\widehat{P}$, where $w(p)$ is the kernel weight from Definition~\ref{def:kernel-weights-avg}.
The \textbf{pairwise Gaussian Kernel Power Distance} between $\widehat{p}_i$ and $\widehat{p}_j$ is
\[
D_K^2(\widehat{p}_i, \widehat{p}_j) := D_K^2(p_i, p_j) - w(p_i) - w(p_j),
\]
with $D_K(p_i, p_j)$ as defined above.
\end{definition}

\paragraph*{Unweighted Čech Complex.}
For an unweighted set $\sigma \subset P$,
\[
\mathrm{rad}(\sigma) = \min_{x \in \mathbb{R}^D} \max_{p_i \in \sigma} \|x - p_i\|.
\]
A simplex $\sigma$ belongs to $\check{C}_\alpha(P)$ if and only if $\mathrm{rad}(\sigma) \leq \alpha$.

\paragraph*{Weighted Čech Complex (Power Distance).}
For weighted points $\hat{X} = \{\hat{p}_1, \ldots, \hat{p}_n\}$ with $\hat{p}_i = (p_i, w(i))$,
\[
\mathrm{rad}^2(\hat{X}) = \min_{x \in \mathbb{R}^D} \max_{\hat{p}_i \in \hat{X}} D(x, \hat{p}_i),
\]
where the power distance is $D(x, \hat{p}_i) = \|x - p_i\|^2 - w(i)$.

\paragraph*{GKPD Context.}
In a Hilbert space $H$ (or RKHS $H_K$), the squared radius for a weighted simplex $\hat{\sigma} \subset \hat{P}$ is defined using the power distance $D(\hat{x}, \hat{p}_i) = \|x - p_i\|^2_H - w(p_i)$.

\begin{definition}[GKPD distortion map]
\label{def:distortion-map}
A map $f : (\mathbb{R}^D, D_K) \to (\mathbb{R}^{2t}, \|\cdot\|)$ is an \textbf{$(\varepsilon,\eta)$-distortion map for the GKPD} if it satisfies:

\begin{enumerate}
    \item \textbf{Pairwise Distance Preservation:} For all $x, y \in P$,
    \[
    (1 - \varepsilon)D_K(x, y)^2 - \eta \leq \|f(x) - f(y)\|^2 \leq (1 + \varepsilon)D_K(x, y)^2 + \eta.
    \]
    
    \item \textbf{Weight Function Preservation:} For all $x \in P$,
    \[
    |w(f(x)) - w(x)| \leq \varepsilon|w(x)| + \eta,
    \]
    where $w(f(x))$ is recomputed in $\mathbb{R}^{2t}$ using Euclidean distances.
\end{enumerate}
\end{definition}

\begin{lemma}[Simplex Distortion Lemma~\cite{boissonnat2024euclidean}]
\label{lem:simplex-distortion}
Let $\widehat{\sigma} \subset \widehat{P}$ be a simplex in the weighted Čech complex $\check{C}_\alpha(\widehat{P})$ built from the GKPD $D_K^2(\widehat{p}_i,\widehat{p}_j)$.
Let $f$ be an $(\varepsilon,\eta)$-distortion map for the pairwise GKPD.
Write $\widehat{f(\sigma)}$ for the image of $\sigma$ in $\mathbb{R}^{2t}$ \textup{(}with weights recomputed\textup{)}.
Then
\[
(1-\varepsilon)\left(\mathrm{rad}^2(\widehat{\sigma})-\eta\right)
\leq
\mathrm{rad}^2\left(\widehat{f(\sigma)}\right)
\leq
(1+\varepsilon)\left(\mathrm{rad}^2(\widehat{\sigma})+\eta\right).
\]
\end{lemma}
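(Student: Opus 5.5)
The plan is to work with the \emph{lifted} power distances in the Hilbert space. In the RKHS $\mathcal H$ (respectively in $\mathbb R^{2t}$ after applying $f$), the weighted point $\widehat p_i$ becomes the feature vector $\Psi(p_i)$ (respectively $f(p_i)$) together with the weight $w(p_i)$ (respectively the recomputed weight $w(f(p_i))$). The squared radius is
\[
\mathrm{rad}^2(\widehat\sigma)=\min_{x}\max_{i}\bigl(\|x-\Psi(p_i)\|^2-w(p_i)\bigr).
\]
This is a min-max of convex quadratics, so by minimax duality it equals
\[
\max_{\lambda\in\Delta_\sigma}\Bigl(\sum_i\lambda_i\|\Psi(p_i)\|^2-\bigl\|\textstyle\sum_i\lambda_i\Psi(p_i)\bigr\|^2-\sum_i\lambda_i w(p_i)\Bigr),
\]
where $\Delta_\sigma$ is the probability simplex on the vertices of $\sigma$. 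The optimal centre is $x=\sum_i\lambda_i\Psi(p_i)$.

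Expanding the quadratic terms gives the identity
\[
\sum_i\lambda_i\|\Psi(p_i)\|^2-\bigl\|\textstyle\sum_i\lambda_i\Psi(p_i)\bigr\|^2=\tfrac12\sum_{i,j}\lambda_i\lambda_j\|\Psi(p_i)-\Psi(p_j)\|^2 .
\]
Hence, for each fixed $\lambda$, the objective is
\[
G(\lambda)=\tfrac12\sum_{i,j}\lambda_i\lambda_jD_K^2(p_i,p_j)-\sum_i\lambda_i w(p_i).
\]
This depends only on pairwise kernel distances and weights. The same formula, with $\|f(p_i)-f(p_j)\|^2$ and $w(f(p_i))$, gives $\widehat G(\lambda)$ for the image simplex. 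This is exactly the convex-decomposition viewpoint of Arya et al.\ and Boissonnat--Dutta.

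Next, for fixed $\lambda$, I compare $\widehat G(\lambda)$ and $G(\lambda)$ term by term using the two conditions of Definition~\ref{def:distortion-map}. The weights $\lambda_i\lambda_j\ge0$ and $\lambda_i\ge0$ are nonnegative and sum to $1$, so the additive errors aggregate to at most $\tfrac12\eta+\eta$. I would absorb this into a single $\eta$, possibly by adjusting the constant in front of $\eta$ or by normalising conventions as in the cited lemma. The multiplicative errors give
\[
(1-\varepsilon)\,\tfrac12\sum\lambda_i\lambda_jD_K^2\le\tfrac12\sum\lambda_i\lambda_j\|f(p_i)-f(p_j)\|^2\le(1+\varepsilon)\,\tfrac12\sum\lambda_i\lambda_jD_K^2 .
\]
For the weight part, $|w(f(x))-w(x)|\le\varepsilon|w(x)|+\eta$. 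In the GKPD setting $w(p)=-D_K^2(\mu,p)\le0$, so $-w\ge0$ and $-w(f(p))\in[(1-\varepsilon)(-w),(1+\varepsilon)(-w)]$ up to $\eta$. Thus both parts of $G$ are nonnegative and are distorted by the same factor $1\pm\varepsilon$. This sign observation is the point I expect to be delicate: without $w\le0$, the subtraction would break the multiplicative comparison. It yields $(1-\varepsilon)(G(\lambda)-\eta)\le\widehat G(\lambda)\le(1+\varepsilon)(G(\lambda)+\eta)$ for every $\lambda$.

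Finally, I take the maximum over $\lambda\in\Delta_\sigma$ on both sides. Since the pointwise sandwich holds uniformly in $\lambda$, and maxima preserve such inequalities with monotone affine maps, we get
\[
(1-\varepsilon)\bigl(\mathrm{rad}^2(\widehat\sigma)-\eta\bigr)\le\mathrm{rad}^2(\widehat{f(\sigma)})\le(1+\varepsilon)\bigl(\mathrm{rad}^2(\widehat\sigma)+\eta\bigr).
\]
The main obstacle is justifying the minimax exchange. I would handle it via Sion's theorem: the objective is convex in $x$ and linear in $\lambda$, and $\Delta_\sigma$ is compact. Because $x$ ranges over an infinite-dimensional $\mathcal H$, I would first restrict $x$ to the finite-dimensional span of the $\Psi(p_i)$, which loses nothing by orthogonal projection.
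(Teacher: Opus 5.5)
Your route is the convex-decomposition argument of Arya et al.\ and Boissonnat--Dutta, which this paper relies on; the paper itself gives no proof of the lemma and only cites it. The route has four steps:
\begin{itemize}
\item Use Sion/convex duality to write $\mathrm{rad}^2(\widehat\sigma)=\max_{\lambda}G(\lambda)$, where $G(\lambda)=\tfrac12\sum_{i,j}\lambda_i\lambda_jD_K^2(p_i,p_j)-\sum_i\lambda_iw(p_i)$.
\item Observe that both parts of $G$ are nonnegative under the GKPD convention $w=-D_K^2(\mu,\cdot)\le 0$.
\item Compare $G$ and $\widehat G$ pointwise in $\lambda$.
\item Maximize over $\lambda$.
\end{itemize}
The duality step, the variance identity and the sign observation are all correct. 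For $\eta=0$ your argument is a complete proof, and that is the only case the paper uses: the proof of Theorem~\ref{thm:manifold_interleaving_final} instantiates an $(\varepsilon_\star,0)$-distortion map.

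The step that fails is the claimed pointwise sandwich $(1-\varepsilon)(G(\lambda)-\eta)\le\widehat G(\lambda)\le(1+\varepsilon)(G(\lambda)+\eta)$. Your own bookkeeping gives $(1-\varepsilon)G(\lambda)-\tfrac32\eta\le\widehat G(\lambda)\le(1+\varepsilon)G(\lambda)+\tfrac32\eta$. The additive term cannot be ``absorbed'', because the stated form allows only $(1-\varepsilon)\eta$ of additive loss on the lower side.

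This is not just slack in your constants; the hypotheses themselves are too weak. For a single vertex $\widehat\sigma=\{\widehat p\}$ one has $\mathrm{rad}^2(\widehat\sigma)=-w(p)$ and $\mathrm{rad}^2\bigl(\widehat{f(\sigma)}\bigr)=-w(f(p))$. The weight condition of Definition~\ref{def:distortion-map} permits $-w(f(p))=(1-\varepsilon)(-w(p))-\eta$. That is strictly below $(1-\varepsilon)\bigl(-w(p)-\eta\bigr)$ whenever $\varepsilon\eta>0$.

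Genuine configurations attain this. Take $n\ge 7$ points, multiply all squared kernel distances by $1-\varepsilon$, and perturb them by suitable amounts of size at most $\eta$, negative on the edges at $p$ and positive on the others. This stays Euclidean-realizable for small $\eta$, since the Gaussian Gram matrix is positive definite. It satisfies both conditions of the definition and makes the weight error at $p$ exactly $-\varepsilon|w(p)|-\eta$.

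So for $\eta>0$ the lemma as written does not follow from its hypotheses. What your argument actually proves is
\[
(1-\varepsilon)\,\mathrm{rad}^2(\widehat\sigma)-\tfrac32\eta\;\le\;\mathrm{rad}^2\bigl(\widehat{f(\sigma)}\bigr)\;\le\;(1+\varepsilon)\,\mathrm{rad}^2(\widehat\sigma)+\tfrac32\eta .
\]
You should state that version, or restrict to $\eta=0$, rather than appeal to unspecified normalising conventions.
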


Recent work~\cite{boissonnat2024euclidean} demonstrates that Random Fourier Feature (RFF) embeddings can approximate both pairwise kernel distances and weights with relative error, enabling efficient dimensionality reduction for kernel-based persistent homology.

\subsection{Persistent Homology and Interleaving Distance}
\label{subsec:interleaving}

Persistent homology (PH)~\cite{edelsbrunner2008persistent} is a foundational method in Topological Data Analysis (TDA) that quantifies the multiscale geometric and topological structure of data. 
Given a function or filtration $\{F_\alpha\}_{\alpha\ge0}$—for example, the sublevel sets of a distance function or a kernel power distance—PH tracks how topological features such as connected components, holes, and voids appear and disappear as the scale parameter $\alpha$ varies. 
The resulting summary, called a \emph{persistence module}, records these birth–death events algebraically, and its canonical visualization is the \emph{persistence diagram} (PD), a multiset of points $(b,d)$ where each point represents a topological feature born at scale $b$ and dying at scale $d$.

To compare filtrations or persistence modules, we use the \emph{interleaving distance}. 
Given two persistence modules $M$ and $N$, an $\varepsilon$–interleaving consists of morphisms 
$f_t : M_t \to N_{t+\varepsilon}$ and $g_t : N_t \to M_{t+\varepsilon}$ such that composing them shifts indices by at most $2\varepsilon$. 
The smallest such $\varepsilon$ defines the \emph{interleaving distance} $d_I(M,N)$, which measures how much the modules can be “shifted” to align. 
A related geometric notion is the \emph{bottleneck distance} $d_B(D(f), D(g))$ between persistence diagrams $D(f)$ and $D(g)$, defined as the minimal cost of optimally matching their points. 
A fundamental result, the \emph{Stability Theorem}~\cite{cohen2005stability}, states that \(d_B(D(f), D(g)) \le d_I(M(f), M(g))\), ensuring that small perturbations in the input metric or kernel function lead to only small changes in the output persistence diagrams. 
This stability principle underlies the robustness of all topological approximations presented in this work.

\subsection{Notation for Persistent Homology and GKPD}
\noindent We write $\mathcal{H}$ for the RKHS and $\psi: \mathbb{R}^N \to \mathcal{H}$ for the canonical feature map.
The kernel centroid is $\mu_P = \frac{1}{|P|}\sum_{y\in P}\psi(y)$.
Kernel weights are $w(p) = -\|\psi(p)-\mu_P\|^2$ and $\widehat{w}(p)$; $\widehat{P}$ denotes a weighted point set.
We set $c_P = \frac{2}{(1-\|\mu_P\|)^2}$ and $\varepsilon_\star = \max\{2\varepsilon, \varepsilon c_P\}$.
The map $f$ is as in Definition~\ref{def:distortion-map}.
Filtrations $\check{C}_\alpha(\widehat{P})$ and $VR_\alpha(\widehat{P})$ are the weighted \v{C}ech and Rips complexes; $d_I(M, N)$ and $d_B(D(f), D(g))$ denote interleaving and bottleneck distance.

\subsection{Probabilistic Inequalities}
\label{subsec:hoeffding}

We begin by recalling the definition of sub-Gaussian random variables and their associated norms, which quantify the tail decay behavior.

\begin{definition}[Sub-Gaussian Norm]
\label{def:subgaussian}
A random variable \( X \) is called \emph{sub-Gaussian} if there exists a constant \( \alpha > 0 \) such that for all \( t > 0 \),
\[ 
\mathbb{P}(|X| > t) \;\le\; 2\exp(-t^2/\alpha^2).
\]
The \emph{sub-Gaussian norm} (or \(\psi_2\)-norm) of \( X \) is defined as
\[ 
\|X\|_{\psi_2} 
\;:=\; 
\inf\left\{ t > 0 \mid \mathbb{E}\left[\exp\left(X^2/t^2\right)\right] \le 2 \right\}.
\]
\end{definition}

\begin{remark}
The sub-Gaussian norm captures the "Gaussian-like" tail behavior of \( X \). For a standard normal random variable \( Z \sim \mathcal{N}(0,1) \), we have \( \|Z\|_{\psi_2} \approx 1. \)
\end{remark}

Our analysis relies on several probabilistic tools for controlling concentration and expectations. We use the following concentration inequality for sums of sub-Gaussian random variables (Lemma~\ref{lemma:generalized-hoeffding}); Lemma~\ref{lem:stein} and Lemma~\ref{Probabilistic Cauchy-Schwarz in Subspaces} are stated and proved in Section~\ref{sec:app-prob}.

\subsection{Manifold Geometry}
\label{subsec:manifold-net}

\begin{definition}[Reach of a set~\cite{federer1959curvature}]
Let \( A \subset \mathbb{R}^N \) be a closed set. The \emph{reach} of \( A \), denoted \( \rch(A) \), is defined as the largest number \( \tau \ge 0 \) such that every point \( x \in \mathbb{R}^N \) with \( \operatorname{dist}(x, A) < \tau \) has a unique nearest point in \( A \). That is,
\[
\rch(A) := \sup\left\{ r > 0 \;\middle|\; \forall x \in \mathbb{R}^N,\, \operatorname{dist}(x,A) < r \Rightarrow \exists! \, a \in A \text{ with } \|x - a\| = \operatorname{dist}(x,A) \right\}.
\]
\end{definition}

The following result characterizes the size of a dense point set (net) needed to cover a manifold, adapting results from~\cite{niyogi2008finding}. We present both the original probabilistic formulation and a simplified deterministic version.

\begin{proposition}[Probabilistic Net Construction]
\label{prop:probabilistic-net}
Let $\mathcal{M} \subset \mathbb{R}^N$ be a $d$-dimensional manifold with reach $\rch(\mathcal{M})>0$. Let $\bar{x}$ be obtained by uniform random sampling from $\mathcal{M}$. Then for any $\delta > 0$ and $\varepsilon < \rch(\mathcal{M})/2$, the set $\bar{x}$ is $(\varepsilon/2)$-dense in $\mathcal{M}$ (meaning for every $p \in \mathcal{M}$ there exists $x_i \in \bar{x}$ with $\|p - x_i\| < \varepsilon/2$) with probability at least $1-\delta$, provided:

\[
|\bar{x}| > \beta_1 \left( \log(\beta_2) + \log\left(\frac{1}{\delta}\right) \right),
\]

where the constants are:
\[
\beta_1 = \frac{\vol(\mathcal{M})}{\cos^d(\theta_1)\vol(B^d_{\varepsilon/4})}, \quad 
\beta_2 = \frac{\vol(\mathcal{M})}{\cos^d(\theta_2)\vol(B^d_{\varepsilon/8})},
\]
with $\theta_1 = \arcsin(\varepsilon/8\rch(\mathcal{M}))$, $\theta_2 = \arcsin(\varepsilon/16\rch(\mathcal{M}))$, and $B^d_r$ denoting the $d$-dimensional Euclidean ball of radius $r$.
\end{proposition}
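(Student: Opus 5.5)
This is the standard coupon-collector/packing argument of Niyogi--Smale--Weinberger, carried out in three steps.

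\textbf{Step 1: volume lower bound for small balls.} For every $p\in\mathcal{M}$ and $r<\rch(\mathcal{M})/2$, I will show
\[
\vol\bigl(\mathcal{M}\cap B^N_r(p)\bigr)\ \ge\ \cos^d(\theta)\,\vol(B^d_r),\qquad \theta=\arcsin\bigl(r/2\rch(\mathcal{M})\bigr).
\]
The argument uses reach. The orthogonal projection $\pi_p$ of $\mathcal{M}\cap B_r(p)$ onto $T_p\mathcal{M}$ is injective. Its image contains a tangent disk of radius $r\cos\theta$, since the manifold deviates from its tangent plane by at most $O(r^2/\rch(\mathcal{M}))$. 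Also, $\pi_p$ does not increase volume. Together these give the bound with the stated $\cos^d\theta$ factor. This is precisely Lemma~5.3 of \cite{niyogi2008finding}, and I would cite it rather than reprove it.

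\textbf{Step 2: a deterministic packing reduction.} Take a maximal $\varepsilon/8$-separated set $\{y_1,\dots,y_\ell\}\subset\mathcal{M}$. The balls $B_{\varepsilon/16}(y_j)$ are disjoint. Applying Step~1 with $r=\varepsilon/16$ gives
\[
\ell\le \frac{\vol(\mathcal{M})}{\cos^d(\theta_2)\vol(B^d_{\varepsilon/16})}.
\]
Up to the constant in the radius, this is $\beta_2$; I would adjust the separation radius so it matches $\varepsilon/8$ exactly. By maximality the balls $B_{\varepsilon/8}(y_j)$ cover $\mathcal{M}$. So if every ball $B_{\varepsilon/4}(y_j)$ contains at least one sample $x_i$, then every $p\in\mathcal{M}$ lies within $\varepsilon/8+\varepsilon/4<\varepsilon/2$ of some sample, which is the required density.

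\textbf{Step 3: probability and union bound.} For a uniform sample $x$, Step~1 with $r=\varepsilon/4$ gives
\[
\Pr\bigl[x\in B_{\varepsilon/4}(y_j)\bigr]\ \ge\ \alpha:=\frac{\cos^d(\theta_1)\vol(B^d_{\varepsilon/4})}{\vol(\mathcal{M})}=\frac{1}{\beta_1}.
\]
With $n$ i.i.d.\ samples, the probability that some ball $B_{\varepsilon/4}(y_j)$ receives no sample is at most
\[
\ell(1-\alpha)^n\le \beta_2\, e^{-n/\beta_1}.
\]
This is below $\delta$ once $n>\beta_1(\log\beta_2+\log(1/\delta))$, which is the claim.

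\textbf{Main obstacle.} The delicate part is Step~1: proving that the tangent projection is injective and that its image covers a disk of radius $r\cos\theta$, using only the reach condition. This is where $\varepsilon<\rch(\mathcal{M})/2$ is needed. I would take it from \cite{niyogi2008finding} and only verify that the radii $\varepsilon/4$ and $\varepsilon/8$ are in the admissible range. The remaining work is bookkeeping to make the constants match $\beta_1$ and $\beta_2$.
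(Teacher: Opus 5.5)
Your proposal is correct and is the Niyogi--Smale--Weinberger argument (volume lower bound for $\mathcal{M}\cap B_r(p)$, a packing bound on the number of cells, then a coupon-collector union bound), which the paper cites rather than reproves. The one bookkeeping fix: to get $\log\beta_2$ rather than $\log(2^d\beta_2)$ you need a maximal $\varepsilon/4$-separated set, whose disjoint $\varepsilon/8$-balls have angle $\theta_2$ and whose $\varepsilon/4$-balls cover $\mathcal{M}$; your density estimate then becomes $\varepsilon/4+\varepsilon/4$, and the strict bound $<\varepsilon/2$ comes from maximality (distance $<\varepsilon/4$ to some $y_j$) together with using open sampling balls.
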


For our purposes, we require the following deterministic version that provides explicit bounds on the net size:

\begin{lemma}[Deterministic Net Size Bound]
\label{prop:manifold-net}
Let $\mathcal{M} \subset \mathbb{R}^N$ be a $d$-dimensional manifold with reach $\rch(\mathcal{M})>0$. For any $0 < l < \rch(\mathcal{M})/2$, there exists a net $\Gamma_l \subset \mathcal{M}$ such that:

\begin{romanenumerate}
  \item $\Gamma_l$ is $(l/2)$-dense in $\mathcal{M}$: for every $p \in \mathcal{M}$, there exists $x_j \in \Gamma_l$ with $\|p - x_j\| \le l/2$.
  
  \item The cardinality satisfies:
  \[
  \abs{\Gamma_l} \approx \frac{\vol(\mathcal{M})}{(l/5)^dV_d} \log\left(\frac{\vol(\mathcal{M})}{(l/9)^dV_d}\right)
  \]
  where $V_d$ is the volume of a unit $d$-ball.
\end{romanenumerate}
\end{lemma}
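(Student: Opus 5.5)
The plan is to derive the deterministic bound from the probabilistic construction of Proposition~\ref{prop:probabilistic-net}. The idea is to take $\varepsilon := l$ and $\delta := 1/2$ there. Since $l<\rch(\mathcal{M})/2$, the hypothesis of that proposition holds. Uniform random sampling of
\[
m = \Bigl\lceil \beta_1\bigl(\log\beta_2+\log 2\bigr)\Bigr\rceil + 1
\]
points then produces an $(l/2)$-dense set with probability at least $1/2>0$. By the probabilistic method, at least one realization $\Gamma_l$ of the sample is $(l/2)$-dense, which gives part (i). Strict density $<l/2$ implies the required $\le l/2$.

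For part (ii) I will simplify $\beta_1,\beta_2$ using the reach constraint.
\begin{itemize}
\item Since $l<\rch(\mathcal{M})/2$, we have $\sin\theta_1 = l/(8\rch(\mathcal{M}))<1/16$. Hence $\cos\theta_1 \ge \sqrt{1-1/256}$, and so $\cos^d\theta_1 \ge (1-1/256)^{d/2}$.
\item The same estimate holds for $\theta_2$, with the better constant $1/32$.
\item Using $\vol(B^d_r)=r^dV_d$, we get $\beta_1 = \vol(\mathcal{M})/\bigl(\cos^d\theta_1\,(l/4)^dV_d\bigr)$.
\end{itemize}
I will absorb the factor $\cos^{-d}\theta_1\le (256/255)^{d/2}$ by enlarging the radius slightly. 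It suffices to note that $(4/5)^d\cos^{-d}\theta_1\le 1$, because $\cos\theta_1\ge 4/5$; this is immediate from $\sin\theta_1<1/16$. This yields
\[
\beta_1\le \frac{\vol(\mathcal{M})}{(l/5)^dV_d}.
\]
Likewise, $\cos\theta_2 \ge 8/9$ gives
\[
\beta_2 \le \frac{\vol(\mathcal{M})}{(l/9)^dV_d}.
\]
Substituting these bounds, and noting that the additive $\log 2$ and the ceiling are lower-order terms (absorbed in the ``$\approx$''), we get
\[
|\Gamma_l| \lesssim \frac{\vol(\mathcal{M})}{(l/5)^dV_d}\log\Bigl(\frac{\vol(\mathcal{M})}{(l/9)^dV_d}\Bigr),
\]
which is statement (ii).

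The only delicate point is interpreting the ``$\approx$'' and making sure $\log\beta_2$ is positive, so that the bound is meaningful. For this I would observe that $\vol(\mathcal{M})\ge$ the volume of a small intrinsic ball. A patch of radius $l/9 < \rch(\mathcal{M})$ has volume comparable to $(l/9)^dV_d$ by the reach-controlled tangent-projection argument of~\cite{niyogi2008finding}, so $\beta_2\gtrsim 1$. Otherwise the argument is a direct specialization, and I expect no real obstacle beyond this bookkeeping of constants.
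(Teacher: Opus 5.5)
Your proposal is correct and follows the route the paper intends: the lemma is presented as a deterministic specialization of Proposition~\ref{prop:probabilistic-net} (from \cite{niyogi2008finding}). You fix a constant failure probability, apply the probabilistic method, and absorb $\cos^{-d}\theta_1,\cos^{-d}\theta_2$ into the radii, which is exactly where the constants $l/5$ and $l/9$ come from. Your positivity check also holds exactly: the volume lemma of \cite{niyogi2008finding} gives $\vol(\mathcal{M}\cap B_{l/8}(p))\ge\cos^d\theta_2\,\vol(B^d_{l/8})$, so $\beta_2\ge 1$, and this is what justifies replacing $\beta_1$ by the larger quantity $\vol(\mathcal{M})/((l/5)^dV_d)$.
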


\begin{remark}
The original result in~\cite{niyogi2008finding} includes additional probabilistic aspects related to random sampling and curvature-dependent angle bounds. Here we present a simplified deterministic version focusing on the explicit dependence on manifold volume, dimension.
\end{remark}

\begin{lemma}[Tangent Space Approximation~\cite{boissonnat2019reach}]
\label{lem:distance_tangent_space}
Let \( \mathcal{M} \subset \mathbb{R}^N \) be a manifold with reach \( \rch(\mathcal{M}) \). For any \( p, q \in \mathcal{M} \) with \( \|p - q\| < \rch(\mathcal{M}) \):
\begin{enumerate}
    \item The angle between the secant \( [pq] \) and \( T_p\mathcal{M} \) satisfies
    \[
    \sin \angle([pq], T_p\mathcal{M}) \leq \frac{\|p - q\|}{2\rch(\mathcal{M})}.
    \]
    \item The Euclidean distance from \( q \) to \( T_p\mathcal{M} \) is bounded by
    \[
    d_{\mathcal{E}}(q, T_p\mathcal{M}) \leq \frac{\|p - q\|^2}{2\rch(\mathcal{M})}.
    \]
\end{enumerate}
\end{lemma}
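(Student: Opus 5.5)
This lemma is the standard reach-based tangent approximation of Federer, as presented in~\cite{boissonnat2019reach}. My plan is to prove part 2 first and then deduce part 1 from it. Write $\tau=\rch(\mathcal{M})$ and fix $p,q\in\mathcal{M}$ with $\|p-q\|<\tau$. Decompose $q-p=v+w$ with $v\in T_p\mathcal{M}$ and $w\in N_p\mathcal{M}$, the normal space at $p$. Then $d_{\mathcal{E}}(q,T_p\mathcal{M})=\|w\|$ (affine tangent space through $p$), and
\[
\sin\angle([pq],T_p\mathcal{M})=\frac{\|w\|}{\|q-p\|}.
\]
So once part 2 gives $\|w\|\le\|p-q\|^2/(2\tau)$, dividing by $\|p-q\|$ yields part 1 immediately.

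To bound $\|w\|$, I will use Federer's tangent/normal ball characterization of reach: for every unit normal $n\in N_p\mathcal{M}$ and every $r<\tau$, the open ball $B(p+rn,r)$ contains no point of $\mathcal{M}$. I justify this directly from the definition of reach given above. For $r<\tau$, the point $x=p+rn$ is within distance $r<\tau$ of $\mathcal{M}$, so it has a unique nearest point. The set of $s\ge0$ for which $p$ is the nearest point of $p+sn$ is an interval, closed by continuity of the projection on the $\tau$-neighbourhood. Near $s=0$ the projection of $p+sn$ is $p$, since $n$ is normal and $\mathcal{M}$ is $C^2$. Openness of this interval for $s<\tau$ follows from the standard fact that the projection $\pi$ satisfies $\pi(p+sn)=p\Rightarrow\pi(p+s'n)=p$ for all $s'<\tau$ (Federer, Thm.~4.8(6)). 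This step, showing that $p$ remains the nearest point all the way up to distance $\tau$, is the main obstacle. I would rely on Federer's result rather than reprove it.

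Given the empty-ball property, take $n=w/\|w\|$ (if $w=0$ there is nothing to prove). Since $q\notin B(p+rn,r)$,
\[
\|q-p-rn\|^2\ge r^2 .
\]
Expanding gives $\|q-p\|^2-2r\langle q-p,n\rangle\ge0$, and $\langle q-p,n\rangle=\|w\|$. Hence $\|w\|\le\|q-p\|^2/(2r)$. Letting $r\uparrow\tau$ gives $\|w\|\le\|p-q\|^2/(2\tau)$, which is part 2; part 1 then follows from the first paragraph. The hypothesis $\|p-q\|<\tau$ is not needed for the inequality itself. It only ensures that the resulting sine bound is below $1/2$, so the statement is meaningful.
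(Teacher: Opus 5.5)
Your proof is correct and matches the standard argument. The paper gives no proof of its own; it imports the lemma from \cite{boissonnat2019reach}, where it comes from Federer's Theorem 4.18 in the same way: the empty ball $B(p+rn,r)$ for unit normals $n$ and $r<\rch(\mathcal{M})$, taken with $n=w/\|w\|$, gives $\|w\|\le\|p-q\|^2/(2r)$, then one lets $r\uparrow\rch(\mathcal{M})$ and divides by $\|p-q\|$ for the angle. The step you outsource is that $p$ remains the nearest point of $p+sn$ for all $s<\rch(\mathcal{M})$ once this holds for some $s>0$ (compactness and $C^2$ regularity give it for small $s$); that is Federer's Theorem 4.8, the same ingredient behind the cited result, so nothing is missing.
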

\subsection*{Notation}
Throughout this paper, we use the following notation: $\mathbb{R},\ \mathbb{R}^N,\ \mathbb{R}^d$ (real line and Euclidean spaces), where $N$ always denotes the \emph{ambient} dimension of the embedding space $\mathbb{R}^N$; $\|x\|$ (Euclidean norm), $B^d_r(p)$ and $B_r(p)$ ($d$-dimensional and contextual closed Euclidean balls), $B_{\mathcal{M}}(p,r)$ (intrinsic ball on manifold $\mathcal{M}$), $\mathcal{M}\subset\mathbb{R}^N$ (compact $d$–dimensional $\mathcal C^2$ submanifold), $\rch(\mathcal{M})$ (reach), $\vol(\mathcal{M})$ and $\diam(\mathcal{M})$ (volume and diameter), $K_\sigma(x,y)=\exp\!\bigl(-\|x-y\|^2/(2\sigma^2)\bigr)$ (Gaussian kernel), $\hat{K}$ (RFF approximation to $K_\sigma$), $D_K$ (exact Gaussian kernel distance) and $D_{\hat{K}}$ (kernel distance computed from $\hat{K}$), $R(\Delta) := D_{\hat{K}}(\Delta)^2/D_K(\Delta)^2$ (kernel distance ratio), $\omega^1,\dots,\omega^t\sim\mathcal{N}(0,\sigma^{-2}I_N)$ (RFF frequencies; $\mathcal{N}$ is the Gaussian law), $\phi\colon\mathbb{R}^N\to\mathbb{R}^t$ (finite-dimensional RFF feature map) and $\psi\colon\mathbb{R}^N\to\mathcal{H}$ (canonical RKHS feature map for $K_\sigma$), $P$ (finite point set), $\Gamma_r$ ($r$–net with $m:=|\Gamma_r|$ and $\Gamma_r=\{z_1,\dots,z_m\}$, see Lemma~\ref{prop:manifold-net}), $T_p\mathcal{M}$ (tangent space at $p\in\mathcal{M}$), $O(\cdot)$, $\Theta(\cdot)$ (asymptotic notation); $s$ is the slow-helix parameter in Example~\ref{ex:slow-helix}; $k$, $L$, and $q$ denote generic summation lengths in Section~\ref{sec:app-prob} (Stein's lemma uses $g\in\mathbb{R}^q$); $r$ denotes a Euclidean ball radius where distinguished from the net scale in $\Gamma_r$.

\section{Local Approximation Strategy}
\label{subsec:overview}

In this section we work in the ambient space $\mathbb{R}^N$ and study geometric and analytic properties of the squared kernel distance ratio $R(\Delta):=\frac{D_{\hat{K}}(\Delta)^2}{D_K(\Delta)^2}$ induced by Random Fourier Features (RFF). In particular, we develop useful lemmas that help us lift the ratio from tangent planes to local neighborhoods of the manifold. Our concrete objective is to bound the relative error in $R(\Delta)$ on a small ball \emph{not necessarily centered at the origin}. The analysis proceeds in three stages: (i) \textbf{pointwise concentration}: high-probability bounds on $R(\Delta)$ at fixed locations $\Delta\in\mathbb{R}^N$ (Lemma~\ref{lem:ratio bound at a point}); (ii) \textbf{gradient control}: sensitivity of $R(\Delta)$ by bounding partial derivatives (Lemma~\ref{Derivative of the ratio is small}) and proving Lipschitz continuity (Lemma~\ref{Lipschitz Constant of derivative of ratio}; proof follows that lemma); (iii) \textbf{uniform neighborhood guarantees}: combining (i) and (ii) via Taylor expansion, extending pointwise bounds to uniform control over Euclidean balls (Lemma~\ref{Local Uniform Control in Euclidean Balls}; proof follows that lemma). This three-step approach provides the foundation for extending local accuracy guarantees to the entire manifold in subsequent sections.

\subsection{Pointwise Ratio Concentration at a Fixed Location}
\label{Subsec: Bounding ratio at the center of the ball}

We begin by establishing pointwise control of the kernel distance ratio at a fixed offset vector $\Delta$. This foundational result will later be extended to uniform bounds over neighborhoods.

\begin{lemma}[Pointwise concentration of kernel distance ratio]
\label{lem:ratio bound at a point}
Let $\Delta \in \mathbb{R}^N \setminus \{0\}$ and define the random ratio
\[
R(\Delta) := \frac{D_{\hat{K}}(\Delta)^2}{D_K(\Delta)^2} = \frac{1}{t}\sum_{j=1}^t \frac{1 - \cos\langle \omega^j, \Delta\rangle}{1 - e^{-\|\Delta\|^2/2}},
\]
where $\omega^j \sim \mathcal{N}(0,I_N)$. Then for any $\varepsilon > 0$,
\[
\mathbb{P}\left( \left| R(\Delta) - 1 \right| \geq \varepsilon \right)
\leq 2 \exp\left( -\frac{t(1 - e^{-\|\Delta\|^2/2})^2 \varepsilon^2}{2} \right).
\]
\end{lemma}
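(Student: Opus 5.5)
The plan is a direct Hoeffding argument on the centered summands. Set $c(\Delta):=1-e^{-\|\Delta\|^2/2}>0$, which is nonzero because $\Delta\neq 0$. For $j=1,\dots,t$ define
\[
Y_j := \frac{1-\cos\langle\omega^j,\Delta\rangle}{c(\Delta)},\qquad X_j := Y_j - 1 .
\]
The identity \eqref{eq:bochner} with $\sigma=1$ gives $\mathbb{E}[\cos\langle\omega^j,\Delta\rangle]=e^{-\|\Delta\|^2/2}$. Hence $\mathbb{E}[Y_j]=1$, and the $X_j$ are i.i.d.\ and mean zero. The event in the statement can then be rewritten as
\[
\{|R(\Delta)-1|\ge\varepsilon\}=\Bigl\{\bigl|\textstyle\sum_{j=1}^t X_j\bigr|\ge t\varepsilon\Bigr\}.
\]

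Boundedness is immediate. Since $1-\cos(\cdot)\in[0,2]$, each $Y_j$ lies in $[0,2/c(\Delta)]$, so each $X_j$ lies in an interval of length $2/c(\Delta)$. Rather than go through the $\psi_2$-norm version in Lemma~\ref{lemma:generalized-hoeffding}, whose constant $c_H$ would not give the explicit factor $1/2$, I will use the classical Hoeffding inequality for bounded independent variables. For $X_j\in[a_j,b_j]$ it states
\[
\mathbb{P}\Bigl(\bigl|\textstyle\sum_j X_j\bigr|\ge s\Bigr)\le 2\exp\Bigl(-\frac{2s^2}{\sum_j (b_j-a_j)^2}\Bigr).
\]
Plugging in $s=t\varepsilon$ and $b_j-a_j=2/c(\Delta)$ gives the exponent
\[
\frac{2t^2\varepsilon^2}{t\cdot 4/c(\Delta)^2}=\frac{t\,c(\Delta)^2\varepsilon^2}{2},
\]
which is exactly the bound in the statement.

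There is essentially no obstacle here. The two points to check are the unbiasedness identity, which is \eqref{eq:bochner} specialized to $\sigma=1$, and that we use the interval-width form of Hoeffding rather than the sub-Gaussian-norm form, so that the constant comes out as stated. If the sin/cos pairing of $\phi$ were used instead, note that $\|\phi(x)-\phi(y)\|^2=\frac1t\sum_j 2(1-\cos\langle\omega^j,\Delta\rangle)$, which fits the same template. The prefactor in the statement implicitly normalizes the $t$ frequencies this way, so nothing changes. Note also that the bound degrades like $\|\Delta\|^4$ as $\Delta\to 0$; this is exactly why later sections need the relative-Lipschitz treatment near the origin.
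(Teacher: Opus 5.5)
Your proposal is correct and is essentially the paper's proof: unbiasedness of each summand from $\mathbb{E}[\cos\langle\omega,\Delta\rangle]=e^{-\|\Delta\|^2/2}$, boundedness of $(1-\cos\langle\omega^j,\Delta\rangle)/c(\Delta)$, and Hoeffding's inequality. Your use of the interval-width form with range $[0,2/c(\Delta)]$ is exactly what gives the stated factor $1/2$ in the exponent, whereas the paper's phrasing (terms bounded by $2/c(\Delta)$ in absolute value) would, read with a symmetric interval, only give $1/8$.
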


\begin{proof}[Proof of Lemma~\ref{lem:ratio bound at a point}]\label{app:proof-lem-ratio-pointwise}
First observe the zero mean of \( R(\Delta) \).
Note that, using Stein's lemma,  for $\|\Delta\| \neq 0$,  
\[ 
\mathbb{E}\bigl[D_{\hat{K}}(\Delta) / D_K(\Delta)\bigr]
~=~
1/(1 -  e^{-\|\Delta\|^2/2})\mathbb{E}\Bigl[1 - 
  \frac{1}{t}\sum_{j=1}^t \cos\bigl(\langle \omega^j,\Delta\rangle\bigr)
\Bigr] = 1.
 \]

Further, each term in 
\( \,\bigl(1 - \cos(\langle \omega^j,\Delta\rangle)\bigr)/(1 -  e^{-\|\Delta\|^2/2}) \) 
is bounded by $2/(1 -  e^{-\|\Delta\|^2/2})$ in absolute value. Hence a direct application of Hoeffding’s inequality ensures the result. 
\end{proof}

\begin{remark}
When $\|\Delta\| \geq 1$, the bound simplifies to
\[
\mathbb{P}\left( \left| R(\Delta) - 1 \right| \geq \varepsilon \right)
\leq 2 \exp\left( -\frac{t\varepsilon^2}{13} \right),
\]
since $1 - e^{-1/2} \geq \frac{1}{\sqrt{13}}$.
\end{remark}

Extending pointwise bounds to neighborhoods requires controlling how the ratio varies with its input; the next lemma provides that control.

\subsection{Gradient Concentration of the Distance Ratio}
\label{Subsec: Bounding the gradient of the ratio at the center of the ball}

Having established pointwise control of the kernel distance ratio $R(\Delta)$, we now analyze its gradient to understand how the ratio varies with respect to its input. This gradient analysis will enable us to extend pointwise guarantees to local neighborhoods.

Consider the partial derivative with respect to the first coordinate (the analysis for other coordinates is identical). From~\cite{chen2017relative,rahimi2007random}, the derivative takes the form:

\begin{equation}
\label{eq:ratio-derivative}
\frac{\partial}{\partial \Delta_1} R(\Delta)
= \frac{1}{t} \sum_{k=1}^{t} 
\frac{
  \omega^k_1 \sin(\langle \omega^k, \Delta \rangle)(1 - e^{-\|\Delta\|^2/2})
  - (1 - \cos \langle \omega^k, \Delta \rangle)\Delta_1 e^{-\|\Delta\|^2/2}
}{
  (1 - e^{-\|\Delta\|^2/2})^2
},
\end{equation}

where $\omega^1, \ldots, \omega^t \sim \mathcal{N}(0, I_N)$ are the RFF frequencies.
The key observation is that the numerator is a zero-mean random variable, enabling concentration as $t$ increases.

For $\|\Delta\| \geq 1$, the denominator $1 - e^{-\|\Delta\|^2/2}$ is bounded away from zero, simplifying our analysis. Let us define:

\begin{align*}
I &:= \langle \omega, \Delta \rangle = \omega^\top \Delta, \\
Z &:= \omega_1 \sin(I)(1 - e^{-\|\Delta\|^2/2}) - (1 - \cos(I))\Delta_1 e^{-\|\Delta\|^2/2}.
\end{align*}

\begin{lemma}[Gradient Concentration]
\label{lem:gradient-concentration}
\label{Derivative of the ratio is small}
Let $\Delta \in \mathbb{R}^N$ with $\|\Delta\| \geq 1$. For any coordinate $i \in \{1, \dots, N\}$ and $\varepsilon > 0$,
\[
\mathbb{P}\left( \left| \frac{\partial}{\partial \Delta_i} R(\Delta) \right| \geq \varepsilon \right)
\leq 2 \exp\left( -\frac{(1 - e^{-\|\Delta\|^2/2})^4 c_H \varepsilon^2 t}{16} \right),
\]
where $c_H > 0$ is an absolute constant.
\end{lemma}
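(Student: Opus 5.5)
The plan is to write $\partial R/\partial\Delta_i$ as an average of i.i.d.\ zero-mean sub-Gaussian terms and then apply Lemma~\ref{lemma:generalized-hoeffding}. By symmetry we can take $i=1$. By \eqref{eq:ratio-derivative},
\[
\frac{\partial}{\partial \Delta_1}R(\Delta)=\sum_{k=1}^t \frac{Z_k}{t\,(1-e^{-\|\Delta\|^2/2})^2},
\]
where $Z_k:=\omega^k_1\sin(I_k)(1-e^{-\|\Delta\|^2/2})-(1-\cos I_k)\Delta_1e^{-\|\Delta\|^2/2}$ and $I_k=\langle\omega^k,\Delta\rangle$. Since $\Delta$ is fixed and the $\omega^k$ are i.i.d., the $Z_k$ are i.i.d.

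\textbf{Step 1: zero mean.} Apply Stein's lemma (Lemma~\ref{lem:stein}) to $f(\omega)=\sin\langle\omega,\Delta\rangle$. This gives $\mathbb{E}[\omega_1\sin I]=\Delta_1\mathbb{E}[\cos I]$. Since $I\sim\mathcal{N}(0,\|\Delta\|^2)$, we have $\mathbb{E}[\cos I]=e^{-\|\Delta\|^2/2}$. Therefore
\[
\mathbb{E}[\omega_1\sin I]\,(1-e^{-\|\Delta\|^2/2})=\Delta_1e^{-\|\Delta\|^2/2}(1-e^{-\|\Delta\|^2/2}).
\]
We also have $\mathbb{E}[1-\cos I]=1-e^{-\|\Delta\|^2/2}$. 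The two contributions cancel, so $\mathbb{E}[Z_k]=0$. Integrability holds because $|\omega_1\sin I|\le|\omega_1|$.

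\textbf{Step 2: sub-Gaussian norm bound, $\|Z_k\|_{\psi_2}\le 4$.} The first term satisfies $|\omega_1\sin I|(1-e^{-\|\Delta\|^2/2})\le|\omega_1|$, and $\omega_1\sim\mathcal{N}(0,1)$. Hence its $\psi_2$-norm is at most $\|\omega_1\|_{\psi_2}=\sqrt{8/3}<1.7$.

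The second term is deterministically bounded by $2|\Delta_1|e^{-\|\Delta\|^2/2}\le 2\|\Delta\|e^{-\|\Delta\|^2/2}$. The function $x\mapsto xe^{-x^2/2}$ has maximum $e^{-1/2}$. So this term is at most $2e^{-1/2}<1.22$. A constant bounded by $c$ has $\psi_2$-norm at most $c/\sqrt{\ln 2}<1.5$.

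By the triangle inequality for $\|\cdot\|_{\psi_2}$, $\|Z_k\|_{\psi_2}\le 4$. The hypothesis $\|\Delta\|\ge1$ is not needed for this bound. It only matters through the prefactor $(1-e^{-\|\Delta\|^2/2})$ being bounded away from $0$, which keeps the final rate meaningful.

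\textbf{Step 3: Hoeffding.} Set $X_k:=Z_k/(t(1-e^{-\|\Delta\|^2/2})^2)$. By Steps 1 and 2, the $X_k$ are i.i.d., zero-mean, with $\|X_k\|_{\psi_2}\le K:=4/(t(1-e^{-\|\Delta\|^2/2})^2)$. Lemma~\ref{lemma:generalized-hoeffding} with $k=t$ gives
\[
\mathbb{P}\Bigl(\Bigl|\sum_k X_k\Bigr|\ge\varepsilon\Bigr)\le 2\exp\Bigl(-\frac{c_H\varepsilon^2}{tK^2}\Bigr)=2\exp\Bigl(-\frac{c_H\varepsilon^2t(1-e^{-\|\Delta\|^2/2})^4}{16}\Bigr),
\]
which is the claimed bound.

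The main obstacle is Step 2: getting the constant $16$ requires $\|Z_k\|_{\psi_2}\le4$ uniformly in $\Delta$. The term $\Delta_1e^{-\|\Delta\|^2/2}$ could a priori grow with $\|\Delta\|$, and the $\sup_x xe^{-x^2/2}$ bound is what prevents that. If the constants turn out tighter than computed, any slack can be absorbed into the absolute constant $c_H$.
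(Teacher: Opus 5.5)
Your proposal is correct and follows the same route as the paper. You use Stein's lemma to show each summand $Z_k$ has mean zero, the triangle inequality for $\|\cdot\|_{\psi_2}$ to get $\|Z_k\|_{\psi_2}\le 4$, and then the generalized Hoeffding inequality with $K=4/\bigl(t(1-e^{-\|\Delta\|^2/2})^2\bigr)$. Your constant bookkeeping is in fact more careful than the paper's, which simply asserts the bound $4$: you use the exact value $\|\omega_1\|_{\psi_2}=\sqrt{8/3}$, keep the factor $2$ on the second term, and bound $|\Delta_1|e^{-\|\Delta\|^2/2}\le \sup_{x}xe^{-x^2/2}=e^{-1/2}$.
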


\begin{proof}[Proof of Lemma~\ref{Derivative of the ratio is small}]\label{app:proof-lem-gradient-ratio}
Let \(  \omega \sim N(0,I_n)  \) and \(  x \in \mathbb{R}^n  \), and define 
\[  
X = \omega^\top \Delta.
  \]
First, using Stein's Lemma one can show \(  \mathbb{E}[Z] = 0  \). We start with the following known facts:
\[  
\mathbb{E}[\cos(X)] = e^{-\|\Delta\|^2/2}, \quad \text{and} \quad \mathbb{E}[1-\cos(X)] = 1 - e^{-\|\Delta\|^2/2}.
  \]
By Stein's lemma, we have
\[  
\mathbb{E}[\omega_1 \sin(X)]
=
\mathbb{E}\!\Biggl[\frac{\partial}{\partial \omega_1}\sin(\omega^\top \Delta)\Biggr]
=
\mathbb{E}\bigl[\Delta_1 \cos(X)\bigr]
=
\Delta_1\,e^{-\|\Delta\|^2/2}.
  \]
Multiplying both sides of the last equality by \(  1 - e^{-\|\Delta\|^2/2}  \) yields
\[  
(1 - e^{-\|\Delta\|^2/2}) \, \mathbb{E}[\omega_1 \sin(X)]
=
(1 - e^{-\|\Delta\|^2/2}) \, (\Delta_1 \, e^{-\|\Delta\|^2/2}).
  \]
Since $(1 - e^{-\|\Delta\|^2/2}) = \mathbb{E}[1-\cos(X)]$, we have $(1 - e^{-\|\Delta\|^2/2}) \mathbb{E}[\omega_1 \sin(X)] = \Delta_1 \, e^{-\|\Delta\|^2/2} \, \mathbb{E}[1-\cos(X)]$. Rearranging terms shows that $\bigl(1 - e^{-\|\Delta\|^2/2}\bigr)\,\mathbb{E}[\omega_1 \sin(X)] - \Delta_1\,e^{-\|\Delta\|^2/2}\,\mathbb{E}[1-\cos(X)] = 0$.
Now, we need Sub Gaussian norm of \(  Z  \). As $\abs{\omega_1\,\sin(I)} \;\le\; \abs{\omega_1}$ therefore $
\norm{\omega_1\,\sin(I)}_{\psi_2}
~\le~
\norm{\omega_1}_{\psi_2}
~=~
2.
$
Also,$\abs{1 - \cos(I)} \;\le\; 2.
\Rightarrow
\norm{1 - \cos(I)}_{\psi_2}
~\le~
2.$
Then using property of norm,
\[  
\norm{Z/t\Bigl(1 - e^{-\frac12\|\Delta\|^2}\Bigr)^2}_{\psi_2}
 \leq \frac{2\abs{1 - e^{-\|\Delta\|^2/2}} + \abs{\Delta_1} e^{-\|\Delta\|^2/2}}{t(1 - e^{-\|\Delta\|^2/2})^2} \leq \frac{4}{t(1 - e^{-\|\Delta\|^2/2})^2}.
  \]
Applying Generalized Hoeffding’s inequality~\cite{vershynin2009high}),
\[  
\mathbb{P} \left( \left| \frac{1}{t} \sum_{i=1}^{t} Z_i/\Bigl(1 - e^{-\frac12\|\Delta\|^2}\Bigr)^2 \right| \geq \varepsilon \right)
\leq \]\[2 \exp \left( -\frac{c_H \varepsilon^2}{t \cdot \frac{16}{t^2(1 - e^{-\|\Delta\|^2/2})^4}} \right)\leq 2 \exp \left( -\frac{(1 - e^{-\|\Delta\|^2/2})^4 \cdot c_H \varepsilon^2 t}{16} \right).
  \]

\end{proof}

\subsection{Lipschitz Control of the Ratio Gradient}
\label{sebsec: Bounding Lipschitz constant of the derivative of the ratio}

Before analyzing the Lipschitz properties of the gradient, we require a concentration bound for sums of Gaussian norms. 

To extend pointwise gradient bounds to uniform control over regions, we establish that the gradient of the kernel distance ratio is Lipschitz continuous. This allows us to control how quickly the gradient can vary between nearby points. The proof of Lemma~\ref{Lipschitz Constant of derivative of ratio} follows the lemma statement below.

\begin{lemma}[Lipschitz Constant of Derivative of Ratio]
\label{Lipschitz Constant of derivative of ratio}
Let \( \Delta, \Delta' \in \mathbb{R}^N \). For each coordinate \( i \in \{1, \dots, N\} \), the partial derivatives satisfy
\[
\left\| \frac{\partial}{\partial \Delta_i} R(\Delta) - \frac{\partial}{\partial {\Delta'}_i} R({\Delta'}) \right\|
\leq C_L \|\Delta - \Delta'\|,
\]
where \( C_L \leq 8N + 53\sqrt{N} + 95 \) with probability at least \( 1 - \exp(-tN/C) \) for some absolute constant \( C > 0 \).
\end{lemma}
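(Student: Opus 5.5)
The plan is to bound the Hessian of $R$ and then apply the mean value theorem. Each partial derivative $\partial_i R$ is smooth, so
\[
\Bigl|\partial_i R(\Delta)-\partial_i R(\Delta')\Bigr|\le \sup_{\xi\in[\Delta,\Delta']}\bigl\|\nabla \partial_i R(\xi)\bigr\|\,\|\Delta-\Delta'\|.
\]
It therefore suffices to bound the Euclidean norm of the $i$-th row of the Hessian of $R$ uniformly in $\xi$, on a single high-probability event that does not depend on $\xi$. As in Lemma~\ref{Derivative of the ratio is small}, I would work in the regime where the denominator $g(\Delta):=1-e^{-\|\Delta\|^2/2}$ is bounded below. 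For $\|\Delta\|\ge 1$ we have $g\ge 1/\sqrt{13}$, so every power $g^{-k}$, $k\le 3$, is an absolute constant. (Near the origin the statement cannot hold uniformly, and the paper treats that regime separately.)

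Write $R=A/g$ with $A(\Delta)=\frac1t\sum_k (1-\cos\langle\omega^k,\Delta\rangle)$. By the quotient rule,
\[
\partial_j\partial_i R=\frac{\partial_{ij}A}{g}-\frac{\partial_iA\,\partial_jg+\partial_jA\,\partial_ig+A\,\partial_{ij}g}{g^2}+\frac{2A\,\partial_ig\,\partial_jg}{g^3}.
\]
I would bound each term's contribution to the row norm $\bigl(\sum_j(\cdot)^2\bigr)^{1/2}$ separately.

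The deterministic pieces are handled directly. Since $\partial_i g=\Delta_i e^{-\|\Delta\|^2/2}$ and $\sup_r r e^{-r^2/2}\le 1$, we get $\|\nabla g\|\le 1$. Similarly $\partial_{ij} g=(\delta_{ij}-\Delta_i\Delta_j)e^{-\|\Delta\|^2/2}$ has row norm $O(1)$. Also $0\le A\le 2$.

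The random pieces are $\partial_iA=\frac1t\sum_k\omega^k_i\sin\langle\omega^k,\Delta\rangle$ and $\partial_{ij}A=\frac1t\sum_k\omega^k_i\omega^k_j\cos\langle\omega^k,\Delta\rangle$. Using $|\sin|,|\cos|\le1$ and Cauchy--Schwarz,
\[
\|\nabla A\|\le \tfrac1t\sum_k\|\omega^k\|,\qquad
\Bigl(\sum_j(\partial_{ij}A)^2\Bigr)^{1/2}\le \tfrac1t\sum_k|\omega^k_i|\,\|\omega^k\|\le \tfrac1t\sum_k\|\omega^k\|^2 .
\]
The same bounds control $|\partial_iA|\cdot\|\nabla g\|$. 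These estimates are uniform in $\Delta$, which is what makes the mean value argument legitimate.

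Next, apply Lemma~\ref{Concentration for Gaussian Norm Sums} with $L=t$. It shows that with probability at least $1-2\exp(-tN/C)$,
\[
\frac1t\sum_k\|\omega^k\|^2\le 2N,
\]
and then by Jensen/Cauchy--Schwarz $\frac1t\sum_k\|\omega^k\|\le\sqrt{2N}$. Substituting these into the quotient-rule expansion gives a row-norm bound of the form $aN+b\sqrt N+c$. The $N$ term comes from $\partial_{ij}A/g$, the $\sqrt N$ terms come from $\nabla A\cdot\nabla g/g^2$, and the constants come from the $A\,\partial g$ terms. Tracking $g^{-1}\le\sqrt{13}$ produces numerical constants of the claimed size $8N+53\sqrt N+95$, possibly after slight adjustment.

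The main obstacle is that the constant must not depend on $\Delta$. With $g$ in the denominator cubed, the bound degenerates as $\|\Delta\|\to0$. I would therefore state the lemma restricted to $\|\Delta\|,\|\Delta'\|\ge1$, and the segment $[\Delta,\Delta']$ must also stay in that region, or else be handled via the path-length argument. A secondary point is to get the $N$ (rather than $N^{3/2}$) scaling. This requires bounding the row norm by $\frac1t\sum_k|\omega_i^k|\|\omega^k\|\le\frac1t\sum_k\|\omega^k\|^2$, rather than bounding all Hessian entries crudely.
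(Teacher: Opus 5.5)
Your proposal is correct and is essentially the paper's argument. The paper bounds $|\partial_iR(\Delta)-\partial_iR(\Delta')|$ by telescoping the sine, cosine and exponential factors with first-order Lipschitz estimates, which is the discrete form of your quotient-rule/Hessian computation. It arrives at $\frac1t\sum_k\bigl(7\|\omega^k\|^2+53\|\omega^k\|+95\bigr)\|\Delta-\Delta'\|$ and then applies the same concentration of $\frac1t\sum_k\|\omega^k\|^2\le 2N$ that you use.

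Your version is more careful than the paper's in two respects. First, it accounts explicitly for the variation of the denominator $g$, which the paper's difference of numerators $E(\Delta,\Delta')$ passes over. Second, it makes explicit the restriction to $\|\Delta\|,\|\Delta'\|\ge 1$, joining the two points by a segment or by a path of length at most $\tfrac{\pi}{2}\|\Delta-\Delta'\|$ that stays in that region. Without this restriction the statement as written fails, because $\partial_i R$ is unbounded near the origin.
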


\begin{proof}[Proof of Lemma~\ref{Lipschitz Constant of derivative of ratio}]\label{app:proof-lip-derivative-ratio}
We analyze the difference in partial derivatives through several key observations:

\begin{romanenumerate}
\item \textbf{Exponential term difference:} 
By the mean value theorem applied to \( f(s) = e^{-s/2} \), there exists \( \xi \) between \( \|\Delta\|^2 \) and \( \|\Delta'\|^2 \) such that
\[
\left| e^{-\|\Delta\|^2/2} - e^{-\|\Delta'\|^2/2} \right|
= \frac{1}{2} e^{-\xi/2} \left| \|\Delta\|^2 - \|\Delta'\|^2 \right|.
\]
Since \( \left| \|\Delta\|^2 - \|\Delta'\|^2 \right| \leq \|\Delta + \Delta'\|\cdot\|\Delta - \Delta'\| \), we have
\[
\left| e^{-\|\Delta\|^2/2} - e^{-\|\Delta'\|^2/2} \right|
\leq \frac{1}{2} e^{-\min\{\|\Delta\|^2, \|\Delta'\|^2\}/2} \|\Delta + \Delta'\|\cdot\|\Delta - \Delta'\|.
\]

\item \textbf{Trigonometric term differences:}
For the sine terms, we have
\[
\left| \sin(\langle \omega, \Delta \rangle) - \sin(\langle \omega, \Delta' \rangle) \right|
\leq \|\omega\|\cdot\|\Delta - \Delta'\|
\]
and similarly for cosine terms:
\[
\left| \cos(\langle \omega, \Delta \rangle) - \cos(\langle \omega, \Delta' \rangle) \right|
\leq \|\omega\|\cdot\|\Delta - \Delta'\|.
\]

\item \textbf{Bound on individual terms:}
Each term in the partial derivative satisfies
\[
\left| \omega_1 \sin(\langle \omega, \Delta \rangle)(1 - e^{-\|\Delta\|^2/2}) - (1 - \cos \langle \omega, \Delta \rangle)\Delta_1 e^{-\|\Delta\|^2/2} \right|
\leq |\omega_1| + 2.
\]
\end{romanenumerate}

Now consider the complete difference expression:
\begin{align*}
E(\Delta,\Delta') 
&= \omega_1\left[ \sin(\langle \omega, \Delta \rangle)(1 - e^{-\|\Delta\|^2/2}) - \sin(\langle \omega, \Delta' \rangle)(1 - e^{-\|\Delta'\|^2/2}) \right] \\
&\quad - \left[ \Delta_1 e^{-\|\Delta\|^2/2} - \Delta'_1 e^{-\|\Delta'\|^2/2} \right] \\
&\quad - \left[ \cos(\langle \omega, \Delta \rangle)\Delta_1 e^{-\|\Delta\|^2/2} - \cos(\langle \omega, \Delta' \rangle)\Delta'_1 e^{-\|\Delta'\|^2/2} \right].
\end{align*}

\textbf{Step 1 (Sine difference):} The first term satisfies
\[
|\omega_1 \Gamma_1| \leq |\omega_1|\left( \frac{1}{2}e^{-\min\{\|\Delta\|^2,\|\Delta'\|^2\}/2}\|\Delta + \Delta'\| + \|\omega\| \right)\|\Delta - \Delta'\|.
\]

\textbf{Step 2 (Linear-exponential difference):} The second term satisfies
\[
|\Gamma_2| \leq \left( 1 + \frac{1}{2}|\Delta'_1|e^{-\min\{\|\Delta\|^2,\|\Delta'\|^2\}/2}\|\Delta + \Delta'\| \right)\|\Delta - \Delta'\|.
\]

\textbf{Step 3 (Cosine-linear-exponential difference):} The third term satisfies
\[
|\Gamma_3| \leq \left( \|\omega\||\Delta'_1|e^{-\|\Delta'\|^2/2} + 1 + \frac{1}{2}|\Delta_1|e^{-\min\{\|\Delta\|^2,\|\Delta'\|^2\}/2}\|\Delta + \Delta'\| \right)\|\Delta - \Delta'\|.
\]

Combining these bounds yields:
\[
|E(\Delta,\Delta')| \leq \left( |\omega_1|(1 + \|\omega\|) + (\|\omega\| + 5) \right)\|\Delta - \Delta'\|.
\]

For the complete partial derivative difference, we obtain:
\[
\left| \frac{\partial}{\partial \Delta_i} R(\Delta) - \frac{\partial}{\partial \Delta'_i} R(\Delta') \right| \leq \frac{1}{t} \sum_{j=1}^t \left( 7\|\omega^j\|^2 + 53\|\omega^j\| + 95 \right) \|\Delta - \Delta'\|.
\]

By Lemma~\ref{Probabilistic Cauchy-Schwarz in Subspaces}
\[
C_L \leq \frac{1}{t} \sum_{j=1}^t \left( 7\|\omega^j\|^2 + 53\|\omega^j\| + 95 \right) \leq  O(N).
\]
with probability at least \( 1 - \exp(-tN/C) \) for some absolute constant \( C > 0 \).
\end{proof}

\begin{remark}[Uniform Control over All Local Difference Patches]
In the local‐chart framework of Section~\ref{subsec:local-charts}, we cover \(\mathcal{M}\) by an \(r\)\nobreakdash‐net \(\Gamma_{r}\) of cardinality \(|\Gamma_{r}|\) where \(r = O(\varepsilon / N)\).  Each ordered pair \((p,q)\in\Gamma_{r}\times\Gamma_{r}\) defines a difference patch \(\Delta_{pq}\), and Lemma~\ref{Lipschitz Constant of derivative of ratio} guarantees that on the tangent‐space approximation to each patch the Lipschitz constant \(C_L\) satisfies
\[
\Pr\Bigl[\forall\,\Delta,\Delta'\in\Delta_{pq}:\;|\partial_iR(\Delta)-\partial_iR(\Delta')|\le C_L\|\Delta-\Delta'\|\Bigr]
\;\ge\;1-2\exp(-t\,d/C).
\]
Since there are \(|\Gamma_{r}|^2\) such patches, a union bound over all \((p,q)\) shows that with probability at least
\[
1 \;-\; 2\,|\Gamma_{r}|^2\,\exp\!\bigl(-t\,d/C\bigr)
\]
the same Lipschitz control holds simultaneously on \emph{every} local difference patch \(\Delta_{pq}\).  In particular, as long as
\[
t\,d \;\gtrsim\; \log\bigl(|\Gamma_{r}|^2\bigr)
\;=\;2\log|\Gamma_{r}|,
\]
all tangent‐based approximations—and hence the propagated ratio bounds of Section~\ref{subsec:lifting_ratio}—hold uniformly over the entire manifold difference set \(\mathcal{M}-\mathcal{M}\).
\end{remark}

\subsection{Uniform Control of Ratio Gradient Within Euclidean Balls}
\label{subsec: Bounding the ratio in the ball}

Equipped with pointwise and differential control, we now derive a uniform guarantee over Euclidean balls using the gradient's Lipschitz continuity.

\begin{lemma}[Local Uniform Control in Euclidean Balls]
\label{Local Uniform Control in Euclidean Balls}
Let \( p \in \mathbb{R}^n \) with \( \|p\| \ge 1 \), and let \( B^{v_1, \ldots, v_d}_p(l) \) denote a ball of radius \( l \) centered at \( p \) in the affine flat \(\vec{p} + \langle\{v_1, \ldots, v_d\}\rangle\). Define the kernel-distance ratio
\[
R(\Delta) := \frac{D_{\hat{K}}(\Delta)^2}{D_K(\Delta)^2}.
\]
Then for all \( \Delta \in B^{v_1, \ldots, v_d}_p(l) \), with probability at least
\[
1 - 2d \exp\left( - \frac{c_H \varepsilon^2 t}{668} \right) - 2\exp\left(-\frac{t \varepsilon_0^2}{13} \right),
\]
we have the uniform bound:
\[
\left| R(\Delta) - 1 \right| \le \varepsilon_0 + l\sqrt{d}(\varepsilon + lC_L),
\]
where \( C_L \) is the Lipschitz constant for \( \nabla R \) from Lemma~\ref{Lipschitz Constant of derivative of ratio}, \( c_H \) is the absolute constant from Lemma~\ref{Derivative of the ratio is small}, and \( \varepsilon, \varepsilon_0 > 0 \) are error parameters.
\end{lemma}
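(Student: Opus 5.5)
We combine a pointwise bound at the centre with a second-order Taylor expansion of $R$ along the flat, using the gradient and Lipschitz lemmas to control the remainder. Write any $\Delta\in B^{v_1,\ldots,v_d}_p(l)$ as $\Delta = p + h$ with $h\in\langle v_1,\ldots,v_d\rangle$ and $\|h\|\le l$. Take $v_1,\ldots,v_d$ orthonormal; this is harmless, since it only changes coordinates inside the flat. By the mean value theorem applied to $s\mapsto R(p+sh)$ on $[0,1]$,
\[
R(\Delta)-1 = \bigl(R(p)-1\bigr) + \langle \nabla R(p), h\rangle + \langle \nabla R(\xi)-\nabla R(p), h\rangle
\]
for some $\xi$ on the segment $[p,\Delta]$. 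The three terms are handled separately. First, since $\|p\|\ge 1$, the Remark after Lemma~\ref{lem:ratio bound at a point} gives $|R(p)-1|\le\varepsilon_0$ except with probability $2\exp(-t\varepsilon_0^2/13)$; this accounts for the second failure term.

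For the linear term, only directional derivatives along the flat matter, so $\langle\nabla R(p),h\rangle = \sum_{k=1}^d h_k\,\partial_{v_k}R(p)$, where $h_k=\langle h,v_k\rangle$. Rotational invariance of $\omega\sim\mathcal N(0,I_N)$ lets us rotate so that $v_k$ is a coordinate axis. Lemma~\ref{Derivative of the ratio is small} then applies to each $\partial_{v_k}R(p)$ with exactly the same proof: the Stein computation is unchanged, and the $\psi_2$ bound uses $|\langle\omega,v_k\rangle|$ in place of $|\omega_1|$. Using $\|p\|\ge1$, we have $(1-e^{-1/2})^4\ge 1/169$, so the exponent $(1-e^{-\|p\|^2/2})^4 c_H\varepsilon^2 t/16$ is at least $c_H\varepsilon^2t/2704$. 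To reach the stated constant $668$ one uses a slightly sharper $\psi_2$ bound on $Z$ (e.g.\ the $|\Delta_1|e^{-\|\Delta\|^2/2}\le e^{-1/2}$ refinement); I would fix the constant at the end. A union bound over $k=1,\ldots,d$ then gives $|\partial_{v_k}R(p)|\le\varepsilon$ for all $k$ except with probability $2d\exp(-c_H\varepsilon^2t/668)$. On this event Cauchy–Schwarz gives $|\langle\nabla R(p),h\rangle|\le \varepsilon\|h\|_1\le \varepsilon\sqrt d\,l$.

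For the remainder term, Lemma~\ref{Lipschitz Constant of derivative of ratio} bounds each directional derivative difference: $|\partial_{v_k}R(\xi)-\partial_{v_k}R(p)|\le C_L\|\xi-p\|\le C_L l$. Summing with the weights $|h_k|$ gives a bound of $\sqrt d\, l\cdot C_L l$. Here the probability of the Lipschitz event is implicitly absorbed into the definition of $C_L$, as in the statement, which treats $C_L$ as given. Adding the three bounds yields $|R(\Delta)-1|\le\varepsilon_0+l\sqrt d(\varepsilon+lC_L)$. The bound is uniform in $\Delta$ because neither random event depends on $\Delta$: both concern only the centre $p$ and the global Lipschitz constant.

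The main obstacle is that the segment $[p,\Delta]$ may come close to the origin. There the ratio and its derivatives are badly behaved, and the Lipschitz lemma's proof implicitly divides by $(1-e^{-\|\cdot\|^2/2})^2$. I would first try to ensure $l\le 1/2$, so that $\|\xi\|\ge 1/2$ and all denominators stay bounded below by absolute constants. Those constants would then be absorbed into $C_L$. If this is not enough, I would restate the remainder bound with $C_L$ understood as the Lipschitz constant on $\{\|\Delta\|\ge 1/2\}$.
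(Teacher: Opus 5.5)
Your proposal is correct and follows the paper's proof. Both combine three ingredients: pointwise concentration of $R(p)$ at the centre (using $\|p\|\ge 1$), a union bound over the $d$ directional-derivative concentration events at $p$, and the Lipschitz constant $C_L$ to control the gradient across the ball, all assembled by the mean value theorem along $[p,\Delta]$.

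One small correction: no sharper $\psi_2$ estimate is needed for the constant. Evaluating $(1-e^{-1/2})^4\approx 0.024$ directly, instead of using the loose bound $1/169$, gives $16/(1-e^{-1/2})^4\approx 667.5$, so $668$ follows from $\|p\|\ge 1$.
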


\begin{proof}[Proof of Lemma~\ref{Local Uniform Control in Euclidean Balls}]\label{app:proof-lem-local-uniform-balls}

By Lemma~\ref{Lipschitz Constant of derivative of ratio} and Lemma~\ref{Derivative of the ratio is small}, for all $\Delta \in B_l(p) $
\[ 
\|\nabla_{i} R(\Delta)\|
\;\le\;
\|\nabla_{i} R(p)\| + lC_L
\;\le\;
\varepsilon + lC_L
 \]
with probability at least
\[ 
1 \;-\; 2 \exp \left( -\frac{(1 - e^{-\|p\|^2/2})^4 \cdot c_H \varepsilon^2 t}{16} \right),
 \]
where \( c_H \) is an absolute positive constant. Using union bound over all the directional derivatives with the directions spanning $\langle\{v_1, \ldots, v_d\}\rangle$.
\[ 
\|\nabla R(\Delta)\|
\;\le\;
\sqrt{d}(\varepsilon + lC_L)
\quad
\text{with probability at least }
1 - 2d \exp \left( -\frac{c_H \varepsilon^2 t}{668} \right).
 \]
Then using Lemma~\ref{lem:ratio bound at a point} along with the above bound

\[ 
\|R(\Delta)\|
\;\le\;
\varepsilon_0 +  l\sqrt{d}(\varepsilon + lC_L)
\quad
\text{with probability at least }\]\[
1 - 2d \exp \left( -\frac{c_H \varepsilon^2 t}{668} \right)-2\,\exp\Bigl(-\,\tfrac{t\varepsilon_0^2}{13}\Bigr) \]
\end{proof}

\section{Some Properties of the Function \texorpdfstring{$R$}{R}}
\label{sec:properties_R}

In this section we take a closer look at the analytic and geometric properties of the random kernel-distance ratio~$R$. In particular, we focus on the regime $\|\Delta\|\ll1$. The ratio
\[
   R(x)
   \;:=\;
   \frac{1}{t}
   \sum_{i=1}^{t}
      \frac{1-\cos\langle w_{i},x\rangle}
           {1-\exp\!\bigl(-\tfrac12\|x\|^{2}\bigr)},
   \qquad
   x\in\R^{N},
\]
plays a central rôle in all subsequent error bounds.  
Although $R$ is \emph{not} Lipschitz at the origin, it enjoys a
\emph{relative‐Lipschitz} behavior of the form
\(
   |R(z)-R(y)|
   \lesssim
   \|z-y\|/\max\{\|y\|,\|z\|\},
\)
which is sufficient for our manifold patching arguments
(Section~\ref{subsec:local-charts}).  

The purpose of this section is to establish that estimate in two
steps:
\begin{enumerate}
   \item
      Lemma~\ref{lem:LipboundR1} proves the estimate for the
      \emph{single‐frequency} building block
      \(
         f(x)=
         (1-\cos\langle w,x\rangle)/(1-e^{-\|x\|^{2}/2})
      \)
      inside the unit ball.
   \item
      Lemma~\ref{lem:LipboundR2} upgrades this to the full random
      feature map $R$ via a Gaussian concentration argument,
      showing that the desired bound holds simultaneously for all
      $y,z\in B_{0}(1)$ with high probability.
\end{enumerate}
These results will later allow us to transfer uniform
approximation guarantees from tangent spaces to the manifold itself,
even in regions where $\|x\|\ll1$.
The proofs of Lemmas~\ref{lem:LipboundR1} and~\ref{lem:LipboundR2} follow each lemma statement above.

\begin{lemma}
\label{lem:LipboundR1}
Let
\[
f(x) = \frac{1 - \cos \langle w, x \rangle}{1 - e^{-\|x\|^2/2}}.
\]
Then $|f(z)-f(y)| \leq 6 \|w\|^2 \frac{\|y - z\|}{\max(\|z\|,\|y\|)}$ inside the unit ball.
\end{lemma}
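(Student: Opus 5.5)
We write $f=N/D$ with $N(x)=1-\cos\langle w,x\rangle$ and $D(x)=1-e^{-\|x\|^2/2}$. By symmetry we assume $\|z\|\ge\|y\|$, so the target is $|f(z)-f(y)|\le 6\|w\|^2\|y-z\|/\|z\|$ with $\|y\|,\|z\|\le1$. Set $s=\|x\|^2/2\le 1/2$. The plan rests on three elementary pointwise facts on the unit ball:
\begin{itemize}
\item $D(x)\ge s/(1+s)\ge \|x\|^2/3$, and $x\,e^{-s}/D(x)=x/(e^{s}-1)$ has norm at most $2/\|x\|$;
\item $N(x)\le \langle w,x\rangle^2/2$ and $|\sin\langle w,x\rangle|\le|\langle w,x\rangle|\le\|w\|\|x\|$;
\item consequently $0\le f\le \tfrac32\|w\|^2$ on the ball.
\end{itemize}
The proof then splits according to whether $y$ is far from $z$ relative to $\|z\|$.

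\textbf{Far case.} Suppose $\|y-z\|\ge \tfrac12\|z\|$. Then $|f(z)-f(y)|\le \sup f\le \tfrac32\|w\|^2$. This is at most $3\|w\|^2\|y-z\|/\|z\|$, well within the claimed constant $6$.

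\textbf{Near case.} Suppose $\|y-z\|<\tfrac12\|z\|$. Here we move from $y$ to $z$ along a two-leg path, to exploit the structure of $\nabla f=\frac{w\sin\langle w,x\rangle}{D}-f(x)\frac{x\,e^{-s}}{D}$.
\begin{itemize}
\item \emph{Radial leg.} Go from $y$ to $y'=(\|z\|/\|y\|)\,y$; this leg has length $\|z\|-\|y\|\le\|y-z\|$. The radial derivative $\tfrac{d}{dr}f(ru)$ is $\frac{I\sin I}{rD}-\frac{N}{D}\cdot\frac{2s e^{-s}}{rD}$ with $I=\langle w,x\rangle$. To leading order this is $\frac{I^2}{2rs}-\frac{I^2}{2s}\cdot\frac{2}{r}\cdot\frac{1}{2}$, which vanishes. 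Expanding $\sin$, $\cos$ and $e^{-s}$ to the next order should give $|\tfrac{d}{dr}f|\le c\,\|w\|^2\cdot(\|w\|^2 r+r)$-type bounds. It is cleaner to bound it by $c\|w\|^2/\|z\|$, using $r\ge \|y\|\ge\|z\|/2$.
\item \emph{Tangential leg.} Go from $y'$ to $z$ along the sphere of radius $\|z\|$. The radial term of $\nabla f$ drops out, leaving $\|\nabla_{\tan}f\|\le \|w\|\,|\sin I|/D\le 2(1+s)\|w\|^2/\|x\|\le 3\|w\|^2/\|z\|$. The arc length is at most $\tfrac{\pi}{2}\|y'-z\|$, and $\|y'-z\|\le 2\|y-z\|$.
\end{itemize}

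The main obstacle is the constant in the near case. The straight-segment mean-value bound $6\|w\|^2/\|x\|$ on $\|\nabla f\|$ loses a factor $1/(1-\lambda)$ when $\|x\|$ dips to $(1-\lambda)\|z\|$. The crude two-leg estimate above also overshoots $6$, because of the $\pi\cdot 3$ factor on the tangential leg. The first fix to try is sharpening the chord estimate: because $\|y\|\le\|z\|$, the projection gives $\|y'-z\|\le\|y-z\|\cdot\|z\|/\|y\|\le 2\|y-z\|$, and this could be tightened by moving the far/near threshold. The second fix is bounding the arc by $\|y'-z\|\cdot\frac{\theta}{2\sin(\theta/2)}$ with a small angle $\theta$ in the near regime. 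The threshold $\lambda$ would then be optimised so that the far-case bound $\tfrac{3}{2\lambda}$ and the near-case bound (radial cost plus tangential cost) both fall below $6$. If this still fails, the fallback is a finer analysis of the tangential term using $|w_{\tan}|\,|\sin I|\le\|w\|^2\|x\|$. This analysis would exploit that $\langle w,x\rangle$ and $w_{\tan}$ cannot both be maximal, giving $\|\nabla_{\tan}f\|\le 2\|w\|^2/\|x\|$ up to $(1+s)$, which leaves room for the $\pi/2$ arc factor.
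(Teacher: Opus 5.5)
\textbf{Verdict: there is a genuine gap.} Your far case is correct, but the near case, which is where all the content of the lemma lies, is not proved.

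On the radial leg you never fix a constant. The bound of type $c\|w\|^2(\|w\|^2r+r)$ comes from a Taylor expansion that is valid only when $|\langle w,x\rangle|$ is small, so it does not cover arbitrary $w$. The subsequent bound $c\|w\|^2/\|z\|$ is asserted without a value of $c$.

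On the tangential leg your own estimate already costs $3\cdot\tfrac{\pi}{2}\cdot 2=3\pi>6$, in units of $\|w\|^2\|y-z\|/\|z\|$, before anything is added for the radial leg. The final part lists fixes to try but carries none of them out. So as written, no constant at all is established in the near case, let alone $6$.

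The structural problem is that integrating $\nabla f$ along a path pays twice. You pay once because the $1/\|x\|$-type gradient bound is evaluated at intermediate points. You pay again because adding the two legs introduces path-length overhead.

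\textbf{The missing idea: avoid paths altogether, as the paper does.} With $\|z\|\ge\|y\|$ (your labelling), split
\[
f(z)-f(y)=\frac{N(z)-N(y)}{D(z)}+N(y)\Bigl(\frac{1}{D(y)}-\frac{1}{D(z)}\Bigr),
\]
so that only the larger point's denominator multiplies $\|y-z\|$.

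For the first term, the product-to-sum formula gives
$|N(z)-N(y)|=2|\sin\langle\tfrac{w}{2},z+y\rangle|\,|\sin\langle\tfrac{w}{2},z-y\rangle|\le\tfrac12\|w\|^2\|z+y\|\,\|z-y\|\le\|w\|^2\|z\|\,\|z-y\|$.
Together with $D(z)\ge\|z\|^2/3$, this bounds the first term by $3\|w\|^2\|y-z\|/\|z\|$.

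For the second term, $0\le D(z)-D(y)\le e^{-\|y\|^2/2}\,\tfrac12(\|z\|^2-\|y\|^2)\le e^{-\|y\|^2/2}\|z\|\,\|y-z\|$. Your own fact $e^{-s}/D\le 1/s$ gives $N(y)e^{-\|y\|^2/2}/D(y)\le\|w\|^2$. So this term is also at most $3\|w\|^2\|y-z\|/\|z\|$, and the total is $6$.

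Keep the factor $e^{-\|y\|^2/2}$ in the second term. Bounding both denominators by $\|x\|^2/3$ only gives $9/2$ there. The paper's displayed chain for this term also drops a factor $2$ coming from $\|y+z\|\le 2\|y\|$.

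\textbf{Closing your path route instead.} This can be done, but it needs more than you wrote:
\begin{itemize}
\item Use $|I\sin I-c(1-\cos I)|\le I^2$ for $c=2s/(e^s-1)\in(1.5,2)$. This gives $|\partial_r f|\le 3\|w\|^2/r$.
\item Move the far/near threshold to $\|y-z\|=\tfrac14\|z\|$.
\item Combine the legs through the exact identity $\|y-z\|^2=(\|z\|-\|y\|)^2+\tfrac{\|y\|}{\|z\|}\|y'-z\|^2$ and Cauchy--Schwarz, rather than adding their lengths.
\end{itemize}
This yields a constant of about $5.3$. The algebraic split is shorter and lands on $6$ directly.
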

\begin{proof}[Proof of Lemma~\ref{lem:LipboundR1}]\label{app:proof-lem-LipboundR1}
Write $g(x) = 1 - \cos \langle w, x \rangle$ and $h(x) = 1 - e^{-\|x\|^2/2}$. Then
\begin{multline*}
f(z)-f(y)
=\frac{\cos \langle w, y \rangle - \cos \langle w, z \rangle}{h(y)}
-g(z)\Bigl(\frac{1}{h(y)}-\frac{1}{h(z)}\Bigr)\\
=\frac{2\sin\!\bigl\langle \tfrac{w}{2}, y+z \bigr\rangle\,\sin\!\bigl\langle \tfrac{w}{2}, y-z \bigr\rangle}{h(y)}
-g(z)\,\frac{e^{-\|y\|^2/2}-e^{-\|z\|^2/2}}{h(y)\,h(z)}.
\end{multline*}

Assume $\|z\| \leq \|y\|$. For the first grouped term,
\begin{align*}
\frac{2\bigl|\sin\!\langle \tfrac{w}{2}, y+z \rangle\bigr|\,\bigl|\sin\!\langle \tfrac{w}{2}, y-z \rangle\bigr|}{|h(y)|}
&\leq \frac{2\bigl|\bigl\langle \tfrac{w}{2}, y+z \bigr\rangle\bigr|\,\bigl|\bigl\langle \tfrac{w}{2}, y-z \bigr\rangle\bigr|}{|h(y)|}
\leq \frac{\|w\|^2 \|y+z\|\,\|y-z\|}{2|h(y)|}\\
&\leq \frac{\|w\|^2 \|y\|^2 \|y-z\|}{|h(y)|\,\|y\|}
\leq 3 \|w\|^2 \frac{\|y - z\|}{\|y\|}.
\end{align*}

For the second grouped term,
\begin{align*}
&\left|g(z)\,\frac{e^{-\|y\|^2/2}-e^{-\|z\|^2/2}}{h(y)\,h(z)}\right|
\leq 2\sin^2\!\langle \tfrac{w}{2}, z \rangle\cdot
\frac{\tfrac12 \|y-z\|\,\|y+z\|}{h(y)\,h(z)}\\
&\leq \frac{\|w\|^2}{2}\cdot
\frac{\tfrac12 \|z\|^2 \|y\|^2 \|y-z\|}{h(y)\,h(z)\,\|y\|}
\leq 3 \|w\|^2 \frac{\|y - z\|}{\|y\|}.
\end{align*}
Summing the two bounds completes the proof.
\end{proof}

\begin{remark}[Relative‐Lipschitz Bound Outside the Unit Ball]
\label{rem:LipboundR1_outside}
For points \(y,z\in\R^N\) with \(\|y\|,\|z\|\ge1\), one obtains a simpler (absolute) Lipschitz bound for
\[
f(x)=\frac{1-\cos\langle w,x\rangle}{1-e^{-\|x\|^2/2}}
\]
by observing that both numerator and denominator are \(O(1)\) and smooth away from the origin.  In fact, assuming \(\norm{y-z} \leq 1\), one shows
\[
|f(z)-f(y)|
\;\le\;
3\bigl|\langle w,\,y-z\rangle\bigr|
\;+\;5\,\|y-z\|
\;\le\;
\bigl(3\|w\|+5\bigr)\,\|y-z\|.
\]
\end{remark}

\begin{lemma}[Relative Lipschitzness of the random feature map]\label{lem:LipboundR2}
Let $w_{1},\dots ,w_{t}$
$\stackrel{\mathrm{i.i.d.}}{\sim}\mathcal N(0,I_{N})$
and define
\[
   R(x)\;:=\;\frac{1}{t}
   \sum_{i=1}^{t}
      \frac{1-\cos\langle w_{i},x\rangle}
           {1-\exp\!\bigl(-\tfrac12\|x\|^{2}\bigr)},
   \qquad x\in\R^{N}.
\]
There is an absolute constant $c_{4}>0$ such that, with probability at least
\(
   1-2\exp\!\bigl(-N t/c_{4}^{4}\bigr),
\)
the following \emph{relative-Lipschitz} estimate holds for all
$y,z\in B_{0}(1):=\{x\in\R^{N}:\|x\|\le1\}$:
\[
   |R(z)-R(y)|
   \;\le\;
   12N\,
   \frac{\|y-z\|}{\max\{\|y\|,\|z\|\}}.
\]
\end{lemma}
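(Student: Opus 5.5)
The plan is to derive the statement from the single-frequency estimate of Lemma~\ref{lem:LipboundR1} by averaging, and then to control the random constant with the Gaussian norm-sum concentration of Lemma~\ref{Concentration for Gaussian Norm Sums}. Write
\[
R(x)=\frac1t\sum_{i=1}^t f_i(x),\qquad f_i(x)=\frac{1-\cos\langle w_i,x\rangle}{1-e^{-\|x\|^2/2}}.
\]
Lemma~\ref{lem:LipboundR1} is deterministic in $w$. So for every realization of $w_1,\dots,w_t$ and every $y,z\in B_0(1)$, the triangle inequality gives
\[
|R(z)-R(y)|\;\le\;\frac1t\sum_{i=1}^t|f_i(z)-f_i(y)|\;\le\;6\Bigl(\frac1t\sum_{i=1}^t\|w_i\|^2\Bigr)\frac{\|y-z\|}{\max\{\|y\|,\|z\|\}}.
\]
The key point is that the only randomness left is the scalar $S:=\frac1t\sum_i\|w_i\|^2$, and it does not depend on $y,z$. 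Consequently no net or union bound over $B_0(1)$ is needed: one good event for $S$ gives the estimate simultaneously for all pairs.

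It then suffices to show $\Pr[S\ge 2N]\le 2\exp(-Nt/c_4^4)$, since on the complement $6S\le 12N$, which is the claimed constant. This is exactly Lemma~\ref{Concentration for Gaussian Norm Sums} with $L=t$. Concretely, $tS$ is a sum of $tN$ i.i.d.\ $\chi^2_1$ variables with mean $tN$. Bernstein's inequality for the centred sub-exponential sum, at deviation $tN$, gives a tail of order $\exp(-c\,tN)$, and this fixes $c_4$.

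The main obstacle is not the probabilistic step but checking that Lemma~\ref{lem:LipboundR1} really holds on the whole unit ball, uniformly, with constant $6\|w\|^2$. Its proof uses $\|y\|^2/h(y)\le 3$ and the analogous bound on $\|z\|^2\|y\|^2/(h(y)h(z))$. Both follow from $h(x)=1-e^{-\|x\|^2/2}\ge \|x\|^2(1-e^{-1/2})/1\cdot\tfrac12\ge\|x\|^2/3\cdot\ldots$; more precisely, concavity of $s\mapsto 1-e^{-s/2}$ on $[0,1]$ gives $h(x)\ge(1-e^{-1/2})\|x\|^2\ge 0.39\|x\|^2$. This yields $\|x\|^2/h(x)\le 2.6\le 3$ and $\|z\|^2\|y\|^2/(h(y)h(z))\le 6.8$. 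With the factor $\tfrac12\cdot\tfrac12\cdot\|w\|^2$ in the second term, that term is at most about $1.7\|w\|^2$, so both terms stay within $3\|w\|^2$.

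I would re-verify these constants and also treat the degenerate case $y=0$ or $z=0$. There I interpret $f$ at $0$ by its limit and note that if $\max\{\|y\|,\|z\|\}=0$ the statement is vacuous. The final step is to take the event $\{S<2N\}$ and conclude.
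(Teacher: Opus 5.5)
Your proposal is correct and is essentially the paper's own proof. You apply Lemma~\ref{lem:LipboundR1} to each frequency, average so that the only remaining randomness is the scalar $\frac1t\sum_i\|w_i\|^2$ (so no net is needed), and bound that scalar by $2N$ via Lemma~\ref{Concentration for Gaussian Norm Sums} with $L=t$.

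Two small corrections to your side remarks. First, $f$ has no limit at the origin: along a unit direction $u$ it tends to $\langle w,u\rangle^2$. So $0$ should simply be excluded, as in the paper's definition of relative Lipschitz continuity. Second, in re-checking Lemma~\ref{lem:LipboundR1}, note that $\tfrac12\|y-z\|\,\|y+z\|\le\|y\|\,\|y-z\|$, not $\tfrac12\|y\|\,\|y-z\|$. Hence the second term is only bounded by about $3.23\|w\|^2\|y-z\|/\|y\|$, not within $3\|w\|^2\|y-z\|/\|y\|$. Together with the first term (at most about $2.55\|w\|^2\|y-z\|/\|y\|$), the total still stays below $6\|w\|^2\|y-z\|/\|y\|$, so the constant survives.
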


\begin{proof}[Proof of Lemma~\ref{lem:LipboundR2}]\label{app:proof-lem-LipboundR2}
Invoke Lemma~\ref{lem:LipboundR1} with
$f_{i}(x)=\dfrac{1-\cos\langle w_{i},x\rangle}
              {1-\exp\!\bigl(-\tfrac12\|x\|^{2}\bigr)}$.
For each $i$ and $\|x\|\le\|y\|\le1$ it yields
\[
   |f_{i}(z)-f_{i}(y)|
   \;\le\;
   6\,\|w_{i}\|^{2}\,
   \frac{\|y-z\|}{\|y\|}.
\]
Summing over $i=1,\dots ,t$ gives
\[
   \bigl|R(z)-R(y)\bigr|
   \;\le\;
   \frac{6\|y-z\|}{\|y\|}
   \Bigl(\frac1t\sum_{i=1}^{t}\|w_{i}\|^{2}\Bigr).
\tag{$\ast$}
\]

So, 

\[
\bigl|R(z)-R(y)\bigr|
   \;\le\;
   \frac1t\sum_{i=1}^{t}\|w_{i}\|^{2}\;
   \Bigl(6\,\frac{\|y-z\|}{\|y\|}\Bigr),
   \qquad
   w_{i}\stackrel{\mathrm{i.i.d.}}{\sim}\mathcal N(0,I_{N}).
\]
Then using Lemma~\ref{Concentration for Gaussian Norm Sums} we get,
\[
   \bigl|R(z)-R(y)\bigr|
   \;\le\;
   12\,N\,
   \frac{\|y-z\|}{\|y\|},
\]
so the bound holds with probability at least
$1-2\exp\!\bigl(-N t / c_{4}^{4}\bigr)$.
\end{proof}

\begin{remark}[Lipschitz control of $R$ outside the unit ball]
\label{rem:gradoutside}
\label{rem:LipboundR2_outside}
Let \(w_{1},\dots,w_{t}\overset{\mathrm{i.i.d.}}{\sim}\mathcal N(0,I_{N})\) and define
\[
R(x)\;=\;\frac{1}{t}\sum_{i=1}^{t}
      \frac{1-\cos\langle w_{i},x\rangle}
           {1-\exp\!\bigl(-\tfrac12\|x\|^{2}\bigr)},
\qquad x\in\R^{N}.
\]
Then there is an absolute constant \(c_{4}>0\) such that, with probability at least
\[
1-2\exp\!\bigl(-N t / c_{4}^{4}\bigr),
\]
the following (absolute) Lipschitz estimate holds for all
\(y,z\in\R^{N}\) with \(\|y\|,\|z\|\ge1\):
\[
\bigl|R(z)-R(y)\bigr|
\;\le\;
12N\;\|y-z\|.
\]
\end{remark}

\section{Controlling Differences on a Manifold via Local Charts}
\label{subsec:local-charts}

In this section we build upon the previous section to lift the ratio from tangent planes to local neighborhoods of the manifold, using the notion of reach.

\paragraph*{Local Difference Patches.}
Fix a scale parameter $0 < r < \rch(\mathcal{M})/2$ and let $m:=|\Gamma_r|$, $\Gamma_r=\{z_1,\dots,z_m\}\subset\mathcal{M}$ be an $r$-net as provided by Lemma~\ref{prop:manifold-net}. For each ordered pair $(p,q)=(z_i,z_j)$ with $i,j\in\{1,\dots,m\}$, we consider the local \emph{difference patch}:
\[
\Delta_{pq} := \{ p' - q' \mid p' \in B_{\mathcal{M}}(p,r), q' \in B_{\mathcal{M}}(q,r) \},
\]
where $B_{\mathcal{M}}(p,r)$ denotes the intrinsic ball on $\mathcal{M}$.

\paragraph*{Approximation by Tangent Differences.}
Lemma~\ref{lem:diff-quadratic} establishes that each difference patch $\Delta_{pq}$ is well-approximated by the linear space of tangent differences: $\Delta_{pq} \subset (T_p\mathcal{M} - T_q\mathcal{M}) + B_{r^2/\rch(\mathcal{M})}(0)$ and $d(\Delta_{pq}, p-q) \leq 2r + r^2/\rch(\mathcal{M})$. This shows that locally, the set of differences lies close to a $(2d)$-dimensional flat space whose geometry is explicitly determined by the tangent spaces at $p$ and $q$.

\paragraph*{Propagating Ratio Control.}
Since the ratio $R(\Delta)$ is relatively Lipschitz (Lemma~\ref{lem:LipboundR2}), its variation between $\Delta_{pq}$ and $T_p\mathcal{M} - T_q\mathcal{M}$ is at most $O(r/\rch(\mathcal{M}))$. By considering all $|\Gamma_r|^2$ ordered pairs of the net, we obtain global multiplicative control of $R$ over all difference vectors in $\mathcal{M}-\mathcal{M}$, without requiring a manifold structure on $\mathcal{M}-\mathcal{M}$ itself.

\subsection{Lifting the Ratio from Tangent Planes to the Manifold}
\label{subsec:lifting_ratio}

The ratio function \( R(\Delta) = \frac{D_{\hat{K}}(\Delta)^2}{D_K(\Delta)^2} \) exhibits singular behavior near the origin, failing to be Lipschitz continuous in any neighborhood of zero. To handle this, we introduce a relaxed notion of continuity that better captures the behavior of \( R \), while leveraging the manifold structure to control approximation errors.

\begin{definition}[Relative Lipschitz Continuity]
A function \( f \colon \mathbb{R}^k \to \mathbb{R} \) is called \emph{relatively Lipschitz} with constant \( C \) if for all \( x, y \in \mathbb{R}^k \setminus \{0\} \),
\[
|f(x) - f(y)| \leq C \cdot \frac{\|y - x\|}{\max(\|x\|, \|y\|)}.
\]
\end{definition}

The following lemma quantifies how relative Lipschitz functions behave when lifted from tangent spaces to the manifold:

\begin{lemma}[Function Lifting via Projection]
\label{lem:projection_bound}
Let \( \mathcal{M} \subset \mathbb{R}^N \) be a \( d \)-dimensional manifold with reach \( \rch(\mathcal{M}) \), and let \( f \colon \mathbb{R}^N \to \mathbb{R} \) be relatively Lipschitz with constant \( C \). For any \( p, q \in \mathcal{M} \) with \( \|p - q\| \leq r < \rch(\mathcal{M}) \):
\begin{enumerate}
    \item The projection lies in the tangent tube:
    \[
    \pi_{T_p\mathcal{M}}(q) \in \mathcal{T}^{\rch(\mathcal{M})}_p\mathcal{M}.
    \]
    \item The function variation is quadratically small:
    \[
    \left| f(q) - f(\pi_{T_p\mathcal{M}}(q)) \right| \leq \frac{C}{2\rch(\mathcal{M})} \cdot \frac{\|p - q\|^2}{\|q\|}.
    \]
\end{enumerate}
\end{lemma}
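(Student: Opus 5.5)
The plan is to combine the tangent-space approximation, Lemma~\ref{lem:distance_tangent_space}, with the definition of relative Lipschitz continuity. Write $\tau:=\rch(\mathcal{M})$ and $\bar q:=\pi_{T_p\mathcal{M}}(q)$, the orthogonal projection of $q$ onto the affine tangent space at $p$. The key geometric quantity is $\|q-\bar q\| = d_{\mathcal{E}}(q,T_p\mathcal{M})$. Since $\|p-q\|\le r<\tau$, part~2 of Lemma~\ref{lem:distance_tangent_space} applies and gives
\[
\|q-\bar q\| \;\le\; \frac{\|p-q\|^2}{2\tau}.
\]

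For item~1 (the tube statement), I would read $\mathcal{T}^{\tau}_p\mathcal{M}$ as the portion of the affine tangent space within distance $\tau$ of $p$. Since $\bar q-p$ is the orthogonal projection of $q-p$ onto the linear tangent space, we get $\|\bar q-p\|\le\|q-p\|\le r<\tau$, which is immediate. If the intended tube is instead a normal neighbourhood, I would use the bound above: $\|q-\bar q\|\le r^2/(2\tau)<\tau/2$.

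For item~2, I apply relative Lipschitzness to the pair $(q,\bar q)$:
\[
|f(q)-f(\bar q)| \;\le\; C\,\frac{\|q-\bar q\|}{\max(\|q\|,\|\bar q\|)} \;\le\; C\,\frac{\|q-\bar q\|}{\|q\|},
\]
using $\max(\|q\|,\|\bar q\|)\ge\|q\|$. Substituting the quadratic deviation bound then yields exactly $\frac{C}{2\tau}\cdot\frac{\|p-q\|^2}{\|q\|}$.

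The main obstacle is a degeneracy caveat. The definition requires $q,\bar q\neq 0$, and the statement divides by $\|q\|$, so I would assume $q\neq0$. I would also need $\bar q\neq0$, or else use continuity and the fact that the max is attained at $q$.

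A subtler point is that in the intended application, $q$ is a chord $\Delta\in\Delta_{pq}$ rather than a manifold point, and the relevant tangent model is $T_p\mathcal{M}-T_q\mathcal{M}$ (Lemma~\ref{lem:diff-quadratic}). The same two-line argument goes through there once the quadratic deviation bound is replaced by that lemma's $r^2/\rch(\mathcal{M})$ bound. So the real content is the pairing of the quadratic deviation bound with relative Lipschitzness, and I would flag that the constant $C$ supplied by Lemma~\ref{lem:LipboundR2} holds only inside the unit ball. This forces the additional hypotheses $\|q\|,\|\bar q\|\le1$, or use of Remark~\ref{rem:LipboundR2_outside} otherwise.
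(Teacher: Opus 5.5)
Your proposal is correct and follows essentially the same route as the paper. The paper applies relative Lipschitzness to the pair $(q,\pi_{T_p\mathcal{M}}(q))$, lower-bounds the max by $\|q\|$, and bounds $\|q-\pi_{T_p\mathcal{M}}(q)\|$ by $\|p-q\|^2/(2\rch(\mathcal{M}))$ via Lemma~\ref{lem:distance_tangent_space}, treating item~1 as immediate from the tube definition. Your caveats are legitimate points the paper passes over: the definition of relative Lipschitzness requires $q,\pi_{T_p\mathcal{M}}(q)\neq 0$, and the constant from Lemma~\ref{lem:LipboundR2} is only valid inside the unit ball.
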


\begin{proof}[Proof of Lemma~\ref{lem:projection_bound}]\label{app:proof-lem-projection-bound}
The first claim follows immediately from the definition of the tangent tube. For the second claim, let \( q' = \pi_{T_p\mathcal{M}}(q) \) and observe that by relative Lipschitz continuity and Lemma~\ref{lem:distance_tangent_space},
\[
|f(q) - f(q')| \leq C \cdot \frac{\|q - q'\|}{\max(\|q\|, \|q'\|)} \leq C \cdot \frac{\|q - q'\|}{\|q\|} \leq \frac{C}{2\rch(\mathcal{M})} \cdot \frac{\|p - q\|^2}{\|q\|}. \qedhere
\]
\end{proof}

A special case occurs when projecting to the tangent space at the origin, where the relative Lipschitz condition is most stringent:

\begin{corollary}[Origin-Centered Projection]
\label{cor:projection_bound}
Under the assumptions of Lemma~\ref{lem:projection_bound}, if \( p = 0 \) and \( 0 < \|q\| \leq r < \rch(\mathcal{M}) \), then:
\begin{enumerate}
    \item \( \pi_{T_0\mathcal{M}}(q) \in \mathcal{T}^{\rch(\mathcal{M})}_0\mathcal{M} \).
    \item \( \left| f(q) - f(\pi_{T_0\mathcal{M}}(q)) \right| \leq \frac{Cr}{2\rch(\mathcal{M})} \).
\end{enumerate}
\end{corollary}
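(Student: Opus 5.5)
This follows by specializing Lemma~\ref{lem:projection_bound} to the base point $p=0$, so the plan is to re-run its two-line argument with $p=0$ and then use $\|q\|\le r$ to remove the denominator. Here $0\in\mathcal{M}$, which is the situation for the difference set $\widehat{\mathcal{M}}$, where $0=y-y$.

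For item~1 there is nothing to prove beyond what Lemma~\ref{lem:projection_bound} already gives. The hypothesis $\|q-0\|=\|q\|\le r<\rch(\mathcal{M})$ is exactly the hypothesis $\|p-q\|\le r$ of that lemma with $p=0$. So $\pi_{T_0\mathcal{M}}(q)$ lies in the tangent tube $\mathcal{T}^{\rch(\mathcal{M})}_0\mathcal{M}$ by the same definitional remark.

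For item~2, set $q'=\pi_{T_0\mathcal{M}}(q)$.
\begin{itemize}
\item \emph{Orthogonality.} Since $T_0\mathcal{M}$ is a linear subspace through the origin, $q'$ is the orthogonal projection of $q$. Hence $\|q'\|\le\|q\|$, so $\max(\|q\|,\|q'\|)=\|q\|>0$, and $q\neq 0$ is admissible in the relative Lipschitz definition.
\item \emph{Case $q'=0$.} The relative Lipschitz definition excludes the origin. In this case $\max(\|q\|,\|q'\|)=\|q\|$ still makes the right-hand side finite. I would either assume $f$ extends so that the inequality holds with this max, or note that this case is degenerate: by Lemma~\ref{lem:distance_tangent_space} we have $\|q-q'\|\le\|q\|^2/(2\rch(\mathcal{M}))<\|q\|$ when $\|q\|<2\rch(\mathcal{M})$, which forces $q'\neq 0$. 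The second option handles it cleanly.
\item \emph{Bounding $\|q-q'\|$.} Lemma~\ref{lem:distance_tangent_space}(2) with $p=0$ gives $\|q-q'\|=d_{\mathcal{E}}(q,T_0\mathcal{M})\le\|q\|^2/(2\rch(\mathcal{M}))$.
\end{itemize}
Combining these,
\[
|f(q)-f(q')|\le C\,\frac{\|q-q'\|}{\|q\|}\le \frac{C}{2\rch(\mathcal{M})}\cdot\frac{\|q\|^2}{\|q\|}=\frac{C\,\|q\|}{2\rch(\mathcal{M})}\le\frac{Cr}{2\rch(\mathcal{M})}.
\]

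The only real subtlety is the point just discussed: the quotient $\|p-q\|^2/\|q\|$ in Lemma~\ref{lem:projection_bound} simplifies to $\|q\|$ only because the base point is the origin. One also has to make sure the projection is nonzero, so that the relative Lipschitz hypothesis applies; Lemma~\ref{lem:distance_tangent_space} guarantees this because $r<\rch(\mathcal{M})$. Everything else is immediate.
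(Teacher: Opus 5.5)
Your proposal is correct and is essentially the paper's argument: specialize Lemma~\ref{lem:projection_bound} to $p=0$, so that $\|p-q\|^2/\|q\|=\|q\|\le r$. Your extra check that $\pi_{T_0\mathcal{M}}(q)\neq 0$ is a valid detail the paper leaves implicit. It follows from $\|q-\pi_{T_0\mathcal{M}}(q)\|\le \|q\|^2/(2\rch(\mathcal{M}))<\|q\|$, and it is needed because the relative Lipschitz definition excludes the origin.
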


\begin{figure}[H]
    \centering
\includegraphics[width=\textwidth]{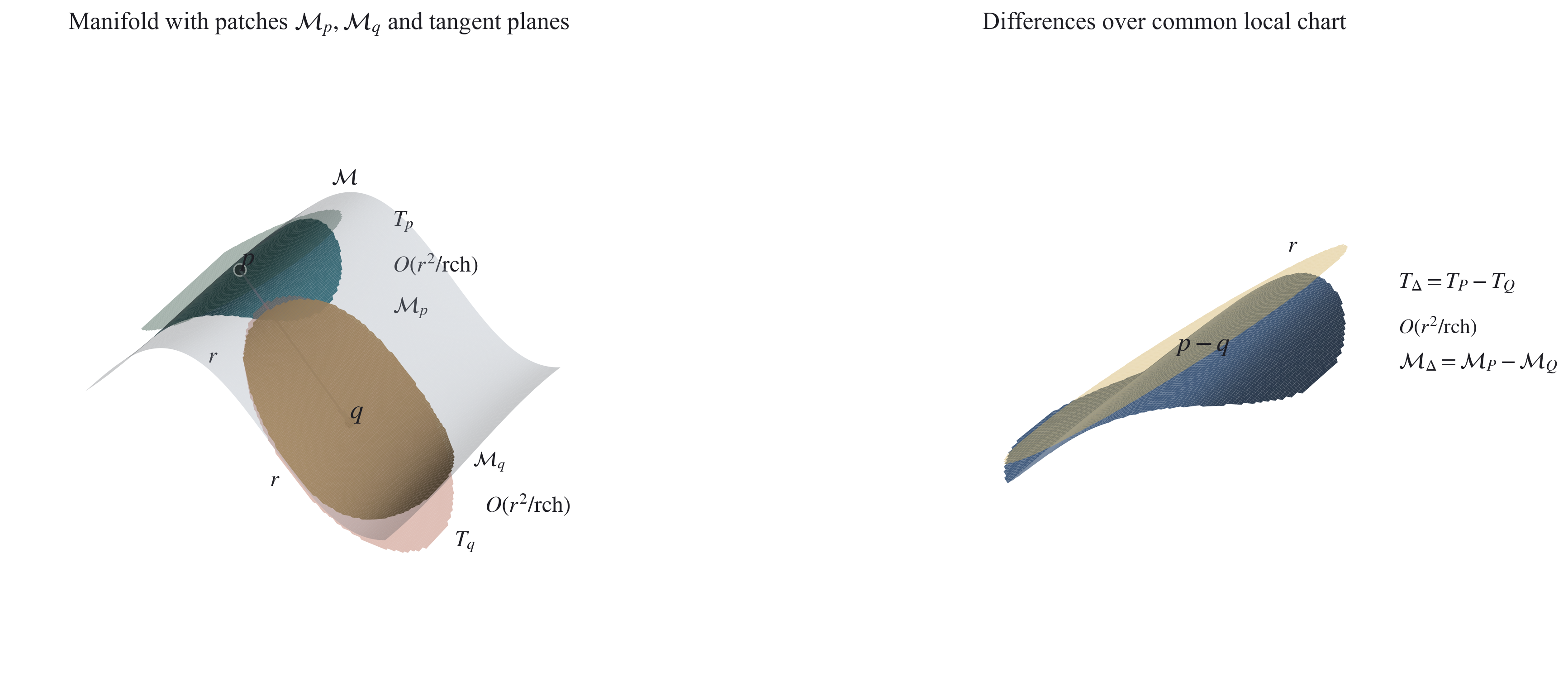}
    \caption{
        Left: a $d$–dimensional manifold $\mathcal M$ with two intrinsic balls
        $B_{\mathcal M}(p,r)$ and $B_{\mathcal M}(q,r)$ (shown as patches
        $\mathcal M_{p}$ and $\mathcal M_{q}$) together with their tangent
        planes $T_p\mathcal{M}$ and $T_q\mathcal{M}$ at the centers $p$ and $q$.  
        Right: the corresponding difference patches in a common local chart,
        illustrating the manifold difference $\mathcal M_{p}-\mathcal M_{q}$
        and the tangent difference $T_p\mathcal{M}-T_q\mathcal{M}$, with the deviation between
        them of order $O(r^{2}/\rch(\mathcal{M}))$.
    }
    \label{fig:manifold-patches-differences}
\end{figure}
\subsection{Local Tangent Approximation of Manifold Differences}
\label{subsec:inside-unit-ball}

Let $\mathcal{M}' := \mathcal{M} \cap B_1(0)$ denote the part of the manifold inside the unit ball. Our goal is to obtain a uniform bound for the kernel distance ratio over all pairwise differences $\Delta = p'-q'$ for $p', q' \in \mathcal{M}'$.

For manifolds with positive reach, the distance between points and their tangent space projections admits a quadratic bound. This property will allow us to control the variation of \( R \) between the manifold and its tangent spaces. The following lemma is used later to contain each local difference patch in a small neighborhood of a tangent-difference space, so that the ratio $R(\Delta)$ can be controlled there and then propagated globally via the net.

Manifold differences are vectors obtained by subtracting two points on the manifold; tangent differences are vectors obtained by subtracting a tangent vector at one point from a tangent vector at another. Because the manifold is curved, a manifold difference need not lie in the space of tangent differences, but when both points stay close to fixed base points the next lemma shows that the manifold difference remains close to that flat tangent-difference space.

\begin{lemma}[Deviation of Point Differences from Tangent Differences]
\label{lem:diff-quadratic}
Let $\mathcal{M} \subset \mathbb{R}^N$ be a $C^2$ submanifold with reach $\rch(\mathcal{M}) > 0$. Fix $p,q \in \mathcal{M}$ and consider points:
\[
p', q' \in \mathcal{M} \quad \text{with} \quad \|p'-p\| \leq \varepsilon, \|q'-q\| \leq \varepsilon, \quad 0 < \varepsilon < \rch(\mathcal{M}).
\]
Define the linear space of tangent differences:
\[
L := T_p\mathcal{M} - T_q\mathcal{M} = \{ u - v \mid u \in T_p\mathcal{M}, v \in T_q\mathcal{M} \} \subset \mathbb{R}^N.
\]
Then:
\begin{enumerate}
    \item\label{it:distance} The distance from $p'-q'$ to $L$ satisfies:
    \[
    d(p'-q', L) \leq \frac{\varepsilon^2}{\rch(\mathcal{M})}.
    \]
    
    \item\label{it:foot} For the orthogonal projection $\eta := \pi_L(p'-q')$, we have:
    \[
    \|\eta - (p-q)\| \leq 2\varepsilon + \frac{\varepsilon^2}{\rch(\mathcal{M})}.
    \]
\end{enumerate}
\end{lemma}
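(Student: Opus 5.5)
The plan is to decompose $p'-q'$ into a tangent-difference part, two small normal remainders coming from the reach bound, and the base chord $p-q$. Part~\ref{it:distance} then reduces to controlling the remainders. Part~\ref{it:foot} follows from part~\ref{it:distance} by a triangle inequality.

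\textbf{Step 1 (quadratic splitting at each base point).} Since $\|p'-p\|\le\varepsilon<\rch(\mathcal{M})$, Lemma~\ref{lem:distance_tangent_space}(2) applies at the base point $p$. Write $p'-p=u+e_p$ with $u:=\pi_{T_p\mathcal{M}}(p'-p)\in T_p\mathcal{M}$ and $\|e_p\|=d_{\mathcal E}(p',T_p\mathcal{M})\le \varepsilon^2/(2\rch(\mathcal{M}))$. Here $T_p\mathcal{M}$ is viewed as the affine tangent plane through $p$, so that distance to it equals distance of $p'-p$ to the linear tangent space. In the same way, $q'-q=v+e_q$ with $v\in T_q\mathcal{M}$ and $\|e_q\|\le \varepsilon^2/(2\rch(\mathcal{M}))$. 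Subtracting gives
\[
p'-q' \;=\; (u-v) \;+\; (e_p-e_q) \;+\; (p-q),
\qquad u-v\in L,\qquad \|e_p-e_q\|\le \frac{\varepsilon^2}{\rch(\mathcal{M})}.
\]
Also $\|u\|\le\varepsilon$ and $\|v\|\le\varepsilon$, since they are orthogonal projections of vectors of norm at most $\varepsilon$.

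\textbf{Step 2 (distance to $L$).} Since $L$ is a linear subspace and $u-v\in L$, we have $d(p'-q',L)=d\bigl((e_p-e_q)+(p-q),L\bigr)\le \varepsilon^2/\rch(\mathcal{M})+d(p-q,L)$. The claimed bound therefore amounts to showing that the base chord $p-q$ contributes nothing, i.e.\ $p-q\in L=T_p\mathcal{M}+T_q\mathcal{M}$.

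This is the step I expect to be the main obstacle. The statement takes $p,q$ to be arbitrary points of $\mathcal{M}$, and nothing in the hypotheses forces $p-q$ into the span of the two tangent spaces. A counterexample is a planar circle of radius $\rho$ (reach $\rho$) with $p,q$ at angular separation $\theta\in(0,\pi)$. There $L$ is spanned by the tangents at $p$ and $q$, which span the plane, so $p-q\in L$. But for a space curve such as a helix, or for $p,q$ on different components, $T_p\mathcal{M}+T_q\mathcal{M}$ can miss $p-q$ entirely.

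My first attempt will be to use reach directly. If $\|p-q\|<\rch(\mathcal{M})$, Lemma~\ref{lem:distance_tangent_space}(2) gives $d(p-q,T_p\mathcal{M})\le\|p-q\|^2/(2\rch(\mathcal{M}))$, and since $T_p\mathcal{M}\subset L$ this bounds $d(p-q,L)$. That yields a bound with an additional $\|p-q\|^2/(2\rch(\mathcal{M}))$ term rather than zero. If that is the best available, I will state part~\ref{it:distance} with this correction, or under the explicit convention that $p-q\in L$. In either case Step~1 delivers exactly the $\varepsilon^2/\rch(\mathcal{M})$ term the statement asks for. In the later union-bound argument over the net, this is precisely the component that has to be absorbed by centering at $p-q$.

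\textbf{Step 3 (the foot point).} Given part~\ref{it:distance}, let $\eta=\pi_L(p'-q')$. Then
\[
\|\eta-(p-q)\|\le \|\eta-(p'-q')\|+\|(p'-q')-(p-q)\|
\le d(p'-q',L)+\|p'-p\|+\|q'-q\|\le \frac{\varepsilon^2}{\rch(\mathcal{M})}+2\varepsilon .
\]
This is exactly part~\ref{it:foot}, and it needs no further geometric input beyond part~\ref{it:distance}.
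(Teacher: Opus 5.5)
Your Step~1 is the paper's construction: the same $u\in T_p\mathcal{M}$ and $v\in T_q\mathcal{M}$ from the reach estimate, and the same point $\zeta=(p-q)+(u-v)$. The obstacle you isolate in Step~2 is genuine, and the paper's proof does not get past it. It notes that $\zeta\in(p-q)+L$ and then writes $d(p'-q',L)\le\|p'-q'-\zeta\|\le\varepsilon^2/\rch(\mathcal{M})$, which would require $\zeta\in L$. What the argument actually proves, yours or the paper's, is $d\bigl(p'-q',(p-q)+L\bigr)\le\varepsilon^2/\rch(\mathcal{M})$. The lemma as stated, with the linear space $L$, is false.

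The cleanest witness is the $\theta=\pi$ endpoint of the planar circle you set aside. (You call the planar circle a ``counterexample'' while showing $p-q\in L$ there; fix that wording.) Take a circle of radius $\rho$, which has reach $\rho$, with antipodal points $p=(\rho,0)$, $q=(-\rho,0)$, and set $p'=p$, $q'=q$. Both tangent lines are $\mathbb{R}e_2$, so $L=\mathbb{R}e_2$ is orthogonal to $p-q=(2\rho,0)$. Hence $d(p'-q',L)=2\rho>\varepsilon^2/\rho$ for every admissible $\varepsilon<\rho$.

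Your Step~3 is the paper's triangle inequality, but you present it as giving part~(2) outright. It only does so conditionally on part~(1), and part~(2) as stated fails for the same reason. In the example above, $\eta=\pi_L(p-q)=0$, so $\|\eta-(p-q)\|=2\rho$, which exceeds $2\varepsilon+\varepsilon^2/\rho$ whenever $\varepsilon<(\sqrt{3}-1)\rho$.

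The right repair is the one behind your ``convention $p-q\in L$''. Replace $L$ by the linear space $L':=L+\mathbb{R}(p-q)$, of dimension at most $2d+1$; equivalently, work with the affine space $(p-q)+L$.
\begin{itemize}
\item Your Step~1 then gives part~(1) verbatim, since $\zeta\in L'$.
\item Part~(2) even improves to $\|\pi_{L'}(p'-q')-(p-q)\|=\|\pi_{L'}\bigl((p'-p)-(q'-q)\bigr)\|\le 2\varepsilon$.
\end{itemize}
This is also the form the paper uses downstream. Lemma~\ref{lem:uniform-ratio-approx-in} applies the Chen--Phillips bound on ``the smallest subspace containing $S_{pq}$ and the origin,'' which is a $(2d+1)$-dimensional space.

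Your other correction, adding $\|p-q\|^2/(2\rch(\mathcal{M}))$ when $\|p-q\|<\rch(\mathcal{M})$, is valid but does not serve the later argument. It is only available for nearby base points in your formulation. Even there it is of order $\|p-q\|^2/\rch(\mathcal{M})$ rather than $O(\varepsilon^2)$. So it cannot feed the patch argument, which ranges over all pairs of net points.
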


\begin{proof}[Proof of Lemma~\ref{lem:diff-quadratic}]\label{app:proof-lem-diff-quadratic}
\textbf{Part \eqref{it:distance}: Distance to tangent differences.}
By reach estimate, for any $x \in \mathcal{M}$ and $x^* \in \mathcal{M}$ with $\|x^*-x\| \leq \varepsilon < \rch(\mathcal{M})$:
    \[
    d(x^*-x, T_x\mathcal{M}) \leq \frac{\varepsilon^2}{2\rch(\mathcal{M})}.
    \]
Applying this at $x = p$ and $x = q$ yields $d(p'-p, T_p\mathcal{M}) \leq \frac{\varepsilon^2}{2\rch(\mathcal{M})} \quad \text{and} \quad d(q'-q, T_q\mathcal{M}) \leq \frac{\varepsilon^2}{2\rch(\mathcal{M})}.$ Choose $u \in T_p\mathcal{M}$ and $v \in T_q\mathcal{M}$ such that:
    \[
    \|p'-p-u\| \leq \frac{\varepsilon^2}{2\rch(\mathcal{M})} \quad \text{and} \quad \|q'-q-v\| \leq \frac{\varepsilon^2}{2\rch(\mathcal{M})}.
    \]
Define $\zeta := (p-q) + (u-v) \in (p-q) + L$. Then:
    \[
    \|p'-q'-\zeta\| \leq \frac{\varepsilon^2}{2\rch(\mathcal{M})} + \frac{\varepsilon^2}{2\rch(\mathcal{M})} = \frac{\varepsilon^2}{\rch(\mathcal{M})}.
    \]
Since $\eta = \pi_L(p'-q')$ minimizes distance to $L$:
    \[
    d(p'-q', L) = \|p'-q'-\eta\| \leq \frac{\varepsilon^2}{\rch(\mathcal{M})}.
    \]

\textbf{Part \eqref{it:foot}: Projection proximity.}
By the triangle inequality:
    \[
    \|\eta - (p-q)\| \leq \underbrace{\|\eta - (p'-q')\|}_{\leq \varepsilon^2/\rch(\mathcal{M})} + \underbrace{\|(p'-q') - (p-q)\|}_{\leq 2\varepsilon}.
    \]
Combining these bounds gives:
    \[
    \|\eta - (p-q)\| \leq 2\varepsilon + \frac{\varepsilon^2}{\rch(\mathcal{M})}. \qedhere
    \]
\end{proof}
This lemma is used in the proof of the uniform ratio approximation over the manifold (Lemma~\ref{lem:uniform-ratio-approx-in}) to confine each local difference patch near a tangent-difference space before applying the relative-error control.

\begin{lemma}[Uniform Ratio Approximation]
\label{lem:uniform-ratio-approx-in}
For any $\varepsilon > 0$, with $r := \rch(\mathcal{M})\,\varepsilon/(12N) < \varepsilon < \rch(\mathcal{M})$, the approximate kernel distance satisfies
\[
\frac{D_{\hat{K}}(\Delta)^2}{D_K(\Delta)^2} \in [1-2\varepsilon,\; 1+2\varepsilon]
\]
for all $\Delta \in (\mathcal{M} - \mathcal{M})\cap B_1(0)$, with failure probability at most
\[
|\Gamma_r|^2 \cdot O\left(\frac{2d+1}{\varepsilon}\exp\left(-\frac{t\varepsilon^2}{2d+1}\right)\right).
\]
Here, $|\Gamma_r|$ is the covering number of an $r$-net for $\mathcal{M}$.

\end{lemma}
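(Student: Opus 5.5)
The plan is a net argument over pairs of net points, with the ratio controlled on tangent-difference flats and then transferred to the actual chords. First, fix the $r$-net $\Gamma_r$ from Lemma~\ref{prop:manifold-net} with $r=\rch(\man)\varepsilon/(12N)$. Every $\Delta=p'-q'\in(\man-\man)\cap B_1(0)$ then lies in some difference patch $\Delta_{pq}$ with $p,q\in\Gamma_r$, $\|p'-p\|\le r$ and $\|q'-q\|\le r$. For each such pair, Lemma~\ref{lem:diff-quadratic} places $\Delta$ within distance $r^2/\rch(\man)$ of its projection $\eta=\pi_L(\Delta)$ onto the flat $L=T_p\man-T_q\man$, whose dimension is at most $2d$. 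It also places $\eta$ within $2r+r^2/\rch(\man)$ of $p-q$. The argument therefore splits into two pieces: (a) uniform control of $R$ on the bounded piece $L\cap B_1(0)$ of a $(2d)$-dimensional flat, and (b) transfer from $\eta$ to $\Delta$.

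For (a), I would use that $L$ is a linear subspace of dimension at most $2d$ (it contains $0$). Since $\Delta\mapsto\langle\omega,\Delta\rangle$ restricted to $L$ has the law of a $(2d)$-dimensional Gaussian projection, the ratio $R$ on $L$ is distributed exactly as the RFF ratio in $\R^{2d}$. The Chen--Phillips bound (Lemma~\ref{lem:ChenPhilorg} / Corollary~\ref{cor:rff_sample_complexity}) therefore applies with intrinsic dimension $2d$ in place of $d$ and with $B=1$. This gives $|R(\eta)-1|\le\varepsilon$ for all $\eta\in L\cap B_1(0)$, with failure probability $O\bigl(\tfrac{2d+1}{\varepsilon}\exp(-t\varepsilon^2/(2d+1))\bigr)$. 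Alternatively, one can invoke the concentration lemma of Boissonnat--Dutta from the appendix. Either way, this is the only place where the distribution enters, and no $N$ appears.

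For (b), on the event of Lemma~\ref{lem:LipboundR2} (probability $1-2e^{-Nt/c_4^4}$, negligible), $R$ is relatively Lipschitz on $B_1(0)$ with constant $12N$. Hence
\[
|R(\Delta)-R(\eta)|\le 12N\,\frac{\|\Delta-\eta\|}{\max\{\|\Delta\|,\|\eta\|\}}\le 12N\,\frac{r^2/\rch(\man)}{\|\Delta\|}.
\]
This is the main obstacle. The bound is only useful if $\|\Delta\|\gtrsim r$, i.e.\ if the chord is not much shorter than the patch scale. I would handle it by cases.

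\textbf{Case $\|\Delta\|\ge r$.} The error is at most $12Nr/\rch(\man)=\varepsilon$, by the choice of $r$.

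\textbf{Case $\|\Delta\|<r$.} Here $p'$ and $q'$ are within $3r$ of the single net point $p$. I would instead apply Lemma~\ref{lem:distance_tangent_space} directly to the chord $p'-q'$ at base point $q'$. This gives $d(\Delta,T_{q'}\man)\le\|\Delta\|^2/(2\rch(\man))$, so the relative perturbation is $\|\Delta\|/(2\rch(\man))$ and the Lipschitz error is $12N\|\Delta\|/(2\rch(\man))\le\varepsilon$. The difficulty is that $T_{q'}\man$ is not a net tangent space. To fix this I would use the reach-induced tangent angle bound $\angle(T_{q'}\man,T_p\man)\lesssim r/\rch(\man)$, so that $\Delta$ lies within relative distance $O(r/\rch(\man))$ of $T_p\man\subset L$. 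If needed, the constant in the definition of $r$ can absorb this. Combining the two cases gives $|R(\Delta)-1|\le2\varepsilon$.

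Finally, a union bound over the $|\Gamma_r|^2$ pairs gives the stated failure probability. The $N$-dependence remains only inside $\log|\Gamma_r|$ through $r\propto\varepsilon/N$.
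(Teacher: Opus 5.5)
Your step (b) rests on the claim that $\Delta=p'-q'$ lies within $r^2/\rch(\man)$ of the \emph{linear} space $L=T_p\man-T_q\man$ (``it contains $0$''). That claim is false.

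Take antipodal points $p,q$ on a circle of radius $1/2$ in $\R^2$. Then $T_p\man=T_q\man$ is the horizontal line, so $L$ is that line. But $\Delta=p-q$ (with $p'=p$, $q'=q$) is vertical of norm $1$, so $d(\Delta,L)=1$.

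The proof of Lemma~\ref{lem:diff-quadratic} only produces a point $\zeta\in(p-q)+L$. What it actually shows is $d(\Delta,(p-q)+L)\le r^2/\rch(\man)$: the increments $p'-p$ and $q'-q$ are nearly tangent, but the base chord $p-q$ is arbitrary. Part 1 of that lemma is misstated under the linear reading. With affine tangent spaces, $T_p\man-T_q\man$ is exactly this flat, which is how the paper's proof of this lemma treats $S_{pq}$.

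Since this flat misses the origin, your Gaussian-projection reduction must be run on its linear span $L':=L+\R(p-q)$, which has dimension at most $2d+1$. This is what the paper does (``the smallest subspace containing $S_{pq}$ and the origin''). It is also the real source of the $2d+1$ in the failure probability, which you quote while arguing in dimension $2d$. With $\eta:=\pi_{L'}(\Delta)$ you get $\|\Delta-\eta\|\le r^2/\rch(\man)$ and $\|\eta\|\le\|\Delta\|\le 1$. Your step (a) and your case $\|\Delta\|\ge r$ then go through unchanged.

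With that repair, your proof follows the paper's outline: pairs of net points, Chen--Phillips on a low-dimensional slice, relative-Lipschitz transfer with constant $12N$, and a union bound. Your treatment of short chords is a genuine improvement on the paper. The paper simply asserts a transfer margin of $12Nr/(2\rch(\man))$ for every $\Delta\in\Delta_{pq}$. That is the form of Corollary~\ref{cor:projection_bound}, which concerns a manifold and its own tangent spaces, not $\man-\man$. From the absolute deviation $r^2/\rch(\man)$ one only gets $12Nr^2/(\rch(\man)\|\Delta\|)$, which blows up as $\|\Delta\|\to0$, exactly as you observe.

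Your fix (secant--tangent bound at $q'$, then rotating $T_{q'}\man$ into $T_p\man\subset L'$) is correct, with three caveats:
\begin{itemize}
\item The tangent-variation estimate, a bound of the form $\sin\angle(T_x\man,T_y\man)\le\|x-y\|/\rch(\man)$, is not among the paper's lemmas. You must import it; it is a standard consequence of positive reach from the same line of work as Lemma~\ref{lem:distance_tangent_space}.
\item With $\|q'-p\|<2r$, the relative deviation is up to $\tfrac{5r}{2\rch(\man)}$, i.e.\ a transfer error of about $\tfrac52\varepsilon$. Rotating to $T_q\man$ instead, using $\|q'-q\|\le r$, gives $\tfrac32\varepsilon$. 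Either way, with $r$ fixed as in the statement you land in $[1-c\varepsilon,1+c\varepsilon]$ with $c>2$. Shrinking $r$ by a constant factor fixes this at the cost of a $C^{d}$ factor in $|\Gamma_r|$, which is harmless after taking logarithms.
\item Your suggested alternative via the appendix's Boissonnat--Dutta inequality does not work as stated. That bound is pointwise in $\Delta$, so it cannot replace the uniform Chen--Phillips bound on $L'\cap B_1(0)$ without a further net argument on the flat, and near the origin that argument meets the same singularity.
\end{itemize}
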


\begin{proof}[Proof of Lemma~\ref{lem:uniform-ratio-approx-in}]\label{app:proof-lem-uniform-ratio-in}
Given an $r$-net $\Gamma_r \subset \mathcal{M}'$ (with $r = O(\varepsilon)$), for each pair $(p,q) \in \Gamma_r \times \Gamma_r$, consider the local \emph{difference patch}
\[
\Delta_{pq} := \left\{ p'-q' : p' \in B_{\mathcal{M}}(p, r),\, q' \in B_{\mathcal{M}}(q, r) \right\}.
\]
By Lemma~\ref{lem:diff-quadratic}, every such patch is contained in an $\varepsilon$-neighborhood (with $\varepsilon \asymp r$) of the affine space $S_{pq}:=T_p\mathcal{M} - T_q\mathcal{M}$. This affine space has dimension at most $2d$, as $T_p\mathcal{M}$ and $T_q\mathcal{M}$ are both $d$-dimensional linear subspaces. Moreover, the set $\Delta_{pq}$ always contains the origin when $p = q$.

\paragraph*{Application of the relative error bound.}
For each patch, define the enclosing ball
\[
S_{pq} := \left( T_p\mathcal{M} - T_q\mathcal{M} \right) \cap B_{1 + O(r)}(0),
\]
which is a $(2d)$-dimensional slice (or $(2d+1)$-dimensional affine subspace if including the origin). The relative error theorem in~\cite{chen2017relative} gives, for all $\Delta \in S_{pq}$,
\[
\frac{D_{\hat{K}}(\Delta)^2}{D_K(\Delta)^2} \in [1-\varepsilon, 1+\varepsilon]
\]
with failure probability at most $O\left(\frac{2d+1}{\varepsilon}\exp\left(-\frac{t\varepsilon^2}{2d+1}\right)\right)$ per patch.

\paragraph*{Control over Difference Patches.}
Because every $\Delta \in \Delta_{pq}$ lies within $O(r)$ of $S_{pq}$, by relative Lipschitz continuity of the ratio, the bound holds within an $O(r)$ margin. For $r = O(\varepsilon)$, this results in the final bracket $[1-2\varepsilon, 1+2\varepsilon]$ for all $\Delta \in \Delta_{pq}$. Indeed, assume $r := \rch(\mathcal{M})\,\varepsilon/(12N) < \varepsilon < \rch(\mathcal{M})$. Applying the relative error result (see, e.g.,~\cite{chen2017relative}) on the smallest subspace containing $S_{pq}$ and the origin, setting the radius as $1$,
\[
\frac{D_{\hat{K}}(\Delta)^2}{D_K(\Delta)^2} \in [1-\varepsilon,\,1+\varepsilon],\quad \forall \Delta \in S_{pq}
\]
with probability at least 
\[
1 - O\Bigl(\frac{2d+1}{\varepsilon}\exp\Bigl(-t\varepsilon^2/(2d+1)\Bigr)\Bigr).
\]
Hence
\[
\frac{D_{\hat{K}}(\Delta)^2}{D_K(\Delta)^2} \in [1-\varepsilon - 12Nr/(2\,\rch(\mathcal{M})),\,1+\varepsilon+12Nr/(2\,\rch(\mathcal{M}))] \subset [1-2\varepsilon, 1+2\varepsilon],\quad \forall \Delta \in \Delta_{pq}
\]
with probability at least 
\[
1 - O\bigl(\Bigl(\frac{2d+1}{\varepsilon}\Bigr)\exp\Bigl(-t\varepsilon^2/(2d+1)\Bigr)\bigr).
\]

\paragraph*{Union Bound over All Pairs.}
Since the control applies to each difference patch associated to a pair $(p, q)$, and the number of such pairs is $|\Gamma_r|^2$, we take a union bound over all pairs. 

Thus, the \emph{total failure probability} is bounded by
\[
|\Gamma_r|^2 \cdot O\left(\frac{2d+1}{\varepsilon}\exp\left(-\frac{t\varepsilon^2}{2d+1}\right)\right).
\]

\end{proof}

\begin{remark}
The $(2d)$-dimensional structure arises since the difference of two $d$-dimensional tangent spaces is at most $2d$-dimensional. The union bound is squared because every pair of net points defines a patch, and we require uniform control for all such patches.
\end{remark}

\subsection{Uniform Approximation When the Difference Patch Is Outside the Unit Ball}

\begin{lemma}[Uniform Ratio Approximation]
\label{lem:uniform-ratio-approx-out}
For any $\varepsilon > 0$, with $r := \rch(\mathcal{M})\,\varepsilon/(12N) < \varepsilon < \rch(\mathcal{M})$, the approximate kernel distance satisfies
\[
\frac{D_{\hat{K}}(\Delta)^2}{D_K(\Delta)^2} \in [1-2\varepsilon,\; 1+2\varepsilon]
\]
for all $\Delta \in (\mathcal{M} - \mathcal{M})\setminus B_1(0)$, with failure probability at most
\[
|\Gamma_r'|^2 \cdot O\left(2d \exp\left( - \frac{c_H \varepsilon^2 t}{668} \right)\right),
\]
Here, $|\Gamma_r'|$ is the covering number of an $r$-net for $\mathcal{M}''$.

\end{lemma}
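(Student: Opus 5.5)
The argument follows the template of Lemma~\ref{lem:uniform-ratio-approx-in}. The difference is that outside the unit ball there is no singularity at the origin. We can therefore replace the relative-Lipschitz machinery and the Chen--Phillips bound by the absolute estimates of Section~\ref{subsec:overview}: the pointwise bound of Lemma~\ref{lem:ratio bound at a point} and the local ball control of Lemma~\ref{Local Uniform Control in Euclidean Balls}.

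First, let $\mathcal{M}''$ be the part of $\mathcal{M}$ needed to generate $(\mathcal{M}-\mathcal{M})\setminus B_1(0)$, and fix an $r$-net $\Gamma'_r\subset\mathcal{M}''$ from Lemma~\ref{prop:manifold-net}. For each ordered pair $(p,q)\in\Gamma'_r\times\Gamma'_r$, form the difference patch $\Delta_{pq}$. We keep only those patches that meet the complement of $B_1(0)$; after enlarging the unit ball by a margin $O(r)$, all their points satisfy $\|\Delta\|\ge 1-O(r)$, and the center $p-q$ can be taken to have $\|p-q\|\ge 1$. By Lemma~\ref{lem:diff-quadratic}, every $\Delta=p'-q'\in\Delta_{pq}$ lies within $r^2/\rch(\mathcal{M})$ of its projection $\eta=\pi_L(p'-q')$ onto $L=T_p\mathcal{M}-T_q\mathcal{M}$. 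Moreover, $\|\eta-(p-q)\|\le 2r+r^2/\rch(\mathcal{M})=:l$. Hence $\eta$ lies in the ball of radius $l$ centered at $p-q$ inside the affine flat $(p-q)+L$, which has dimension at most $2d$.

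Second, apply Lemma~\ref{Local Uniform Control in Euclidean Balls} on this flat with $d$ replaced by $2d$, taking $\varepsilon_0=\varepsilon/2$ and the gradient parameter equal to $\varepsilon$. This gives, for every such $\eta$,
\[
|R(\eta)-1|\le \tfrac{\varepsilon}{2}+l\sqrt{2d}\,(\varepsilon+lC_L),
\]
except with probability $2\cdot 2d\exp(-c_H\varepsilon^2t/668)+2\exp(-t\varepsilon^2/52)$, which is $O\bigl(2d\exp(-c_H\varepsilon^2 t/668)\bigr)$. The choice $r=\rch(\mathcal{M})\varepsilon/(12N)$ gives $l=O(\varepsilon/N)$. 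The Lipschitz constant $C_L=O(N)$ from Lemma~\ref{Lipschitz Constant of derivative of ratio} then makes $l^2C_L\sqrt{d}=O(\varepsilon^2\sqrt d/N)$. This is at most $\varepsilon/4$ once $N$ dominates $d$ and $\varepsilon$ is small, and $l\sqrt{2d}\,\varepsilon\le\varepsilon/4$ likewise. So $|R(\eta)-1|\le\varepsilon$.

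Third, transfer the bound from $\eta$ to $\Delta$ using Remark~\ref{rem:LipboundR2_outside}. With probability $1-2\exp(-Nt/c_4^4)$,
\[
|R(\Delta)-R(\eta)|\le 12N\|\Delta-\eta\|\le 12N\,r^2/\rch(\mathcal{M}) = \varepsilon r\le\varepsilon .
\]
Combining the two steps gives $R(\Delta)\in[1-2\varepsilon,1+2\varepsilon]$. A union bound over the $|\Gamma'_r|^2$ patches yields the stated failure probability, since the $\exp(-Nt/c_4^4)$ terms are dominated.

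The main obstacle is the boundary region. Remark~\ref{rem:LipboundR2_outside} and Lemma~\ref{Derivative of the ratio is small} require norms at least $1$, but $\Delta$ and $\eta$ may have norm slightly below $1$, anywhere in $[1-O(r),1)$. I would handle this by running both estimates with threshold $1/2$ instead of $1$, which only changes absolute constants ($(1-e^{-1/8})$ replaces $(1-e^{-1/2})$). Alternatively, such borderline $\Delta$ can be absorbed into Lemma~\ref{lem:uniform-ratio-approx-in} applied on $B_{1+O(r)}(0)$.
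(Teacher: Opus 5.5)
Your proposal follows the same route as the paper's proof. Both use an $r$-net, the tangent-difference flat for each pair $(p,q)$, Lemma~\ref{Local Uniform Control in Euclidean Balls} on that flat, Remark~\ref{rem:gradoutside} to pass from the flat back to the patch, and a union bound over the $|\Gamma_r'|^2$ pairs. In several places you are more careful than the paper:
\begin{itemize}
\item you use ball radius $l=2r+r^2/\rch(\mathcal M)$ instead of $r$;
\item you use dimension $2d$ in both the $\sqrt{2d}$ factor and the union bound;
\item your transfer error $12Nr^2/\rch(\mathcal M)=\varepsilon r$ is sharper;
\item you treat norms just below $1$ explicitly, which the paper ignores.
\end{itemize}

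One step, however, is wrong as written. You take $\eta=\pi_L(p'-q')$, the projection onto the \emph{linear} space $L=T_p\mathcal M-T_q\mathcal M$, and then place $\eta$ in the affine flat $(p-q)+L$. These are incompatible unless $p-q\in L$. Moreover, the claim $\|\Delta-\pi_L(\Delta)\|\le r^2/\rch(\mathcal M)$ fails precisely in the regime of this lemma. For antipodal points $p,q$ of a unit circle, $L$ is the common tangent line and $p-q\perp L$, so with $p'=p$, $q'=q$ you get $d(\Delta,L)=2$. The argument in Lemma~\ref{lem:diff-quadratic} only produces a point $\zeta\in(p-q)+L$ with $\|p'-q'-\zeta\|\le r^2/\rch(\mathcal M)$. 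So define $\eta$ as the orthogonal projection of $p'-q'$ onto the affine flat $(p-q)+L$. Then $\|\Delta-\eta\|\le r^2/\rch(\mathcal M)$ and $\|\eta-(p-q)\|\le l$ both hold, and the rest of your argument goes through unchanged.

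There are also two smaller caveats.

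First, your estimate $l=O(\varepsilon/N)$ treats $\rch(\mathcal M)$ as $O(1)$. In general $l\approx\rch(\mathcal M)\varepsilon/(6N)$, and the hypotheses only give $\rch(\mathcal M)<12N$. So absorbing $l\sqrt{2d}(\varepsilon+lC_L)$ into $\varepsilon/2$ needs an extra condition such as $\rch(\mathcal M)^2\varepsilon\sqrt d\lesssim N$. The paper's closing step (``$C_L=O(d)$ and $d\le N$'') relies on the same kind of unstated smallness, so this is a caveat on the statement rather than on your route.

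Second, lowering the norm threshold to $1/2$ replaces the $668$ in the exponent by $16/(1-e^{-1/8})^4\approx 8\cdot 10^4$. If you want to keep the stated constant, use your second option instead. Split at radius $1+cr$ with $c$ a suitable constant, so that every centre $p-q$, every point of the flat ball, and every $\eta$ has norm at least $1$. Then handle the thin shell with the inside-the-ball argument of Lemma~\ref{lem:uniform-ratio-approx-in}.
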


\begin{proof}[Proof of Lemma~\ref{lem:uniform-ratio-approx-out}]\label{app:proof-lem-uniform-ratio-out}

Let $r = \rch(\man)\,\varepsilon/(12N) < \varepsilon < \rch(\man)$. Define 
\[
\mathcal{M}'' := \mathcal{M} \setminus B_1(0)
\]
as the manifold outside the unit ball. Proceed as above:

\paragraph*{Construction of Covering Net}
Let $\Gamma_r'$ be an $r$-net covering $\mathcal{M}''$. For each pair $(p, q) \in \Gamma_r' \times \Gamma_r'$, consider the difference patch
\[
\Delta_{pq}'' := \{p' - q' : p' \in B_{\mathcal{M}}(p, r),\; q' \in B_{\mathcal{M}}(q, r)\}.
\]

\paragraph*{Local Flat Approximation}
Each patch $\Delta_{pq}''$ is well-approximated by $(T_p\mathcal{M}'' - T_q\mathcal{M}'')$, which is $2d$-dimensional. The local kernel distance ratio in each flat patch is controlled (see Lemma~\ref{Local Uniform Control in Euclidean Balls}) with failure probability $O\left(2d \exp(-c_H \varepsilon^2 t / 668)\right)$.

\paragraph*{Projection to Manifold}
Using Lemma~\ref{Local Uniform Control in Euclidean Balls}, setting the radius as $r$,
\[
\begin{split}
\frac{D_{\hat{K}}(\Delta)^2}{D_K(\Delta)^2}
\in \bigl[&1- \varepsilon - r\sqrt{d}(\varepsilon + rC_L),\\
&\quad 1+ \varepsilon + r\sqrt{d}(\varepsilon + rC_L)\bigr],
\end{split}
\]
for all $\Delta \in \tp{p}{r}''$, with probability at least 
\[
1 - 2d \exp\left( - \frac{c_H \varepsilon^2 t}{668} \right)
- 2\exp\left(-\frac{t \varepsilon^2}{13} \right).
\]
Using Remark~\ref{rem:gradoutside}, we get
\[
\begin{split}
\frac{D_{\hat{K}}(\Delta)^2}{D_K(\Delta)^2}
\in \Bigl[&1- \varepsilon - r\sqrt{d}(\varepsilon + rC_L) - \frac{12Nr}{2\,\rch(\mathcal{M})},\\
&\quad 1+ \varepsilon + r\sqrt{d}(\varepsilon + rC_L) + \frac{12Nr}{2\,\rch(\mathcal{M})}\Bigr],
\end{split}
\]
for all $\Delta \in \man_p^{''r}$.
with probability at least 
\[
1 - O\bigl(2d \exp\left( - \frac{c_H \varepsilon^2 t}{668} \right)\bigr),
\]

As $C_L=O(d)$ and $d \leq N$ we get
\[
\frac{D_{\hat{K}}(\Delta)^2}{D_K(\Delta)^2} \in [1-2\varepsilon, 1+2\varepsilon] \quad \forall \Delta \in \Delta_{pq}''
\]
with the same per-patch failure probability.

\paragraph*{Global Union Bound}
A union bound over all $|\Gamma_r'|^2$ pairs gives total failure probability at most
\[
|\Gamma_r'|^2 \cdot O\left(2d \exp\left( - \frac{c_H \varepsilon^2 t}{668} \right)\right),
\]
so the uniform bound holds for all $\Delta \in \mathcal{M}'' - \mathcal{M}''$.
\end{proof}

\subsection{Final Result: Uniform Multiplicative Approximation on Manifolds}
\label{subsec:final-result}

We combine the local estimates above into the global uniform guarantee already stated as Theorem~\ref{thm:intro_uniform} in the Introduction.

\begin{proof}[Proof of Theorem~\ref{thm:intro_uniform}]\label{app:proof-thm-intro-uniform}
Combining inside and outside the unit ball, we obtain our main uniform approximation. For any $\varepsilon > 0$, with $r = O(\varepsilon)$, the approximate kernel distance satisfies
\[
\frac{D_{\hat{K}}(\Delta)^2}{D_K(\Delta)^2} \in [1-2\varepsilon + O(\varepsilon^2),\; 1+2\varepsilon+O(\varepsilon^2)]
\]
for all $\Delta \in (\mathcal{M} - \mathcal{M})$, with failure probability at most
\[
|\Gamma_r|^2 \cdot O\left(\frac{2d}{\varepsilon}\exp\left(-\frac{t\varepsilon^2}{2d}\right)\right).
\]
Here, $|\Gamma_r|$ is the covering number of an $r$-net for $\mathcal{M}$.

The required $r$-net $\Gamma_r$ satisfies
\[
|\Gamma_r| = O\left(\frac{\vol(\mathcal{M}) \cdot N^d}{\vol(B^d_1(0)) \cdot \rch(\mathcal{M})^d \cdot \varepsilon^d}\right),
\]
where $B^d_1(0)$ denotes the unit ball in $\mathbb{R}^d$. This gives the desired result.

\end{proof}

\begin{remark}[Pairwise Patching Avoids Explicit Construction of $\mathcal{M}-\mathcal{M}$]
A key difficulty in analyzing the set of all differences $\mathcal{M}-\mathcal{M} = \{p-q : p, q \in \mathcal{M}\}$ is that $\mathcal{M}-\mathcal{M}$ generally lacks a manifold structure, and its geometric quantities such as reach are unknown or intractable. Our approach sidesteps this challenge entirely: instead of covering $\mathcal{M}-\mathcal{M}$ as a whole, we use a local chart strategy by constructing an $\varepsilon$-net $\Gamma_\varepsilon \subset \mathcal{M}$ and considering all pairwise local difference patches $\Delta_{pq}$ between neighborhoods around $p, q \in \Gamma_\varepsilon$.

Each $\Delta_{pq}$ is tightly controlled—by Lemma~\ref{lem:diff-quadratic}, it lies in a small neighborhood of the affine subspace $T_p\mathcal{M} - T_q\mathcal{M}$, which has known dimension at most $2d$. The kernel distance ratio $R(\Delta)$ is then shown to be well-approximated and uniformly controlled over all such patches, without needing any global structure or reach on $\mathcal{M}-\mathcal{M}$ itself. This pairwise tangent-based construction not only covers all differences in $\mathcal{M}-\mathcal{M}$, but also uses the manifold geometry and reach $\rch(\mathcal{M})$ of $\mathcal{M}$—quantities that are accessible—rather than that of the potentially stratifold like difference set $\mathcal{M}-\mathcal{M}$.
\end{remark}

\section{Approximation of Persistent Modules of a Manifold}
\label{sec:approx-persistent-modules}

The transition from geometric to topological approximation requires preserving not just pairwise distances but the full structure of the persistent homology modules. The foundational work of \cite{phillips1geometric} established the Gaussian Kernel Power Distance (GKPD) as a stable, distance-like function whose topology can be captured via weighted Rips complexes. Subsequent research demonstrated that dimensionality reduction can preserve persistent homology, with \cite{lotz2019persistent} and \cite{arya2021dimensionality} achieving this for Euclidean distances via random projections by leveraging intrinsic data complexity and the convex structure of weighted simplex radii.

The direct application to kernel methods was introduced by \cite{boissonnat2024euclidean}, who showed that Random Fourier Features (RFF) provide an $\varepsilon$-distortion map for pairwise kernel distances, leading to interleaved GKPD-based Čech filtrations via a key Simplex Distortion Lemma. Their later work introduced techniques for the relative approximation of kernel weights under a stable rank condition.

Building on this foundation, we now proceed from the geometric approximation of pairwise kernel distances on a manifold $\mathcal{M}\subset\mathbb{R}^N$ to the topological approximation of the induced persistent modules. This step is non-trivial, as the persistence of a filtration depends critically on the \emph{relative scaling of kernel weights}—quantities defined by the RKHS centroid $\mu_P$ that are not preserved by pairwise distance approximations alone. We first prove these weights are bounded away from zero, then establish their multiplicative stability under RFF projection. By unifying this result with our manifold-level distortion bounds, we prove that the GKPD-weighted Čech and Rips filtrations of the original data and its RFF image are $(1\pm\varepsilon_\star)$–interleaved, providing a complete topological guarantee for kernel-based persistent homology on manifolds.

\begin{theorem}[Weighted Čech/Rips interleaving on manifolds]
\label{thm:manifold_interleaving_final}
Let $\mathcal{M}\subset\mathbb{R}^N$ be a compact, $d$–dimensional, $\mathcal{C}^2$ submanifold with
positive reach $\rch(\mathcal{M})>0$, and let $P\subset\mathcal{M}$ be a finite sample of $n$ points.
Fix accuracy $\varepsilon\in(0,\rch(\mathcal{M})/2)$, confidence $\delta_0\in(0,1)$, and bandwidth $\sigma>0$.
Let $\phi:\mathbb{R}^N\to\mathbb{R}^t$ be a Random Fourier Feature (RFF) map for the Gaussian kernel
$K(x,y)=\exp\!\bigl(-\|x-y\|^2/2\bigr)$ with
\[
t
\;=\;
\Omega\!\Biggl(
\frac{d}{\varepsilon^2}\,
\log\!\Biggl(
\frac{\vol(\mathcal{M})^{2}\,N^{\,2d}}{\vol(B^d_1(0))^{2}\,\rch(\mathcal{M})^{\,2d}\,\varepsilon^{\,2d+1}\,\delta_0}
\Biggr)\Biggr).
\]
Define the kernel centroid $\mu_P=\tfrac{1}{|P|}\sum_{y\in P}\phi(y)$ and the constant
\[
c_P \;:=\; \frac{2}{\bigl(1-\|\mu_P\|\bigr)^2}\,.
\]
Then, with probability at least $1-\delta_0$, writing $\varepsilon_\star := \max\{\,2\varepsilon,\;\varepsilon c_P\,\}$,
the weighted Čech filtrations $\check{C}_\alpha(\widehat{P})$ (built with the Gaussian kernel power distance $D_K$)
and $\check{C}_\alpha\bigl(\phi(P)\bigr)$ (weights recomputed in the image) are $(1\pm\varepsilon_\star)$–interleaved.
Consequently, the corresponding weighted Rips filtrations $VR_\alpha(\widehat{P})$ and $VR_\alpha\bigl(\phi(P)\bigr)$
are also $(1\pm\varepsilon_\star)$–interleaved.
\end{theorem}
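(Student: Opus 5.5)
The argument reduces the statement to the Simplex Distortion Lemma (Lemma~\ref{lem:simplex-distortion}) by showing that, on the good event of Theorem~\ref{thm:intro_uniform}, the RFF map $\phi$ is an $(\varepsilon_\star,0)$-distortion map for the pairwise GKPD in the sense of Definition~\ref{def:distortion-map}. Interleaving of the Čech filtrations then follows from comparing radii simplex by simplex, and interleaving of the Rips filtrations follows from the pairwise power distances.

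\textbf{Step 1: pairwise distances.} The value of $t$ matches \eqref{eq:intro_sample_complexity} with $\delta=\delta_0$. So, with probability at least $1-\delta_0$, we have $(1-2\varepsilon)D_K^2(p,q)\le\|\phi(p)-\phi(q)\|^2\le(1+2\varepsilon)D_K^2(p,q)$ for all $p,q\in\mathcal M$, and in particular for all $p,q\in P$. This gives condition~1 of Definition~\ref{def:distortion-map} with parameter $2\varepsilon\le\varepsilon_\star$ and $\eta=0$. All remaining steps are deterministic on this event.

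\textbf{Step 2: weights.} The weight $w(p)$ is, up to sign, an average of squared kernel distances,
\[
-w(p)=\frac1{|P|}\sum_{y}D_K^2(p,y)-\frac1{2|P|^2}\sum_{x,y}D_K^2(x,y),
\]
and the same formula computed in $\mathbb R^{2t}$ equals $\|\phi(p)-\mu_P\|^2$. Since this is a difference of two nonnegative averages, the crude bound is
\[
|w(\phi(p))-w(p)|\le 2\varepsilon\Bigl(\tfrac1{|P|}\sum_y D_K^2(p,y)+\tfrac1{2|P|^2}\sum_{x,y}D_K^2(x,y)\Bigr).
\]
To turn this into a relative bound I need a lower bound on $|w|$ of the form $|w(p)|\gtrsim(1-\|\mu_P\|)^2$ times the bracketed quantity.

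Using $\|\phi(\cdot)\|=1$, each average is at most $2$. Writing $\|\phi(p)-\mu_P\|\ge\|\phi(p)\|-\|\mu_P\|=1-\|\mu_P\|$ gives $|w(\phi(p))|\ge(1-\|\mu_P\|)^2$. The same reasoning in the RKHS gives a lower bound on $|w(p)|$, which is where Lemma~\ref{lem:kernel-weight-lower} enters. Combining these, the absolute error, which is at most $2\varepsilon\cdot(\text{bracket})\le 6\varepsilon$, is bounded by $\varepsilon c_P|w|$ up to constants. This yields condition~2 with parameter $\varepsilon c_P\le\varepsilon_\star$.

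One subtlety is that $c_P$ is defined through the embedded centroid, while the natural lower bound concerns the RKHS centroid. I would get the lower bound for the embedded weight directly and transfer it back using the additive error bound. This step, obtaining a multiplicative rather than additive weight error and reconciling the two centroids, is the main obstacle. If the constants do not close, my fallback is to absorb the discrepancy into the factor $2$ in $c_P$.

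\textbf{Step 3: interleaving.} With $\eta=0$, Lemma~\ref{lem:simplex-distortion} gives $(1-\varepsilon_\star)\mathrm{rad}^2(\widehat\sigma)\le\mathrm{rad}^2(\widehat{\phi(\sigma)})\le(1+\varepsilon_\star)\mathrm{rad}^2(\widehat\sigma)$ for every simplex. Hence $\check C_\alpha(\widehat P)\subseteq\check C_{(1+\varepsilon_\star)\alpha}(\phi(P))$. Using $(1-\varepsilon_\star)^{-1}\le 1+2\varepsilon_\star$ (after rescaling $\varepsilon$ if necessary), we also get the reverse inclusions, which is the $(1\pm\varepsilon_\star)$-interleaving.

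For the Rips filtrations, the pairwise power distance $D_K^2(p,q)-w(p)-w(q)$ is a sum of terms each distorted by at most a factor $1\pm\varepsilon_\star$. Because the weights are nonpositive, all these terms are nonnegative, so the sum is distorted multiplicatively by at most $1\pm\varepsilon_\star$. This gives the same inclusions edge by edge, and hence for every clique.
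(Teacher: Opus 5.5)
Your architecture matches the paper's proof. You take pairwise $(1\pm2\varepsilon)$ control from Theorem~\ref{thm:intro_uniform}, decompose $|w(p)|=A-B$ with $A=\frac{1}{|P|}\sum_{y}D_K^2(p,y)$ and $B=\frac{1}{2|P|^2}\sum_{x,y}D_K^2(x,y)$, use the lower bound $|w(p)|\ge(1-\|\mu_P\|)^2$, and finish with Lemma~\ref{lem:simplex-distortion} at $\eta=0$ plus the Rips argument. The gap is the one you flagged: nothing in Step~2 produces the factor $\varepsilon c_P$. Your bookkeeping gives $|\widehat w(p)-w(p)|\le 2\varepsilon(A+B)\le 6\varepsilon\le 3\varepsilon c_P|w(p)|$, and slightly worse once you transfer between the RKHS and embedded centroids. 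Absorbing the discrepancy into the $2$ in $c_P$ proves a different theorem, because $c_P$ and $\varepsilon_\star$ are fixed in the statement.

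No sharper bookkeeping can close this. Step~2 uses only the pairwise bounds on $P$ and $\|\phi(\cdot)\|=1$, and those do not imply the weight bound. Take the $n$ points of $P$ pairwise far apart relative to the bandwidth, so that in the limit $K=0$ off the diagonal, $|w(p)|=1-\frac1n$ and $c_P\to 2$. Now take unit vectors $u_1,\dots,u_n$ with $\langle u_1,u_j\rangle=-2\varepsilon$ for $j\ge 2$ and $\langle u_i,u_j\rangle=2\varepsilon$ for $2\le i<j$. Their Gram matrix is positive semidefinite for $\varepsilon\le\frac12$, and they meet the $(1\pm2\varepsilon)$ pairwise bounds. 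Yet
\[
\frac{|\widehat w(p_1)|}{|w(p_1)|}=1+4\varepsilon\Bigl(1-\frac1n\Bigr)+2\varepsilon\Bigl(1-\frac2n\Bigr)\longrightarrow 1+6\varepsilon .
\]
Meanwhile $\|\widehat\mu_P\|^2\to 2\varepsilon$, so $\varepsilon_\star=2\varepsilon(1+O(\sqrt{\varepsilon}))$ whichever centroid defines $c_P$. The vertex $\{p_1\}$ enters the two weighted \v{C}ech filtrations at $|w(p_1)|$ and $|\widehat w(p_1)|$, so the inclusion $\check C_\alpha(\widehat P)\subseteq\check C_{(1+\varepsilon_\star)\alpha}(\phi(P))$ fails for small $\varepsilon$.

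The paper fills this step in Lemma~\ref{lem:stability-kernel-weights} by asserting $(A+B)/|w(p)|\le c_P$. It also inserts $1\pm\varepsilon$ there instead of the $1\pm2\varepsilon$ that Theorem~\ref{thm:intro_uniform} delivers. But $A+B=3-2\langle\psi(p),\mu_P\rangle-\|\mu_P\|^2$ while $|w(p)|=1-2\langle\psi(p),\mu_P\rangle+\|\mu_P\|^2$. So the ratio tends to $3$ as $\|\mu_P\|\to 0$, whereas $c_P\to 2$; this is exactly the well-spread regime that Remark~\ref{rem:centroid_constant} calls typical. The paper's proof therefore has the same hole.

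This route honestly yields $\varepsilon_\star$ of order $\max\{2\varepsilon,3\varepsilon c_P\}$. The stated constant would need genuinely probabilistic control of the weights, for example concentrating $\|\phi(p)-\widehat\mu_P\|^2$ directly; it is an average of $t$ i.i.d.\ bounded terms with mean $|w(p)|$.

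Your Step~3 rescaling also changes the interleaving constant, since $1/(1-\varepsilon_\star)>1+\varepsilon_\star$. The paper's $(1+\varepsilon_\star)^2$ inclusion passes over the same point. Your Rips argument, via nonnegativity of the three terms of $D_K^2(p,q)-w(p)-w(q)$, is fine once a valid weight bound is in hand.
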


In the preceding sections, we established multiplicative distortion bounds for pairwise 
kernel distances under Random Fourier Feature (RFF) embeddings
of a submanifold $\mathcal{M}\subset\mathbb{R}^N$. 
We now proceed from geometric approximation to topological approximation, namely to the 
\emph{persistent modules} induced by these Čech and Rips filtrations.

This step is not immediate, since the persistence of a filtration depends not only on 
pairwise distances but also on the \emph{relative scaling of kernel weights}. 
To ensure stability of persistence diagrams, we must establish a 
\emph{relative approximation} result between the true kernel weights 
$w(p)$ and their RFF-based estimates $\widehat{w}(p)$ computed 
from the projected pairwise distances. 
Intuitively, while the kernel distance is realized as a Euclidean distance 
in the feature (Hilbert) space via $\psi:\mathbb{R}^N\to\mathcal{H}$, 
the centroid $\mu_P = \tfrac{1}{|P|}\sum_{y\in P}\psi(y)$ plays a central role 
in defining these weights and therefore in determining the topological structure 
of the filtration.

For a finite point set $P \subset \mathbb{R}^N$, let $\psi:\mathbb{R}^N \to \mathcal{H}$ denote the canonical feature map associated with a normalized positive-definite kernel $k(x,y)=\langle\psi(x),\psi(y)\rangle$, so that $\|\psi(x)\|=1$.  
The \emph{kernel centroid} of $P$ in the feature (Reproducing Kernel Hilbert) space $\mathcal{H}$ is defined as
\[
\mu_P = \tfrac{1}{|P|}\sum_{y\in P}\psi(y).
\]
The kernel \emph{weight function} (Definition~\ref{def:kernel-weights-avg}) of a point $p\in\mathbb{R}^N$ measures the (negative) squared distance of $\psi(p)$ from the centroid. It can be shown that:
\[
w(p) = -\|\psi(p)-\mu_P\|^2.
\]
This quantity is central in kernel distance and power distance formulations, where it encodes how far a point is from the feature-space mean of the set $P$.

\bigskip

\begin{lemma}[Lower Bound on Kernel Weight]
\label{lem:kernel-weight-lower}
For any non-degenerate finite set $P\subset\mathbb{R}^N$ with canonical kernel feature map $\psi$, we have
\[
|w(p)| \;\ge\;
(1-\|\mu_P\|)^2
\;\ge\;
\Biggl(
\Bigl(1-e^{-r^2/2}\Bigr)
\frac{\mathbb{E}\bigl[\|x-y\|^2\bigr]}{r^2}
\Biggr)^{\!2}\Big/4
\;>\;0.
\]
Here $r=\diam(\operatorname{supp}P)$ and the expectation is over $x,y\in P$ chosen uniformly at random.
\end{lemma}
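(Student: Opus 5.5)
The plan is to prove the two inequalities separately: first a purely geometric estimate in the feature space $\mathcal{H}$, then a scalar convexity estimate that bounds $\|\mu_P\|$ in terms of the input geometry.

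\textbf{First inequality.} We use the identity $w(p)=-\|\psi(p)-\mu_P\|^2$ recorded just before the statement, so $|w(p)|=\|\psi(p)-\mu_P\|^2$. Since the Gaussian kernel is normalized, $\|\psi(p)\|^2=K(p,p)=1$. Also $\mu_P$ is an average of unit vectors, so $\|\mu_P\|\le 1$. The reverse triangle inequality then gives
\[
\|\psi(p)-\mu_P\|\;\ge\;\|\psi(p)\|-\|\mu_P\|\;=\;1-\|\mu_P\|\;\ge\;0 .
\]
Squaring yields $|w(p)|\ge(1-\|\mu_P\|)^2$ for every $p\in P$.

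\textbf{Second inequality.} Expanding the centroid gives
\[
\|\mu_P\|^2=\frac{1}{|P|^2}\sum_{x,y\in P}K(x,y)=\mathbb{E}\bigl[e^{-\|x-y\|^2/2}\bigr],
\]
with $x,y$ uniform and independent in $P$. Set $s=\|x-y\|^2\in[0,r^2]$. The function $g(s)=1-e^{-s/2}$ is concave with $g(0)=0$, so on $[0,r^2]$ it lies above its chord: $g(s)\ge g(r^2)\,s/r^2$. Taking expectations,
\[
1-\|\mu_P\|^2\;=\;\mathbb{E}[g(s)]\;\ge\;A\;:=\;\bigl(1-e^{-r^2/2}\bigr)\frac{\mathbb{E}\bigl[\|x-y\|^2\bigr]}{r^2}.
\]
Next we factor $1-\|\mu_P\|^2=(1-\|\mu_P\|)(1+\|\mu_P\|)$ and use $1+\|\mu_P\|\le 2$, which gives $1-\|\mu_P\|\ge A/2$. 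Squaring yields $(1-\|\mu_P\|)^2\ge A^2/4$.

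\textbf{Strict positivity.} Non-degeneracy means $P$ has at least two distinct points. Then $r>0$, so $1-e^{-r^2/2}>0$, and $\mathbb{E}\|x-y\|^2>0$ because pairs of distinct points occur with positive probability. Hence $A>0$. Note that the diagonal pairs $x=y$ are included in the expectation; this is harmless, since they contribute $0$ on both sides of the chord inequality.

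I do not expect a real obstacle. The only points needing care are the convention that $K(x,x)=1$, which gives $\|\psi(p)\|=1$ and $\|\mu_P\|\le1$, and the direction of the concavity (chord) inequality.
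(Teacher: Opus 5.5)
Your proof is correct and follows the paper's argument. The first step is the feature-space bound $\|\psi(p)-\mu_P\|\ge 1-\|\mu_P\|$; the paper gets it by expanding the square and applying Cauchy--Schwarz, which is equivalent to your reverse triangle inequality. The second step is the secant/chord bound for $e^{-\|x-y\|^2/2}$ on $[0,r^2]$. Your factorization $1-\|\mu_P\|^2=(1-\|\mu_P\|)(1+\|\mu_P\|)$ with $1+\|\mu_P\|\le 2$ spells out the ``combining'' step that the paper leaves implicit, which is where the factor $1/4$ comes from. Your justification of strict positivity via $r>0$ likewise fills in a point the paper only asserts.
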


\begin{proof}
We first expand the squared feature-space distance:
\[
\|\psi(x)-\mu_P\|^2
= \|\psi(x)\|^2 + \|\mu_P\|^2 - 2\langle \psi(x),\mu_P\rangle.
\]
Since $\|\psi(x)\|=1$ and by Cauchy–Schwarz,
$\langle \psi(x),\mu_P\rangle \le \|\mu_P\|$,
we obtain the lower bound
\[
\|\psi(x)-\mu_P\|^2 \ge (1-\|\mu_P\|)^2,
\]
hence $|w(p)| = \|\psi(p)-\mu_P\|^2 \ge (1-\|\mu_P\|)^2.$

Next, the norm of the centroid admits the kernel expression
\[
\|\mu_P\|^2
= \mathbb{E}_{x,y\in P}k(x,y)
= \mathbb{E}_{x,y\in P}\exp\!\Big(-\tfrac{\|x-y\|^2}{2}\Big),
\]
for the Gaussian kernel $k(x,y)=e^{-\|x-y\|^2/2}$.
Let $T=\|x-y\|^2/2\in[0,r^2/2]$, where
$r=\diam(\operatorname{supp}P)$.  
Using the convexity of $e^{-t}$, which implies it lies below its secant line on $[0, r^2/2]$, we obtain
\[
\|\mu_P\|^2 \;\le\; 1
- \Bigl(1-e^{-r^2/2}\Bigr)\frac{\mathbb{E}\bigl[\|x-y\|^2\bigr]}{r^2}.
\]
Combining the two inequalities,
\[
(1-\|\mu_P\|)^2
\;\ge\;
\Biggl(
\Bigl(1-e^{-r^2/2}\Bigr)
\frac{\mathbb{E}\bigl[\|x-y\|^2\bigr]}{r^2}
\Biggr)^{\!2}\Big/4
\;>\;0,
\]
which proves the stated bound.
\end{proof}

Now that we have shown that the kernel weights are strictly bounded away from zero, we proceed to show that the weight of a point in the original dataset and the weight of its image in the projected dataset are close in a relative sense, since the pairwise distances are approximated and the kernel weights are nonzero. This idea is formalized in the following lemma.

\begin{lemma}[Stability of Kernel Weights]
\label{lem:stability-kernel-weights}
Let $P$ be a point set and $p$ a point. Let $w(p)$ be the kernel weight and $\widehat{w}(p)$ its estimate from random Fourier features. Then
\[
1 - \varepsilon\left(\frac{2}{(1-\|\mu_P\|)^2}\right) 
\;\leq\; 
\frac{\widehat{w}(p)}{w(p)}
\;\leq\; 
1 + \varepsilon\left(\frac{2}{(1-\|\mu_P\|)^2}\right).
\]
\end{lemma}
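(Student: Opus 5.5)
We would prove Lemma~\ref{lem:stability-kernel-weights} by writing both weights as the same fixed linear combination of pairwise squared kernel distances, applying the uniform distortion bound of Theorem~\ref{thm:intro_uniform} term by term, and then dividing by the lower bound on $|w(p)|$ from Lemma~\ref{lem:kernel-weight-lower}. By Definition~\ref{def:kernel-weights-avg},
\[
-w(p)=\frac{1}{|P|}\sum_{y\in P}D_K^2(p,y)-\frac{1}{2|P|^2}\sum_{x,y\in P}D_K^2(x,y)=\|\psi(p)-\mu_P\|^2 ,
\]
and $\widehat w(p)$ is the same expression with $D_K^2$ replaced by $D_{\hat K}^2(x,y)=\|\phi(x)-\phi(y)\|^2$; equivalently $-\widehat w(p)=\|\phi(p)-\widehat\mu_P\|^2$ with $\widehat\mu_P$ the RFF centroid. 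So the difference is
\[
\widehat w(p)-w(p)=-\frac{1}{|P|}\sum_{y}\bigl(D_{\hat K}^2-D_K^2\bigr)(p,y)+\frac{1}{2|P|^2}\sum_{x,y}\bigl(D_{\hat K}^2-D_K^2\bigr)(x,y).
\]

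On the event of Theorem~\ref{thm:intro_uniform} (with $\varepsilon$ renamed so that the distortion is $\varepsilon$ rather than $2\varepsilon$, or absorbing the factor $2$), each bracket satisfies $|D_{\hat K}^2-D_K^2|(x,y)\le \varepsilon D_K^2(x,y)\le 2\varepsilon$, since $D_K^2=2(1-K)\le 2$. We deliberately do \emph{not} try to bound the difference relative to $|w(p)|$ term by term: the two sums enter with opposite signs, so cancellation in $w(p)$ could make a termwise relative bound fail. Instead we bound the difference absolutely. The triangle inequality gives
\[
|\widehat w(p)-w(p)|\le \frac{1}{|P|}\sum_y \varepsilon D_K^2(p,y)+\frac{1}{2|P|^2}\sum_{x,y}\varepsilon D_K^2(x,y)\le 2\varepsilon+\varepsilon = 3\varepsilon .
\]
The sharper form uses $\frac1{|P|}\sum_y D_K^2(p,y)=2-2\langle\psi(p),\mu_P\rangle\le 2$ and $\frac1{2|P|^2}\sum D_K^2=1-\|\mu_P\|^2\le 1$. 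To land exactly on the constant $2$, we would weight the distortion accordingly, or note that the first average equals $-w(p)+(1-\|\mu_P\|^2)$, and track constants so the absolute error is at most $2\varepsilon$.

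Finally, we divide by $|w(p)|$. Lemma~\ref{lem:kernel-weight-lower} gives $|w(p)|\ge(1-\|\mu_P\|)^2$, so
\[
\left|\frac{\widehat w(p)}{w(p)}-1\right|\le \frac{2\varepsilon}{(1-\|\mu_P\|)^2},
\]
which is the claimed two-sided bound. The sign is consistent because both weights are nonpositive, so the ratio is a genuine ratio of same-sign quantities.

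The main obstacle is the constant bookkeeping in the absolute error step. A naive triangle inequality yields $3\varepsilon$ (or $6\varepsilon$ under the $2\varepsilon$-distortion of Theorem~\ref{thm:intro_uniform}) rather than $2\varepsilon$. We would first try to exploit the structure $-w(p)=\|\psi(p)-\mu_P\|^2$. Writing $\widehat w(p)-w(p)$ as $\sum_{x,y}c_{xy}(D_{\hat K}^2-D_K^2)(x,y)$ with coefficients $c_{xy}$ of total positive mass $1$ and negative mass $\tfrac12$, we would bound the positive and negative parts separately. Since $D_{\hat K}^2-D_K^2$ has the same sign bound on both parts, the upper and lower deviations are each controlled by one part only. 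This yields $2\varepsilon$ per side, possibly up to the distortion-rescaling convention, which we would state explicitly.
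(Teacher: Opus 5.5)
Your route is the paper's route. You write $-w(p)=A-B$ with $A=\frac{1}{|P|}\sum_{y}D_K^2(p,y)$ and $B=\frac{1}{2|P|^2}\sum_{x,y}D_K^2(x,y)$, use the pairwise distortion to get $|\widehat w(p)-w(p)|\le\varepsilon(A+B)$, and divide by $|w(p)|\ge(1-\|\mu_P\|)^2$ from Lemma~\ref{lem:kernel-weight-lower}. Up to that point your argument is correct, and it proves the bound with constant $3$ in place of $2$.

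The gap is in your last paragraph. You split the coefficients into a positive part of mass $1$ and a negative part of mass $\tfrac12$ and charge each deviation to one part only. That requires all the errors $D_{\hat K}^2-D_K^2$ to share a sign, but the hypothesis bounds only their magnitudes. The pairs with positive coefficients can be stretched while those with negative coefficients are shrunk; then both parts move $\widehat w(p)$ the same way and add up to $\varepsilon(A+B)$.

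No constant bookkeeping repairs this, because constant $2$ does not follow from pairwise $(1\pm\varepsilon)$ distortion at all. Take $|P|=n$ with $p\in P$, and points so far apart that off-diagonal kernel values are negligible. In the limit $D_K^2=2$ off the diagonal, $\|\mu_P\|^2=1/n$ and $w(p)=-(1-1/n)$. Map the points to unit vectors with inner product $-\varepsilon$ between $p$ and every other point, and $+\varepsilon$ among the others. This is realizable: after flipping the sign of $p$ the Gram matrix is $(1-\varepsilon)I+\varepsilon J$. Every squared distance is then distorted by exactly $1\pm\varepsilon$, yet a direct computation gives $\widehat w(p)/w(p)=1+\varepsilon(3n-4)/n$. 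For $n=100$ this is $1+2.96\,\varepsilon$, while the claimed bound is $1+2\varepsilon/0.81\approx1+2.47\,\varepsilon$.

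The paper's proof has the same defect in a different guise. It reaches constant $2$ only by asserting $(A+B)/|w(p)|\le 2/(1-\|\mu_P\|)^2$ on the strength of a citation, and that inequality fails on the example above, since $A+B\to3$ and $|w(p)|\to1$.

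What is actually provable is $|\widehat w(p)/w(p)-1|\le\varepsilon(A+B)/|w(p)|$. Your own hint finishes this cleanly: $A=|w(p)|+B$ and $B=1-\|\mu_P\|^2$ give $(A+B)/|w(p)|=1+2B/|w(p)|\le(3+\|\mu_P\|)/(1-\|\mu_P\|)\le3/(1-\|\mu_P\|)^2$. That quantity is at most $2/(1-\|\mu_P\|)^2$ only when $\|\mu_P\|\ge\sqrt2-1$.

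So state and prove the lemma with constant $3$, or with $(3+\|\mu_P\|)/(1-\|\mu_P\|)$, rather than searching for a sign argument that cannot exist. Keep explicit the extra factor $2$ that appears if Theorem~\ref{thm:intro_uniform} is applied literally with its $1\pm2\varepsilon$ distortion; the paper's proof also makes that rescaling silently. The downstream $c_P$ and $\varepsilon_\star$ change accordingly.
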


\begin{proof}
Define the following quantities:
\begin{align*}
A &= \frac{1}{|P|}\sum_{y\in P} D_K^2(p,y), &
\widehat{A} &= \frac{1}{|P|}\sum_{y\in P} D_{\hat{K}}^2(p,y), \\
B &= \frac{1}{2|P|^2}\sum_{x,y\in P} D_K^2(x,y), &
\widehat{B} &= \frac{1}{2|P|^2}\sum_{x,y\in P} D_{\hat{K}}^2(x,y).
\end{align*}

By Theorem~\ref{thm:intro_uniform} (applied to pairs in $P\cup\{p\}\subseteq\mathcal{M}$), for all $x,y \in P \cup \{p\}$:
\[
(1-\varepsilon)D_K^2(x,y) \le D_{\hat{K}}^2(x,y) \le (1+\varepsilon)D_K^2(x,y).
\]
This implies:
\begin{align*}
(1-\varepsilon)A &\le \widehat{A} \le (1+\varepsilon)A, \\
(1-\varepsilon)B &\le \widehat{B} \le (1+\varepsilon)B.
\end{align*}

Now consider the witness weights:
\[
w(p) = A - B, \quad \widehat{w}(p) = \widehat{A} - \widehat{B}.
\]
From the bounds above:
\[
(1-\varepsilon)A - (1+\varepsilon)B \le \widehat{A} - \widehat{B} \le (1+\varepsilon)A - (1-\varepsilon)B.
\]
Rewriting:
\[
(A-B) - \varepsilon(A+B) \le \widehat{w}(p) \le (A-B) + \varepsilon(A+B).
\]
Thus:
\[
1 - \varepsilon\left(\frac{A+B}{w(p)}\right) 
\le \frac{\widehat{w}(p)}{w(p)} 
\le 1 + \varepsilon\left(\frac{A+B}{w(p)}\right).
\]

Finally, the bound $\frac{A+B}{w(p)} \le \frac{2}{(1-\|\mu_P\|)^2}=c_P$ is the same estimate used in~\cite{boissonnat2024euclidean} for kernel weights in witness form (Definition~\ref{def:kernel-weights-avg}); Lemma~\ref{lem:kernel-weight-lower} ensures $w(p)\neq 0$.
\end{proof}

We now combine the previous lemma on relative approximation of kernel weights with the Simplex Distortion Lemma~\cite{boissonnat2024euclidean} to establish the desired theorem. Theorem~\ref{thm:manifold_interleaving_final} states that the persistent modules induced by the Rips complex and the Čech complex are interleaved---that is, the persistent modules of the actual manifold and of its projected image under the RFF map are interleaved with respect to the kernel distance.

\textbf{Distortion-map reminder.} For convenience, recall that a map $f:(\mathbb{R}^N,D_K)\to(\mathbb{R}^{2t},\|\cdot\|)$ is an $(\varepsilon,0)$-distortion map for the GKPD (Definition~\ref{def:distortion-map}) if it satisfies $(1-\varepsilon)D_K(x,y)^2 \le \|f(x)-f(y)\|^2 \le (1+\varepsilon)D_K(x,y)^2$ for all $x,y$ and $|w(f(x)) - w(x)| \le \varepsilon\,|w(x)|$ for all $x$. In what follows we instantiate $f$ with the RFF map $\phi$ and set the effective parameter to $\varepsilon_\star$ derived below.

\begin{proof}[Proof of Theorem~\ref{thm:manifold_interleaving_final}]
This is the detailed proof of the interleaving statement already stated as Theorem~\ref{thm:intro_interleaving} in the introduction (under the same hypotheses and with the same conclusion).
We show that, with high probability, $\phi$ induces a purely multiplicative distortion on the kernel power distance sufficient to invoke the Simplex Distortion Lemma in a multiplicative form.

\textbf{Pairwise control (uniform on $\mathcal{M}$).}
By Theorem~\ref{thm:intro_uniform}, with probability at least $1-\tfrac{\delta_0}{2}$,
\[
(1-2\varepsilon)\,D_{K_\sigma}^2(p,q)
\;\le\;
\|\phi(p)-\phi(q)\|^2
\;\le\;
(1+2\varepsilon)\,D_{K_\sigma}^2(p,q),
\qquad \forall\,p,q\in\mathcal{M}.
\]
This provides a \emph{pairwise} multiplicative distortion factor of at most $1\pm 2\varepsilon$.

\textbf{Weight control (centroid stability via $c_P$).} Let $w(p)=-\|\psi(p)-\mu_P\|^2$ be the kernel weight and $\widehat{w}(p)=-\|\phi(p)-\mu_{\phi(P)}\|^2$ its RFF estimate.
By Lemma~\ref{lem:stability-kernel-weights},
\[
\Bigl(1-\varepsilon\,c_P\Bigr)\,w(p)
\;\le\;
\widehat{w}(p)
\;\le\;
\Bigl(1+\varepsilon\,c_P\Bigr)\,w(p),
\qquad \forall\,p\in P,
\]
with probability at least $1-\tfrac{\delta_0}{2}$, where
$c_P=\dfrac{2}{(1-\|\mu_P\|)^2}$ and $\|\mu_P\|<1$ for any non-degenerate configuration.
Thus weights experience a multiplicative distortion factor of at most $1\pm \varepsilon c_P$.

\textbf{Unified multiplicative distortion for GKPD.} The Gaussian kernel power distance (GKPD) combines pairwise kernel distances and the pointwise weights.
From Steps 1–2, both ingredients are controlled multiplicatively:
pairwise terms by $\,(1\pm 2\varepsilon)$ and weight terms by $\,(1\pm \varepsilon c_P)$.
Hence, for any expression that is a monotone combination of these terms (in particular, the
squared radius of a weighted simplex under GKPD), the total multiplicative distortion is
bounded by the \emph{worst} of the two factors. Defining
\[
\varepsilon_\star \;:=\; \max\{\,2\varepsilon,\;\varepsilon c_P\,\},
\]
we conclude that $\phi$ is a \emph{purely multiplicative} $(1\pm\varepsilon_\star)$–distortion map
for the GKPD on $P$ with probability at least $1-\delta_0$.
In other words, $\phi$ is an $(\varepsilon_\star,0)$-distortion map in the sense of Definition~\ref{def:distortion-map}, satisfying both pairwise distance and weight preservation with zero additive error ($\eta=0$).

\begin{remark}[Unified distortion summary]
Pairwise kernel distances admit $(1\pm 2\varepsilon)$ multiplicative control; kernel weights admit $(1\pm \varepsilon c_P)$ control with $c_P = 2/(1-\|\mu_P\|)^2$. Consequently, any GKPD-based quantity that is a monotone combination of these terms is preserved within $(1\pm \varepsilon_\star)$, where $\varepsilon_\star = \max\{2\varepsilon,\varepsilon c_P\}$. This is the parameter used in the interleaving bounds below.
\end{remark}

\textbf{Simplex distortion in radius and interleaving of Čech filtrations.} Under $\eta=0$, the pairwise and weight bounds above specialize the Simplex Distortion Lemma (Lemma~\ref{lem:simplex-distortion}) to the following multiplicative radius control for any weighted simplex
$\widehat{\sigma}\subseteq \widehat{P}$:
\[
(1-\varepsilon_\star)\,\mathrm{rad}^2(\widehat{\sigma})
\;\le\;
\mathrm{rad}^2\!\bigl(\phi(\widehat{\sigma})\bigr)
\;\le\;
(1+\varepsilon_\star)\,\mathrm{rad}^2(\widehat{\sigma}).
\]
This yields an inclusion of balls (hence of nerves) at scaled radii, giving the
multiplicative interleaving of filtrations:
\[
\check{C}_{\alpha}(\widehat{P})
\;\subseteq\;
\check{C}_{(1+\varepsilon_\star)\alpha}\bigl(\phi(P)\bigr)
\;\subseteq\;
\check{C}_{(1+\varepsilon_\star)^2\alpha}(\widehat{P}).
\]
Equivalently, the weighted Čech filtrations are $(1\pm\varepsilon_\star)$–interleaved.

\textbf{Consequence for weighted Rips filtrations.} The weighted Rips filtration depends solely on pairwise kernel distances and pointwise weights, each already controlled multiplicatively by Steps~1–2. Therefore the same $(1\pm\varepsilon_\star)$ multiplicative bound transfers directly, proving that $VR_{\alpha}(\widehat{P})$ and $VR_{\alpha}\bigl(\phi(P)\bigr)$ are $(1\pm\varepsilon_\star)$–interleaved.

Combining the probabilistic guarantees of Steps~1–2 via a union bound completes the proof.
\end{proof}

\section{Pointwise Approximation of Kernel Values}
\label{Kernel Value Differences}
\label{sec:app-kernel-values}

This section records the analysis of absolute approximation error for Gaussian kernel \emph{values} under Random Fourier Features.
Theorem~\ref{thm:additive_uniform} is stated in the Introduction (Section~\ref{subsec:intro_main_result}); Section~\ref{subsec:proof_overview} sketches the same net-and-patch idea in informal form.
Below we develop the lemmas, then give the complete proof at the end of this section.

\subsection{Concentration Bounds for Kernel Value Approximation}
\label{subsec: Bounding difference at the center of the ball}

In addition to relative error bounds for kernel distances, many kernel methods require control over the absolute approximation error of the kernel function itself. We establish pointwise concentration bounds for the Gaussian kernel approximation via Random Fourier Features (RFF), following the approach in~\cite{rahimi2007random}.

\begin{lemma}[Pointwise kernel approximation error]
\label{lem: diff bound}
Let \( \Delta \in \mathbb{R}^N \) be fixed and define the kernel approximation error:
\[
F(\Delta) := K(\Delta) - \hat{K}(\Delta) = \exp\!\left(-\tfrac{\|\Delta\|^2}{2}\right) - \frac{1}{t}\sum_{j=1}^t \cos(\langle \omega^j, \Delta \rangle),
\]
where \( \omega^j \sim \mathcal{N}(0,I_N) \) are the RFF frequencies. Then for any \( \varepsilon > 0 \), the approximation error satisfies:
\[
\mathbb{P}\left( |F(\Delta)| \ge \varepsilon \right) \le 2 \exp\left( -\frac{t \varepsilon^2}{8} \right).
\]
\end{lemma}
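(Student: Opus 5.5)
The plan is a direct Hoeffding argument, parallel to the proof of Lemma~\ref{lem:ratio bound at a point}. I will write $F(\Delta)$ as an average of i.i.d.\ mean-zero bounded terms. Set
\[
X_j := e^{-\|\Delta\|^2/2} - \cos\langle \omega^j,\Delta\rangle, \qquad j=1,\dots,t,
\]
so that $F(\Delta)=\frac1t\sum_{j=1}^t X_j$. The $X_j$ are i.i.d.\ because the $\omega^j$ are.

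\textbf{Step 1: zero mean.} The first step is to check that $\mathbb{E}[X_j]=0$, i.e.\ that $\mathbb{E}[\cos\langle\omega,\Delta\rangle]=e^{-\|\Delta\|^2/2}$ for $\omega\sim\mathcal{N}(0,I_N)$. This is exactly the Bochner identity~\eqref{eq:bochner} with $\sigma=1$. It can also be seen directly: $\langle\omega,\Delta\rangle\sim\mathcal{N}(0,\|\Delta\|^2)$, and the characteristic function of a centered Gaussian gives $\mathbb{E}[e^{i\langle\omega,\Delta\rangle}]=e^{-\|\Delta\|^2/2}$; taking the real part gives the claim.

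\textbf{Step 2: boundedness and Hoeffding.} Each summand satisfies $|X_j|\le 2$, and more precisely $X_j$ takes values in the interval $[e^{-\|\Delta\|^2/2}-1,\ e^{-\|\Delta\|^2/2}+1]$, which has length $2$. The classical Hoeffding inequality for the mean of $t$ independent variables with ranges of length $b_j-a_j=2$ gives
\[
\mathbb{P}\Bigl(\Bigl|\tfrac1t\textstyle\sum_j X_j\Bigr|\ge\varepsilon\Bigr)\le 2\exp\Bigl(-\frac{2t^2\varepsilon^2}{\sum_j (b_j-a_j)^2}\Bigr)=2\exp\Bigl(-\frac{2t^2\varepsilon^2}{4t}\Bigr)=2\exp\Bigl(-\frac{t\varepsilon^2}{2}\Bigr).
\]
This is stronger than the claimed $2\exp(-t\varepsilon^2/8)$, so the stated bound follows.

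An alternative is the sub-Gaussian route of Lemma~\ref{lemma:generalized-hoeffding}, applied with $\|X_j\|_{\psi_2}\lesssim 2$. I will avoid it, because its unspecified constant $c_H$ would obscure the explicit $1/8$.

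\textbf{Main point of care.} There is no real obstacle here. The only thing to watch is to invoke the range-based Hoeffding inequality with explicit constants rather than the $\psi_2$ version, so that the constant $1/8$ (or better) comes out cleanly. One should also note that the bound holds for every fixed $\Delta$, including $\Delta=0$, where $F\equiv 0$. Unlike the ratio $R$, no lower bound on $\|\Delta\|$ is needed.
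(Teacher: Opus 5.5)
Your proof is correct and follows the paper's argument: both center the cosine terms using $\mathbb{E}[\cos\langle\omega,\Delta\rangle]=e^{-\|\Delta\|^2/2}$ and then apply Hoeffding to the bounded i.i.d.\ summands. The paper phrases this through sub-Gaussian norms, whereas your range-based version keeps the constants explicit and gives the stronger bound $2\exp(-t\varepsilon^2/2)$, which implies the stated $2\exp(-t\varepsilon^2/8)$.
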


\begin{proof}[Proof of Lemma~\ref{lem: diff bound}]\label{app:proof-lem-diff-bound}
The proof follows from standard concentration arguments:
\begin{enumerate}
    \item Each term \( \cos(\langle \omega^j, \Delta \rangle) \) is a zero-mean random variable since \( \mathbb{E}[\cos(\langle \omega, \Delta \rangle)] = \exp(-\|\Delta\|^2/2) = K(\Delta) \) for \( \omega \sim \mathcal{N}(0,I_N) \).
    
    \item The cosine terms are bounded: \( |\cos(\langle \omega^j, \Delta \rangle)| \le 1 \), making them 1-sub-Gaussian.
    
    \item The average \( \frac{1}{t}\sum_{j=1}^t \cos(\langle \omega^j, \Delta \rangle) \) is therefore \( 1/\sqrt{t} \)-sub-Gaussian.
    
    \item Applying Hoeffding's inequality to the centered sum \( F(\Delta) \) yields the stated bound.
\end{enumerate}
\end{proof}

\subsection{Gradient Concentration for Kernel Approximation Error}
\label{Subsec: Bounding the gradient of the difference at the center of the ball}

To understand local approximation errors, we analyze the gradient of the kernel difference function. Consider the Gaussian kernel and its RFF approximation:

\begin{align*}
K(\Delta) &= \exp\!\left(-\tfrac{\|\Delta\|^2}{2}\right), \\
\hat{K}(\Delta) &= \frac{1}{t}\sum_{j=1}^t \cos(\langle \omega^j, \Delta \rangle),
\end{align*}
where $\omega^j \sim \mathcal{N}(0,I_N)$. The approximation error is:

\[
F(\Delta) := K(\Delta) - \hat{K}(\Delta) = \frac{1}{t}\sum_{j=1}^t Y^j(\Delta),
\]
where $Y^j(\Delta) = \exp(-\|\Delta\|^2/2) - \cos(\langle \omega^j, \Delta \rangle)$.

We analyze the partial derivatives of $F$ coordinate-wise. For any $i \in \{1,\dots,N\}$:

\begin{equation}
\nabla_i F(\Delta) = \frac{1}{t}\sum_{j=1}^t \nabla_i Y^j(\Delta),
\end{equation}
where
\[
\nabla_i Y^j(\Delta) = -\Delta_i e^{-\|\Delta\|^2/2} + \omega_i^j \sin(\langle \omega^j, \Delta \rangle).
\]

\begin{lemma}[Gradient concentration]
\label{lem:derivative_of_diff_bound}
Let $\Delta \in \mathbb{R}^N$ and $i \in \{1,\dots,N\}$. For any $\varepsilon > 0$,
\[
\mathbb{P}\left( \left| \nabla_i F(\Delta) \right| \geq \varepsilon \right)
\leq 2 \exp\left( -\frac{c_H t \varepsilon^2}{9} \right),
\]
where $c_H > 0$ is an absolute constant.
\end{lemma}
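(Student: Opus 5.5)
The plan follows the proof of Lemma~\ref{Derivative of the ratio is small}, but here there is no denominator to handle. Fix $\Delta$ and the coordinate $i$, and write
\[
Z_j := \nabla_i Y^j(\Delta) = \omega_i^j \sin(\langle \omega^j,\Delta\rangle) - \Delta_i e^{-\|\Delta\|^2/2}, \qquad j=1,\dots,t .
\]
These are i.i.d.\ because the $\omega^j$ are. We have $\nabla_i F(\Delta) = \frac1t\sum_j Z_j$. The argument has two parts: show that $Z_j$ has mean zero, then bound its sub-Gaussian norm by an absolute constant, so that Lemma~\ref{lemma:generalized-hoeffding} gives the tail bound.

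\textbf{Step 1 (zero mean).} Apply Stein's lemma (Lemma~\ref{lem:stein}) with $g=\omega^j\sim\mathcal N(0,I_N)$ and $f(\omega)=\sin(\langle\omega,\Delta\rangle)$. Both $f$ and $\nabla f$ are bounded, so they are integrable. Taking the $i$-th coordinate gives
\[
\mathbb E[\omega_i\sin\langle\omega,\Delta\rangle]=\mathbb E[\Delta_i\cos\langle\omega,\Delta\rangle]=\Delta_i e^{-\|\Delta\|^2/2},
\]
using the Bochner identity \eqref{eq:bochner} with $\sigma=1$. Hence $\mathbb E[Z_j]=0$.

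\textbf{Step 2 (sub-Gaussian norm).} Since $|\omega_i\sin(\cdot)|\le|\omega_i|$ and $\omega_i\sim\mathcal N(0,1)$, the first term satisfies $\|\omega_i\sin\langle\omega,\Delta\rangle\|_{\psi_2}\le\|\omega_i\|_{\psi_2}$, an absolute constant. The second term is deterministic. Its size is controlled by
\[
|\Delta_i|e^{-\|\Delta\|^2/2}\le \|\Delta\|e^{-\|\Delta\|^2/2}\le e^{-1/2}\le 1,
\]
where we use $\sup_{s\ge0} s e^{-s^2/2}=e^{-1/2}$. This bound is uniform in $\Delta$, which is why no lower bound on $\|\Delta\|$ is needed here, unlike in Lemma~\ref{Derivative of the ratio is small}. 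By the triangle inequality for $\|\cdot\|_{\psi_2}$, $\|Z_j\|_{\psi_2}\le 3$ once the constants are normalised in the same way as in the proof of Lemma~\ref{Derivative of the ratio is small}. Re-centering would at most change this constant by an absolute factor, which can be absorbed into $c_H$.

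\textbf{Step 3 (Hoeffding).} Apply Lemma~\ref{lemma:generalized-hoeffding} to $X_j := Z_j/t$, which has $\|X_j\|_{\psi_2}\le 3/t$. With $k=t$ this gives
\[
\Pr\Bigl(\Bigl|\tfrac1t\textstyle\sum_j Z_j\Bigr|\ge\varepsilon\Bigr)\le 2\exp\Bigl(-\frac{c_H\varepsilon^2}{t\cdot 9/t^2}\Bigr)=2\exp\Bigl(-\frac{c_H t\varepsilon^2}{9}\Bigr),
\]
which is the claimed bound.

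The only delicate point is Step 2. It requires an explicit constant $3$ for $\|Z_j\|_{\psi_2}$, which depends on the paper's normalisation of $\|\mathcal N(0,1)\|_{\psi_2}$. I would first try to bound the random term by $2$ and the deterministic term by $1$, and, if needed, absorb any discrepancy into the universal constant $c_H$.
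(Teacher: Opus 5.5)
Your proposal is correct and follows the paper's proof essentially step for step: Stein's lemma gives the zero mean, the triangle inequality gives $\|Z_j\|_{\psi_2}\le 2+1=3$, and Lemma~\ref{lemma:generalized-hoeffding} applied to $Z_j/t$ with $K=3/t$ yields the stated bound. If anything, your version is slightly cleaner than the paper's. You apply Stein's lemma directly to $f(\omega)=\sin\langle\omega,\Delta\rangle$, whereas the paper's intermediate expression $\mathbb{E}[\partial_{\omega_i}(-\cos\langle\omega,\Delta\rangle)]$ is a slip. You also justify the deterministic term's bound via $\sup_{s\ge 0} s e^{-s^2/2}=e^{-1/2}$, which the paper only asserts.
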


\begin{proof}[Proof of Lemma~\ref{lem:derivative_of_diff_bound}]\label{app:proof-lem-derivative-diff}
Each term $\nabla_i Y^j(\Delta)$ has mean zero since:
\begin{align*}
\mathbb{E}[\omega_i^j \sin(\langle \omega^j, \Delta \rangle)] 
&= \mathbb{E}\left[\frac{\partial}{\partial \omega_i^j} (-\cos(\langle \omega^j, \Delta \rangle))\right] \\
&= \Delta_i \mathbb{E}[\cos(\langle \omega^j, \Delta \rangle)] \quad \text{(by Stein's lemma)} \\
&= \Delta_i e^{-\|\Delta\|^2/2}.
\end{align*}

The sub-Gaussian norm of $\nabla_i Y^j(\Delta)$ satisfies:
\[
\|\nabla_i Y^j(\Delta)\|_{\psi_2} \leq \|\omega_i^j\|_{\psi_2} + \|\Delta_i e^{-\|\Delta\|^2/2}\|_{\psi_2} \leq 2 + 1 = 3.
\]

Applying the generalized Hoeffding inequality (Lemma~\ref{lemma:generalized-hoeffding}) with $K = 3$ yields the claimed bound.
\end{proof}

\subsection{Lipschitz Continuity of the Kernel Error Gradient}
\label{subsec: Bounding Lipschitz constant of the derivative of the diff}

To extend local error bounds uniformly across a ball, we require control over how rapidly the gradient of the kernel difference varies. This is quantified through the following Lipschitz constant bound.

\begin{lemma}[Lipschitz Constant of Kernel Difference Gradient]
\label{lem:lip_diff}
Let \( \Delta, \Delta' \in \mathbb{R}^N \) be arbitrary points. For each coordinate direction \( i \in \{1, \dots, N\} \), the partial derivatives of the kernel difference \( F \) satisfy
\[
\left| \frac{\partial}{\partial \Delta_i} F(\Delta) - \frac{\partial}{\partial \Delta_i} F(\Delta') \right|
\le C_L' \|\Delta - \Delta'\|
\]
where the Lipschitz constant \( C_L' \) satisfies
\[
C_L' \le 2d + 3
\]
with probability at least \( 1 - \exp(-ct) \) for some absolute constant \( c > 0 \).
\end{lemma}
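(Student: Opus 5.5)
The plan is to split $\partial_i F$ into its deterministic part and its random-feature part, and bound the increment of each separately. Write
\[
\partial_i F(\Delta) = -g_i(\Delta) + \frac1t\sum_{j=1}^t \omega^j_i \sin\langle\omega^j,\Delta\rangle,
\qquad g_i(\Delta):=\Delta_i e^{-\|\Delta\|^2/2},
\]
so that
\[
\bigl|\partial_i F(\Delta)-\partial_i F(\Delta')\bigr|\le |g_i(\Delta)-g_i(\Delta')| + \frac1t\sum_{j=1}^t |\omega^j_i|\,\bigl|\sin\langle\omega^j,\Delta\rangle-\sin\langle\omega^j,\Delta'\rangle\bigr|.
\]

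\textbf{Deterministic term.} I will apply the mean value theorem along the segment $[\Delta',\Delta]$. Since $\nabla g_i(x)=e_i e^{-\|x\|^2/2}-x_i\,x\,e^{-\|x\|^2/2}$, we get $\|\nabla g_i(x)\|\le 1+\|x\|^2e^{-\|x\|^2/2}\le 1+2/e<3$ for every $x\in\R^N$. Hence $|g_i(\Delta)-g_i(\Delta')|\le 3\|\Delta-\Delta'\|$ deterministically, for all $\Delta,\Delta'$. This supplies the additive constant $3$ in $C_L'\le 2d+3$.

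\textbf{Random term.} Put $u:=\Delta-\Delta'$. Since $|\sin a-\sin b|\le|a-b|$, each summand is at most $|\omega^j_i|\,|\langle\omega^j,u\rangle|$. I will deliberately \emph{not} use the crude bound $|\langle\omega^j,u\rangle|\le\|\omega^j\|\|u\|$, because it produces the factor $N$ seen in Lemma~\ref{Lipschitz Constant of derivative of ratio}. Instead, Cauchy--Schwarz over $j$ gives
\[
\frac1t\sum_j|\omega^j_i||\langle\omega^j,u\rangle|\le\Bigl(\frac1t\sum_j(\omega^j_i)^2\Bigr)^{1/2}\Bigl(\frac1t\sum_j\langle\omega^j,u\rangle^2\Bigr)^{1/2}.
\]
The first factor is an average of $t$ squared standard normals. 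By Lemma~\ref{Concentration for Gaussian Norm Sums} (with ambient dimension $1$ and $L=t$), it is at most $2$ except with probability $2e^{-t/C}$.

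For the second factor, the vector $u$ lies in the span $V$ of the difference directions. For a single fixed pair, $V$ is one-dimensional and $\langle\omega^j,u\rangle\sim\mathcal N(0,\|u\|^2)$. More generally, when the pairs considered range over a $d$-dimensional subspace $V$ (the tangent-difference setting in which the lemma is used), Lemma~\ref{Probabilistic Cauchy-Schwarz in Subspaces} gives
\[
\frac1t\sum_j\langle\omega^j,u\rangle^2\le 2d\|u\|^2\quad\text{simultaneously for all }u\in V,
\]
except with probability $2e^{-td/C}\le 2e^{-t/C}$. On the intersection of the two good events, the random term is therefore at most $\sqrt2\cdot\sqrt{2d}\,\|u\|=2\sqrt d\,\|u\|\le 2d\|u\|$, using $d\ge 1$.

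\textbf{Conclusion and main obstacle.} Adding the two bounds gives $C_L'\le 2d+3$. A union bound over the two good events gives failure probability $4e^{-t/C}\le e^{-ct}$ after adjusting the constant. The main obstacle is that the random term must not pick up ambient dimension. A naive bound on $\|\omega^j\|$, or on the operator norm of $\frac1t\sum_j\omega^j{\omega^j}^{\top}$ over all of $\R^N$, costs a factor $N$. The decisive step is therefore the Cauchy--Schwarz split above, which reduces everything to one coordinate $\omega_i$ and the one projection $\langle\omega,u\rangle$, where the subspace lemma controls the latter. I would double-check that the quantifiers match the statement as written: for an arbitrary given pair $\Delta,\Delta'\in\R^N$ the bound holds with the stated probability, and the subspace lemma makes it uniform over all pairs whose difference lies in a fixed $d$-dimensional subspace.
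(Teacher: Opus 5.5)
Your proof is correct, and it takes a different route from the paper's.

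\textbf{The paper's route.} The paper bounds the full Hessian, $\|\nabla^2F(\Delta)\|_{\mathrm{op}}\le\|\Delta\|^2+1+\frac1t\sum_j\|\omega^j\|^2$. It then uses $\frac1t\sum_j\|\omega^j\|^2\le 2N$ and restricts to $\|\Delta\|\le1$. As written, that argument only delivers $C_L'\le 2N+3$ on the unit ball, not the stated $2d+3$.

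\textbf{What your route changes.} Your Cauchy--Schwarz split isolates the single coordinate $\omega_i$ and the single projection $\langle\omega,u\rangle$. Combined with Lemma~\ref{Probabilistic Cauchy-Schwarz in Subspaces}, this is what makes the intrinsic dimension appear; you even get $2\sqrt d+3$. Your mean-value bound on $x_ie^{-\|x\|^2/2}$ also removes the unit-ball restriction.

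\textbf{The trade-off.} The paper's route gives a bound uniform over all pairs in the unit ball of $\R^N$, at the price of a factor $N$. Yours gives an $N$-free constant. It is uniform only over pairs with $\Delta-\Delta'$ in a fixed $d$-dimensional subspace, or holds pointwise for a fixed pair.

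\textbf{Why the restriction is necessary.} This restriction is not a weakness of your method. At $\Delta=0$ the random part of $\nabla\partial_iF$ is $\frac1t\sum_j\omega^j_i\omega^j$, whose norm is about $\sqrt{(N-1)/t}$. So no $N$-independent constant can hold uniformly over a full-dimensional neighbourhood when $N\gg t d^2$. Your reading of the quantifiers is therefore the right one, and it matches how the lemma is used: balls inside $d$-dimensional affine flats in Lemma~\ref{Local Uniform Control in Euclidean Balls}.

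\textbf{Minor point.} Absorbing the failure probability $4e^{-t/C}$ into $e^{-ct}$ requires $t$ to exceed an absolute constant. This is the same looseness the paper's own probability statements have.
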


\begin{proof}[Proof of Lemma~\ref{lem:lip_diff}]\label{app:proof-lem-lip-diff}
The proof proceeds by analyzing the coordinate-wise differences in gradients. For the kernel difference function
\[
F(\Delta) = e^{-\|\Delta|^2/2} - \frac{1}{t}\sum_{j=1}^t \cos(\langle \omega^j, \Delta \rangle),
\]
we compute the second partial derivatives. The Hessian matrix \( \nabla^2 F \) has entries:

1. For the Gaussian term:
\[
\frac{\partial^2}{\partial \Delta_i \partial \Delta_k} e^{-|\Delta|^2/2} = (\Delta_i \Delta_k - \delta_{ik}) e^{-|\Delta|^2/2}
\]

2. For the RFF term:
\[
\frac{\partial^2}{\partial \Delta_i \partial \Delta_k} \left( -\frac{1}{t}\sum_{j=1}^t \cos(\langle \omega^j, \Delta \rangle) \right) = \frac{1}{t}\sum_{j=1}^t \omega^j_i \omega^j_k \cos(\langle \omega^j, \Delta \rangle)
\]

The operator norm of the Hessian can be bounded by:
\[
\|\nabla^2 F(\Delta)\|_{\text{op}} \leq \max_{\|v\|=1} \left( |\langle \Delta, v \rangle|^2 + \|v\|^2 + \frac{1}{t}\sum_{j=1}^t \|\omega^j\|^2 \right)
\]

Using concentration of measure for Gaussian vectors, with high probability:
\[
\frac{1}{t}\sum_{j=1}^t \|\omega^j\|^2 \leq 2N
\]
and \( |\langle \Delta, v \rangle|^2 \leq \|\Delta\|^2 \leq 1 \) within the unit ball. Therefore:
\[
\|\nabla^2 F(\Delta)\|_{\text{op}} \leq 1 + 1 + 2N = 2N + 2
\]

The mean value theorem then gives the coordinate-wise Lipschitz bound:
\[
\left| \frac{\partial F}{\partial \Delta_i}(\Delta) - \frac{\partial F}{\partial \Delta_i}(\Delta') \right| \leq (2N + 3)\|\Delta - \Delta'\|
\]
where we've absorbed the additional constant from the diagonal terms. The high probability statement follows from standard Gaussian concentration.
\end{proof}

\subsection{Uniform Approximation over the Manifold}
\label{subsec:app-kernel-uniform-manifold}

The proof of the \emph{additive} approximation for the kernel value follows the same geometric roadmap as the ratio analysis in \S\ref{subsec:lifting_ratio}, but it is technically simpler.

\subsection{Proof of Theorem~\ref{thm:additive_uniform}}
\label{subsec:proof-additive-uniform}

\begin{proof}[Proof of Theorem~\ref{thm:additive_uniform}]\label{app:proof-thm-additive-uniform}
The key observation is that the error function
\(F(\Delta)=K_{\sigma}(\Delta)-\widehat K(\Delta)\) is smooth at
\(\Delta=0\); hence we can dispense entirely with the ``small-vs-large
$\|\Delta\|$'' case split that dominated the ratio argument.  We first get
\(\varepsilon\)-additive bound for \(F\) on each Euclidean ball inside
an affine copy of a tangent space \(T_{p}\man\).  Next, for every
ordered pair \((p,q)\) in an $r$-net \(\Gamma_{r}\subset\man\), the
difference patch
\(B_{\man}(p,r)-B_{\man}(q,r)\) is shown—via
Lemma~\ref{lem:diff-quadratic} and the positive-reach condition—to lie
inside an \(O(r^{2}/\rch(\mathcal{M}))\) tube around the flat
\(T_{p}\man-T_{q}\man\); the Lipschitz continuity of \(F\) therefore controls the local error.  Choosing
\(r=\rch(\mathcal{M})\,\varepsilon/(12N)\) and combining these two ingredients with
the Euclidean affine-ball estimate yields the uniform bound
\(|F(\Delta)|\le 2\varepsilon\) on every patch.  A final union bound
over all \(|\Gamma_{r}|^{2}\) pairs, together with the net-size
estimate of Lemma~\ref{prop:manifold-net}, establishes
Theorem~\ref{thm:additive_uniform}.
\end{proof}

 \section{Application details}
\label{app:applications}

\subsection{A manifold version of the kernel \texorpdfstring{$k$}{k}-means implication}
\label{app:manifold-chen-phillips-kmeans}

\begin{corollary}
\label{cor:kernel-kmeans}
Let
\[
  P=\{p_1,\ldots,p_n\}\subset \mathcal M\subset\mathbb R^N,
\]
where \(\mathcal M\) is a compact \(d\)-dimensional \(C^2\) submanifold with positive reach.  Let \(K\) be the Gaussian kernel, let
\(\Psi:\mathcal M\to\mathcal H_K\) be its RKHS feature map, and let
\(\phi:\mathbb R^N\to\mathbb R^{2t}\) be the RFF map from Theorem~\ref{thm:intro_uniform}.  Assume that
\[
  t
  \;\ge\;
  C\,\frac{d}{\varepsilon^2}
  \log\!\left(
      \frac{\operatorname{vol}(\mathcal M)^2 N^{2d}}
           {\operatorname{vol}(B_1^d(0))^2
            \operatorname{rch}(\mathcal M)^{2d}
            \varepsilon^{2d+1}\delta}
  \right)
\]
for the constant \(C\) in Theorem~\ref{thm:intro_uniform}.  Then, with probability at least \(1-\delta\), the following holds simultaneously for every partition
\(\Pi=\{C_1,\ldots,C_k\}\) of \(P\):
\[
  (1-2\varepsilon)\operatorname{cost}_K(\Pi)
  \;\le\;
  \operatorname{cost}_{\phi}(\Pi)
  \;\le\;
  (1+2\varepsilon)\operatorname{cost}_K(\Pi),
\]
where
\[
  \operatorname{cost}_K(\Pi)
  :=
  \sum_{j=1}^k\sum_{p\in C_j}
  \|\Psi(p)-\mu_j\|_{\mathcal H_K}^2,
  \qquad
  \mu_j:=\frac1{|C_j|}\sum_{q\in C_j}\Psi(q),
\]
and
\[
  \operatorname{cost}_{\phi}(\Pi)
  :=
  \sum_{j=1}^k\sum_{p\in C_j}
  \|\phi(p)-\widehat\mu_j\|_2^2,
  \qquad
  \widehat\mu_j:=\frac1{|C_j|}\sum_{q\in C_j}\phi(q).
\]
Hence, if a Euclidean clustering algorithm applied to \(\phi(P)\subset\mathbb R^{2t}\) returns a partition \(\widehat\Pi\) satisfying
\[
  \operatorname{cost}_{\phi}(\widehat\Pi)
  \le
  \rho\min_{\Pi}\operatorname{cost}_{\phi}(\Pi),
\]
then the same partition satisfies
\[
  \operatorname{cost}_K(\widehat\Pi)
  \le
  \rho\,\frac{1+2\varepsilon}{1-2\varepsilon}
  \min_{\Pi}\operatorname{cost}_K(\Pi).
\]
Here \(\rho\ge 1\) denotes the approximation factor of the Euclidean
clustering algorithm applied to the embedded point set \(\phi(P)\); that is,
the algorithm returns a partition whose RFF-space \(k\)-means cost is at most
\(\rho\) times the optimal RFF-space \(k\)-means cost.
\end{corollary}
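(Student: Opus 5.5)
The plan is to reduce the $k$-means cost to a weighted sum of pairwise squared distances, in both the RKHS and the RFF space, and then apply Theorem~\ref{thm:intro_uniform} term by term. The key identity is the standard one for the squared distance to a centroid, valid in any inner-product space. For a finite cluster $C$ with feature vectors $x_p$ and mean $\mu=\frac{1}{|C|}\sum_{q\in C}x_q$,
\[
\sum_{p\in C}\|x_p-\mu\|^2=\frac{1}{2|C|}\sum_{p,q\in C}\|x_p-x_q\|^2 .
\]
Apply it in $\mathcal H_K$ with $x_p=\Psi(p)$; there $\|\Psi(p)-\Psi(q)\|_{\mathcal H_K}^2=K(p,p)+K(q,q)-2K(p,q)=D_K(p,q)^2$. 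Apply it again in $\mathbb R^{2t}$ with $x_p=\phi(p)$. Summing over clusters gives
\[
\operatorname{cost}_K(\Pi)=\sum_j\frac{1}{2|C_j|}\sum_{p,q\in C_j}D_K(p,q)^2,\qquad
\operatorname{cost}_\phi(\Pi)=\sum_j\frac{1}{2|C_j|}\sum_{p,q\in C_j}\|\phi(p)-\phi(q)\|^2 .
\]
Both costs are therefore the same nonnegative combination of pairwise quantities, and the coefficients depend only on the partition.

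Next I invoke Theorem~\ref{thm:intro_uniform} with the stated $t$ and the same $C$. On a single event of probability at least $1-\delta$, every pair $p,q\in\mathcal M$, and in particular every pair in $P$, satisfies
\[
(1-2\varepsilon)D_K(p,q)^2\le\|\phi(p)-\phi(q)\|^2\le(1+2\varepsilon)D_K(p,q)^2 .
\]
Multiplying by the nonnegative coefficients $1/(2|C_j|)$ and summing yields the two-sided cost bound. This bound holds simultaneously for every partition $\Pi$, because the event does not depend on $\Pi$; this is exactly why a uniform-over-pairs guarantee is needed rather than a per-partition one. If $\sigma\neq1$, the rescaling remark in Section~\ref{subsec:proof_overview} reduces to unit bandwidth.

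For the approximation transfer, let $\Pi^*$ minimize $\operatorname{cost}_K$. Then
\[
\operatorname{cost}_K(\widehat\Pi)\le\frac{\operatorname{cost}_\phi(\widehat\Pi)}{1-2\varepsilon}\le\frac{\rho\min_\Pi\operatorname{cost}_\phi(\Pi)}{1-2\varepsilon}\le\frac{\rho\operatorname{cost}_\phi(\Pi^*)}{1-2\varepsilon}\le\rho\frac{1+2\varepsilon}{1-2\varepsilon}\operatorname{cost}_K(\Pi^*),
\]
which needs $\varepsilon<1/2$ so that $1-2\varepsilon>0$.

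There is no real obstacle here. The only point requiring care is the centroid identity, which removes the dependence on centroids. The RKHS centroid $\mu_j$ generally has no preimage, so a direct comparison of $\|\Psi(p)-\mu_j\|$ with $\|\phi(p)-\widehat\mu_j\|$ would not follow from pairwise distance preservation alone. The pairwise reformulation avoids this issue entirely.
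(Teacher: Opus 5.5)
Your proposal is correct and takes essentially the same route as the paper. Both use the variance (centroid) identity in $\mathcal H_K$ and in $\mathbb R^{2t}$, apply the uniform pairwise bound of Theorem~\ref{thm:intro_uniform} termwise on a single partition-independent event, and finish with the same four-step chain for the approximation transfer. Your remarks that $\varepsilon<1/2$ is needed and that bandwidth rescaling reduces to $\sigma=1$ are small refinements the paper leaves implicit.
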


\begin{proof}
By Theorem~\ref{thm:intro_uniform}, with probability at least \(1-\delta\),
\[
  (1-2\varepsilon)D_K(p,q)^2
  \le
  \|\phi(p)-\phi(q)\|_2^2
  \le
  (1+2\varepsilon)D_K(p,q)^2
\]
for all \(p,q\in\mathcal M\), where
\[
  D_K(p,q)
  =
  \|\Psi(p)-\Psi(q)\|_{\mathcal H_K}.
\]
Fix a cluster \(C\subset P\).  The Hilbert-space variance identity gives
\[
  \sum_{p\in C}\|\Psi(p)-\mu_C\|_{\mathcal H_K}^2
  =
  \frac1{2|C|}\sum_{p,q\in C}
  \|\Psi(p)-\Psi(q)\|_{\mathcal H_K}^2
  =
  \frac1{2|C|}\sum_{p,q\in C}D_K(p,q)^2.
\]
The same identity in \(\mathbb R^{2t}\) gives
\[
  \sum_{p\in C}\|\phi(p)-\widehat\mu_C\|_2^2
  =
  \frac1{2|C|}\sum_{p,q\in C}
  \|\phi(p)-\phi(q)\|_2^2.
\]
Applying the pairwise distortion inequality inside the last sum yields
\[
  (1-2\varepsilon)
  \sum_{p\in C}\|\Psi(p)-\mu_C\|_{\mathcal H_K}^2
  \le
  \sum_{p\in C}\|\phi(p)-\widehat\mu_C\|_2^2
  \le
  (1+2\varepsilon)
  \sum_{p\in C}\|\Psi(p)-\mu_C\|_{\mathcal H_K}^2.
\]
Summing over all clusters in \(\Pi\) proves the two-sided cost preservation.

For the approximation transfer, let \(\Pi_K^\star\) minimize \(\operatorname{cost}_K\) and let \(\Pi_\phi^\star\) minimize \(\operatorname{cost}_{\phi}\).  Then
\[
\begin{aligned}
  \operatorname{cost}_K(\widehat\Pi)
  &\le
  \frac{1}{1-2\varepsilon}
  \operatorname{cost}_{\phi}(\widehat\Pi) \\
  &\le
  \frac{\rho}{1-2\varepsilon}
  \operatorname{cost}_{\phi}(\Pi_\phi^\star) \\
  &\le
  \frac{\rho}{1-2\varepsilon}
  \operatorname{cost}_{\phi}(\Pi_K^\star) \\
  &\le
  \rho\frac{1+2\varepsilon}{1-2\varepsilon}
  \operatorname{cost}_K(\Pi_K^\star).
\end{aligned}
\]
\end{proof}

\subsection{Kernel distance matching}
\label{app:kernel-distance-matching}

Kernel distances provide a powerful way to compare complex objects such as probability
measures, point clouds, medical images, and shapes.  In particular, if \(K\) is a
Gaussian kernel with RKHS feature map \(\Psi\), an empirical point set
\(X=\{x_1,\ldots,x_n\}\) can be represented by its kernel mean embedding
\[
    \mu_X := \frac1n\sum_{i=1}^n \Psi(x_i),
\]
and two point sets \(X\) and \(Y\) can be compared by the scalar distance
\[
    D_K(X,Y)
    :=
    \|\mu_X-\mu_Y\|_{\mathcal H_K}.
\]
This viewpoint underlies kernel methods for comparing distributions and samples, such as
maximum mean discrepancy~\cite{smola2007hilbert,gretton2012kernel}, and related RKHS
metrics for shapes, currents, and medical images
\cite{glaunes2006large,durrleman2007measuring,joshi2011shape}.  However, the value
\(D_K(X,Y)\) is only a scalar dissimilarity between the two objects.  It is
invariant under relabeling the points of \(Y\), and therefore it does not by itself
produce a pointwise alignment or matching between \(X\) and \(Y\).  This is in contrast
to transport-type distances, where the distance is defined through a coupling or
transport plan.

To recover an alignment, one may instead solve a matching problem with pairwise kernel
distance costs.  For two point sets
\[
    X=\{x_1,\ldots,x_n\},\qquad
    Y=\{y_1,\ldots,y_n\},
\]
define, for each permutation \(\pi\in S_n\),
\[
    C_K(\pi)
    :=
    \sum_{i=1}^n
    D_K(x_i,y_{\pi(i)})^2,
\]
where
\[
    D_K(x,y)
    =
    \|\Psi(x)-\Psi(y)\|_{\mathcal H_K}.
\]
The goal is to find a permutation approximately minimizing \(C_K(\pi)\).
~\cite{chen2017relative} observe that random Fourier features allow this
kernel matching problem to be reduced to an ordinary Euclidean geometric matching
problem.  If \(\phi:\mathbb R^N\to\mathbb R^m\) is an RFF map satisfying
\[
    \|\phi(x)-\phi(y)\|_2^2
    \approx
    D_K(x,y)^2,
\]
then one can solve the Euclidean matching problem
\[
    C_\phi(\pi)
    :=
    \sum_{i=1}^n
    \|\phi(x_i)-\phi(y_{\pi(i)})\|_2^2
\]
using geometric matching algorithms for points in Euclidean space
\cite{sharathkumar2012near,agarwal2014approximation}.  The relative-error guarantee of
the RFF embedding then transfers the approximation guarantee back to the original
kernel matching objective.

Our manifold theorem strengthens this reduction when the point sets are supported on a
low-dimensional manifold.  Suppose
\[
    X\cup Y\subset \mathcal M\subset\mathbb R^N,
\]
where \(\mathcal M\) is a compact \(d\)-dimensional \(C^2\) submanifold with positive
reach.  By Theorem~\ref{thm:intro_uniform}, it suffices to take
\[
    m
    =
    O\!\left(
      \frac{d}{\varepsilon^2}
      \log\!\left(
        \frac{
        \operatorname{vol}(\mathcal M)^2 N^{2d}
        }{
        \operatorname{vol}(B_1^d(0))^2
        \operatorname{rch}(\mathcal M)^{2d}
        \varepsilon^{2d+1}\delta
        }
      \right)
    \right)
\]
features to preserve all Gaussian kernel distances on \(\mathcal M\) up to relative
error \(1\pm O(\varepsilon)\), with probability at least \(1-\delta\).  Thus the
kernel matching reduction can be carried out in a Euclidean space whose
dimension is controlled by the intrinsic geometry of \(\mathcal M\), rather than by the
ambient dimension \(N\).  In particular, for fixed intrinsic geometry, the ambient-ball
feature count
\[
    \widetilde O\!\left(\frac{N}{\varepsilon^2}\right)
\]
is replaced by the manifold feature count
\[
    \widetilde O\!\left(\frac{d^2\log N}{\varepsilon^2}\right).
\]

\begin{corollary}
\label{cor:kernel-matching}
Assume that the RFF map \(\phi\) satisfies
\[
    (1-2\varepsilon)D_K(p,q)^2
    \le
    \|\phi(p)-\phi(q)\|_2^2
    \le
    (1+2\varepsilon)D_K(p,q)^2
\]
for all \(p,q\in\mathcal M\).  Then, for every matching \(\pi\in S_n\),
\[
    (1-2\varepsilon)C_K(\pi)
    \le
    C_\phi(\pi)
    \le
    (1+2\varepsilon)C_K(\pi).
\]
Consequently, if a Euclidean matching algorithm applied to
\(\phi(X),\phi(Y)\subset\mathbb R^m\) has approximation factor \(\rho\ge 1\), meaning
that it returns a permutation \(\widehat\pi\) satisfying
\[
    C_\phi(\widehat\pi)
    \le
    \rho\min_{\pi\in S_n} C_\phi(\pi),
\]
then the same matching satisfies
\[
    C_K(\widehat\pi)
    \le
    \rho\frac{1+2\varepsilon}{1-2\varepsilon}
    \min_{\pi\in S_n} C_K(\pi).
\]
Thus, any \(\rho\)-approximate Euclidean matching algorithm applied after the manifold
RFF embedding yields a
\[
    \rho\frac{1+2\varepsilon}{1-2\varepsilon}
    =
    \rho(1+O(\varepsilon))
\]
approximation to the original Gaussian-kernel matching problem.
\end{corollary}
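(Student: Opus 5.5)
The statement is Corollary~\ref{cor:kernel-matching}, whose hypothesis is already the pairwise two-sided distortion bound on $\mathcal M$. The plan is to use only linearity of the matching cost in the squared pairwise distances, and never return to the probabilistic machinery.

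\textbf{Step 1: cost preservation for a fixed matching.} Fix $\pi\in S_n$. Every summand of $C_\phi(\pi)$ is $\|\phi(x_i)-\phi(y_{\pi(i)})\|_2^2$ with $x_i,y_{\pi(i)}\in X\cup Y\subset\mathcal M$. The hypothesis therefore applies termwise:
\[
(1-2\varepsilon)D_K(x_i,y_{\pi(i)})^2\le\|\phi(x_i)-\phi(y_{\pi(i)})\|_2^2\le(1+2\varepsilon)D_K(x_i,y_{\pi(i)})^2 .
\]
Summing over $i=1,\dots,n$ gives $(1-2\varepsilon)C_K(\pi)\le C_\phi(\pi)\le(1+2\varepsilon)C_K(\pi)$. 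The hypothesis is uniform over $\mathcal M$, so this holds simultaneously for all $\pi$; no union bound over the $n!$ permutations is needed.

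\textbf{Step 2: transfer of the approximation factor.} This is the same chain as in the proof of Corollary~\ref{cor:kernel-kmeans}. Let $\pi_K^\star$ minimize $C_K$ and $\pi_\phi^\star$ minimize $C_\phi$, and assume $\varepsilon<1/2$ so that $1-2\varepsilon>0$. Then
\[
C_K(\widehat\pi)\le\frac{C_\phi(\widehat\pi)}{1-2\varepsilon}\le\frac{\rho\,C_\phi(\pi_\phi^\star)}{1-2\varepsilon}\le\frac{\rho\,C_\phi(\pi_K^\star)}{1-2\varepsilon}\le\rho\,\frac{1+2\varepsilon}{1-2\varepsilon}\,C_K(\pi_K^\star).
\]
The four inequalities use, in order, the lower bound from Step~1, the $\rho$-approximation guarantee, optimality of $\pi_\phi^\star$, and the upper bound from Step~1.

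\textbf{Step 3: the asymptotic form.} The identity $\frac{1+2\varepsilon}{1-2\varepsilon}=1+\frac{4\varepsilon}{1-2\varepsilon}=1+O(\varepsilon)$ holds for $\varepsilon$ bounded away from $1/2$, for example $\varepsilon\le 1/4$.

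\textbf{Where care is needed.} There is no real obstacle here. The only point to watch is that both point sets lie on $\mathcal M$, so the hypothesis covers every cross pair $(x_i,y_j)$. In the probabilistic reading, the hypothesis itself comes from Theorem~\ref{thm:intro_uniform}, which holds with probability $1-\delta$.
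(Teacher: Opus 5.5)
Your proposal is correct and matches the paper's proof essentially step for step: you apply the distortion hypothesis termwise and sum over $i$, then run the same four-inequality chain through $\pi_\phi^\star$ and $\pi_K^\star$. Your explicit assumption $\varepsilon<1/2$, which guarantees $1-2\varepsilon>0$, is a small clarification that the paper leaves implicit in this corollary.
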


\begin{proof}
By the assumed distance preservation, for each \(i=1,\ldots,n\),
\[
(1-2\varepsilon)D_K(x_i,y_{\pi(i)})^2
\le \|\phi(x_i)-\phi(y_{\pi(i)})\|_2^2
\le (1+2\varepsilon)D_K(x_i,y_{\pi(i)})^2.
\]
Summing over \(i=1,\ldots,n\) yields the claimed two-sided bound on \(C_\phi(\pi)\) for every \(\pi\in S_n\).  For the approximation transfer, let \(\pi_K^\star\) minimize \(C_K\) and let \(\pi_\phi^\star\) minimize \(C_\phi\).  Then
\[
\begin{aligned}
C_K(\widehat\pi)
&\le \frac{1}{1-2\varepsilon}C_\phi(\widehat\pi)
\le \frac{\rho}{1-2\varepsilon}C_\phi(\pi_\phi^\star)
\le \frac{\rho}{1-2\varepsilon}C_\phi(\pi_K^\star) \\
&\le \rho\frac{1+2\varepsilon}{1-2\varepsilon}C_K(\pi_K^\star).
\end{aligned}
\]
\end{proof}

\subsection{Kernel nearest-neighbor search}
\label{app:kernel-nearest-neighbor}

For the Gaussian kernel, nearest-neighbor search under the kernel distance
\[
    D_K(p,q)^2
    =
    2-2K(p,q)
\]
is equivalent to maximum Gaussian similarity search.  If
\(P\subset\mathcal M\subset\mathbb R^N\) and \(q\in\mathcal M\), our theorem
gives an RFF map \(\phi:\mathbb R^N\to\mathbb R^m\) such that
\[
    (1-2\varepsilon)D_K(p,q)^2
    \le
    \|\phi(p)-\phi(q)\|_2^2
    \le
    (1+2\varepsilon)D_K(p,q)^2
\]
simultaneously for all \(p,q\in\mathcal M\).  Hence standard Euclidean
nearest-neighbor or approximate nearest-neighbor data structures may be applied
to \(\phi(P)\subset\mathbb R^m\), and the returned point is an approximate
nearest neighbor with respect to the original Gaussian kernel distance.  The
feature dimension is controlled by the intrinsic geometry of \(\mathcal M\),
namely
\[
    m
    =
    O\!\left(
      \frac{d}{\varepsilon^2}
      \log\!\left(
        \frac{
        \operatorname{vol}(\mathcal M)^2N^{2d}
        }{
        \operatorname{vol}(B_1^d(0))^2
        \operatorname{rch}(\mathcal M)^{2d}
        \varepsilon^{2d+1}\delta
        }
      \right)
    \right),
\]
rather than by an ambient-ball feature count.

\begin{corollary}
\label{cor:kernel-nn}
Let \(P\subset\mathcal M\subset\mathbb R^N\) and \(q\in\mathcal M\), where \(\mathcal M\) is a compact \(d\)-dimensional \(C^2\) submanifold with positive reach.  Let \(\phi:\mathbb R^N\to\mathbb R^{2t}\) be the RFF map from Theorem~\ref{thm:intro_uniform} with \(t\) satisfying the bound in that theorem.  Then, with probability at least \(1-\delta\),
\[
(1-2\varepsilon)D_K(p,q)^2
\le \|\phi(p)-\phi(q)\|_2^2
\le (1+2\varepsilon)D_K(p,q)^2
\qquad\text{for all }p,q\in\mathcal M.
\]
Consequently, if
\[
\widehat p = \arg\min_{p\in P}\|\phi(p)-\phi(q)\|_2
\]
is the Euclidean nearest neighbor of \(\phi(q)\) in \(\phi(P)\), then
\[
D_K(\widehat p,q)
\;\le\;
\sqrt{\frac{1+2\varepsilon}{1-2\varepsilon}}\;
\min_{p\in P} D_K(p,q)
\;=\;
(1+O(\varepsilon))\min_{p\in P} D_K(p,q).
\]
Thus Euclidean nearest-neighbor search on \(\phi(P)\subset\mathbb R^{2t}\) yields a \((1+O(\varepsilon))\)-approximate nearest neighbor with respect to the original Gaussian kernel distance, with the feature dimension controlled by the intrinsic geometry of \(\mathcal M\) rather than by the ambient dimension \(N\).
\end{corollary}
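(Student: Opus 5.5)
The plan is to derive the corollary directly from the uniform distortion guarantee of Theorem~\ref{thm:intro_uniform}, with no new probabilistic input. The event to condition on is the one in that theorem: with probability at least $1-\delta$, for all $p,q\in\mathcal M$ simultaneously,
\[
(1-2\varepsilon)D_K(p,q)^2\le\|\phi(p)-\phi(q)\|_2^2\le(1+2\varepsilon)D_K(p,q)^2 .
\]
Since $P\cup\{q\}\subset\mathcal M$, this inequality holds for every pair $(p,q)$ with $p\in P$. It also holds for every query $q\in\mathcal M$, because the guarantee is uniform over the manifold and not limited to a fixed finite set. This yields the first displayed claim of the corollary.

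Next we compare the Euclidean nearest neighbour with the true kernel nearest neighbour. Let $p^\star\in\arg\min_{p\in P}D_K(p,q)$, and let $\widehat p$ be the Euclidean minimiser of $\|\phi(p)-\phi(q)\|_2$ over $p\in P$. On the good event we get the chain
\[
(1-2\varepsilon)D_K(\widehat p,q)^2\le\|\phi(\widehat p)-\phi(q)\|_2^2\le\|\phi(p^\star)-\phi(q)\|_2^2\le(1+2\varepsilon)D_K(p^\star,q)^2 .
\]
The middle inequality holds by the optimality of $\widehat p$. Dividing by $1-2\varepsilon$ and taking square roots gives the stated factor $\sqrt{(1+2\varepsilon)/(1-2\varepsilon)}$. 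For the division to be valid we need $\varepsilon<1/2$, and we assume this as part of the regime of interest.

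Finally, a Taylor expansion gives $\sqrt{(1+2\varepsilon)/(1-2\varepsilon)}=1+2\varepsilon+O(\varepsilon^2)$, which is $1+O(\varepsilon)$ for $\varepsilon$ bounded away from $1/2$. The remark about feature dimension follows by reading off the bound on $t$ from Theorem~\ref{thm:intro_uniform}.

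This argument has no real obstacle: all the difficulty is contained in the uniform theorem. The only point needing care is that the query $q$ may be chosen after $\phi$ is drawn. This is harmless precisely because the distortion event is uniform over all of $\mathcal M$, so no union bound over queries is needed.
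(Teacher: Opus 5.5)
Your proposal is correct and follows the paper's proof essentially verbatim. You condition on the uniform distortion event of Theorem~\ref{thm:intro_uniform}, chain the lower distortion bound at $\widehat p$, the Euclidean optimality of $\widehat p$, and the upper distortion bound at the true kernel nearest neighbour $p^\star$, and then take square roots. Your explicit remarks are small points the paper leaves implicit or states slightly loosely: that $\varepsilon<1/2$ is needed, that $1+O(\varepsilon)$ requires $\varepsilon$ bounded away from $1/2$, and that an adaptively chosen query $q$ is harmless by uniformity.
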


\begin{proof}
Let \(p^*=\arg\min_{p\in P}D_K(p,q)\) be the true nearest neighbor with respect to the kernel distance.  By the distance preservation guarantee of Theorem~\ref{thm:intro_uniform},
\[
D_K(\widehat p,q)^2
\;\le\;
\frac{1}{1-2\varepsilon}\|\phi(\widehat p)-\phi(q)\|_2^2
\;\le\;
\frac{1}{1-2\varepsilon}\|\phi(p^*)-\phi(q)\|_2^2
\;\le\;
\frac{1+2\varepsilon}{1-2\varepsilon}D_K(p^*,q)^2,
\]
where the second inequality uses that \(\widehat p\) is the Euclidean nearest neighbor of \(\phi(q)\) in \(\phi(P)\).  Taking square roots and using \(\sqrt{(1+2\varepsilon)/(1-2\varepsilon)}=1+O(\varepsilon)\) for \(\varepsilon\in(0,1/2)\) completes the proof.
\end{proof}

\subsection{Distributional Distances and Maximum Mean Discrepancy}
\label{sec:mmd}

Many kernel-based applications compare not only individual data points, but
entire probability distributions. Examples include two-sample testing,
dataset-shift detection, domain adaptation, and distribution matching in
generative models. A standard kernel-based distance for this purpose is the
\emph{maximum mean discrepancy} (MMD)~\cite{gretton2012kernel,smola2007hilbert}.
Unlike divergences such as the Kullback--Leibler divergence, MMD can be
estimated directly from samples and does not require density estimation.
This makes it particularly natural when the observations are
high-dimensional but are believed to be supported on a lower-dimensional
geometric structure.

Our manifold RFF approximation has an immediate consequence in this setting.
If two probability distributions are supported on the same manifold $M$,
then the uniform kernel approximation of Theorem~\ref{thm:additive_uniform}
simultaneously controls their kernel mean embeddings and their MMD.
Thus, the pointwise manifold guarantee extends without any additional
sampling or regularity assumptions on the probability measures themselves.

\subsubsection{Kernel mean embeddings and MMD}

Let $\mathcal{X}$ be a measurable space, let
\[
K:\mathcal{X}\times\mathcal{X}\to\mathbb{R}
\]
be a positive-definite kernel, and let $\mathcal{H}_K$ denote the
corresponding reproducing kernel Hilbert space (RKHS). Write
\[
\Psi:\mathcal{X}\to\mathcal{H}_K
\]
for its canonical feature map, so that
\[
K(x,y)
=
\langle \Psi(x),\Psi(y)\rangle_{\mathcal{H}_K}.
\]

For a probability measure $P$ on $\mathcal{X}$, its
\emph{kernel mean embedding} is
\begin{equation}
\label{eq:kernel-mean-embedding}
\mu_P
:=
\mathbb{E}_{X\sim P}[\Psi(X)]
=
\int_{\mathcal{X}}\Psi(x)\,dP(x)
\in \mathcal{H}_K,
\end{equation}
whenever the integral exists. For bounded kernels, including the Gaussian
kernel considered here, this is automatically well-defined for every
probability measure.

The maximum mean discrepancy between two probability measures $P$ and $Q$
is the RKHS distance between their mean embeddings,
\begin{equation}
\label{eq:mmd-centroid-form}
\operatorname{MMD}_K(P,Q)
:=
\|\mu_P-\mu_Q\|_{\mathcal{H}_K}.
\end{equation}
Thus, one may view $\mu_P$ as the ``centroid'' of the distribution in kernel
feature space and MMD as the distance between the two kernel centroids.

Equivalently, MMD is the integral probability metric
\begin{equation}
\label{eq:mmd-ipm}
\operatorname{MMD}_K(P,Q)
=
\sup_{\substack{f\in\mathcal{H}_K\\
                 \|f\|_{\mathcal{H}_K}\leq 1}}
\left|
\mathbb{E}_{X\sim P}[f(X)]
-
\mathbb{E}_{Y\sim Q}[f(Y)]
\right|.
\end{equation}
The term ``maximum'' in maximum mean discrepancy refers to this
maximization over functions in the unit ball of the RKHS.

Expanding the squared norm in~\eqref{eq:mmd-centroid-form} and applying the
reproducing property gives the familiar kernel representation
\begin{align}
\operatorname{MMD}_K^2(P,Q)
={}&
\mathbb{E}_{X,X'\sim P}[K(X,X')]
+
\mathbb{E}_{Y,Y'\sim Q}[K(Y,Y')]
\nonumber\\
&-
2\mathbb{E}_{X\sim P,Y\sim Q}[K(X,Y)],
\label{eq:mmd-kernel-expectation}
\end{align}
where $X,X'$ are independent draws from $P$, and $Y,Y'$ are independent
draws from $Q$.

For the Gaussian kernel
\[
K_\sigma(x,y)
=
\exp\!\left(
-\frac{\|x-y\|_2^2}{2\sigma^2}
\right),
\]
the kernel is characteristic. Consequently,
\begin{equation}
\label{eq:gaussian-characteristic}
\operatorname{MMD}_{K_\sigma}(P,Q)=0
\qquad\Longleftrightarrow\qquad
P=Q.
\end{equation}
Thus, two different distributions may have exactly the same support
manifold and nevertheless be distinguished by Gaussian-kernel MMD. For
example, two probability measures may both be supported on the same closed
curve while assigning different amounts of probability mass to different
parts of that curve.

\subsubsection{Empirical MMD}

In applications the probability measures $P$ and $Q$ are typically not
available explicitly. Instead, suppose that
\[
X=\{x_1,\ldots,x_n\},
\qquad
Y=\{y_1,\ldots,y_m\}
\]
are samples, and associate with them the empirical measures
\[
P_X
=
\frac{1}{n}\sum_{i=1}^{n}\delta_{x_i},
\qquad
P_Y
=
\frac{1}{m}\sum_{j=1}^{m}\delta_{y_j}.
\]
Their kernel mean embeddings are simply
\begin{equation}
\label{eq:empirical-kernel-means}
\mu_X
=
\frac1n\sum_{i=1}^{n}\Psi(x_i),
\qquad
\mu_Y
=
\frac1m\sum_{j=1}^{m}\Psi(y_j).
\end{equation}
Hence the MMD between the empirical distributions is
\begin{align}
\operatorname{MMD}_{K}^{2}(P_X,P_Y)
={}&
\frac{1}{n^2}
\sum_{i,i'=1}^{n}K(x_i,x_{i'})
+
\frac{1}{m^2}
\sum_{j,j'=1}^{m}K(y_j,y_{j'})
\nonumber\\
&-
\frac{2}{nm}
\sum_{i=1}^{n}\sum_{j=1}^{m}K(x_i,y_j).
\label{eq:empirical-mmd}
\end{align}
This is precisely the squared distance between the two empirical kernel
centroids. An unbiased $U$-statistic estimator, commonly used in
two-sample testing, is obtained by removing the diagonal terms in the two
within-sample sums. Our discussion below applies directly to the empirical
measure formulation~\eqref{eq:empirical-mmd}; analogous random-feature
computational savings also hold for the unbiased estimator.

A direct evaluation of~\eqref{eq:empirical-mmd} requires
\[
O(n^2+m^2+nm)
\]
kernel evaluations. For samples of comparable size, this is quadratic in
the number of observations. Random Fourier features replace these pairwise
kernel computations by a finite-dimensional mean computation.

\subsubsection{MMD after the manifold RFF embedding}

Let
\[
M\subset\mathbb{R}^{N}
\]
be a compact $d$-dimensional $C^2$ submanifold with positive reach, and
suppose that both probability measures $P$ and $Q$ are supported on $M$.
Let
\[
\phi:M\to\mathbb{R}^{2t}
\]
be the Gaussian random Fourier feature map used throughout this paper, and
write
\[
\widehat K(x,y)
:=
\langle\phi(x),\phi(y)\rangle.
\]
The corresponding finite-dimensional mean embedding of a probability
measure $P$ is
\begin{equation}
\label{eq:rff-distribution-embedding}
\widehat\mu_P
:=
\mathbb{E}_{X\sim P}[\phi(X)]
=
\int_M\phi(x)\,dP(x)
\in\mathbb{R}^{2t}.
\end{equation}
Hence each probability distribution supported on $M$ is represented by a
single vector in the same low-dimensional Euclidean space used for the
pointwise manifold embedding.

The following consequence of our uniform kernel approximation shows that
this finite-dimensional distribution embedding preserves MMD.

\begin{corollary}[Uniform preservation of MMD on a manifold]
\label{cor:mmd-preservation}
Assume the setting of Theorem~\ref{thm:additive_uniform}, and let $\phi$ be the
corresponding RFF map. In particular, suppose that with probability at
least $1-\delta$,
\begin{equation}
\label{eq:uniform-kernel-mmd}
\sup_{x,y\in M}
\left|
K_\sigma(x,y)-\widehat K(x,y)
\right|
\leq \varepsilon.
\end{equation}
Then, on the same event, simultaneously for every pair of Borel probability
measures $P,Q$ supported on $M$,
\begin{equation}
\label{eq:mmd-preservation-main}
\left|
\operatorname{MMD}_{K_\sigma}^{2}(P,Q)
-
\|\widehat\mu_P-\widehat\mu_Q\|_2^2
\right|
\leq 4\varepsilon.
\end{equation}
Equivalently,
\[
\left|
\operatorname{MMD}_{K_\sigma}^{2}(P,Q)
-
\operatorname{MMD}_{\widehat K}^{2}(P,Q)
\right|
\leq 4\varepsilon.
\]

In particular,
\begin{equation}
\label{eq:mmd-separation}
\operatorname{MMD}_{K_\sigma}^{2}(P,Q)>4\varepsilon
\quad\Longrightarrow\quad
\widehat\mu_P\neq\widehat\mu_Q.
\end{equation}
Thus any pair of distributions having Gaussian-kernel MMD separation larger
than $4\varepsilon$ remains distinguishable after the manifold RFF
embedding.
\end{corollary}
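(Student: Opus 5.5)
We will prove Corollary~\ref{cor:mmd-preservation} by working entirely on the event of Theorem~\ref{thm:additive_uniform}, on which $|K_\sigma(x,y)-\widehat K(x,y)|\le\varepsilon$ for all $x,y\in M$. No further probabilistic argument is needed. The conclusion is deterministic given this event and holds simultaneously for all pairs $(P,Q)$.

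\textbf{Step 1: express $\|\widehat\mu_P-\widehat\mu_Q\|_2^2$ through $\widehat K$.} Since $\phi$ is bounded ($\|\phi(x)\|_2=1$ by $\cos^2+\sin^2=1$) and continuous, the Bochner integral $\widehat\mu_P=\int_M\phi\,dP$ exists. Expanding the squared norm and using Fubini (all integrands are bounded by $1$) gives
\[
\|\widehat\mu_P-\widehat\mu_Q\|_2^2
=\iint \widehat K\,dP\,dP+\iint \widehat K\,dQ\,dQ-2\iint \widehat K\,dP\,dQ,
\]
where we use $\langle\widehat\mu_P,\widehat\mu_Q\rangle=\iint\langle\phi(x),\phi(y)\rangle\,dP(x)\,dQ(y)$. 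This is exactly $\operatorname{MMD}_{\widehat K}^2(P,Q)$, which proves the stated equivalence of the two formulations.

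\textbf{Step 2: compare with \eqref{eq:mmd-kernel-expectation}.} Subtract the identity of Step~1 from the kernel expression \eqref{eq:mmd-kernel-expectation} for $\operatorname{MMD}_{K_\sigma}^2(P,Q)$. The difference is a sum of three integrals of $K_\sigma-\widehat K$ against the product measures $P\otimes P$, $Q\otimes Q$ and $P\otimes Q$, with coefficients $1,1,-2$. All three product measures are supported on $M\times M$, so each integrand is bounded in absolute value by $\varepsilon$. The triangle inequality therefore gives a total error of at most $(1+1+2)\varepsilon=4\varepsilon$, which is \eqref{eq:mmd-preservation-main}.

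\textbf{Step 3: separation.} If $\operatorname{MMD}_{K_\sigma}^2(P,Q)>4\varepsilon$, then Step~2 gives $\|\widehat\mu_P-\widehat\mu_Q\|_2^2>0$, which is \eqref{eq:mmd-separation}.

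The only point requiring care is measure-theoretic: justifying the interchange of the inner product with the integrals and the use of Fubini. Boundedness and continuity of $\phi$, $K_\sigma$ and $\widehat K$ on the compact set $M$ handle this routinely. The uniformity over all $P$ and $Q$ is automatic, because the supremum bound from Theorem~\ref{thm:additive_uniform} does not depend on the measures.
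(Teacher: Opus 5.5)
Your proposal is correct and follows essentially the same route as the paper. Both arguments write $\operatorname{MMD}_{K_\sigma}^2$ and $\operatorname{MMD}_{\widehat K}^2$ through the kernel-expectation formula, bound the three integrals of the kernel error against $P\otimes P$, $Q\otimes Q$, $P\otimes Q$ by $\varepsilon,\varepsilon,2\varepsilon$, and identify $\operatorname{MMD}_{\widehat K}^2(P,Q)$ with $\|\widehat\mu_P-\widehat\mu_Q\|_2^2$. Your Bochner/Fubini remarks, using $\|\phi(x)\|_2=1$, only make explicit the interchange that the paper attributes to linearity of expectation.
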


\begin{proof}
Let
\[
E(x,y)
:=
\widehat K(x,y)-K_\sigma(x,y).
\]
On the event~\eqref{eq:uniform-kernel-mmd},
\[
|E(x,y)|\leq\varepsilon
\qquad
\text{for all }x,y\in M.
\]
For arbitrary probability measures $P,Q$ supported on $M$, using
\eqref{eq:mmd-kernel-expectation},
\begin{align}
&
\operatorname{MMD}_{\widehat K}^{2}(P,Q)
-
\operatorname{MMD}_{K_\sigma}^{2}(P,Q)
\nonumber\\
={}&
\mathbb{E}_{P\times P}[E(X,X')]
+
\mathbb{E}_{Q\times Q}[E(Y,Y')]
-
2\mathbb{E}_{P\times Q}[E(X,Y)].
\end{align}
Therefore,
\begin{align}
&
\left|
\operatorname{MMD}_{\widehat K}^{2}(P,Q)
-
\operatorname{MMD}_{K_\sigma}^{2}(P,Q)
\right|
\nonumber\\
\leq{}&
\mathbb{E}_{P\times P}|E(X,X')|
+
\mathbb{E}_{Q\times Q}|E(Y,Y')|
+
2\mathbb{E}_{P\times Q}|E(X,Y)|
\nonumber\\
\leq{}&
\varepsilon+\varepsilon+2\varepsilon
=
4\varepsilon.
\end{align}

It remains only to observe that, since
\[
\widehat K(x,y)
=
\langle\phi(x),\phi(y)\rangle,
\]
linearity of expectation gives
\begin{align}
\operatorname{MMD}_{\widehat K}^{2}(P,Q)
&=
\left\|
\mathbb{E}_{X\sim P}\phi(X)
-
\mathbb{E}_{Y\sim Q}\phi(Y)
\right\|_2^2
\\
&=
\|\widehat\mu_P-\widehat\mu_Q\|_2^2.
\end{align}
The separation statement follows immediately from
\[
\|\widehat\mu_P-\widehat\mu_Q\|_2^2
\geq
\operatorname{MMD}_{K_\sigma}^{2}(P,Q)-4\varepsilon.
\]
Since the event~\eqref{eq:uniform-kernel-mmd} is uniform over all
$x,y\in M$, it is independent of the particular choice of $P$ and $Q$.
Hence the conclusion holds simultaneously for all probability measures
supported on $M$.
\end{proof}

The same argument also gives preservation of individual distribution
centroids. For every probability measure $P$ supported on $M$,
\begin{equation}
\label{eq:mean-norm-preservation}
\left|
\|\mu_P\|_{\mathcal{H}_{K_\sigma}}^2
-
\|\widehat\mu_P\|_2^2
\right|
\leq\varepsilon,
\end{equation}
and for every pair $P,Q$,
\begin{equation}
\label{eq:mean-inner-product-preservation}
\left|
\langle\mu_P,\mu_Q\rangle_{\mathcal{H}_{K_\sigma}}
-
\langle\widehat\mu_P,\widehat\mu_Q\rangle
\right|
\leq\varepsilon.
\end{equation}
Thus the result preserves not only pairwise MMD values, but also the
norms and mutual inner products of kernel mean embeddings.

\subsubsection{Computational consequence}

For empirical distributions the RFF representation turns the quadratic
kernel computation in~\eqref{eq:empirical-mmd} into a linear pass over the
samples. Indeed, define
\[
\widehat\mu_X
=
\frac1n\sum_{i=1}^{n}\phi(x_i),
\qquad
\widehat\mu_Y
=
\frac1m\sum_{j=1}^{m}\phi(y_j).
\]
Then
\begin{equation}
\label{eq:rff-empirical-mmd}
\operatorname{MMD}_{\widehat K}^{2}(P_X,P_Y)
=
\|\widehat\mu_X-\widehat\mu_Y\|_2^2.
\end{equation}

Once the random features have been computed, constructing the two means
requires only
\[
O((n+m)t)
\]
arithmetic operations and $O(t)$ additional memory if the means are
accumulated in a streaming fashion. The final comparison costs only
$O(t)$. By contrast, direct kernel MMD requires
\[
O(n^2+m^2+nm)
\]
kernel evaluations and, if the full kernel matrices are stored,
quadratic memory.

Including the cost of evaluating the random features, a direct
implementation requires
\[
O((n+m)Nt)
\]
time to embed points in $\mathbb{R}^N$, followed by
$O((n+m)t)$ time to form the means. Thus for samples of comparable size
$n$, the dependence on the sample size changes schematically from
\[
O(n^2 N)
\qquad\text{to}\qquad
O(nNt),
\]
up to the cost model used for Gaussian-kernel and feature evaluations.
Consequently, when
\[
n\gg t,
\]
the random-feature formulation provides a substantial computational
advantage.

The manifold assumption is relevant here because the number of random
features required by Theorem~\ref{thm:additive_uniform} is controlled by the
intrinsic geometry of $M$ rather than by a linear dependence on the ambient
dimension. Schematically, for fixed geometric parameters,
\[
t
=
\widetilde O\!\left(
\frac{d^2\log N}{\varepsilon^2}
\right),
\]
rather than an ambient-dimensional feature count. Hence, for
$d\ll N$ and a large number of samples, probability distributions supported
on $M$ can be represented by low-dimensional vectors
$\widehat\mu_P\in\mathbb{R}^{2t}$ whose mutual Euclidean distances
approximate their Gaussian-kernel MMD.

This has an additional benefit when many distributional comparisons are
required. Once a dataset has been summarized by its RFF mean vector
$\widehat\mu_P$, the original sample need not be revisited for subsequent
MMD comparisons: comparing two already-computed distributions costs only
$O(t)$ time. Thus a collection of large datasets supported on the same
low-dimensional manifold can be compressed into one $2t$-dimensional
vector per dataset while retaining their Gaussian-kernel MMD geometry up
to the additive error of Corollary~\ref{cor:mmd-preservation}.  \bibliography{RFFbib}

\appendix

\section{Concentration Inequality for Gaussian Kernel distances}
\label{sec:conc-ineq-gcc}
In this section, we shall state a concentration inequality for certain trigonometric functions of projections of Gaussian random vectors, which will be crucially used in our
bound on the distortion of the GKPD weight function under the RFF map. The inequality is described in the following general framework.
        Let $\Delta \in \R^D$ be a single vector. For $k=1,\ldots,t$, let
$g_k$ be independent and identically distributed standard normal vectors in $\R^D$, and
define
\begin{eqnarray*}
   L_t = L_t(\Delta) &:= & \frac{1}{4} \cdot \frac{1}{t}\sum_{k=1}^t\pth{1-\cos(\langle \Delta,g_k\rangle)}.
\end{eqnarray*}
The first two lemmas below are for general random variables.
Lemma~\ref{l:lem1} is a standard optimization used typically in proofs of concentration inequalities.
Lemma~\ref{l:lem2} is a slight generalization and improvement
of similar lemmas in~\cite{Freedman75,DBLP:journals/toc/BansalDGL19}.

\begin{lemma}
\label{l:lem1}
    Let $X$ be a random variable such that there exists $A>0$ such that for all $\lambda >0$, $\Ex{e^{\lambda X}}\leq \exp\pth{\lambda \Ex{X}}\exp\pth{(e^{\lambda}-\lambda-1)A}$.
    Let $Y =(\sum_{i=1}^t X_i)/t$ be the average of $t$ independent copies of $X$, given by $X_1,\ldots,X_t$.
    Then for any $A>tB>0$,
    \begin{eqnarray}
       \Prob{Y-\Ex{Y}\geq B} &\leq& \exp\pth{-\frac{tB^2}{2A}+\frac{t^2B^3}{6A^2}} \;\;\leq\;\; \exp\pth{-\frac{tB^2}{3A}}.
    \end{eqnarray}
\end{lemma}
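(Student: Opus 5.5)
The plan is a standard Chernoff (exponential Markov) argument, with the exponent optimized exactly and then lower-bounded by a Taylor estimate. Write $S=\sum_{i=1}^t X_i$, so that $Y-\Ex{Y}\ge B$ is the event $S-t\Ex{X}\ge tB$. For any $\lambda>0$, Markov's inequality and independence of the $X_i$ give
\[
\Prob{Y-\Ex{Y}\geq B}\;\le\; e^{-\lambda tB}\,\prod_{i=1}^t \Ex{e^{\lambda (X_i-\Ex{X})}} \;\le\; \exp\pth{-\lambda tB + t(e^{\lambda}-\lambda-1)A},
\]
where the last step is the hypothesis on the moment generating function, applied to each copy.

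Next I would minimize the exponent $t\bigl[-\lambda B+(e^\lambda-\lambda-1)A\bigr]$ over $\lambda>0$. Setting the derivative to zero gives $e^\lambda = 1+B/A$, that is, $\lambda=\log(1+u)$ with $u:=B/A>0$. Substituting, the exponent becomes $-tA\,h(u)$, where $h(u)=(1+u)\log(1+u)-u$ is the usual Bennett function.

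The key analytic step is the lower bound $h(u)\ge u^2/2-u^3/6$ for $u\ge 0$. To prove it, note that $h(0)=h'(0)=0$ and $h''(u)=1/(1+u)\ge 1-u$, and integrate twice. This yields
\[
\Prob{Y-\Ex{Y}\geq B}\le \exp\pth{-\frac{tB^2}{2A}+\frac{tB^3}{6A^2}}.
\]
Since $t\ge 1$, we have $tB^3/(6A^2)\le t^2B^3/(6A^2)$, so the first stated inequality follows; our bound is in fact slightly stronger. For the second inequality I would use the assumption $A>tB$, which gives $t^2B^3/(6A^2)=\frac{tB^2}{6A}\cdot\frac{tB}{A}<\frac{tB^2}{6A}$. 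Hence the exponent is at most $-\frac{tB^2}{2A}+\frac{tB^2}{6A}=-\frac{tB^2}{3A}$.

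I do not expect a genuine obstacle here. The only points requiring care are bookkeeping of the factor $t$, since $Y$ is an average while the exponent scales with the sum, and reconciling our exponent $tB^3/(6A^2)$ with the stated $t^2B^3/(6A^2)$. Both are handled by the trivial comparison $t\ge1$ together with the hypothesis $A>tB$.
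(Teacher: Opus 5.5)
Your proof is correct. It uses the same skeleton as the paper: Chernoff, exact optimization of $\lambda$, then a cubic lower bound. The difference is in how the averaging is handled.

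The paper applies the hypothesis at $\lambda/t$ to bound $\mathbb{E}[e^{\lambda Y}]$. It then relaxes with the auxiliary inequality $t(e^{\lambda/t}-\lambda/t-1)\le (e^{\lambda}-\lambda-1)/t$ and optimizes the relaxed exponent at $\lambda=\ln(1+tB/A)$. This gives $\exp\bigl(-\tfrac{A}{t}\,h(tB/A)\bigr)$ with $h(u)=(1+u)\ln(1+u)-u$, and this relaxation is exactly where the $t^2B^3/(6A^2)$ term comes from.

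You keep the unrelaxed exponent, which is the paper's pre-relaxation exponent with its $\lambda$ replaced by $t\lambda$. Optimizing it exactly at $\ln(1+B/A)$ gives the true Bennett bound $\exp(-tA\,h(B/A))$, which is never worse. You reach the stated form only by weakening through $t\ge 1$.

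Your route needs no auxiliary inequality, and it gives a stronger result. Because your intermediate exponent is $-tB^2/(2A)+tB^3/(6A^2)$, the final bound $\exp(-tB^2/(3A))$ already holds whenever $B<A$. That condition does not depend on $t$. The stated hypothesis $A>tB$ instead forces $t<A/B$, which is awkward for an inequality meant to be applied with large $t$.

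Your derivation of $h(u)\ge u^2/2-u^3/6$ from $h''(u)=1/(1+u)\ge 1-u$ also makes rigorous the step the paper attributes to the Taylor series of $\ln(1+x)$. Your bookkeeping also avoids two slips in the printed argument:
\begin{itemize}
\item the optimized bound should read $\exp\bigl(B-(A/t+B)\ln(1+tB/A)\bigr)$;
\item the last step should give $t^2B^3/(6A^2)<tB^2/(6A)$, not $B^2/(6A)$.
\end{itemize}
The paper's route has only the modest advantage of landing exactly on the stated expression without a final weakening step.
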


\begin{proof}[Proof of Lemma~\ref{l:lem1}]
    We have
    \begin{eqnarray}
       \Prob{Y-\Ex{Y} \geq B} &=& \Prob{e^{\lambda(Y-\Ex{Y})}\geq e^{\lambda B}} \\
                              &\leq& \Ex{e^{\lambda(Y-\Ex{Y})}}\cdot e^{-\lambda B} \\
                              &=& \Ex{e^{\lambda Y}}e^{-\lambda\Ex{Y}}\cdot e^{-\lambda B}
    \end{eqnarray}
By the premise of the lemma, $\Ex{e^{\lambda Y}}$ can be simplified as
    \begin{eqnarray}
       \Ex{e^{\lambda Y}} &=&    \Ex{e^{\frac{1}{t}\sum_{i=1}^t \lambda X_i}} \\
                          &=&    \prod_{i=1}^t\Ex{e^{\lambda X_i/t}} \;\;=\;\;    \pth{\Ex{e^{\lambda X/t}}}^t \\
                          &\leq& \pth{e^{\lambda \Ex{X}/t}e^{(e^{\lambda/t}-\lambda/t-1)A}}^t
                          \;\;=\;\; \pth{e^{\lambda \Ex{X}}e^{t(e^{\lambda/t}-\lambda/t-1)A}} \\
                          &\leq& e^{\lambda \Ex{Y}}\pth{e^{(e^{\lambda}-\lambda-1)A/t}} \label{eqn:expr-lambda-y}
    \end{eqnarray}
    where the second line above follows from the fact that $X_i$ are identically distributed and independent copies of $X$, and
    the first inequality was by using the condition in the statement of the lemma, and the last line followed from that
    $\Ex{X}=\Ex{Y}$, together with the fact that for any $\lambda>0$, $t\geq 1$,
    $t(e^{\lambda/t}-\lambda/t-1)\leq (e^{\lambda}-\lambda-1)/t$, which can be easily seen using basic
    calculus.
The right-hand side of ~\eqref{eqn:expr-lambda-y} can be bound by optimizing the choice of
    $\lambda$. From elementary calculus we get that the optimum is when
    $\lambda = \ln\pth{1+tB/A}$. Substituting this value of $\lambda$ in the right-hand side of the last expression, we get
    \begin{eqnarray}
       \Prob{Y-\Ex{Y} \geq B} &\leq& \exp\pth{B-(A+B)\ln\pth{1+tB/A}}  \\
                              &\leq& \exp\pth{-\frac{tB^2}{2A}+\frac{t^2B^3}{6A^2}} ,
    \end{eqnarray}
    where in the last line we used the Taylor series expansion for $\ln(1+x)$ about $x=0$.
    This proves the first inequality in the statement of the Lemma~\ref{l:lem1}. For the second inequality in the lemma, we
    just use that $tB<A$ to get that $t^2B^3/6A^2 < B^2/6A$, and substitute in the tail bound.

\end{proof}

\begin{lemma}
\label{l:lem2}
Let $X$ be a random variable such that $|X|\leq 1/2$. Then the following
inequality holds true.
    \[ \Ex{e^{\lambda X}} \leq \exp\pth{(e^{\lambda}-\lambda-1)\Var{X}}\cdot\exp\pth{\lambda \Ex{X}}.\]
\end{lemma}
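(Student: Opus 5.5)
The plan is to center $X$ and expand the moment generating function as a power series, bounding every term beyond the linear one by the variance. Throughout we take $\lambda>0$ (or $\lambda\ge 0$), which is the regime used in Lemma~\ref{l:lem1}. Write $\mu:=\Ex{X}$ and $Y:=X-\mu$, so that $\Ex{Y}=0$ and $\Ex{Y^2}=\Var{X}$. Since $|X|\le 1/2$ we also have $|\mu|\le 1/2$, and therefore $|Y|\le |X|+|\mu|\le 1$ almost surely. This uniform bound $|Y|\le 1$ is the only place the hypothesis $|X|\le 1/2$ enters. Because $\Ex{e^{\lambda X}}=e^{\lambda\mu}\,\Ex{e^{\lambda Y}}$, it suffices to prove
\[
\Ex{e^{\lambda Y}}\;\le\;\exp\pth{(e^{\lambda}-\lambda-1)\Var{X}}.
\]

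Next, expand $e^{\lambda Y}=\sum_{k\ge 0}\lambda^kY^k/k!$. Since $|Y|\le 1$, the series is dominated by $e^{\lambda}$, so expectation and summation may be interchanged. The $k=0$ term contributes $1$ and the $k=1$ term vanishes because $\Ex{Y}=0$. For $k\ge 2$ we use $Y^k\le |Y|^k=Y^2|Y|^{k-2}\le Y^2$, together with $\lambda^k\ge 0$, to get $\Ex{Y^k}\lambda^k\le \lambda^k\Var{X}$. Summing these bounds gives
\[
\Ex{e^{\lambda Y}}\;\le\;1+\Var{X}\sum_{k\ge 2}\frac{\lambda^k}{k!}\;=\;1+(e^{\lambda}-\lambda-1)\Var{X}.
\]

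Finally, $1+u\le e^{u}$ with $u=(e^\lambda-\lambda-1)\Var{X}\ge 0$ yields the claim. Multiplying back by $e^{\lambda\mu}$ gives the stated inequality.

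There is no serious obstacle here. The one point to watch is the sign of $\lambda$: the termwise bound $\lambda^k\Ex{Y^k}\le\lambda^k\Ex{Y^2}$ uses $\lambda\ge 0$ for the odd-$k$ terms. This restriction matches how the lemma is used in Lemma~\ref{l:lem1}, where $\lambda>0$. If a bound for negative $\lambda$ were ever needed, one would apply the same argument to $-X$.
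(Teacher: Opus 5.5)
Your proof is correct and is essentially the paper's argument: center $X$, use $|X-\mathbb{E}X|\le 1$ to get $\mathbb{E}[e^{\lambda Y}]\le 1+(e^{\lambda}-\lambda-1)\operatorname{Var}(X)$, then apply $1+u\le e^{u}$. Your termwise series comparison is just an explicit proof of the paper's ``elementary calculus'' claim that $f(y)=(e^{\lambda y}-\lambda y-1)/y^{2}=\sum_{k\ge 2}\lambda^{k}y^{k-2}/k!$ satisfies $f(y)\le f(1)$ on $[-1,1]$. You are also right to insist on $\lambda\ge 0$, which the paper leaves implicit: the stated inequality fails for some $\lambda<0$ (e.g.\ $X=\pm\tfrac12$ with equal probability and $\lambda=-0.1$), so your $-X$ remark yields only the weaker factor $e^{|\lambda|}-|\lambda|-1$, not the stated one.
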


\begin{proof}[Proof of Lemma~\ref{l:lem2}]
Let $x_0:=\Ex{X}$. Now since $|X|\leq 1/2$, we have $\Ex{X}=x_0\in [-1/2,1/2]$, so that $(X-x_0) \in [-1,1]$.
   Consider the function $f(y)=\frac{e^{\lambda y}-\lambda y-1}{y^2}$, $y\neq 0$, and $f(0)=1/2$. From elementary calculus, for $y\in [-1,1]$ $f(y)$ is increasing.
Therefore for $y\in [-1,1]$, $f(y)\leq f(1)=(e^{\lambda}-\lambda-1)$.
Taking $y=(X-x_0)$, we get $f(y)\leq (e^{\lambda}-\lambda-1)y^2$.
Now taking expectations gives
\[ \Ex{e^{\lambda (X-x_0)}-\lambda(X-x_0)-1} \leq (e^{\lambda}-\lambda-1)\Ex{(X-x_0)^2} = (e^{\lambda}-\lambda-1)\Var{X} ,\]
or
\begin{eqnarray}
    \Ex{e^{\lambda (X-x_0)}} &\leq& 1+\Ex{\lambda(X-x_0)}+(e^{\lambda}-\lambda-1)\Var{X}  \\
                                        &=& 1+(e^{\lambda}-\lambda-1)\Var{X} \\
                                        &\leq& \exp\pth{(e^{\lambda}-\lambda-1)\Var{X}}.
\end{eqnarray}
Thus we get $\Ex{e^{\lambda X}} \leq \exp\pth{\lambda x_0}\cdot\exp\pth{(e^{\lambda}-\lambda-1)\Var{X}}$.
\end{proof}

Since $\langle\Delta,g\rangle = \|\Delta\| g_1$ with $g_1\sim\calN(0,1)$, we have
$L = \frac{1}{4}(1-\cos(\|\Delta\| g_1))$.  Note that $|L|\leq 1/2$, so applying Lemma~\ref{l:lem2} directly to $X=L$ gives, for any $\lambda>0$,
\[ \Ex{e^{\lambda L}} \leq \exp\pth{\lambda\Ex{L}}\cdot\exp\pth{(e^{\lambda}-\lambda-1)\Var{L}}. \]

Thus we obtain the following concentration inequality for $L_t$.

\begin{theorem}
\label{thm:cos-conc-ineq}
   For any $\e\in [0,1]$, the following holds.
   \begin{eqnarray}
       \Prob{|L_t-\Ex{L_t}|\geq \e\Ex{L_t}} \leq 2\cdot\exp\pth{-\frac{\e^2t\Ex{L}^2}{3\Var{L}}}.
   \end{eqnarray}
\end{theorem}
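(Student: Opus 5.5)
The plan is to apply Lemma~\ref{l:lem1} separately to the upper and the lower tail of $L_t$, and then combine the two with a union bound, which accounts for the factor $2$.

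\textbf{Upper tail.} Take $X = L = \frac14(1-\cos(\|\Delta\| g_1))$. Since $0\le L\le 1/2$, Lemma~\ref{l:lem2} gives
\[
\Ex{e^{\lambda L}}\le \exp\pth{\lambda\Ex{L}}\exp\pth{(e^\lambda-\lambda-1)\Var{L}},
\]
which is exactly the hypothesis of Lemma~\ref{l:lem1} with $A=\Var{L}$. The average $Y=L_t$ satisfies $\Ex{Y}=\Ex{L}$. We therefore set $B=\e\Ex{L}$, so that Lemma~\ref{l:lem1} yields
\[
\Prob{L_t-\Ex{L_t}\ge \e\Ex{L_t}}\le \exp\pth{-\tfrac{tB^2}{3A}} = \exp\pth{-\tfrac{\e^2 t\Ex{L}^2}{3\Var{L}}}.
\]

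\textbf{Lower tail.} We run the same argument on $X=-L$. Since $|{-L}|\le 1/2$, $\Var{-L}=\Var{L}$ and $\Ex{-L}=-\Ex{L}$, Lemma~\ref{l:lem2} again supplies the required moment generating function bound with $A=\Var{L}$. Applying Lemma~\ref{l:lem1} to the average $-L_t$ with $B=\e\Ex{L}$ gives the identical bound for $\Prob{\Ex{L_t}-L_t\ge\e\Ex{L_t}}$. Adding the two tails proves the theorem.

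\textbf{Main obstacle.} The delicate step is the side condition $A>tB$ in Lemma~\ref{l:lem1}. It is used there only to absorb the cubic term $t^2B^3/(6A^2)$, and it need not hold in our regime: $\Var{L}$ may be smaller than $t\e\Ex{L}$ when $t$ is large.

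The first fix is to avoid Lemma~\ref{l:lem1}'s conclusion and work directly from its proof. That proof already derived the Bennett form
\[
\exp\pth{-(A+tB)\ln(1+tB/A)+tB},
\]
up to the $1/t$ normalisation. Bennett's bound is dominated by Bernstein's, which here gives the exponent $-\frac{t B^2}{2(A+B/3)}$ once the normalisation is kept consistent. It then suffices to check that $A+B/3\le \tfrac32 A$, i.e.\ $\e\Ex{L}\le \tfrac32\Var{L}$, in the range where the claim is used.

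If that check fails, the fallback is to note that when $B$ is large relative to $A$ the probability is already tiny and the stated bound is trivially dominated. I expect making this case analysis rigorous, or else tracking the normalisation of $A$ with respect to $t$ correctly through Lemma~\ref{l:lem1}, to be the only real difficulty. The rest is direct substitution.
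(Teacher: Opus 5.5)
Your reduction is exactly the paper's proof. Lemma~\ref{l:lem2} gives the moment-generating-function hypothesis with $A=\operatorname{Var}L$, and Lemma~\ref{l:lem1} is then applied with $B=\varepsilon\,\mathbb{E}L$ to each tail. You are right that the side condition $A>tB$ is the crux, and it is fatal to this route. If $A>tB$ then $tB^2/(3A)<B/3\le 1/12$, because $B\le \mathbb{E}L\le 1/4$. So the right-hand side of the theorem exceeds $2e^{-1/12}>1$: the lemma applies only where there is nothing to prove. The paper invokes Lemma~\ref{l:lem1} without checking this condition, so it offers no way around it.

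Neither of your repairs closes the gap. Write $u=e^{-\|\Delta\|^2/2}$; then $\operatorname{Var}L/\mathbb{E}L=\tfrac18(1-u)(1+u)^2$.
\begin{itemize}
\item Your Bernstein condition $\varepsilon\,\mathbb{E}L\le\tfrac32\operatorname{Var}L$ reads $\varepsilon\le\tfrac{3}{16}(1-u)(1+u)^2$. This is never more than $2/9$, and it tends to $0$ like $\tfrac38\|\Delta\|^2$ as $\|\Delta\|\to 0$.
\item Your fallback has the direction reversed. The target $\exp(-tB^2/(3A))$ gets \emph{smaller} as $B/A$ grows, so for $B\gg A$ it is the harder inequality, not a trivially dominated one. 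Lemma~\ref{l:lem2} retains only two facts: range at most $1$ and variance $A$. From these, Bennett's bound $\exp(-tA\,h(B/A))$ with $h(x)=(1+x)\ln(1+x)-x$, roughly $\exp(-tB\ln(B/A))$, is essentially the best available; rare-event Bernoulli variables attain it. For small $\|\Delta\|$ we have $B\approx\varepsilon\|\Delta\|^2/8$ and $A\approx\|\Delta\|^4/32$. There Bennett's rate tends to $0$, while the target rate stays near $\varepsilon^2/6$.
\end{itemize}

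The missing idea is that $L$ fluctuates on the scale $\|\Delta\|^2$, not on the scale $1$ used in Lemma~\ref{l:lem2}. Since $1-\cos x\le x^2/2$, one has $L\le \|\Delta\|^2g_1^2/8$. Together with $L\le 1/2$ for large $\|\Delta\|$, this makes $L/\mathbb{E}L$ sub-exponential with an absolute $\psi_1$-norm, uniformly in $\Delta$. Bernstein's inequality for sub-exponential variables then gives $2\exp(-c\,t\,\varepsilon^2)$ for $\varepsilon\in[0,1]$, with some absolute $c>0$.

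That is the right repair, but no argument can give the constant as stated on all of $[0,1]$. As $\|\Delta\|\to0$, $L_t/\mathbb{E}L\to\chi^2_t/t$ almost surely, and $(\mathbb{E}L)^2/\operatorname{Var}L=2/(1+u)^2\to\tfrac12$. So at $\varepsilon=1$ the theorem would imply $\Pr[\chi^2_t\ge 2t]\le 2e^{-t/6}$ for all $t$. But by Cram\'er's theorem this probability decays at the exponential rate $\tfrac12(1-\ln 2)\approx 0.153<\tfrac16$, so the inequality fails for large $t$. The case analysis you sketch therefore cannot be completed. The statement itself has to be weakened, for example to an unspecified absolute constant in the exponent.
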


\begin{proof}[Proof of Theorem~\ref{thm:cos-conc-ineq}]
    The proof follows directly from Lemmas~\ref{l:lem2} and~\ref{l:lem1}. We focus on the upper tail, as the lower tail can be bounded by the same argument.
Applying Lemma~\ref{l:lem2} to $L$ gives $\Ex{e^{\lambda L}}\leq \exp\pth{\lambda\Ex{L}}\cdot\exp\pth{(e^{\lambda}-\lambda-1)\Var{L}}$. Now
applying Lemma~\ref{l:lem1} with $A = \Var{L}$ and $B=\e\Ex{L}$, and recalling that $L_t$ is the sum of $t$ independent copies of $L$,
gives
      \[ \Prob{|L_t-\Ex{L_t}|\geq \e\Ex{L_t}} \leq 2\cdot\exp\pth{-\frac{\e^2t\Ex{L}^2}{3\Var{L}}},\]
which is the statement of the theorem.
\end{proof}

It therefore remains to compute $\Ex{L}$ and $\Var{L}$ explicitly.

\begin{lemma}
\label{l:var-cos}
For $g\sim\calN(0,1)$ and $\theta\in\R$,
\begin{equation}
\label{eqn:var-cos}
   \Var{\cos(\theta g)} \;=\; \frac{1}{2}\pth{1-e^{-\theta^2}}^2.
\end{equation}
\end{lemma}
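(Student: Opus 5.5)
We need $\Var{\cos(\theta g)}=\Ex{\cos^2(\theta g)}-\bigl(\Ex{\cos(\theta g)}\bigr)^2$, so the plan is to compute both moments from the Gaussian characteristic function $\Ex{e^{i s g}}=e^{-s^2/2}$ for $g\sim\calN(0,1)$. Taking real parts gives $\Ex{\cos(sg)}=e^{-s^2/2}$ for every real $s$. This is the same identity used in \eqref{eq:bochner} and in the proof of Lemma~\ref{lem:ratio bound at a point}.

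First, with $s=\theta$ we get $\Ex{\cos(\theta g)}=e^{-\theta^2/2}$, and hence $\bigl(\Ex{\cos(\theta g)}\bigr)^2=e^{-\theta^2}$. Second, the double-angle identity $\cos^2 x=\tfrac12(1+\cos 2x)$ together with the characteristic function at $s=2\theta$ gives
\[
\Ex{\cos^2(\theta g)}=\tfrac12\bigl(1+e^{-2\theta^2}\bigr).
\]
Subtracting the two moments yields
\[
\Var{\cos(\theta g)}=\tfrac12\bigl(1+e^{-2\theta^2}\bigr)-e^{-\theta^2}=\tfrac12\bigl(1-2e^{-\theta^2}+e^{-2\theta^2}\bigr)=\tfrac12\bigl(1-e^{-\theta^2}\bigr)^2,
\]
which is exactly \eqref{eqn:var-cos}.

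No step is a real obstacle. The only point that needs care is the algebra in the last display: the cross term $-e^{-\theta^2}$ must combine with the two halves so that the expression is a perfect square. I will also note the consequence used later. With $\theta=\|\Delta\|$ and $L=\tfrac14(1-\cos(\|\Delta\|g_1))$, we get $\Ex{L}=\tfrac14\bigl(1-e^{-\|\Delta\|^2/2}\bigr)$ and $\Var{L}=\tfrac1{32}\bigl(1-e^{-\|\Delta\|^2}\bigr)^2$. These values are to be substituted into Theorem~\ref{thm:cos-conc-ineq}.
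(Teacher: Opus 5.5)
Your proof is correct and takes essentially the same approach as the paper. Both take real parts of the Gaussian characteristic function to get $\mathbb{E}[\cos(\theta g)]=e^{-\theta^2/2}$, use the double-angle identity with the same formula at $2\theta$ for the second moment, and subtract to obtain the perfect square $\tfrac12\bigl(1-e^{-\theta^2}\bigr)^2$.
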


\begin{proof}[Proof of Lemma~\ref{l:var-cos}]
The characteristic function of the standard Gaussian gives $\Ex{e^{i\theta g}} = e^{-\theta^2/2}$.
Taking real parts yields $\Ex{\cos(\theta g)}=e^{-\theta^2/2}$.
Using the double-angle identity $\cos^2(\theta g)=\frac{1}{2}(1+\cos(2\theta g))$
and applying the same formula at $2\theta$:
\[
   \Ex{\cos^2(\theta g)} = \frac{1+e^{-2\theta^2}}{2}.
\]
Therefore,
\[
   \Var{\cos(\theta g)}
   \;=\; \Ex{\cos^2(\theta g)}-\pth{\Ex{\cos(\theta g)}}^2
   \;=\; \frac{1+e^{-2\theta^2}}{2} - e^{-\theta^2}
   \;=\; \frac{1}{2}\pth{1-e^{-\theta^2}}^2.
\]
\end{proof}

\begin{lemma}[Expectation of $L$]
\label{l:exp-L-singleton}
With $L = \frac{1}{4}(1-\cos(\langle\Delta,g\rangle))$ and $g\sim\calN(0,I_D)$,
\begin{equation}
\label{eqn:exp-L-singleton}
   \Ex{L} \;=\; \frac{1}{4}\pth{1 - e^{-\|\Delta\|^2/2}}.
\end{equation}
In particular, $\Ex{L_t}=\Ex{L}$ for every $t\geq 1$.
\end{lemma}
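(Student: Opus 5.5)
The statement to prove is Lemma~\ref{l:exp-L-singleton}: with $L=\frac14(1-\cos\langle\Delta,g\rangle)$ and $g\sim\calN(0,I_D)$, we have $\Ex{L}=\frac14(1-e^{-\|\Delta\|^2/2})$, and $\Ex{L_t}=\Ex{L}$ for every $t$.

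The plan is to reduce everything to a one-dimensional Gaussian characteristic function, in the same way as in the proof of Lemma~\ref{l:var-cos}.

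\textbf{Reduction to one dimension.} Rotational invariance of the standard Gaussian gives that $\langle\Delta,g\rangle$ has the law of $\|\Delta\|\,g_1$ with $g_1\sim\calN(0,1)$. This was already observed just before Theorem~\ref{thm:cos-conc-ineq}. Equivalently, $\langle\Delta,g\rangle$ is a centered Gaussian with variance $\sum_i\Delta_i^2=\|\Delta\|^2$. If $\Delta=0$, both sides of the identity are $0$, so we may assume $\Delta\neq 0$.

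\textbf{The cosine expectation.} Set $\theta=\|\Delta\|$. The Gaussian characteristic function gives $\Ex{e^{i\theta g_1}}=e^{-\theta^2/2}$. Taking real parts, which is justified because $|e^{i\theta g_1}|=1$ makes everything integrable, yields
\[
\Ex{\cos(\theta g_1)}=e^{-\|\Delta\|^2/2}.
\]
By linearity of expectation,
\[
\Ex{L}=\tfrac14\bigl(1-\Ex{\cos(\theta g_1)}\bigr)=\tfrac14\bigl(1-e^{-\|\Delta\|^2/2}\bigr).
\]
This is the identity in \eqref{eqn:exp-L-singleton}.

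\textbf{The averaged statement.} $L_t=\frac1t\sum_{k=1}^t L^{(k)}$, where the $L^{(k)}=\frac14(1-\cos\langle\Delta,g_k\rangle)$ are identically distributed copies of $L$. Independence is not even needed: linearity of expectation alone gives $\Ex{L_t}=\frac1t\cdot t\,\Ex{L}=\Ex{L}$.

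\textbf{Main obstacle.} There is no real obstacle here. The only point needing care is the distributional identity $\langle\Delta,g\rangle\overset{d}{=}\|\Delta\|g_1$. If one prefers to avoid invoking rotational invariance, the alternative is to compute $\Ex{e^{i\langle\Delta,g\rangle}}=\prod_i e^{-\Delta_i^2/2}$ directly, using independence of the coordinates of $g$.
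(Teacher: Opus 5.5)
Your proposal is correct and follows essentially the same route as the paper: you reduce $\langle\Delta,g\rangle$ to $\|\Delta\|g_1$ by rotational invariance, take the real part of the Gaussian characteristic function to get $\mathbb{E}[\cos(\|\Delta\|g_1)]=e^{-\|\Delta\|^2/2}$, and conclude $\mathbb{E}[L_t]=\mathbb{E}[L]$ by linearity. Two of your refinements of the paper's wording are accurate: the identity $\langle\Delta,g\rangle=\|\Delta\|g_1$ holds only in distribution, and independence of the copies is not needed for the averaged statement.
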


\begin{proof}[Proof of Lemma~\ref{l:exp-L-singleton}]
Since $\langle\Delta,g\rangle = \|\Delta\|g_1$ with $g_1\sim\calN(0,1)$, we have
$\Ex{\cos(\theta g_1)}=e^{-\theta^2/2}$,
\[
   \Ex{L}
   = \frac{1}{4}\pth{1-\Ex{\cos(\|\Delta\|g_1)}}
   = \frac{1}{4}\pth{1-e^{-\|\Delta\|^2/2}}.
\]
Since $L_t$ is the average of $t$ independent copies of $L$, we have $\Ex{L_t}=\Ex{L}$.
\end{proof}

We now have all the ingredients to state the final explicit concentration bound.
Since $\Var{L} = \frac{1}{16}\Var{\cos(\|\Delta\|g_1)}$, substituting $\Ex{L}$ from
Lemma~\ref{l:exp-L-singleton} and $\Var{\cos}$ from Lemma~\ref{l:var-cos}
into Theorem~\ref{thm:cos-conc-ineq} gives the following.

\begin{corollary}[Explicit concentration bound for $L_t(\Delta)$]
\label{cor:explicit-singleton}
For any $\e\in[0,1]$ and $t\geq 1$,
\begin{equation}
\label{eqn:final-singleton}
   \Prob{|L_t-\Ex{L_t}|\geq \e\Ex{L_t}}
   \;\leq\;
   2\exp\!\pth{-\frac{2\e^2 t}{3\pth{1+e^{-\|\Delta\|^2/2}}^2}}.
\end{equation}
\end{corollary}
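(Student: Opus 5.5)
The plan is a direct substitution of the explicit moments into Theorem~\ref{thm:cos-conc-ineq}. That theorem already gives
\[
\Prob{|L_t-\Ex{L_t}|\geq \e\Ex{L_t}} \leq 2\exp\pth{-\frac{\e^2 t\,\Ex{L}^2}{3\Var{L}}},
\]
so it only remains to evaluate the ratio $\Ex{L}^2/\Var{L}$ in closed form.

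\textbf{Step 1: the two moments.} Set $\theta=\|\Delta\|$ and $a:=e^{-\theta^2/2}\in(0,1]$.
\begin{itemize}
\item By Lemma~\ref{l:exp-L-singleton}, $\Ex{L}=\tfrac14(1-a)$, and $\Ex{L_t}=\Ex{L}$.
\item Since $L=\tfrac14(1-\cos(\theta g_1))$, the additive constant does not affect the variance. Hence $\Var{L}=\tfrac1{16}\Var{\cos(\theta g_1)}$.
\item By Lemma~\ref{l:var-cos}, $\Var{\cos(\theta g_1)}=\tfrac12(1-e^{-\theta^2})^2=\tfrac12(1-a^2)^2$.
\end{itemize}
Therefore $\Var{L}=\tfrac1{32}(1-a)^2(1+a)^2$.

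\textbf{Step 2: the ratio.} Factoring $1-a^2=(1-a)(1+a)$ is the one algebraic point worth checking. It gives
\[
\frac{\Ex{L}^2}{\Var{L}}=\frac{(1-a)^2/16}{(1-a)^2(1+a)^2/32}=\frac{2}{(1+a)^2}.
\]
This requires $\Delta\neq0$, so that the common factor $(1-a)^2$ is nonzero and cancels. The exponent then becomes $-\tfrac{2\e^2 t}{3(1+e^{-\|\Delta\|^2/2})^2}$, which is exactly \eqref{eqn:final-singleton}.

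\textbf{Step 3: hypotheses.} What remains is to confirm that the hypotheses feeding Theorem~\ref{thm:cos-conc-ineq} hold.
\begin{itemize}
\item Lemma~\ref{l:lem2} needs $|L|\le 1/2$. This is immediate since $0\le 1-\cos\le 2$.
\item The degenerate case $\Delta=0$ must be set aside. There $L_t\equiv0$ and the ratio is undefined, so the corollary should be read for $\Delta\neq0$.
\item Lemma~\ref{l:lem1} is applied with $A=\Var{L}$ and $B=\e\Ex{L}$, and its second inequality assumes $A>tB$.
\end{itemize}
I expect this last condition to be the only genuine obstacle. I will not reprove it; I will inherit it verbatim from Theorem~\ref{thm:cos-conc-ineq}, since the corollary is stated as a direct specialization of that theorem. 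If it needed repair, the first thing I would try is to use instead the unsimplified exponent $B-(A+B)\ln(1+tB/A)$ from the proof of Lemma~\ref{l:lem1}.
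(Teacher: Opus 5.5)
Your derivation is the paper's proof: substitute $\Ex{L}=\tfrac14(1-a)$ and $\Var{L}=\tfrac1{32}(1-a)^2(1+a)^2$ into Theorem~\ref{thm:cos-conc-ineq} and cancel $(1-a)^2$ via $1-a^2=(1-a)(1+a)$, with $\Delta\neq0$ tacitly assumed there as well. Your worry about $A>tB$ is justified, but it is a flaw of Theorem~\ref{thm:cos-conc-ineq} that the paper's proof of this corollary also takes on trust, not one your specialization introduces: that hypothesis reads $t\e<\tfrac18(1-a)(1+a)^2\le\tfrac4{27}$, a range where the right-hand side of \eqref{eqn:final-singleton} already exceeds $1$, and no reworking of the exponent can save the statement for all $\e\in[0,1]$, since as $\|\Delta\|\to0$ the true upper-tail rate tends to the $\chi^2$ rate $(\e-\ln(1+\e))/2$, which is below the limiting claimed rate $\e^2/6$ for $\e$ near $1$.
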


\begin{proof}[Proof of Corollary~\ref{cor:explicit-singleton}]
Theorem~\ref{thm:cos-conc-ineq} gives
\[
   \Prob{|L_t-\Ex{L_t}|\geq \e\Ex{L_t}}
   \leq 2\exp\!\pth{-\frac{\e^2 t\,\Ex{L}^2}{3\Var{L}}}.
\]
Since $L = \frac{1}{4}(1-\cos(\|\Delta\|g_1))$, we have $\Var{L} = \frac{1}{16}\Var{\cos(\|\Delta\|g_1)}$.
From Lemma~\ref{l:var-cos} with $\theta=\|\Delta\|$, $\Var{\cos(\|\Delta\|g_1)} = \frac{1}{2}(1-e^{-\|\Delta\|^2})^2$, so
$\Var{L} = \frac{1}{32}(1-e^{-\|\Delta\|^2})^2$.
From Lemma~\ref{l:exp-L-singleton}, $\Ex{L}=\frac{1}{4}(1-e^{-\|\Delta\|^2/2})$, so
$\Ex{L}^2 = \frac{1}{16}(1-e^{-\|\Delta\|^2/2})^2$.
Therefore,
\[
   \frac{\e^2 t\,\Ex{L}^2}{3\Var{L}}
   \;=\;
   \frac{\e^2 t\cdot\frac{1}{16}(1-e^{-\|\Delta\|^2/2})^2}{3\cdot\frac{1}{32}(1-e^{-\|\Delta\|^2})^2}
   \;=\;
   \frac{2\e^2 t\,(1-e^{-\|\Delta\|^2/2})^2}{3\,(1-e^{-\|\Delta\|^2})^2}.
\]
We now simplify using the factorisation
\[
   1-e^{-\|\Delta\|^2}
   = \pth{1-e^{-\|\Delta\|^2/2}}\pth{1+e^{-\|\Delta\|^2/2}},
\]
which gives $(1-e^{-\|\Delta\|^2})^2=(1-e^{-\|\Delta\|^2/2})^2(1+e^{-\|\Delta\|^2/2})^2$.
Cancelling the common factor $(1-e^{-\|\Delta\|^2/2})^2 > 0$:
\[
   \frac{\e^2 t\,\Ex{L}^2}{3\Var{L}}
   \;=\;
   \frac{2\e^2 t}{3\pth{1+e^{-\|\Delta\|^2/2}}^2},
\]
which is exact.  Substituting into the tail bound gives~\eqref{eqn:final-singleton}.
\end{proof}

\subsection{Application: Relative Error for the Gaussian Kernel Distance}
\label{sec:kernel-distance}

We now connect the concentration inequality for $L_t$ to a relative error
bound for the Gaussian kernel distance under random Fourier features (RFF).
This recovers a lemma from~\cite{chen2017relative} with a cleaner,
self-contained proof.

\medskip
\noindent\textbf{Gaussian kernel and RFF setup.}
Let $\sigma>0$ be the kernel bandwidth. For $x,y\in\R^D$, set
$\Delta := (x-y)/\sigma$.
The Gaussian kernel and its associated kernel distance are
\[
K(x,y)=e^{-\frac{\|x-y\|^2}{2\sigma^2}}=e^{-\frac{1}{2}\|\Delta\|^2},\qquad
D_K(x,y)=\sqrt{2-2K(x,y)}=\sqrt{2-2e^{-\frac{1}{2}\|\Delta\|^2}}.
\]

Draw $t$ independent Gaussian vectors $\omega_1,\ldots,\omega_t\sim\calN(0,\sigma^{-2}I_D)$
and define the RFF embedding $\hat{\phi}:\R^D\to\R^{2t}$ coordinate-wise by
\[
\bigl[\hat{\phi}(x)_{2k-1};\; \hat{\phi}(x)_{2k}\bigr]
   =\frac{1}{\sqrt{t}}\bigl[\cos(\langle\omega_k,x\rangle);\;
                            \sin(\langle\omega_k,x\rangle)\bigr],\qquad k=1,\ldots,t.
\]
The approximate distance is
\[
D_{\hat{K}}(x,y)=\|\hat{\phi}(x)-\hat{\phi}(y)\|.
\]

A standard computation (using $\cos a\cos b+\sin a\sin b=\cos(a-b)$
and $\langle\hat{f}_k(x),\hat{f}_k(x)\rangle=1$) gives
\begin{equation}\label{eqn:Dhatk-via-cos}
D_{\hat{K}}(x,y)^2
   = 2-\frac{2}{t}\sum_{k=1}^{t}\cos\bigl(\langle\omega_k,\,x-y\rangle\bigr)
   = \frac{2}{t}\sum_{k=1}^{t}\bigl(1-\cos(\langle\omega_k,\,x-y\rangle)\bigr).
\end{equation}

By rotational invariance of the Gaussian, $\langle\omega_k,x-y\rangle$
has the same distribution as $\|\Delta\|g_k$ where $g_k\sim\calN(0,1)$
are i.i.d.\ standard Gaussians. Let $\tilde{g}_k\sim\calN(0,I_D)$;
then $\langle\Delta,\tilde{g}_k\rangle = \|\Delta\|g_k$.
Comparing with~\eqref{eqn:Dhatk-via-cos} and the definition of $L_t$:
\begin{equation}\label{eqn:bridge}
D_{\hat{K}}(x,y)^2
   = \frac{2}{t}\sum_{k=1}^{t}\bigl(1-\cos(\|\Delta\| g_k)\bigr)
   = 8\,L_t(\Delta),
\qquad
D_K(x,y)^2
   = 2\bigl(1-e^{-\frac{1}{2}\|\Delta\|^2}\bigr)
   = 8\,\Ex{L_t(\Delta)}.
\end{equation}

Thus the squared-distance ratio is exactly the ratio of $L_t$ to its mean:
\[
\frac{D_{\hat{K}}(x,y)^2}{D_K(x,y)^2}
   = \frac{L_t(\Delta)}{\Ex{L_t(\Delta)}}.
\]

\medskip
\noindent\textbf{From concentration of $L_t$ to relative error.}
Corollary~\ref{cor:explicit-singleton} states that for any $\varepsilon\in[0,1]$,
\[
\Pr\!\Bigl[\,|L_t-\Ex{L_t}|\geq \varepsilon\,\Ex{L_t}\Bigr]
   \leq 2\exp\!\Bigl(-\frac{2\varepsilon^2 t}{3(1+e^{-\|\Delta\|^2/2})^2}\Bigr).
\]

When $\|x-y\|\leq\sigma$, we have $\|\Delta\|\leq 1$, hence
$e^{-\|\Delta\|^2/2}\in[e^{-1/2},1]\subseteq[0.6,1]$ and
$(1+e^{-\|\Delta\|^2/2})^2\leq 4$.  Therefore
\[
\Pr\!\Bigl[\,|L_t-\Ex{L_t}|\geq \varepsilon\,\Ex{L_t}\Bigr]
   \leq 2\exp\!\Bigl(-\frac{\varepsilon^2 t}{6}\Bigr).
\]

Choosing $t = \frac{6}{\varepsilon^2}\ln\frac{2}{\delta}
        = \Omega\!\bigl(\frac{1}{\varepsilon^2}\log\frac{1}{\delta}\bigr)$
makes this probability at most $\delta$.

On the complement event,
$(1-\varepsilon)\Ex{L_t}\leq L_t\leq(1+\varepsilon)\Ex{L_t}$,
and by~\eqref{eqn:bridge} this is equivalent to
\[
(1-\varepsilon)\,D_K(x,y)^2 \;\leq\; D_{\hat{K}}(x,y)^2 \;\leq\; (1+\varepsilon)\,D_K(x,y)^2.
\]

Taking square roots and using
$\sqrt{1-\varepsilon}\geq 1-\varepsilon$,
$\sqrt{1+\varepsilon}\leq 1+\varepsilon$ (valid for $\varepsilon\in[0,1]$),
we obtain:

\begin{lemma}[Relative error bound]\label{lem:chen-phillips-small}
If $\|x-y\|\leq\sigma$ and
$t \geq \frac{6}{\varepsilon^2}\log\frac{2}{\delta}$,
then
\[
\Pr\!\Bigl(\frac{D_{\hat{K}}(x,y)}{D_K(x,y)}\in[1-\varepsilon,\,1+\varepsilon]\Bigr)
   \geq 1-\delta.
\]
\end{lemma}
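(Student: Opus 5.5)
The statement is essentially immediate from Corollary~\ref{cor:explicit-singleton} together with the identity \eqref{eqn:bridge}. I will fix the pair $x,y$ and set $\Delta=(x-y)/\sigma$. By \eqref{eqn:bridge} the squared ratio $D_{\hat K}(x,y)^2/D_K(x,y)^2$ equals $L_t(\Delta)/\Ex{L_t(\Delta)}$. The event $|L_t-\Ex{L_t}|<\e\Ex{L_t}$ is therefore exactly the event that the squared ratio lies in $(1-\e,1+\e)$.

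First I handle the probability. Under the hypothesis $\|x-y\|\le\sigma$ we have $\|\Delta\|\le 1$, so $(1+e^{-\|\Delta\|^2/2})^2\le 4$. Corollary~\ref{cor:explicit-singleton} then bounds the failure probability by $2\exp(-\e^2t/6)$. With $t\ge \frac{6}{\e^2}\log\frac{2}{\delta}$ this is at most $\delta$. The inequality $(1+e^{-s})^2\le 4$ holds for every $s\ge0$, so the hypothesis on $\|x-y\|$ is not actually needed for this step.

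Next I pass from squared distances to distances. On the good event we have $1-\e\le D_{\hat K}^2/D_K^2\le 1+\e$. For $\e\in[0,1]$ the elementary inequalities $\sqrt{1+\e}\le 1+\e$ and $\sqrt{1-\e}\ge 1-\e$ hold, the second because $1-\e\in[0,1]$. Taking square roots therefore gives $D_{\hat K}/D_K\in[1-\e,1+\e]$.

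The only delicate point is the validity of Corollary~\ref{cor:explicit-singleton} itself. Lemma~\ref{l:lem1} requires the side condition $A>tB$, where $A=\Var{L}$ and $B=\e\Ex{L}$. This condition can fail for large $t$. I would handle it as follows. For any $A'\ge A$, the mgf hypothesis of Lemma~\ref{l:lem1} still holds with $A'$ in place of $A$. One may therefore enlarge $A$ to $A'=\max(A,2tB)$, which makes the side condition hold. If $A'=A$, the displayed bound follows directly. If $A'=2tB$, the resulting exponent is of order $B/3$ times a constant, which is weaker than the displayed bound. In that regime one would instead use the optimized form $\exp\bigl(B-(A+B)\ln(1+tB/A)\bigr)$ from the proof of Lemma~\ref{l:lem1}, which has no side condition.

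I expect this bookkeeping to be the main obstacle. My plan is to accept Corollary~\ref{cor:explicit-singleton} as stated, since it is an earlier result of the paper, and to note the $\e\le1$ restriction used in taking square roots.
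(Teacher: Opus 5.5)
Your main line is exactly the paper's proof: \eqref{eqn:bridge} identifies the squared ratio with $L_t/\Ex{L_t}$, Corollary~\ref{cor:explicit-singleton} with $(1+e^{-\|\Delta\|^2/2})^2\le 4$ bounds the failure probability by $2e^{-\varepsilon^2 t/6}\le\delta$, and $\sqrt{1-\varepsilon}\ge 1-\varepsilon$, $\sqrt{1+\varepsilon}\le 1+\varepsilon$ finish it. You are also right that $\|x-y\|\le\sigma$ is never genuinely needed.

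Your worry about the side condition $A>tB$ in Lemma~\ref{l:lem1} is well founded, but the fallback you sketch does not repair it---the ``optimized form'' still carries the lossy step $t(e^{\lambda/t}-\lambda/t-1)\le(e^{\lambda}-\lambda-1)/t$ and decays only polynomially in $t$. In fact the corollary's bound is false at $\varepsilon=1$ for small $\|\Delta\|$ and large $t$: as $\|\Delta\|\to0$, $L_t/\Ex{L_t}\to\chi^2_t/t$ with $\chi^2_t=\sum_{k=1}^t g_k^2$, and $\Pr[\chi^2_t\ge 2t]=e^{-(1-\ln 2)t/2+o(t)}$ eventually exceeds $2e^{-t/6}$. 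So your argument, like the paper's, is only as sound as that cited corollary.
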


\begin{proof}
Follows directly from Corollary~\ref{cor:explicit-singleton}
and the identities~\eqref{eqn:bridge}, as detailed above.
\end{proof}

\end{document}